\setlist[itemize]{noitemsep, topsep=0pt}
\definecolor{royalBlue}{HTML}{057DCD}
\definecolor{darkGreen}{HTML}{2E8B57}
\definecolor{mgreen}{RGB}{160, 200, 140}
\theoremstyle{plain}
\newtheorem{theorem}{Theorem}[section]
\newtheorem{lemma}[theorem]{Lemma}
\newtheorem{corollary}[theorem]{Corollary}
\newtheorem{proposition}[theorem]{Proposition}
\newtheorem{assumption}[theorem]{Assumption}
\theoremstyle{definition}
\newtheorem{definition}[theorem]{Definition}
\newtheorem{remark}[theorem]{Remark}
\newcommand*{\bbQ}{{\mathbb{Q}}}
\newcommand*{\bbR}{{\mathbb{R}}}
\newcommand*{\cA}{{\mathcal{A}}}
\newcommand*{\cE}{{\mathcal{E}}}
\newcommand*{\cM}{{\mathcal{M}}}
\newcommand*{\cJ}{{\mathcal{J}}}
\newcommand*{\cK}{{\mathcal{K}}}
\newcommand*{\eps}{{\epsilon}}
\newcommand*{\cI}{{\mathcal{I}}}
\newcommand*{\cB}{{\mathcal{B}}}
\newcommand*{\cF}{{\mathcal{F}}}
\newcommand{\cX}{\mathcal{X}}
\newcommand{\cY}{\mathcal{Y}}
\newcommand{\cD}{{\mathcal{D}}}
\let\poly\relax
\let\co\relax
\DeclareMathOperator{\poly}{poly}
\DeclareMathOperator{\reg}{Reg}
\DeclareMathOperator{\linswap}{LinSwapReg}
\DeclareMathOperator{\co}{co}
\renewcommand{\vec}[1]{\bm{#1}}
\newcommand*{\mat}[1]{\mathrm{#1}}
\newcommand*{\range}[1]{[\kern-1mm[#1]\kern-.9mm]}
\tikzset{
  fitting node/.style={
      inner sep=0pt,
      fill=none,
      draw=none,
      reset transform,
      fit={(\pgf@pathminx,\pgf@pathminy) (\pgf@pathmaxx,\pgf@pathmaxy)}
    },
  reset transform/.code={\pgftransformreset}
}
\tikzset{cross/.style={path picture={
          \draw[black]
          (path picture bounding box.south east) -- (path picture bounding box.north west) (path picture bounding box.south west) -- (path picture bounding box.north east);
        }}}
\tikzstyle{ox}=[semithick,draw=black,circle,cross,inner sep=1.2mm]
\newcommand{\declarecolor}[2]{\definecolor{#1}{RGB}{#2}\expandafter\newcommand\csname #1\endcsname[1]{\textcolor{#1}{##1}}}
\NewDocumentCommand{\numberthis}{om}{%
  \IfNoValueTF{#1}{%
    \refstepcounter{equation}\tag{\theequation}%
  }{%
    \tag{#1}%
  }%
  \label{#2}%
}
\newcommand{\timehat}[1]{^{(#1)}}
\newtoks\mymathaccents
\def\[#1\]{%
\begin{flalign*}#1\end{flalign*}%
}
\newcommand{\vc}{\vec{c}}
\newcommand{\vs}{\vec{s}}
\newcommand{\vx}{\vec{x}}
\newcommand{\vy}{\vec{y}}
\newcommand{\vu}{\vec{u}}
\newcommand{\vlam}{\vec{\lambda}}
\newcommand{\vzero}{\vec{0}}
\newcommand{\ie}{\emph{i.e.},~}
\newcommand{\eg}{\emph{e.g.},~}
\DeclareMathOperator*{\E}{\mathbb{E}}
\tikzset{
  fitting node/.style={
      inner sep=0pt,
      fill=none,
      draw=none,
      reset transform,
      fit={(\pgf@pathminx,\pgf@pathminy) (\pgf@pathmaxx,\pgf@pathmaxy)}
    },
  reset transform/.code={\pgftransformreset}
}
\tikzset{cross/.style={path picture={
          \draw[black]
          (path picture bounding box.south east) -- (path picture bounding box.north west) (path picture bounding box.south west) -- (path picture bounding box.north east);
        }}}
\tikzstyle{ox}=[semithick,draw=black,circle,cross,inner sep=1.2mm]
\tikzset{cross/.style={path picture={
          \draw[black]
          (path picture bounding box.south east) -- (path picture bounding box.north west) (path picture bounding box.south west) -- (path picture bounding box.north east);
        }}}
\tikzstyle{chanode}   = [fill=white,draw=black,circle,cross,inner sep=.8mm]
\tikzstyle{pl1node}   = [fill=black,draw=black,circle,inner sep=.55mm]
\tikzstyle{pl2node}   = [fill=white,draw=black,circle,inner sep=.55mm]
\tikzstyle{termina}   = [fill=white,draw=black,inner sep=.6mm]
\tikzstyle{decpt}     = [fill=black,draw=black,inner sep=.8mm]
\tikzstyle{obspt}     = [fill=white,draw=black,cross,inner sep=0.95mm]
\tikzstyle{highlight} = [line width=1.99]
\tikzstyle{infoset} = [black!50!white]
\newcommand{\cQ}{\ensuremath{\mathcal{Q}}}
\newcommand{\cP}{\ensuremath{\mathcal{P}}}
\newcommand{\cC}{\ensuremath{\mathcal{C}}}
\newcommand{\cH}{\ensuremath{\mathcal{H}}}
\newcommand{\mA}{{\mat A}}
\newcommand{\mB}{{\mat B}}
\newcommand{\mI}{{\mat I}}
\newcommand{\mM}{{\mat M}}
\newcommand{\mU}{{\mat U}}
\newcommand{\lintrans}[2]{\mathrm{Lin}(#1, #2)}
\newcommand{\vb}{\vec{b}}
\newcommand{\va}{\vec{a}}
\newcommand{\vz}{\vec{z}}
\let\olabel\label
\NewDocumentCommand \constraint {o} {%
  {\refstepcounter{equation}\mathrm{(\theequation)}\IfValueT{#1}{\olabel{#1}}}
}
\DeclareMathOperator*{\argmin}{arg\,min}
\DeclareMathOperator{\conv}{conv}
\newcommand*\circled[1]{\refstepcounter{equation}\mathrm{(\theequation)}}
\NewDocumentCommand  \Pure           {o} {{\Pi\IfNoValueF{#1}{_{#1}}}}
\NewDocumentCommand  \PureSub       {om} {\Pi_{\IfNoValueF{#1}{#1,\,}\succeq\,#2}}
\NewDocumentCommand  \Seqf           {o} {{\cQ\IfNoValueF{#1}{_{#1}}}}
\NewDocumentCommand  \SeqfSub       {om} {\cQ_{\IfNoValueF{#1}{#1,\,}\succeq\,#2}}
\NewDocumentCommand  \Seqs          {so} {{\Sigma\IfBooleanT{#1}{^*}\IfNoValueF{#2}{_{#2}}}}
\NewDocumentCommand  \SeqsSub       {om} {\Sigma_{\IfNoValueF{#1}{#1,\,}\succeq\,#2}}
\NewDocumentCommand  \Infos          {o} {{\cI\IfNoValueF{#1}{_{#1}}}}
\NewDocumentCommand  \DecNodes       {o} {{\cJ\IfNoValueF{#1}{_{#1}}}}
\NewDocumentCommand  \ObsNodes       {o} {{\cK\IfNoValueF{#1}{_{#1}}}}
\NewDocumentCommand  \Hist           {o} {\cH\IfNoValueF{#1}{_{#1}}}
\NewDocumentCommand  \emptyseq        {} {{\varnothing}}
\newcommand{\p}[1]{\left(#1\right)}
\newcommand{\set}[1]{\left\{#1\right\}}
\newcommand{\norm}[1]{\left\|#1\right\|}
\newcommand{\inp}[1]{\left \langle #1 \right \rangle}
\newcommand{\vp}{\vec{p}}
\newcommand{\vell}{\vec{\ell}}
\newcommand{\lt}{\vell_t}
\newcommand{\pt}{\vp_t}
\newcommand{\vol}[1]{\operatorname{vol}\left( #1 \right)}
\newcommand*{\shellelpsd}{\textsc{ShellEllipsoid}}
\newcommand*{\shellproj}{\textsc{ShellProj}}
\newcommand{\csphere}{\overline{\cB}_d}
\newcommand{\capPhi}{\overline{\Phi}}
\newcommand{\PhiFP}{\Phi_\text{FP}}
\newcommand{\PhiShell}{\Tilde{\Phi}}
\title{Efficient Learning and Computation of\\Linear Correlated Equilibrium in General Convex Games}
\author{
    Constantinos Daskalakis\\
    MIT\\
    \texttt{costis@csail.mit.edu}\\
    \and
    Gabriele Farina\\
    MIT\\
    \texttt{gfarina@mit.edu}\\
    \and
    Maxwell Fishelson\\
    MIT\\
    \texttt{maxfish@mit.edu}\\
    \and
    Charilaos Pipis\\
    MIT\\
    \texttt{chpipis@mit.edu}\\
    \and
    Jon Schneider\\
    Google Research\\
    \texttt{jschnei@google.com}
}
\date{}
\begin{document}
\pagenumbering{gobble} 

\maketitle

\begin{abstract}
    We propose efficient no-regret learning dynamics and ellipsoid-based methods for computing {\em linear correlated equilibria}---a relaxation of correlated equilibria and a strenghtening of coarse correlated equilibria---in general convex games. These are games where the number of pure strategies is potentially exponential in the natural representation of the game, such as extensive-form games.  Our work identifies linear correlated equilibria as the tightest known notion of equilibrium that is computable in polynomial time and is efficiently learnable for general convex games. Our results are enabled by a generalization of the seminal framework of \cite{Gordon08:No} for $\Phi$-regret minimization, providing extensions to this framework that can be used even when the set of deviations $\Phi$ is intractable to separate/optimize over. Our polynomial-time algorithms are similarly enabled by extending the Ellipsoid-Against-Hope approach of \cite{Papadimitriou2008:Computing} and its generalization to games of non-polynomial type proposed by~\cite{Farina2024:eah}. We provide an extension to these approaches when we do not have access to the separation oracles required by these works for the dual player.
\end{abstract}

\tableofcontents
\newpage
\pagenumbering{arabic} 
\setlength{\parindent}{1em} %
\setlength{\parskip}{.5em}  %

\section{Introduction}

Equilibrium computation is essential for studying and predicting the outcome of strategic behavior in multi-agent interactions, making it a fundamental problem in Game Theory, Economics, Artificial Intelligence, and a number of other fields studying multi-agent systems. As such, a natural question to consider is this: What notions of equilibrium can be computed in polynomial time or efficiently learned via learning dynamics, and in what classes of games?

The answer to this question crucially depends on the structure of the game at hand. For instance, in normal-form games, a standard model in which each player acts once and simultaneously, it is well-known that a Nash equilibrium can be efficiently computed in two-player zero-sum games, but efficient computation beyond this setting is intractable~\citep{Daskalakis2009:The,chen2009settling,Rubinstein2016:Settling}. In contrast, relaxations of the Nash equilibrium, such as correlated equilibrium and coarse correlated equilibrium---including welfare-maximizing ones---can be computed in polynomial time in the representation of the normal-form game using linear programming, no matter the number of players or strategies available to them. In succinct games of polynomial type with the polynomial expectation property, a class of games defined by~\cite{Papadimitriou2008:Computing} which includes graphical games~\citep{KearnsLS01},  polymatrix games~\citep{howson1972equilibria}, congestion games~\citep{rosenthal1973class} and others, a correlated  equilibrium can be computed efficiently using the Ellipsoid-Against-Hope algorithm~\citep{Papadimitriou2008:Computing,jiang2015polynomial}, even though welfare-maximizing correlated and coarse correlated equilibria are intractable in this setting~\citep{Papadimitriou2008:Computing,barman2015finding}. Finally, in succinct games of non-polynomial type, such as extensive-form (\emph{i.e.}, tree-form) games, a coarse correlated equilibrium can be computed efficiently, but it remains a major open question in the field whether a correlated equilibrium can be computed efficiently (see e.g.~\citep{Papadimitriou2008:Computing,vonStengel2008,huang2008computing,Farina2024:eah} and their references), despite some surprising recent progress on this front, for approximate correlated equilibria of constant approximation accuracy~\citep{dagan2024external,peng2024fast}.

All the classes of games mentioned so far are examples of \emph{convex games}, a rich class in which the strategy set of each player is an arbitrary
convex and compact set $\cP_i \subset \mathbb{R}^{d_i}$ and the utility of each player is a multi-linear function for all players' strategies; see Section~\ref{sec:linear swap, linear correlated, convex games} for a formal definition. As mentioned, convex games encompass normal-form, extensive-form
games and all the other classes of games mentioned thus far, but they go well beyond these to include many other fundamental settings such as routing
games, resource allocation problems, and  competition between firms.

Similarly, all relaxations of the Nash equilibrium mentioned so far are instances of the general class of \emph{$\Phi$-equilibria}, a notion of equilibrium parameterized by a per-player set $\Phi_i$ of \emph{strategy transformations} $\phi : \cP_i \to \cP_i$. Specifically, a $\Phi$-equilibrium is a  (joint) distribution over the set of all possible strategy profiles $\cP_1 \times\dots\times \cP_n$  such that no player $i$ benefits in expectation from applying any transformation $\phi \in \Phi_i$ to the strategy sampled for them by the distribution.
As the size of each set $\Phi_i$ increases to encompass more and more transformations, the set of $\Phi$-equilibria becomes a tighter and tighter relaxation of the set of Nash equilibria. For example, the concept of coarse correlated equilibrium arises when each $\Phi_i$ contains all possible {\em constant} functions. The concept of correlated equilibrium arises when each $\Phi_i$ contains \emph{all} functions $\cP_i \to \cP_i$. And there are various other notions of $\Phi$-equilibrium that have been considered in the literature; see e.g.~\cite{Gordon08:No} for general games, \cite{vonStengel2008,MorrillDSLWGB21,Farina2024:eah,zhang2024efficient} for extensive-form games, and~\cite{cai2024tractable} for games with non-concave utilities.

Within this context, a central open question in equilibrium computation is understanding what are the strongest notions of $\Phi$-equilibria that can be computed efficiently in convex games. In particular, it is known that coarse correlated equilibria can be computed efficiently given oracle access (e.g., a separation oracle) to the strategy set of each player in a convex game. On the other hand, it is a major open question, as we have already discussed, whether correlated equilibria can be computed efficiently even in the special case of extensive-form games, with some preliminary evidence pointing towards intractability~\citep{daskalakis2024lower}, despite some recent success for approximate correlated equilibria. This raises the following natural (informal) question, which serves as a main motivation for this work:
\begin{quote}\itshape
    {\bf Motivating Question 1:} Are there notions of $\Phi$-equilibrium that are stronger than coarse correlated equilibrium and can be efficiently computed in general convex games, given oracle access to the convex strategy sets of the player? 
\end{quote}
A related challenge is understanding whether these notions of equilibrium can be efficiently learned via no-regret learning procedures, and more generally whether some notion of regret that is stronger than external regret can be minimized against an adversary. This amounts to  the following question: 
\begin{quote}\itshape
    {\bf Motivating Question 2:} Are there notions of $\Phi$-regret that are stronger than external regret and can be efficiently minimized against an advarsarial sequence of losses?
\end{quote}
We postpone the definition of $\Phi$-regret to Section~\ref{sec:linear swap, linear correlated, convex games}, but mention here that it is a notion of regret that strengthens external regret and relaxes swap regret. Moreover, in the same way that the dynamics arising when the players of a game run no-external-regret learning procedures to update their strategies converge to coarse correlated equilibrium, and no-swap-regret learning dynamics converge to correlated equilibrium, no-$\Phi$-regret learning dynamics converge to $\Phi$-equilibrium. 

In this paper, we provide positive answers to the above questions. We consider the natural setting where each $\Phi_i$ contains the set of {\em all (affine) linear endomorphisms} from $\cP_i$ to itself, which induces a notion of $\Phi$-equilibrium known as \emph{linear correlated equilibrium} and a notion of $\Phi$-regret known as \emph{linear swap regret}. Our main result for general convex games and this choice of $\Phi$ is stated informally below. It significantly expands the frontier of tractable equilibria and game structures.

\begin{theorem}[Informal; formal version given as Theorems~\ref{thm:linswap-regret-main-algo-strong} and~\ref{thm:eah-equil-computation-strong}] Linear correlated equilibria can be efficiently computed in general convex games, using  polynomially many oracle calls to the players' strategy sets. Moreover, there exist efficient no-linear-swap-regret learning procedures, minimizing the linear swap regret of a learner playing in a convex set to which oracle access is provided against an adversarial sequence of linear losses.
\end{theorem}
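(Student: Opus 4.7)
I would prove both halves of the theorem through a common $\Phi$-regret template in which $\Phi_i$ is the set of affine endomorphisms of player $i$'s strategy set $\cP_i$, parametrized by pairs $(\mA_i,\vb_i)$ such that $\phi_{\mA_i,\vb_i}(\vx)\defeq \mA_i \vx+\vb_i$ maps $\cP_i$ into itself. The set $\Phi_i$ is convex and lives in a polynomial-dimensional parameter space ($d_i^2+d_i$ coordinates), but its defining constraint $\phi(\cP_i)\subseteq \cP_i$ is universally quantified over $\cP_i$ and does not admit a natural separation oracle from a separation oracle for $\cP_i$ alone. The two key contributions alluded to in the abstract---generalizing Gordon's $\Phi$-regret framework to hard-to-separate $\Phi$, and generalizing Ellipsoid-Against-Hope in the same direction---are exactly what one needs to handle such $\Phi_i$.

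For the learning statement, I would set up the Gordon~\cite{Gordon08:No} reduction from no-$\Phi$-regret to (i) no-external-regret over $\Phi_i$ against the linear losses $\phi\mapsto\inner{\vell_t,\phi(\vx_t)}$ and (ii) a fixed-point oracle returning, for any $\phi\in\Phi_i$, a vector $\vx\in\cP_i$ with $\phi(\vx)=\vx$. The fixed-point step is immediate: since $\phi$ is an affine continuous self-map of a compact convex set, Brouwer guarantees a fixed point, and one can be found by intersecting the affine subspace $\{\vx:(\mA_i-\mI)\vx=-\vb_i\}$ with $\cP_i$ using the separation oracle of $\cP_i$. The main technical hurdle is step (i): to run no-regret on $\Phi_i$ without a separation oracle for it, I would use an ellipsoid-based no-regret dynamic over the parameter space of $\Phi_i$ in which, whenever the current center $(\mA_i,\vb_i)$ is infeasible, a witness $\vx\in\cP_i$ with $\mA_i\vx+\vb_i\notin\cP_i$ is produced; the separating hyperplane returned by $\cP_i$'s oracle between $\mA_i\vx+\vb_i$ and $\cP_i$ lifts to a valid separating hyperplane in $(\mA_i,\vb_i)$-space, letting the ellipsoid shrink and giving the no-regret guarantee. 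This is exactly the extension of the Gordon framework that the abstract promises.

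For the computation statement, I would follow the Ellipsoid-Against-Hope pipeline of~\cite{Papadimitriou2008:Computing} as generalized by~\cite{Farina2024:eah} to games of non-polynomial type. The linear correlated equilibrium conditions can be encoded as a feasibility LP whose dual is a zero-sum game between a candidate distribution on strategy profiles and a profile of deviations $(\phi_1,\ldots,\phi_n)\in\Phi_1\times\cdots\times\Phi_n$. The standard EAH template uses a no-$\Phi$-regret procedure as an approximate separation oracle for the dual and outputs the empirical distribution over the strategies produced along the ellipsoid run. Plugging in the no-linear-swap-regret learner constructed in the previous step---whose iterates live in a polynomial-dimensional space and whose separation is implemented via queries to the players' $\cP_i$-oracles---gives the required inner routine, and the outer ellipsoid still terminates after polynomially many iterations, yielding a linear correlated equilibrium of polynomial support.

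The main obstacle I expect is the tight interaction between the implicit, $\cP_i$-oracle-driven separation over $\Phi_i$ and the regret analysis of the inner learner. I would need to show that each cut extracted from a violation of $\phi(\cP_i)\subseteq\cP_i$ is simultaneously a valid and sufficiently deep cut for $\Phi_i$, so that (a) the ellipsoid shrinks at the standard geometric rate, (b) the resulting loss estimates telescope into a sublinear regret bound, and (c) the total number of $\cP_i$-oracle calls stays polynomial. Once this inner analysis is in place, the computation part follows by composing it with the EAH machinery in a largely modular fashion.
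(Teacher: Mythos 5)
Your setup (Gordon reduction, fixed-point step via $(\mA-\mI)\vx=-\vb$, then an ellipsoid-based no-regret dynamic over $\Phi_i$) is the right skeleton, but the central step of your plan contains a gap that the paper specifically identifies and works around.

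You write that you would run the ellipsoid in parameter space and, ``whenever the current center $(\mA_i,\vb_i)$ is infeasible, a witness $\vx\in\cP_i$ with $\mA_i\vx+\vb_i\notin\cP_i$ is produced.'' That is precisely the step the paper proves cannot be implemented with polynomially many $\cP_i$-oracle calls. Theorem~\ref{thm:hardness-opt} shows that, given only a membership oracle for $\cP$, distinguishing ``$\phi\in\Phi(\cP)$'' from ``$\phi$ is a robustly infeasible transformation'' already requires $\exp(\Omega(d))$ queries; a fortiori one cannot point to a violating $\vx$. The difficulty is that the violation problem $\max_{\vx\in\cP}\operatorname{dist}\bigl(\phi(\vx),\cP\bigr)$ is the maximization of a convex function over a convex set, i.e., a nonconvex problem. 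So the ``lifted cut'' you rely on is not available, and the regret bound for your dual learner does not follow. The paper's way out is a genuinely weaker primitive, the \emph{semi-separation oracle} (Lemma~\ref{lem:sep-non-fixed-point-strong}): it only checks whether $\phi$ has a fixed point in $\cP$ (a convex problem), and the set $\Phi_{\mathrm{FP}}(\cP)$ of such $\phi$ strictly contains $\Phi(\cP)$ and is not convex. Because of that, the Gordon reduction cannot be run as-is either; the paper introduces a time-varying ``shell set'' $\PhiShell_t\supseteq\Phi(\cP)$ that is refined only when a fixed point fails to exist, together with Shell Projection and Shell Gradient Descent to make projected-gradient arguments go through against a moving feasible region. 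None of that machinery appears in your plan, and without it the analysis of your inner learner does not close.

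On the equilibrium-computation half, the high-level EAH story you describe is in the right spirit, but two details diverge. First, the paper does not plug the no-linear-swap-regret learner into EAH as the ``inner routine''; it builds the GER/separation oracle for the dual directly out of the same semi-separation primitive (return a fixed-point product distribution if every $\phi_i$ has a fixed point, else lift the separating hyperplane to deviation-profile space). Second, running no-regret dynamics and outputting the empirical play converges only at rate $\poly(1/\epsilon)$, whereas the paper's generalized EAH (Theorem~\ref{thm:ellipsoid-against-hope-strong}) gives $\poly(\log(1/\epsilon))$ and, more importantly, handles the fact that we only have semi-separation for $\cY$ by tracking a shrinking outer shell $\widetilde{\cY}$ of the Deviator set. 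You should make clear that the oracle you hand to EAH need not perform real separation for $\Phi_i$ and explain what replaces it; as written, your plan quietly assumes a capability the paper proves is unattainable.
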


En route to proving the previous result, we develop new fundamental techniques for dealing with the set of all  linear endomorphisms of a generic convex set $\cP$ given via oracle access, which might be of independent interest. A key technical challenge, whose relevance is explained in the next subsection, is that even though $\cP$ might admit an efficient membership or separation oracle, constructing a membership or separation oracle for the set of all linear endomorphisms on $\cP$ is in general intractable, as shown in Theorem~\ref{thm:hardness-opt}. This renders standard techniques for $\Phi$-regret minimization and $\Phi$-equilibrium computation non-applicable. Indeed, much of our work develops new algorithmic techniques that operate in a relaxed setting, where separation is replaced with a new concept we call \emph{semi-separation}. 

\subsection{Technical Innovation and Approach}

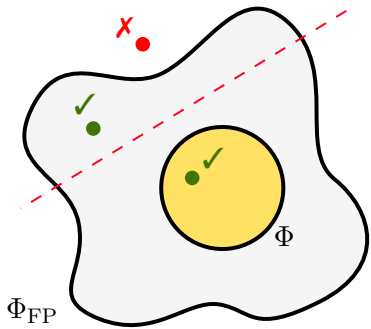
\begin{figure}[htp]
    \centering
    \tikzset{every picture/.style={line width=0.75pt}} %

\begin{tikzpicture}[x=1.1pt,y=1.1pt,yscale=-1,xscale=1]

\draw  [fill={rgb, 255:red, 0; green, 0; blue, 0 }  ,fill opacity=0.04 ][line width=1.5]  (225.39,71.45) .. controls (236.87,49.69) and (254.73,68.63) .. (270.25,60.21) .. controls (285.76,51.8) and (297.16,27.39) .. (313.81,43.75) .. controls (330.46,60.12) and (315.97,86.49) .. (329,97) .. controls (342.03,107.51) and (347.46,127.1) .. (324,139) .. controls (300.54,150.9) and (300.49,141.89) .. (289.32,140.05) .. controls (278.16,138.2) and (266.84,152.41) .. (248,145) .. controls (229.16,137.59) and (244.26,128.97) .. (240.92,111.99) .. controls (237.58,95.01) and (213.92,93.2) .. (225.39,71.45) -- cycle ;
\draw  [fill={rgb, 255:red, 255; green, 226; blue, 99 }  ,fill opacity=1 ][line width=1.5] (273.27,112.2) .. controls (266.52,102.78) and (268.68,89.66) .. (278.11,82.9) .. controls (287.54,76.15) and (300.65,78.31) .. (307.41,87.74) .. controls (314.17,97.16) and (312,110.28) .. (302.58,117.04) .. controls (293.15,123.8) and (280.03,121.63) .. (273.27,112.2) -- cycle ;
\draw  [draw opacity=0][fill={rgb, 255:red, 65; green, 117; blue, 5 }  ,fill opacity=1 ] (246,79.5) circle (2.5);
\draw  [draw opacity=0][fill={rgb, 255:red, 255; green, 0; blue, 0 }  ,fill opacity=1 ] (263,50.5) circle (2.5) ;
\draw [color={rgb, 255:red, 255; green, 0; blue, 31 }  ,draw opacity=1 ] [dash pattern={on 4.5pt off 4.5pt}]  (333,39) -- (221,107) ;
\draw  [draw opacity=0][fill={rgb, 255:red, 65; green, 117; blue, 5 }  ,fill opacity=1 ] (280,96.5) circle (2.5) ;

\draw (238.39,66.45) node [anchor=north west][inner sep=0.75pt]  [color={rgb, 255:red, 65; green, 117; blue, 5 }  ,opacity=1 ] [align=left] {\ding{51}};
\draw (253,40) node [anchor=north west][inner sep=0.75pt]  [color={rgb, 255:red, 255; green, 0; blue, 0 }  ,opacity=1 ] [align=left] {\ding{55}};
\draw (282.5,85) node [anchor=north west][inner sep=0.75pt]  [color={rgb, 255:red, 65; green, 117; blue, 5 }  ,opacity=1 ] [align=left] {\ding{51}};
\draw (307,112) node [anchor=north west][inner sep=0.75pt]   [align=left] {$\displaystyle \Phi $};
\draw (215,137) node [anchor=north west][inner sep=0.75pt]   [align=left] {$\displaystyle \Phi _{\text{FP}}$};

\end{tikzpicture}
    \caption{Illustration of our \emph{semi-separation oracle} for the set of linear endomorphisms $\Phi$ of the feasible set $\cP$. Given a candidate linear transformation $\phi$, the oracle returns a fixed point of $\phi$ in the set $\cP$, if one exists (this is the case of the two points marked with \ding{51}), or a hyperplane separating $\phi$ from $\Phi$ (this is the case of the point marked \ding{55}). In the figure, we denoted with $\Phi_\text{FP}$ the set of linear transformations that admit a fixed point in $\cP$. In general, $\Phi_\text{FP}$ is a strict superset of $\Phi$, and is not a convex set. Building a separation oracle for $\Phi$ is generally computationally intractable. }
    \label{fig:semi-sep}
\end{figure}

Recent results on efficient $\Phi$-regret minimization \citep{celli2020no,Farina22:Simple,MorrillDSLWGB21,farina2024polynomial,zhang2024efficient} have all relied on constructing efficient algorithms for accessing or optimizing over sets of transformations $\Phi$; mostly as part of the seminal framework of Gordon, Greenwald and Marks \citep{Gordon08:No} that requires access to $\Phi$.
Moreover, the generalization of this framework to exact computation of $\Phi$-equilibria in polyhedral games \citep{farina2024polynomial} requires efficient oracle access to $\Phi$.

Our work tackles the more general challenge of computing $\Phi$-equilibria and minimizing $\Phi$-regret with respect to the set of \emph{all} (affine) linear endomorphisms $\Phi$, given only oracle access to the action set $\cP$.  
As we will show, without any structural information about $\cP$, %
we cannot optimize over $\Phi$ and, thus, we cannot directly apply the frameworks of \cite{Gordon08:No} and \citet{farina2024polynomial}.
In particular, we show in Theorem~\ref{thm:hardness-opt} that, given oracle access to $\cP$, we cannot always efficiently distinguish whether a linear transformation is an endomorphism of $\cP$ or not.  In view of this intractability, we put forth an algorithmic approach that generalizes the aforementioned frameworks and achieve efficient $\Phi$-regret minimization and computation of $\Phi$ equilibria in general convex games with only oracle access to the strategy sets $\cP$.

Our main idea is to make use of what we refer to as a ``semi-separation'' oracle.  A standard separation oracle for a convex set $\Phi$ would provide, for any queried point $\phi$, either an assertion that $\phi \in \Phi$, or a hyperplane separating $\phi$ from $\Phi$.
As we said, we cannot efficiently check whether $\phi \in \Phi$, when $\Phi$ is the set of all linear endomorphisms of $\cP$.  However, given that linear functions are simple functions and we have oracle access to $\cP$, we can efficiently check if $\phi$ has a fixed point that lies in $\cP$.  That is, we can efficiently check if $\phi \in \Phi_{\text{FP}}$, where $\Phi_{\text{FP}}$ is the set of linear transformations that admit a fixed point in $\cP$.  Clearly, the set $\Phi_{\text{FP}}$ is a superset of $\Phi$ (as all continuous endomorphisms of $\cP$ must have a fixed point in $\cP$ by Brouwer's fixed point theorem). Thus, it would definitely suffice to minimize regret with respect to all transformations in $\Phi_\text{FP}$.  But, $\Phi_\text{FP}$ is not convex.  We cannot establish a separation for $\Phi_\text{FP}$, and if we were to take its convex hull, we would introduce transformations that do not have fixed points in $\cP$.  An adversary can easily force linear regret against any such transformation.

We circumvent these issues with the \emph{semi-separation} oracle (Section~\ref{sec:semi-sep}), which either asserts that a candidate $\phi$ is in $\Phi_{\text{FP}}$ or identifies a hyperplane separating $\phi$ from the the set $\Phi$ of linear endomorphic transformations.  If $\phi \not \in \Phi_{\text{FP}}$, we can find a hyperplane in $\bbR^d$ that separates its fixed point from $\cP$, which we then use to establish a separating hyperplane in $\bbR^{d\times(d+1)}$, cutting away some affine transformations that have fixed points violating the first hyperplane.  This \emph{semi-separation} suffices as a building block for linear swap regret minimization and a generalized ellipsoid against hope algorithm for linear correlated equilibrium.

\subsection{Related Work}

Prior work has considered the minimization of linear swap regret and the computation of linear correlated equilibria. \citet{mansour2022strategizing} study the setting in which a no-regret learner competes against a rational utility maximizer in a two-player
Bayesian game, finding that linear swap regret minimization is necessary to avoid exploitation (extending a similar result for swap-regret minimization in normal form games by \citet{deng2019strategizing}). \citet{fujii2023bayes} defines the notion of untruthful swap regret for
Bayesian games and proves that, for Bayesian games, it is equivalent to the linear-swap regret which
is of interest in our paper. Recently, \citet{cai2024tractable} demonstrate that the task of computing a ``local-$\Phi$-equilibrium'' in a non-concave game can be reduced to $\Phi$-regret minimization over linear loss functions. This establishes that certain natural variants of local correlated equilibria in non-concave games can be reduced to the problem of linear swap regret minimization, which we study.

From a computational perspective, \citet{dann2023pseudonorm} demonstrate how to efficiently minimize linear swap regret in Bayesian games via reduction to specific forms of Blackwell approachability. \citet*{farina2024polynomial} show that linear swap regret can be minimized in polynomial-time per iteration in the game tree in general extensive-form games. The result was later improved by \citet*{zhang2024mediator}, who show that linear correlated equilibria admit a correlation-device interpretation, and use the connection to develop faster algorithms. Extensions of these ideas to low-degree polynomial swap deviations in extensive-form games were developed by \citet*{zhang2024efficient}. Finally, \citet*{Farina2024:eah} propose a generalization of the ellipsoid-against-hope framework \citep{Papadimitriou2008:Computing} which, among other applications, shows that linear correlated equilibria can be computed in polynomial time in general extensive-form games. As mentioned, %
the results of this paper subsume and greatly generalize most of these prior results.

The algorithm of \citet{Gordon08:No} for $\Phi$-regret minimization requires two subroutine components: fixed-point computations for the transforms $\phi \in \Phi$ and a separation oracle for $\Phi$. An important open question arising from this result is whether both components are necessary. \citet{hazan2007computational} proved the necessity of fixed-point computations. They presented an algorithm that computes fixed points for any $\phi \in \Phi$ given access to a $\Phi$-regret minimization subroutine.\footnote{From this algorithm, the paper concludes that $\Phi$-regret minimization and $\Phi$-fixed point computation are equivalent. However, this conclusion assumes a setting where separation over the transformation set $\Phi$ can be easily established, which is not generally the case.} Consequently, there cannot exist a $\Phi$ such that $\Phi$-regret minimization is feasible while fixed points over $\Phi$ cannot be efficiently computed. The remaining open question was whether there exists a $\Phi$ such that $\Phi$-regret minimization is feasible while efficient separation over $\Phi$ is not. Here, we answer this question affirmatively, demonstrating that there is indeed flexibility in the reliance on the components identified by \citet{Gordon08:No}.

One of the key ideas our efficient linear swap regret minimization algorithm relies on is the idea that we can relax the set $\Phi$ of linear endomorphisms to a convex subset of the set $\Phi_{\text{FP}}$ of linear transformations with fixed points. The class of $\Phi$-regret minimization problems where the linear transformations $\phi \in \Phi$ admit fixed points but are not necessarily endomorphisms was introduced and studied by \cite{dann2024rate} under the name ``improper $\phi$-regret minimization''. \cite{dann2024rate} showed that although improper $\phi$-regret minimization is in some sense ``nonequivalent'' to standard $\phi$-regret minimization, improper $\phi$-regret can still be minimized via the algorithm of \cite{Gordon08:No}; we use this latter fact when constructing our semi-separation oracle.

\section{Preliminaries}

\subsection{Notation}

We write $\cB_d(\vc, r) := \set{\vx \in \bbR^d \mid \norm{\vx - \vc}_2 \leq r}$ to denote the $d$-dimensional Euclidean radius-$r$ ball centered at $\vc \in \bbR^d$. Sometimes, in an abuse of notation, we will use $\cB_d(r)$ to denote \textit{any} ball of radius $r$, when the center is irrelevant or clear from context. We denote the volume of a compact convex set $\cK$ with $\vol{\cK}$ and the volume of the $d$-dimensional ball of radius $r$ with $V_d(r)$.
We use $\norm{\mM}_F$ to denote the Frobenius norm and $\norm{\mM}_2$ to denote the spectral norm of a matrix $\mM \in \bbR^{n \times m}$.

For any two compact convex sets $\cA, \cB \subset \bbR^d$, we will let $\lintrans{\cA}{\cB}$ denote the set of all (affine)\footnote{For brevity, throughout this paper, we will generally omit the word affine and simply use the word \emph{linear} to refer to affine linear transformations of the form $Mx + b$.} linear transformations from $\cA$ to $\cB$. Note that any such transformation $\phi \in \lintrans{\cA}{\cB}$ can be expressed in the form $\phi(x) = Mx + b$ for some matrix $M \in \bbR^{d \times d}$ and some vector $b \in \bbR^{d}$. In this way, we can identify elements of $\lintrans{\cA}{\cB}$ with elements of $\bbR^{d \times (d+1)}$ and define the norm $\norm{\phi}_F$ as the Frobenius norm of the corresponding matrix (likewise, we can speak of $\phi$ belonging to ball $\cB_{d \times (d+1)}(R)$). We will write $\Phi(\cA)$ as shorthand for the set $\lintrans{\cA}{\cA}$ of linear \emph{endomorphisms} of $\cA$.

\subsection{Linear Swap Regret and Linear Correlated Equilibria} \label{sec:linear swap, linear correlated, convex games}

We begin by defining the adversarial online learning problem of \emph{$\Phi$-regret minimization}, originally introduced by \cite{Gordon08:No}. In this setting, at every round $t$ (over $T$ rounds) a learner must pick an action $\pt$ belonging to a bounded $d$-dimensional convex action set $\cP \subseteq \bbR^d$. Simultaneously, an adversary selects a bounded loss vector $\lt \in [-1,1]^d$. The learner then receives loss $\inp{\pt, \lt}$ and learns the loss vector $\lt$ chosen by the adversary (the knowledge of which they can then use to choose their action in future rounds).

The learner aims to minimize their \emph{$\Phi$-regret}. Formally, given any set $\Phi$ containing continuous transformations $\phi: \cP \rightarrow \cP$, the $\Phi$-regret of the learner is defined to equal

\begin{equation}\label{eq:phi-reg-def}
    \reg_{\Phi}(\vp, \vell) = \sum_{t=1}^T \inp{\pt, \lt} - \min_{\phi \in \Phi} \sum_{t=1}^T \inp{\phi(\pt), \lt}.
\end{equation}

Informally, this should be thought of as the gap between the realized utility of the learner, and the maximum utility they could have received had they applied the best transformation $\phi \in \Phi$ to all of their actions.

In this paper, we focus on an important instance of $\Phi$-regret minimization: \emph{linear swap regret minimization}. Linear swap regret minimization is the problem of $\Phi$-regret minimization for the class $\Phi$ of all linear transformations from $\cP$ to itself; in particular, it corresponds to the case of $\Phi$-regret minimization with $\Phi = \Phi(\cP) = \lintrans{\cP}{\cP}$. We write $\linswap(\vp, \vell)$ to denote the linear swap regret of a learner who has played the actions $\vp$ in response to the losses $\vell$. Note that in the case where $\cP$ is the $d$-dimensional simplex $\Delta_d$, the set $\Phi(\cP)$ contains all row-stochastic linear transformations, and linear swap regret collapses to the standard definition of swap regret. 

One of the primary reasons to study these notions of regret minimization is to understand the set of equilibria they induce. Just as external regret minimization and swap regret minimization lead to coarse correlated equilibria and correlated equilibria (respectively) in normal-form games, linear swap regret minimization leads to the set of linear correlated equilibria in convex games. 

\begin{definition}[convex game]
    An $n$-player \emph{convex game} is defined by $n$ compact convex strategy sets $\cP_1, \dots, \cP_n$ (where $\cP_i \subset \bbR^{d_i}$) and $n$ multi-linear payoff functions $u_i:\cP_1 \times \dots \times \cP_n \rightarrow \bbR$.
\end{definition}

This definition of convex games includes
Bayesian games and extensive-form games (where each players' strategy sets are their \emph{sequence form polytopes}, see e.g. \cite{Koller96:Efficient}), games where strategy spaces are normed balls in some norm, and games where the strategies of each player have some combinatorial structure (e.g. routing games where each player picks a path through a graph).

\begin{definition}[linear correlated equilibrium]
    A \emph{linear correlated equilibrium} in a convex game $G$ with strategy sets $\cP_1, \dots, \cP_n$ is a joint distribution $\mu \in \Delta(\cP_1 \times \dots \times \cP_n)$, such that for every player $i \in [n]$ and linear deviation $\phi \in \Phi(\cP_i)$,
    \begin{equation*}
    \E_{\vec{s} \sim \mu}[u_i(\vec s)] \geq \E_{\vec{s} \sim \mu} [u_i(\phi(\vec{s}_i), \vec{s}_{-i})].  
    \end{equation*}
\end{definition}

We say such a joint distribution $\mu$ is an $\epsilon$-approximate linear correlated equilibrium if the above inequalities hold with $\epsilon$ slack, i.e., 

\begin{equation*}
\E_{\vec{s} \sim \mu}[u_i(\vec s)] \geq \E_{\vec{s} \sim \mu} [u_i(\phi(\vec{s}_i), \vec{s}_{-i})] - \epsilon.
\end{equation*}

\subsection{Convex Optimization and Oracle Access}

We are interested in designing efficient swap regret minimization algorithms for generic convex strategy sets $\cP$. As is typical in such settings, we will assume nothing specific about the structure of $\cP$ and instead assume that we have \emph{oracle access} to this set $\cP$. In particular, we will assume we have access to oracles that can answer the following queries in time $\poly(d)$:

\begin{itemize}
	\item {\bf Membership: } Given $\vy \in \bbR^n$, decide whether $\vy \in \cP$.
	\item {\bf Separation: } Given $\vy \in \bbR^n$, decide whether $\vy \in \cP$, and if not, find a hyperplane that separates $\vy$ from $\cP$
	\item {\bf Optimization: } Given a vector $\vc \in \bbR^n$, find a $\vy \in \cP$ that maximizes the inner product $\inp{\vc, \vy}$. 
\end{itemize}

We will also make the standard assumption in convex optimization that the action set $\cP$ we work with is geometrically ``well-behaved'', in that: i) $\cP$ is contained within the ball $\cB_d(\vzero, R)$ of radius $R$ centered at the origin ($\cP$ is bounded), and ii) $\cP$ contains a small ball $\cB_{d}(\vec a, r)$ centered at some point $\vec a$ ($\cP$ is full dimensional and not too flat in any dimension). It is known that under this assumption, all three of the above oracles are equivalent (see e.g. \cite{Grotschel1993:Geometric}).

We will also want our set $\Phi(\cP)$ of linear endomorphisms of $\cP$ to be similarly well-behaved. We can show that this follows from the above assumptions.

\begin{restatable}{lemma}{lintransnice}
\label{lem:lintrans-nice}
    Let $\cB_d(\va, r) \subseteq \cP \subseteq \cB_d(\vzero, R)$. Then there exists a $\phi_{\va} \in \Phi(\cP)$ such that: 
    \[
        \cB_{d \times (d+1)}\left(\phi_{\va}, \frac{r}{2R}\right)\subseteq \Phi(\cP) \subseteq \cB_{d \times (d+1)}\left( \vzero, \frac{3R}{r} \sqrt{R^2 + d} \right).
    \]
\end{restatable}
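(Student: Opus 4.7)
The plan is to establish both containments via direct geometric bounds on linear maps, leveraging the two-sided sandwich $\cB_d(\va, r) \subseteq \cP \subseteq \cB_d(\vzero, R)$.

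For the outer containment, I parametrize an arbitrary $\phi \in \Phi(\cP)$ as $\phi(\vx) = \mM\vx + \vb$. For every unit vector $\vu \in \bbR^d$ the points $\va \pm r\vu$ lie in the interior ball, hence in $\cP$, so their images $\mM\va + \vb \pm r\mM\vu$ lie in $\cP \subseteq \cB_d(\vzero, R)$ and have $2$-norm at most $R$. Applying the triangle inequality to the half-sum and half-difference of these two inclusions yields simultaneously $\|\mM\va + \vb\|_2 \leq R$ and $\|r\mM\vu\|_2 \leq R$. Since $\vu$ was an arbitrary unit vector, the second bound gives $\|\mM\|_2 \leq R/r$ and hence $\|\mM\|_F \leq \sqrt{d}\,R/r$. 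Combining with a triangle inequality on $\vb = (\mM\va + \vb) - \mM\va$ and $\|\va\|_2 \leq R$, I obtain $\|\vb\|_2 \leq R + R^2/r \leq 2R^2/r$ (using $r \leq R$). Then
\[
\|\phi\|_F^2 = \|\mM\|_F^2 + \|\vb\|_2^2 \leq \frac{d R^2}{r^2} + \frac{4R^4}{r^2} = \frac{R^2}{r^2}\bigl(d + 4R^2\bigr),
\]
and the elementary inequality $\sqrt{d + 4R^2} \leq 3\sqrt{d + R^2}$ yields the claimed outer bound.

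For the inner containment, the key idea is to choose $\phi_{\va}$ to be the \emph{constant} transformation $\vx \mapsto \va$, corresponding to $\mM = 0, \vb = \va$. This map trivially lies in $\Phi(\cP)$ since $\va \in \cP$, and its image is a single point deep inside the interior ball, leaving maximum slack for perturbations. Consider any $\tilde\phi(\vx) = \tilde\mM\vx + \tilde\vb$ within Frobenius distance $r/(2R)$ of $\phi_{\va}$, which forces both $\|\tilde\mM\|_F \leq r/(2R)$ and $\|\tilde\vb - \va\|_2 \leq r/(2R)$. For any $\vx \in \cP$, the triangle inequality gives
\[
\|\tilde\phi(\vx) - \va\|_2 \leq \|\tilde\mM\|_F \,\|\vx\|_2 + \|\tilde\vb - \va\|_2 \leq \frac{r}{2R}\cdot R + \frac{r}{2R} \leq r,
\]
where the last step uses the standing WLOG normalization $R \geq 1$. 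Thus $\tilde\phi(\vx) \in \cB_d(\va, r) \subseteq \cP$ for every $\vx \in \cP$, so $\tilde\phi \in \Phi(\cP)$, establishing the inner ball.

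The proof has no serious obstacle; the arguments are essentially geometric computations. The main conceptual point is the choice of $\phi_{\va}$: picking the constant map to $\va$ (rather than, say, a near-identity map) is what ensures the image of $\cP$ is concentrated at the center of the inscribed ball, giving the clean $r/(2R)$ slack under Frobenius-norm perturbations.
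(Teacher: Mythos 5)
Your proof is correct and arrives at the same bounds, though by a somewhat more self-contained route. The paper delegates the inner containment to its Lemma~\ref{lem:inscribed-lintrans} (which factors through $\lintrans{\cP}{\cB_d(\va,r)} \subseteq \Phi(\cP)$ via Lemma~\ref{lem:subset-lintrans}) and the outer containment to Lemma~\ref{lem:lintrans-sphere-spectral} (which works with $\Phi(\cP) \subseteq \lintrans{\cB_d(\va,r)}{\cB_d(\vzero,R)}$), while you inline both arguments directly. Your choice of $\phi_{\va}$ as the constant map to $\va$ is exactly the paper's. For the outer bound, you use the averaging trick on $\phi(\va \pm r\vu)$, which gives the sharper $\norm{\mM}_2 \leq R/r$; the paper's Lemma~\ref{lem:lintrans-sphere-spectral} applies two triangle inequalities and gets only $\norm{\mM}_2 \leq 2R/r$, but since both estimates are then coarsened to the same final constant $\frac{3R}{r}\sqrt{R^2+d}$, the difference is cosmetic. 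One thing to flag: the step $\frac{r}{2} + \frac{r}{2R} \leq r$ in your inner containment genuinely requires $R \geq 1$, and the lemma as stated in the main text does not hypothesize this. This is not a defect in your argument --- the paper's own Lemma~\ref{lem:inscribed-lintrans} carries the same hypothesis $R \geq 1$, so the condition is implicitly assumed throughout the paper --- but you were right to surface it, since without $R\geq 1$ the claimed inner radius $r/(2R)$ is false (e.g., $\cP = \cB_d(\vzero, R)$ with $R < 1$).
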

\begin{proof}
See Appendix \ref{app:lintrans}.
\end{proof}

However, note that we \textbf{do not} assume we have oracle access to our set of linear endomorphisms $\Phi(\cP)$. Indeed, we will later show that oracle access to $\Phi(\cP)$ does not necessarily follow from oracle access to $\cP$, which in turn makes the problem of linear swap regret minimization more challenging.

Next, we define the isotropic position of a compact convex set.
\begin{definition}[Isotropic Position]
\label{defn:isotropic}
    A convex compact set $\cK$ is said to be in isotropic position if for a uniformly sampled $y \sim \cK$ it holds $\E[y] = 0$ and $\E[y y^\top] = \mI$.
\end{definition}
This brings general convex sets to a useful position with many nice properties. In this paper, we will mainly need the well-known property from convex geometry \citep{Kannan1997:RandomWA} that if $\cK$ is in isotropic position, then the outer and inner radii can be bounded as $\cB_d\left( \vzero, 1 \right) \subseteq \cK \subseteq \cB_d\left( \vzero, n + 1 \right)$.

For all our regret minimization results in Section~\ref{sec:opt_linear_end} and Section~\ref{sec:linswap-regret} we will assume that the set of strategies is in isotropic position. This comes without loss of generality because: (1) there exist polynomial time algorithms to compute an affine transformation that brings any convex set into isotropic position \citep{Lovasz2003:simulated, BenTal2001:Lectures} and, (2) we can construct an efficient linear-swap regret minimizer for any convex set by constructing a linear-swap regret minimizer for its transformation into isotropic position (Lemma~\ref{lem:isotropic_regret}).

Finally, we note that in practice, we can only work with finite precision, and must work with \emph{weak variants} of the above oracles that only return answers that are guaranteed to be correct up to some $\epsilon$ slack. For convenience of presentation, we present all results in the main body of this paper assuming we have access to these strong oracles and ignoring issues of precision; that said, all of our results hold (when suitably relaxed) in the presence of weak oracles. We defer these details to Appendices \ref{app:weak} and \ref{app:eah-weak}.

\subsection{From Linear Swap Regret Minimization to External Regret Minimization}\label{sec:linear-to-external}

Essentially all\footnote{Recent results \citep{dagan2024external}, \citep{peng2024fast} establish a new approach for swap regret minimization, but incur exponential dependence on $1/\eps$.}%
standard approaches to linear swap regret minimization (and more generally, $\phi$-regret minimization) work by reducing the original swap regret minimization problem to a related \emph{external regret} minimization problem, albeit one where the action set is the set $\Phi(\cP)$ of endomorphisms of $\cP$, instead of the original action set $\cP$. Here we outline this reduction (which follows the reduction presented by \cite{Gordon08:No}, along with reductions that arise by casting this as an instance of Blackwell approachability \citep{abernethy2011blackwell, blackwell1956analog, dann2023pseudonorm}).

Recall that the goal of linear swap regret minimization is to produce a sequence of actions $\pt$ that guarantees that the quantity

\begin{equation}\label{eq:reduction-linswap}
\linswap(\vp, \vell) = \sum_{t=1}^T \inp{\pt, \lt} - \min_{\phi^* \in \Phi(\cP)} \sum_{t=1}^T \inp{\phi^*(\pt), \lt}
\end{equation}

\noindent
grows sublinearly with $T$. The main idea behind this reduction is to simultaneously solve a ``dual'' online learning problem where the goal is to produce a sequence of linear transformations $\phi_t \in \Phi(\cP)$ with the guarantee that the ``dual regret''

\begin{equation}\label{eq:reduction-dualreg}
\reg^{\text{dual}}(\mathbf{\phi}, \vp, \vell) = \sum_{t=1}^T \inp{\phi_t(\pt), \lt} - \min_{\phi^* \in \Phi(\cP)} \sum_{t=1}^T \inp{\phi^*(\pt), \lt}
\end{equation}

\noindent
grows sublinearly with $T$.

In each round $t$, after producing $\phi_t$, the learner then plays any action $\pt$ that is a fixed point of $\phi_t$; i.e., with the property that $\phi_t(\pt) = \pt$. Note that such a fixed point is guaranteed to exist by Brouwer's fixed point theorem since $\phi_t$ continuously maps $\cP$ into itself; moreover, for $\phi_t \in \Phi(\cP)$, we can find this fixed point easily via solving a linear system. But if $\phi_t(\pt) = \pt$, then from equations \eqref{eq:reduction-linswap} and \eqref{eq:reduction-dualreg} we immediately have that $\linswap(\vp, \vell) = \reg^{\text{dual}}(\mathbf{\phi}, \vp, \vell) = o(T)$.

Now, the dual regret $\reg^{\text{dual}}(\mathbf{\phi}, \vp, \vell)$ is a form of external regret, where the learner plays actions $\phi_t \in \Phi(\cP)$ and faces losses in $\bbR^{d \times (d+1)}$ determined by $\pt$ and $\lt$; in particular, the learner is trying to play a sequence of transformations $\phi_t$ that competes with the single best transformation $\phi^*$ in hindsight.

\section{Optimizing over Linear Endomorphisms}
\label{sec:opt_linear_end}

Motivated by the reduction in Section \ref{sec:linear-to-external}, we begin by exploring the question of when it is possible to construct an efficient optimization oracle for $\Phi(\cP)$ (or equivalently, a membership / separation oracle). We will show that if we have an explicit representation of $\cP$ as a polytope with a small (polynomial) number of faces or vertices, then this is possible. On the other hand, we will show that if we only have oracle access to $\cP$, it is \emph{impossible} to construct such oracles for $\Phi(\cP)$ that run in sub-exponential time. 

\subsection{Endomorphisms of Simple Polytopes}

In this section, we restrict our attention to endomorphisms of polytopes that are ``simple'': polytopes that are either intersections of a small number of half-spaces, or the convex hull of a small number of vertices (or both). We say that a polytope $\cP$ has an \emph{$H$-representation} of size $n$ if we have an explicit representation of $\cP$ as the set of points satisfying $n$ constraints ($\cP = \set{ \vx \in \bbR^d \mid \vec{a}_i^\top \vx \leq \vb_i, i \in [n]}$). Similarly, we will say that a polytope $\cP$ has a \emph{$V$-representation} of size $n$ if we have an explicit representation of $\cP$ as the convex hull of $n$ vertices ($\cP = \conv(\{v_1, v_2, \dots, v_n\})$). 

We will show that in either case, we can construct an efficient membership oracle (that runs in $\poly(n, d)$ time) for the set $\Phi(\cP)$ of endomorphisms of $\cP$. In fact, we prove a slightly stronger statement: we can construct efficient oracles for $\lintrans{\cA}{\cB}$ as long as either $\cA$ has a small $V$-representation or $\cB$ has a small $H$-representation (and we have oracle access to the other set).

\begin{lemma}\label{lem:vrep}
Let $\cA$ and $\cB$ be convex sets where $\cA$ has a $V$-representation of size $n$ and where we are given a membership oracle for $\cB$. Then we can construct an efficient membership oracle for $\lintrans{\cA}{\cB}$ that runs in time $\poly(d, n)$.
\end{lemma}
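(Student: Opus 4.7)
The plan is to exploit the defining property of a $V$-representation: since $\cA = \conv(\{v_1, \ldots, v_n\})$, membership of an affine map $\phi(x) = Mx + b$ in $\lintrans{\cA}{\cB}$ reduces to checking only finitely many images.

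First I would use affine linearity of $\phi$ together with convexity of $\cB$ to reduce the infinitely many constraints ``$\phi(x) \in \cB$ for all $x \in \cA$'' to the $n$ constraints ``$\phi(v_i) \in \cB$ for $i \in [n]$''. Concretely, any $x \in \cA$ can be written as $x = \sum_{i=1}^n \lambda_i v_i$ with $\lambda_i \geq 0$ and $\sum_i \lambda_i = 1$, and by affine linearity of $\phi$,
\[
\phi(x) = M\!\left(\sum_{i=1}^n \lambda_i v_i\right) + b = \sum_{i=1}^n \lambda_i (M v_i + b) = \sum_{i=1}^n \lambda_i \phi(v_i).
\]
Therefore $\phi(\cA) = \conv(\{\phi(v_1), \ldots, \phi(v_n)\})$. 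Combining this with convexity of $\cB$, we get the equivalence
\[
\phi \in \lintrans{\cA}{\cB} \;\Longleftrightarrow\; \phi(v_i) \in \cB \text{ for every } i \in [n].
\]

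Given this equivalence, the membership oracle for $\lintrans{\cA}{\cB}$ is immediate: on input $\phi = (M, b) \in \bbR^{d \times (d+1)}$, compute the $n$ images $\phi(v_i) = M v_i + b$ (each in time $O(d^2)$), and call the membership oracle for $\cB$ on each one. If all calls return ``yes'', output ``yes''; otherwise output ``no''. The total number of oracle calls is $n$ and the total additional arithmetic work is $O(nd^2)$, giving a running time of $\poly(d, n)$ as required.

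There is no serious obstacle here: the only subtlety is the transition from ``$\phi$ maps the vertices into $\cB$'' to ``$\phi$ maps all of $\cA$ into $\cB$'', and this follows purely from affine linearity of $\phi$ and convexity of $\cB$. (The same argument will not work for the $H$-representation case of the companion lemma, where one instead leverages the finite system of half-space constraints defining $\cB$ and optimizes each one over $\cA$.)
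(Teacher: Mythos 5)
Your proof is correct and follows exactly the same approach as the paper: reduce membership of $\phi$ in $\lintrans{\cA}{\cB}$ to checking that each vertex image $\phi(v_i)$ lies in $\cB$, using affine linearity of $\phi$ and convexity of $\cB$. You simply spell out the justification the paper leaves implicit.
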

\begin{proof}
Write $\cA = \conv(\{v_1, v_2, \dots, v_n\})$. To check whether a transformation $\phi$ belongs to $\lintrans{\cA}{\cB}$, it suffices to check whether $\phi(v_i) \in \cB$ for each $v_i$. This can be done with $n$ calls to the membership oracle for $\cB$. 
\end{proof}

\begin{lemma}\label{lem:hrep}
Let $\cA$ and $\cB$ be convex sets where $\cB$ has a $H$-representation of size $n$ and where we are given an optimization oracle for $\cA$. Then we can construct an efficient membership oracle for $\lintrans{\cA}{\cB}$ that runs in time $\poly(d, n)$.
\end{lemma}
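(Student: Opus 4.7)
The plan mirrors the proof of Lemma~\ref{lem:vrep}, but with the roles of the two sets swapped: instead of checking one condition per vertex of $\cA$, I will check one condition per facet of $\cB$, using the optimization oracle on $\cA$ to verify each. Write the given $H$-representation as $\cB = \{\vy \in \bbR^d : \va_i^\top \vy \le b_i \text{ for each } i \in [n]\}$, and express the candidate transformation as $\phi(\vx) = \mM \vx + \vc$.

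By definition, $\phi \in \lintrans{\cA}{\cB}$ if and only if $\phi(\vx) \in \cB$ for every $\vx \in \cA$, which is equivalent to requiring that for each facet index $i \in [n]$,
\[
\max_{\vx \in \cA} \va_i^\top (\mM \vx + \vc) \;=\; \va_i^\top \vc + \max_{\vx \in \cA} (\mM^\top \va_i)^\top \vx \;\le\; b_i.
\]
Each of these $n$ inequalities is a test of the form ``does the maximum of a linear functional over $\cA$ exceed a given threshold?'', which can be settled with a single call to the optimization oracle for $\cA$ on the cost vector $\mM^\top \va_i$, followed by a constant-time arithmetic comparison.

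Therefore the membership procedure is: for $i = 1, \dots, n$, query the optimization oracle on objective $\mM^\top \va_i$ to obtain $\vx_i^\star \in \argmax_{\vx \in \cA} (\mM^\top \va_i)^\top \vx$; if $\va_i^\top(\mM \vx_i^\star + \vc) > b_i$ for any $i$, report $\phi \notin \lintrans{\cA}{\cB}$, otherwise report $\phi \in \lintrans{\cA}{\cB}$. Correctness follows from the displayed equivalence above. The running time is $n$ oracle calls plus $\poly(d,n)$ arithmetic, giving total cost $\poly(d, n)$ as required.

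There is no real obstacle here; the lemma is essentially the dual statement of Lemma~\ref{lem:vrep}, exchanging ``enumerate vertices of the source and test membership in the target'' for ``enumerate facets of the target and optimize over the source.'' The only subtlety worth flagging in the write-up is that we do not need a separation oracle for $\cA$, only an optimization oracle, and that the linear functional $\mM^\top \va_i$ used at each call is explicitly constructed from $\phi$ and the given $H$-description of $\cB$, so no hidden access to $\cB$ beyond its explicit representation is required.
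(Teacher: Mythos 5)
Your proof is correct and follows essentially the same approach as the paper's: write out the $H$-representation of $\cB$, and for each facet constraint $\va_i^\top \vy \le b_i$ reduce the check ``$\va_i^\top \phi(\vx) \le b_i$ for all $\vx \in \cA$'' to a single optimization-oracle call over $\cA$. The only difference is that you spell out the explicit cost vector $\mM^\top \va_i$ and the affine offset $\va_i^\top \vc$, which the paper leaves implicit.
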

\begin{proof}
Write $\cB = \set{ \vx \in \bbR^d \mid \vec{a}_i^\top \vx \leq \vb_i, i \in [n]}$. To check whether a transformation $\phi$ belongs to $\lintrans{\cA}{\cB}$, it suffices to check whether $\vec{a}_i^{T}\phi(\vx) \leq \vec{b}_i$ for all $i \in [n]$ and $x \in \cA$. Since $\vec{a}_i^{T}\phi(\vx)$ is an affine function in $\vx$, for each $i$ this can be checked with one call to the optimization oracle for $\cA$, and hence overall can be done in $\poly(d, n)$ time. 
\end{proof}

As a consequence of Lemmas \ref{lem:vrep} and \ref{lem:hrep}, we have the following corollary.

\begin{corollary}\label{cor:efficient-rep}
If $\cP$ has either an $H$-representation of size $n$ or a $V$-representation of size $n$, it is possible to construct an efficient membership oracle for $\cP$.
\end{corollary}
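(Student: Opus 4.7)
The plan is to reduce both cases to Lemmas~\ref{lem:vrep} and~\ref{lem:hrep} by setting $\cA = \cB = \cP$ (so that $\lintrans{\cA}{\cB} = \Phi(\cP)$). Each lemma demands one auxiliary oracle on $\cP$ beyond its explicit representation, and in both cases that oracle is realizable in $\poly(d, n)$ time by a single linear program read off from the given representation. (I read the statement as concerning a membership oracle for $\Phi(\cP)$; the paragraph preceding the corollary explicitly promises such an oracle, and the lemmas above furnish nothing else.)

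First, suppose $\cP$ has an $H$-representation $\cP = \{\vx \in \bbR^d \mid \vec{a}_i^\top \vx \le \vb_i, \ i \in [n]\}$. To invoke Lemma~\ref{lem:hrep} with $\cA = \cB = \cP$, I need an optimization oracle on $\cA = \cP$. Maximizing a linear objective $\inp{\vc, \vx}$ subject to the $n$ given half-space constraints is a linear program in $d$ variables and $n$ constraints, and is therefore solvable in $\poly(d, n)$ time. Feeding this optimization oracle into Lemma~\ref{lem:hrep} produces the desired membership oracle for $\Phi(\cP)$ in $\poly(d, n)$ time.

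Second, suppose $\cP$ has a $V$-representation $\cP = \conv(\{v_1, \dots, v_n\})$. To invoke Lemma~\ref{lem:vrep} with $\cA = \cB = \cP$, I need a membership oracle on $\cB = \cP$. Given a query point $\vy$, testing whether $\vy \in \cP$ reduces to checking feasibility of the LP $\vy = \sum_{i=1}^n \lambda_i v_i$, $\sum_{i=1}^n \lambda_i = 1$, $\lambda_i \ge 0$, which again is solvable in $\poly(d, n)$ time. Plugging this membership oracle into Lemma~\ref{lem:vrep} yields the efficient membership oracle for $\Phi(\cP)$.

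There is no genuine obstacle: the corollary is a two-line consequence of the previous lemmas together with the elementary observation that, from any explicit polytope representation of polynomial size, both linear optimization over and membership in the polytope are themselves polynomial-size LPs. The only subtle point to flag is that in each branch we use the lemma that \emph{matches} the kind of representation we are given (V-rep with Lemma~\ref{lem:vrep}, H-rep with Lemma~\ref{lem:hrep}), so that we never need to convert between representations (which could incur an exponential blow-up).
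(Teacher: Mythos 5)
Your proof is correct and is exactly the argument the paper leaves implicit (the paper gives no separate proof of the corollary, calling it a consequence of Lemmas~\ref{lem:vrep} and~\ref{lem:hrep}). You are also right that the statement contains a typo: the corollary must be read as constructing a membership oracle for $\Phi(\cP)$ rather than for $\cP$ (a membership oracle for $\cP$ itself is immediate from either representation and would not need the lemmas at all), and your instantiation $\cA = \cB = \cP$ with the LP-based auxiliary oracles in each branch is precisely what the authors intend.
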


We remark here that many practically occurring strategy sets have the property that they have a small $H$- or $V$- representation with $n$ polynomial in the dimension $d$. This is the case, for example, for the strategy sets that arise in Bayesian games and extensive-form games, along with for many combinatorial sets (e.g. matchings or flows). Therefore, assuming that the strategy set is in Isotropic Position (Definition~\ref{defn:isotropic}), it follows from Corollary \ref{cor:efficient-rep} that there exist efficient linear-swap regret minimization algorithms over these action sets that incur linear-swap regret at most $O(\poly(d)\sqrt{T})$.

However, other naturally occurring combinatorial polytopes (such as the spanning tree polytope) \emph{do not} have a succinct representation in terms of vertices or half-spaces. Moreover, a generic convex set will, in general, not even have a finite such representation. In the next section we will show that this is a real obstacle to optimizing over the space of endomorphisms.

\subsection{Hardness of Optimization over Linear Endomorphisms}

We now show that it is, in general, computationally intractable to build an efficient membership (or separation, or optimization) oracle for $\Phi(\cP)$ given only an efficient membership oracle for $\cP$. In fact, this holds even for deciding very weak forms of membership.

\begin{theorem}\label{thm:hardness-opt}
Fix a $d > 0$. Let $A$ be any algorithm that takes as input a convex set $\cP \subset \bbR^d$ (through a membership oracle to $\cP$) and an affine transformation $\phi \in \bbR^{d \times (d+1)}$ and has the guarantee that:
\begin{enumerate}
    \item If $\cB_{d \times (d+1)}(\phi, 1/d) \subseteq \Phi(\cP)$, outputs YES.
    \item If $\phi \not\in \Phi(\cP)$, outputs NO.
\end{enumerate}
Then $A$ must make at least $\exp(\Omega(d))$ queries to the membership oracle for $\cP$. 
\end{theorem}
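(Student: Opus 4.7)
The plan is a Yao-style adversary argument: I will exhibit a fixed transformation $\phi_0$ and two families of convex sets which no algorithm can distinguish with $\exp(o(d))$ membership queries, yet on which any correct $A$ must output opposite labels. For concreteness assume $d$ is even (the odd case follows by a minor modification with one fixed coordinate). The key ingredient is an orthogonal and antisymmetric matrix $Q \in \bbR^{d \times d}$, namely the block-symplectic matrix $J$ with $J^2 = -I$; its crucial property is that $v^\top Q v = 0$ for every $v \in \bbR^d$. I would set $\phi_0(x) := (1 - 3/d)\, Q x$ and compare the two instances $\cP_A = \cB_d(\vzero, 1)$ and, for each unit vector $v \in S^{d-1}$, $\cP_v = \cB_d(\vzero, 1) \cap \{x \in \bbR^d : \langle v, x\rangle \le 1 - 4/d\}$ (the unit ball with a small spherical cap removed in direction $v$).

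To verify the two labels: any perturbation $\phi = \phi_0 + [M' \mid b']$ with $\|[M' \mid b']\|_F \le 1/d$ satisfies $\|M'\|_F, \|b'\| \le 1/d$, so for $\|x\| \le 1$ we get $\|\phi(x)\| \le (1-3/d)\|Qx\| + \|M'\|_F\|x\| + \|b'\| \le (1-3/d) + 1/d + 1/d = 1 - 1/d \le 1$, hence $\cB_{d \times (d+1)}(\phi_0, 1/d) \subseteq \Phi(\cP_A)$ and $A$ must output YES on $(\cP_A, \phi_0)$. Conversely, take $x^{\star} := Q^\top v$. Then $\|x^{\star}\| = 1$ (since $Q$ is orthogonal) and $\langle v, x^{\star}\rangle = v^\top Q^\top v = 0 \le 1 - 4/d$, so $x^{\star} \in \cP_v$; but $\phi_0(x^{\star}) = (1-3/d)\, Q Q^\top v = (1-3/d)\, v$ satisfies $\langle v, \phi_0(x^{\star})\rangle = 1-3/d > 1-4/d$, placing it inside the removed cap. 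Hence $\phi_0 \notin \Phi(\cP_v)$ and $A$ must output NO on $(\cP_v, \phi_0)$.

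For the indistinguishability step, fix a deterministic $A$ with query budget $N$ and run it on $(\cP_A, \phi_0)$: it makes a fixed sequence of queries $y_1, \dots, y_N$ independent of $v$, and by correctness outputs YES. The executions of $A$ on $(\cP_A, \phi_0)$ and $(\cP_v, \phi_0)$ agree as long as no $y_i$ satisfies $\|y_i\| \le 1$ and $\langle v, y_i\rangle > 1-4/d$. The standard spherical concentration bound $\Pr_{v \sim S^{d-1}}[\langle v, u\rangle > t] \le e^{-d t^2/2}$ (for any unit vector $u$) gives $e^{-d/2 + O(1)}$ per query at $t = 1 - 4/d$, and a union bound yields failure probability at most $N \cdot e^{-d/2 + O(1)}$. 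If $N = \exp(o(d))$, this is $o(1)$, so the two executions coincide for the majority of $v$, forcing $A$ to also output YES on $(\cP_v, \phi_0)$ and contradicting correctness. Yao's minimax principle then extends the lower bound to randomized $A$.

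The main technical obstacle is engineering a $\phi_0$ that simultaneously (i) lies robustly inside $\Phi(\cP_A)$, (ii) fails to be an endomorphism of $\cP_v$ for \emph{every} direction $v$, and (iii) reveals no information about $v$ --- any leak would allow the algorithm to query a single point near the offending cap and trivially distinguish the instances (indeed, naive attempts such as $\phi_0$ a constant or $\phi_0 = \alpha I$ fail for exactly this reason). The antisymmetric orthogonal map $Q$ resolves all three requirements at once, via the universal ``witness map'' $v \mapsto Q^\top v$ which always lies in the equatorial band of $\cP_v$ yet is rotated by $\phi_0$ precisely onto the polar cap, independent of $v$.
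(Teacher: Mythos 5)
Your proof is correct and follows the same broad strategy as the paper's: distinguish a full ball from a ball with a small spherical cap removed, using a fixed almost-rotation $\phi_0$ for which the answer flips. Within that, you make two genuinely different choices. First, where the paper picks $\phi_0 = -\frac{4d-1}{4d}\,I$ (with witness point $-u$ for a cap in direction $u$), you use $\phi_0 = (1-3/d)\,Q$ for an antisymmetric orthogonal $Q$ (with witness $Q^\top v$, lying in the equator of each cap by skew-symmetry). The paper's $-I$ is slightly simpler and works uniformly for all $d$, whereas your $Q$ requires $d$ even and a patch for odd $d$; but functionally both are rotations scaled by $1 - \Theta(1/d)$ and both arguments go through. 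Second, your indistinguishability step goes through Yao's principle and the Gaussian-tail bound for spherical caps ($\Pr_{v}[\langle v,u\rangle > t]\le e^{-dt^2/2}$), union-bounded over the fixed query transcript on the full ball, whereas the paper picks the explicit combinatorial family $U = \{\pm 1/\sqrt d\}^d$ and shows those $2^d$ caps are pairwise disjoint, forcing one query per cap. Your probabilistic route is shorter (no disjointness lemma) and handles randomized algorithms for free; the paper's is more elementary (no measure concentration) and makes the $\exp(\Omega(d))$ constant explicit.

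One small inaccuracy in your commentary, not in the proof: you assert that $\phi_0 = \alpha I$ ``fails'' as a naive attempt. For positive $\alpha < 1$ this is true (scaling preserves any star-shaped set around the origin), but for negative $\alpha$ close to $-1$ it does not fail --- indeed the paper's construction is precisely $\phi_0 = -\frac{4d-1}{4d}I$, whose witness $-u$ lies in the opposite pole of every capped ball. Your worry about $\phi_0$ ``leaking'' information about $v$ also seems misplaced: $\phi_0$ is a single fixed transformation over all instances, independent of $v$, so the algorithm's knowledge of it reveals nothing about where the cap sits. None of this affects correctness; it just means your antisymmetric-orthogonal machinery is slightly heavier than necessary.
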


The details of the proof of Theorem~\ref{thm:hardness-opt} are deferred to Appendix \ref{app:lb}. The main idea is that it is hard to distinguish (given only oracle access) between the full $d$-dimensional unit ball and the $d$-dimensional unit ball with a random ``cap'' removed (i.e., the intersection of the unit ball with a halfspace of the form $\inp{\vx, \vec{u}} \leq \kappa$ for a random direction $u$ and some constant $\kappa$). For example, the only membership oracle queries one can make that distinguish between these two sets are sets that belong to the cap itself. On the other hand, these two sets have very different sets of endomorphisms (e.g. negation is always an endomorphism of the full ball, and never an endomorphism of the capped ball). Since there exist $\exp(\Omega(d))$ non-intersecting caps of constant width on the unit ball, any algorithm satisfying the conditions of Theorem~\ref{thm:hardness-opt} requires at least as many membership queries.

One interesting implication of Theorem~\ref{thm:hardness-opt} is that, in general, it is intractable to exactly compute the linear swap regret for a given sequence of losses $\vell_1, \dots, \vell_T$ and strategies $\vp_1, \dots, \vp_T$. In particular, note that computing linear swap regret is exactly the problem of computing the maximum of the linear functional $\sum_{t =1}^{T} \langle \vp_t - \phi(\vp_t), \vell_t\rangle$ over all $\phi \in \Phi$. Moreover, since we can decompose any $n$-by-$n$ matrix $\mM$ into a sum of $n$ rank-1 matrices, we can construct a pair of strategy / loss sequences that correspond to any linear functional. Therefore any efficient procedure that generically computes linear swap regret can be used as an optimization oracle for the set $\Phi$ (which in turn implies the existence of an efficient membership oracle, contradicting Theorem~\ref{thm:hardness-opt}). Nevertheless, as we will see in the next section, despite the intractability of computing the regret for given sequences of losses and strategies, there exist efficient polynomial-time algorithms that minimize linear-swap regret with only oracle access to the strategy sets.

\section{Linear Swap Regret Minimization for General Convex Sets}
\label{sec:linswap-regret}

In this section, we prove the first major result of this paper: that, despite the hardness of optimizing over the set of linear endomorphisms (shown in Appendix \ref{app:lb}), it is still possible to construct an algorithm that efficiently minimizes linear swap regret (obtaining $O(\poly(d)\sqrt{T})$ linear swap regret after $T$ rounds)
with only oracle access to the action set $\cP$.

Our algorithm builds off of the reduction of \cite{Gordon08:No} summarized in Section \ref{sec:linear-to-external}, but necessitates some new algorithmic ideas since we cannot reduce directly to an external regret problem over $\Phi(\cP)$. The first key observation is that although this reduction requires the learner to play transformations $\phi_t \in \Phi(\cP)$ in the dual regret minimization problem, actually the only property of $\phi_t$ we use is that it possesses a fixed point $\vp_t \in \cP$. Therefore, instead of working with the set of linear endomorphisms $\Phi(\cP)$ of $\cP$, it is natural to work with the \emph{set of linear transformations with fixed points in $\cP$}: the set $\PhiFP(\cP)$ containing all $\phi \in \bbR^{d \times (d+1)}$ such that there exists a $\vp \in \cP$ with $\phi(\vp) = \vp$. %

The set of transformations $\PhiFP(\cP)$ is a superset of the set of endomorphisms $\Phi(\cP)$, and has the advantage that it is computationally tractable to check whether a given $\phi$ contains a fixed point in $\cP$; for example, this can be done by minimizing the convex quadratic function $\norm{\phi(p) - p}^2$ over $p \in \cP$, which can be done with just oracle access to $\cP$. However, the downside is that this set of fixed-point transformations $\PhiFP(\cP)$ is highly \emph{non-convex}. As a result, it can easily be shown that it is impossible to obtain sublinear $\phi$-regret with respect to the full set of transformations $\PhiFP(\cP)$. 

The second key observation is that instead of working with either the computationally intractable set of transformations $\Phi(\cP)$, or the computationally tractable but non-convex set of transformations $\PhiFP(\cP)$, we can effectively interpolate between them, starting with a very loose approximation to $\Phi(\cP)$ and refining it over time. More precisely, in each round we will compute a ``shell'' approximation $\PhiShell(\cP)$ of $\Phi(\cP)$ that is guaranteed to contain $\Phi(\cP)$, but also defined in such a way that when we project to it (using a variant of projected gradient descent to solve the dual online learning problem), we will find a linear transformation belonging to $\PhiFP(\cP)$. In addition, $\PhiShell(\cP)$ will always be defined via a polynomial number of explicit half-space constraints, thus allowing us to optimize over it and substitute it in for $\Phi(\cP)$ in the dual external regret minimization problem.

The last ingredient is how we update $\PhiShell(\cP)$. Note that as long as the transformations $\phi_t \in \PhiShell(\cP)$ we construct contain a fixed point in $\cP$ (i.e., $\phi_t$ also belongs to $\PhiFP(\cP)$), we can play the corresponding fixed points and the logic of the original reduction goes through. On the other hand, when $\phi_t$ does not belong to $\PhiFP(\cP)$, we would like to update $\PhiShell(\cP)$ so to also exclude $\phi_t$. To this end, we develop an efficient subroutine that we call a \emph{semi-separation oracle}, that given any linear transformation $\phi$, either certifies that $\phi$ belongs to $\PhiFP(\cP)$, or produces a linear hyperplane separating it from $\Phi(\cP)$ (that we can then use to update the shell set). Note that this is a weaker oracle than a separation oracle for $\Phi(\cP)$ (which by Theorem \ref{thm:hardness-opt} cannot be efficiently implemented).

The remainder of this section is structured as follows. First, we present the construction of this semi-separation oracle in Section \ref{sec:semi-sep}. We then describe how to strengthen this semi-separation oracle to an oracle that can semi-separate convex sets of points from $\Phi(\cP)$ by running a variant of the ellipsoid algorithm we call $\shellelpsd$ (Section \ref{sec:shell-ellipsoid}). In Section \ref{sec:shell_gd_and_proj}, we use these oracles to develop two useful tools: Shell Gradient Descent (a variant of projected gradient descent that can handle the fact that we change our action space to a different shell set every round), and Shell Projection (a tool that uses $\shellelpsd$ to guarantee that projecting onto our shell set results in a transformation $\phi \in \PhiFP(\cP)$). Finally, we put all these pieces together and present the final learning algorithm in Section \ref{sec:mainalgo}.

\subsection{A Semi-Separation Oracle for the Set of  Endomorphisms}\label{sec:semi-sep}

We begin by constructing a semi-separation oracle for the set $\Phi(\cP)$ of endomorphisms of $\cP$ (see Figure \ref{fig:semi-sep}). Recall that this is an algorithm which, when given a linear transformation $\phi$, either certifies that it contains a fixed point in $\cP$ or produces a hyperplane separating it from $\Phi(\cP)$. Intuitively, this follows from the fact that any linear transformation $\phi$ without a fixed point must strictly increase the inner product of $p$ in some (efficiently computable) direction $u$, whereas this cannot be the case for any endomorphism of $\cP$.

\begin{lemma}[Semi-separation oracle]\label{lem:sep-non-fixed-point-strong}
    There exists an algorithm that, given any $\phi \in \bbR^{d \times (d+1)}$, runs in $\poly(d)$ time (making $\poly(d)$ oracle queries to $\cP$) and either returns a fixed point of $\phi$ contained within $\cP$ or a hyperplane separating $\phi$ from the set $\Phi(\cP)$ of endomorphisms.
\end{lemma}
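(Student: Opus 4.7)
The plan is to reduce the problem to a single convex quadratic program: try to find a fixed point of $\phi$ in $\cP$ and, whenever this fails, read off a separating hyperplane directly from the first-order optimality conditions. Writing $\phi(x) = Mx + b$, the first step is to solve
\begin{equation*}
\min_{x \in \cP} f(x), \qquad f(x) \;:=\; \tfrac12\,\|\phi(x) - x\|^2 \;=\; \tfrac12\,\|(M - I)x + b\|^2,
\end{equation*}
which is a convex quadratic in $x$. This is a standard convex program solvable in $\poly(d)$ time with $\poly(d)$ calls to the separation oracle for $\cP$. Let $x^* \in \cP$ denote the optimum. If $f(x^*) = 0$ then $\phi(x^*) = x^*$, and we return $x^*$, satisfying the first alternative of the lemma.

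The more interesting case is $f(x^*) > 0$, in which the residual $r^* := \phi(x^*) - x^*$ is nonzero. I would then invoke the first-order optimality condition $\langle \nabla f(x^*),\, y - x^*\rangle \geq 0$ for all $y \in \cP$. Since $\nabla f(x^*) = (M - I)^\top r^*$ and $(M - I)(y - x^*) = (\phi(y) - y) - r^*$, expanding yields the key inequality
\begin{equation*}
\langle r^*,\, \phi(y) - y\rangle \;\geq\; \|r^*\|^2 \;>\; 0 \qquad \text{for every } y \in \cP.
\end{equation*}
Next, I would call the optimization oracle for $\cP$ to produce $y^* \in \argmax_{y \in \cP} \langle r^*, y\rangle$ and set $\kappa := \langle r^*, y^*\rangle$. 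Specializing the displayed inequality to $y = y^*$ gives $\langle r^*, \phi(y^*)\rangle \geq \kappa + \|r^*\|^2 > \kappa$.

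The separating hyperplane is now immediate: for any candidate $\psi \in \bbR^{d\times(d+1)}$ with $\psi(x) = Nx + c$, the map $\ell(\psi) := \langle r^*, \psi(y^*)\rangle = \langle r^*(y^*)^\top, N\rangle + \langle r^*, c\rangle$ is affine linear in $(N, c)$. Every endomorphism $\psi \in \Phi(\cP)$ satisfies $\psi(y^*) \in \cP$, hence $\ell(\psi) \leq \kappa$, while $\ell(\phi) > \kappa$ by the bound above. Thus $\{\psi : \ell(\psi) \leq \kappa\}$ is a linear half-space containing $\Phi(\cP)$ and excluding $\phi$, which is exactly the separating hyperplane we need. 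The only nontrivial subroutine is the convex QP, which under the strong-oracle convention of this section returns an exact minimizer; the remaining operations amount to one optimization-oracle call and closed-form arithmetic, so the overall cost is $\poly(d)$ oracle queries. The main obstacle I would expect to manage is the weak-oracle variant, where $x^*$ is only approximate and the KKT inequality weakens accordingly; there one has to set the accuracy threshold so that $\|r^*\|^2$ still dominates the slack, which is the content of the appendix treatment.
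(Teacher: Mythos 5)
Your proposal is correct and follows essentially the same route as the paper's proof: minimize $\tfrac12\|\phi(x)-x\|^2$ over $\cP$, use the first-order optimality condition to show $\langle r^*,\phi(y)-y\rangle>0$ for all $y\in\cP$, take $y^*=\arg\max_{y\in\cP}\langle r^*,y\rangle$, and return the linear-in-$\phi'$ constraint $\langle r^*,\phi'(y^*)\rangle\le\langle r^*,y^*\rangle$. The only difference is cosmetic bookkeeping (you carry the quantitative slack $\|r^*\|^2$ explicitly, anticipating the weak-oracle refinement in the appendix).
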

(For the weak version of this, see Lemma~\ref{lem:sep-non-fixed-point}.)
\begin{proof}
With oracle access to $\cP$, we can find the point $\vp^{*} \in \cP$ that minimizes the convex function $f(\vp) = \norm{\phi(\vp) - \vp}^2$. If $f(\vp^*) = 0$, then $\phi(\vp^{*}) = \vp^{*}$ and we can return $\vp^{*}$ as a fixed point of $\phi$.

Otherwise, it must be the case that $\inp{\phi(\vp) - \vp, \phi(\vp^*) - \vp^*} > 0$ for all $\vp \in \cP$. Let $\vu = \phi(\vp^*) - \vp^*$, and use our oracle access to $\cP$ to compute the point $\vp_{\vu} = \arg\max_{\vp \in \cP} \inp{\vp, \vu}$ in $\cP$ that is furthest in the direction $\vu$. Note that by this maximality, any actual endomorphism $\phi' \in \Phi(\cP)$ must satisfy the linear constraint that $\inp{\phi'(\vp_{\vu}), \vu} \leq \inp{\vp_{\vu}, \vu}$. However for the provided $\phi$, we have that $\inp{\phi(\vp) - \vx, \vu} > 0$ for all $\vp \in \cP$, and in particular $\inp{\phi(\vp_{\vu}), \vu} > \inp{\vp_{\vu}, \vu}$. Therefore the linear constraint $\inp{\phi'(\vp_{\vu}), \vu} \leq \inp{\vp_{\vu}, \vu}$ separates $\phi$ from all $\phi' \in \Phi(\cP)$, and we can return it (in particular, this is a linear constraint on $\phi'$, and can be rewritten in the form $\inp{\phi', C} \leq b$ for an appropriate $C \in \bbR^{d \times (d+1)}$ and $b \in \bbR$ that we can efficiently compute given $\vu$ and $\vp_{\vu}$).
\end{proof}

\subsection{The Shell Ellipsoid Algorithm}\label{sec:shell-ellipsoid}

\begin{figure}
    \centering
    \tikzset{every picture/.style={line width=0.75pt}} %

\begin{tikzpicture}[x=1.1pt,y=1.08pt,yscale=-1,xscale=1]

\draw  [fill={rgb, 255:red, 0; green, 0; blue, 0 }  ,fill opacity=0.04 ][line width=1.5]  (126.39,83.45) .. controls (137.87,61.69) and (155.73,80.63) .. (171.25,72.21) .. controls (186.76,63.8) and (198.16,39.39) .. (214.81,55.75) .. controls (231.46,72.12) and (216.97,98.49) .. (230,109) .. controls (243.03,119.51) and (248.46,139.1) .. (225,151) .. controls (201.54,162.9) and (201.49,153.89) .. (190.32,152.05) .. controls (179.16,150.2) and (167.84,164.41) .. (149,157) .. controls (130.16,149.59) and (145.26,140.97) .. (141.92,123.99) .. controls (138.58,107.01) and (114.92,105.2) .. (126.39,83.45) -- cycle ;
\draw  [fill={rgb, 255:red, 255; green, 226; blue, 99 }  ,fill opacity=1 ][line width=1.5]  (172.38,122.19) .. controls (165.79,113) and (167.9,100.2) .. (177.09,93.62) .. controls (186.29,87.03) and (199.08,89.14) .. (205.67,98.33) .. controls (212.26,107.52) and (210.14,120.32) .. (200.95,126.9) .. controls (191.76,133.49) and (178.97,131.38) .. (172.38,122.19) -- cycle ;
\draw  [fill={rgb, 255:red, 74; green, 144; blue, 226 }  ,fill opacity=0.36 ] (217.78,93.46) -- (237.51,81.14) -- (239.54,101.68) -- cycle ;
\draw  [fill={rgb, 255:red, 208; green, 2; blue, 27 }  ,fill opacity=0.36 ] (123.81,114.11) -- (123.74,137.37) -- (105.27,128.15) -- cycle ;
\draw  [fill={rgb, 255:red, 184; green, 233; blue, 134 }  ,fill opacity=0.36 ] (193.52,115.35) -- (265.58,162.83) -- (252.24,178.58) -- cycle ;
\draw  [fill={rgb, 255:red, 184; green, 233; blue, 134 }  ,fill opacity=0.36 ] (183.16,119.21) -- (194.32,142.03) -- (181.21,144.53) -- cycle ;
\draw  [fill={rgb, 255:red, 74; green, 144; blue, 226 }  ,fill opacity=0.36 ] (201.35,79.59) -- (204.85,62) -- (215.59,68.68) -- cycle ;
\draw  [fill={rgb, 255:red, 184; green, 233; blue, 134 }  ,fill opacity=0.36 ] (187.02,113.26) -- (183.58,95.65) -- (196.06,97.76) -- cycle ;

\node[rotate=0] at (172, 140) {$\cF_1$};
\node[rotate=0] at (248, 160) {$\cF_2$};
\node[rotate=20] at (177, 110) {$\cF_3$};
\node[rotate=0] at (250, 90) {$\cF_4$};
\node[rotate=0] at (192, 70) {$\cF_5$};
\node[rotate=0] at (105, 115) {$\cF_6$};

\node at (170, 90) {\large$\Phi$};
\node at (160, 170) {\large$\Phi_\text{FP}$};
\node[anchor=west,text width=8.0cm] at (290, 60) {$\blacktriangleright$ $\cF_1, \cF_2,\cF_3$: \shellelpsd{}$(\cF)$ will return $\phi \in \Phi_\text{FP} \cap \cF$.};

\node[anchor=west,text width=8.0cm] at (290, 110) {$\blacktriangleright$ $\cF_4, \cF_5$: \shellelpsd{}$(\cF)$ might return a collection of hyperplanes separating $\cF$ from $\Phi$, or a transformation $\phi \in \Phi_\text{FP} \cap \cF$.};

\node[anchor=west,text width=8.0cm] at (290, 160) {$\blacktriangleright$ $\cF_6$: \shellelpsd{}$(\cF_6)$ will return a collection of hyperplanes separating $\cF$ from $\Phi$.};

\end{tikzpicture}
    \caption{Illustration of the behavior of the \shellelpsd{} subroutine depending on the input convex set $\cF$. $\Phi$ is the set of linear endomorphism on $\cP$, and $\Phi_\text{FP}$ is the set of linear transformations with a fixed point in $\cP$.}
    \label{fig:shell-ellipsoid}
\end{figure}

We now take the semi-separation oracle from the previous section (for individual points) and use it to construct a stronger semi-separation oracle for convex sets of points. This procedure -- which we call $\shellelpsd(\cF)$ -- takes as input a convex set $\cF \subset \bbR^{d \times (d+1)}$ of transformations (with efficient oracle access) and either returns a transformation $\phi \in \cF$ that possesses a fixed point in $\cP$, or certifies that the intersection of $\cF$ and $\Phi(\cP)$ is (nearly) empty (see Figure \ref{fig:shell-ellipsoid}). This will be an important primitive later, allowing us to efficiently shrink our shell set when we are presented with a fixed-point-free $\phi$ (in essence, allowing us to not just remove $\phi$ from our shell, but some ball of transformations around $\phi$ from the shell).

\begin{algorithm}[ht]
    \caption{$\shellelpsd(\cF)$ either finds a $\phi \in \cF$ with a fixed point in $\cP$, or returns a frontier (nearly) separating $\cF$ from $\Phi(\cP)$}\label{algo:shellelpsd-strong}
    \KwData{Convex strategy set $\cP \subset \bbR^d$}
    \KwInput{Oracle access to a convex set $\cF \subset \bbR^{d \times (d+1)}$ of affine transformations, precision parameter $\epsilon$}
    \KwOutput{Either a transformation $\phi \in \cF$ with a fixed point inside $\cP$, or a convex set $\cQ \supseteq \Phi(\cP)$ (specified as the intersection of polynomially many half-spaces) with $\vol{ \cQ \cap \cF} < \epsilon$}

    \DontPrintSemicolon

    Initialize $\cQ = \bbR^{d \times (d+1)}$

    \Repeat{$\vol{\mathcal{E}} < \epsilon$}{
        Let $\mathcal{E}$ be the minimum volume ellipsoid containing $\cQ \cap \cF$
    
        Let $\phi$ be the center of $\mathcal{E}$ (note $\phi \in \cQ \cap \cF$)

        Run the semi-separation oracle (Lemma~\ref{lem:sep-non-fixed-point}) on $\phi$
            
        \uIf{$\phi \in \PhiFP(\cP)$}{
            \Return $\phi$
        }
        \uElse{
            Get a half-space $H$ that separates $\phi$ from $\Phi(\cP)$ (\ie $\Phi(\cP) \subset H$, $\phi \not\in H$)

            Set $\cQ \leftarrow \cQ \cap H$
        }
    }

    \Return $\cQ$
\end{algorithm}

\begin{lemma}[Shell Ellipsoid]\label{lem:shellelpsd-strong}
    For any convex set $\cF \subseteq \cB_{d \times (d+1)}(\vzero, D)$ with efficient oracle access and $\epsilon > 0$, Algorithm~\ref{algo:shellelpsd-strong} runs in time $\poly(d, \log \epsilon^{-1}, \log D)$ and either
    \begin{itemize}
        \item Returns an affine transformation $\phi \in \cF$ with a fixed point inside $\cP$ (\emph{i.e.},  $\phi \in \cF \cap \PhiFP(\cP)$). 

        \item Returns a polytope $\cQ$ specified as the intersection of at most $\poly(d, \log \epsilon^{-1}, \log D)$ half-spaces with the property that $\Phi(\cP) \subseteq \cQ$ and $\vol{\cQ \cap \cF} < \eps$. 
    \end{itemize}
\end{lemma}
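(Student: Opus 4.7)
The plan is to execute a standard ellipsoid-style argument tailored to the two-set structure maintained by the algorithm: the accumulated polytope $\cQ$ is guaranteed to enclose $\Phi(\cP)$, while the minimum-volume ellipsoid $\mathcal{E}$ tracks the live region $\cQ \cap \cF$ and shrinks geometrically at each iteration. Correctness of the two output branches will follow from a short invariant, and the iteration bound will follow from the classical half-ellipsoid volume lemma applied in ambient dimension $d' := d(d+1)$.

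I will maintain the loop invariant $\Phi(\cP) \subseteq \cQ$. Initially $\cQ = \bbR^{d \times (d+1)}$, and whenever the algorithm adds a halfspace $H$ to $\cQ$ it was produced by the semi-separation oracle of Lemma~\ref{lem:sep-non-fixed-point-strong}, so $\Phi(\cP) \subseteq H$ by that lemma. On exit through the \emph{return $\phi$} branch, the centre $\phi$ of the MVE lies in $\cQ \cap \cF$ (the centre of the Löwner--John outer ellipsoid of a convex body is always in the body, by the $1/d'$ scaling factor), and the semi-separation oracle has certified $\phi \in \PhiFP(\cP)$, so $\phi \in \cF \cap \PhiFP(\cP)$ as required. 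On exit through the \emph{return $\cQ$} branch, $\cQ \cap \cF \subseteq \mathcal{E}$ yields $\vol{\cQ \cap \cF} \leq \vol{\mathcal{E}} < \epsilon$, and $\cQ$ is the intersection of exactly the halfspaces added during the loop, so its description complexity equals the iteration count.

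The core of the proof is bounding that iteration count. Let $\mathcal{E}_t$ be the MVE of $\cQ_t \cap \cF$ with centre $\phi_t$. When a cut $H_t$ is produced, $\phi_t \notin H_t$ while $\Phi(\cP) \subseteq H_t$, so
\[
\cQ_{t+1} \cap \cF \;=\; (\cQ_t \cap H_t) \cap \cF \;\subseteq\; \mathcal{E}_t \cap H_t,
\]
which is a half-ellipsoid whose cutting hyperplane passes through the centre of $\mathcal{E}_t$. The classical half-ellipsoid bound gives that the MVE of this half-ellipsoid has volume at most $e^{-1/(2(d'+1))} \vol{\mathcal{E}_t}$; since $\mathcal{E}_{t+1}$ is the MVE of a subset of that half-ellipsoid, $\vol{\mathcal{E}_{t+1}} \leq e^{-1/(2(d'+1))} \vol{\mathcal{E}_t}$. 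Because $\cQ_0 \cap \cF \subseteq \cF \subseteq \cB_{d \times (d+1)}(\vzero, D)$, the initial value satisfies $\vol{\mathcal{E}_0} \leq V_{d'}(D) = \bigOh(D^{d'})$ up to a $\poly(d')$ factor, so the loop terminates after $\bigOh(d' \log(\vol{\mathcal{E}_0}/\epsilon)) = \poly(d, \log D, \log \epsilon^{-1})$ iterations.

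Per iteration the algorithm performs one call to the semi-separation oracle, costing $\poly(d)$ time by Lemma~\ref{lem:sep-non-fixed-point-strong}, together with the ellipsoid update. The main obstacle I anticipate in formalising the argument is the literal step ``let $\mathcal{E}$ be the MVE of $\cQ \cap \cF$'', since exact MVE computation for a convex set given only through oracle access is not known to be tractable. The standard workaround, as in the usual ellipsoid method, is to instead maintain an \emph{enclosing} ellipsoid updated via the closed-form half-ellipsoid formula after each cut, using the oracle for $\cF$ to produce a separating halfspace whenever the current centre falls outside $\cF$ (these ``$\cF$-cuts'' are not added to $\cQ$, since they need not contain $\Phi(\cP)$). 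The volume-shrinkage rate $e^{-1/(2(d'+1))}$ is unaffected, the invariant $\cQ_t \cap \cF \subseteq \mathcal{E}_t$ is preserved, and the total running time remains $\poly(d, \log D, \log \epsilon^{-1})$.
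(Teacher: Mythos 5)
Your proof is correct and follows essentially the same approach as the paper: maintain the invariant $\Phi(\cP) \subseteq \cQ$, bound the iteration count via the half-ellipsoid volume-shrinkage lemma in ambient dimension $d(d+1)$, and sidestep the intractability of exact minimum-volume-ellipsoid computation by running a standard enclosing-ellipsoid update with auxiliary $\cF$-cuts that are not added to $\cQ$. The paper flags the same implementation subtlety and defers the careful version to its weak-oracle appendix, exactly as you anticipate.
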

(For the weak version of this, see Lemma~\ref{lem:shellelpsd}.)
\begin{proof}
The termination conditions of Algorithm \ref{algo:shellelpsd-strong} (along with the proof of correctness of our semi-separation oracle) guarantee that any transformation $\phi$ returned must belong to $\cF \cap \PhiFP(\cP)$, and likewise that any polytope $\cQ$ returned must contain $\Phi(\cP)$ and satisfy $\vol{\cQ \cap \cF} \leq \vol{\mathcal{E}} < \epsilon$. It therefore suffices to verify that the loop in Algorithm \ref{algo:shellelpsd-strong} runs for at most $\poly(d, \log \epsilon^{-1}, \log D)$ iterations, as this bounds both the time complexity of the algorithm and the number of hyperplanes defining $\cQ$.

Note that in each iteration where the algorithm does not terminate, the ellipsoid $\mathcal{E}$ shrinks from the minimum volume ellipsoid containing $\cQ \cap \cF$ to the minimum volume ellipsoid $\mathcal{E'}$ containing $\cQ \cap \cF \cap H$, where $H$ is a halfspace passing through the centroid of $\mathcal{E}$. In particular, the volume of $\mathcal{E'}$ is no bigger than the volume of the minimum volume ellipsoid containing the half ellipsoid $\mathcal{E} \cap H$, which by the standard analysis of the ellipsoid algorithm has volume at most $\exp\left(-\frac{1}{2(\dim(\mathcal{E})+1)}\right)\vol{\mathcal{E}}$. In particular, the volume of $\mathcal{E}$ starts at most at $\vol{\cB_{d\times(d+1)}(D)} = O(D^{d(d+1)})$, shrinks by a factor of at least $\exp\left(-\frac{1}{2(d^2 + d +1)}\right)$ per round, and therefore after $O(d^2 \log d \log \eps^{-1} \log D)$ rounds will be at most $\eps$.

Finally, we remark that it is possible to efficiently compute the minimum volume ellipsoid $\mathcal{E}$ of $\cQ \cap \cF$ using only oracle queries to $\cQ \cap \cF$, which we can efficiently implement. (We can also start with any ellipsoid $\mathcal{E}$ whose centroid belongs to $Q \cap \cF$, which we can find efficiently by repeatedly doing ellipsoid updates -- this is the approach we take in the proof of the version with weak oracles, in Appendix \ref{app:weak}). 
\end{proof}

\subsection{Shell Gradient Descent and Shell Projection}\label{sec:shell_gd_and_proj}

We now turn our attention to applying these semi-separation oracles to implement a modified variant of the reduction in Section \ref{sec:linear-to-external}, where instead of running an external regret algorithm over the set of endomorphisms $\Phi(\cP)$, we run it over a set of changing shell sets of $\Phi(\cP)$. In this section we introduce two tools to help with this reduction: \emph{shell gradient descent} and \emph{shell projection}.

The first tool -- shell gradient descent (Algorithm \ref{algo:improper_ogd-strong}) -- addresses the fact that, unlike in standard online learning settings, we are faced with an online learning problem with an action set that changes over time. In particular, we are faced with the constraint that in each round $t$, our action must belong to some specified shell set $\PhiShell_t$ that we observe at the beginning of round $t$. We show that a simple variant of projected gradient descent (where we do a gradient step and then project to $\PhiShell_t$) ensures low regret with respect to any action contained in the intersection of all the shell sets (and in particular, to any endomorphism in $\Phi(\cP)$).

\begin{algorithm}[ht]
    \caption{Shell Gradient Descent}\label{algo:improper_ogd-strong}
    \KwData{Compact convex set $\cM_{\cX} \supset \cX$, %
    step sizes $\eta_t$}

    \For{$t = 1, 2, \dots T$}{
        Receive (efficient oracle access to) the convex set $\cX_{t}$.

        Set $\vx_{t} = \Pi_{\cX_{t}}(\vx_{t-1} - \eta_{t-1} \vell_{t-1})$ (if $t = 1$, choose an arbitrary $\vx_1 \in \cX_1$).
        
        Output $\vx_t$ and receive feedback $\lt \in [-1, 1]^d$
    }
\end{algorithm}

\begin{theorem}\label{thm:improper_ogd-strong}
    Let $\cX_1, \dots, \cX_T$ be an arbitrary sequence of ``shell sets'' satisfying $\cX \subseteq \cX_t \subseteq \cB_{d}(\vzero, D)$. Then, for any sequence of losses $\vell_1, \dots, \vell_T \in [-1, 1]^d$, Algorithm~\ref{algo:improper_ogd-strong} has regret
    \[
        \max_{\vx^* \in \cX} \sum_{t=1}^T \inp{\lt, \vx_t - \vx^*} \leq \frac{D^2}{2 \eta_T} + \sum_{t=1}^T \frac{\eta_t}{2} \norm{\lt}_2^2.
    \]
\end{theorem}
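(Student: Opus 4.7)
The plan is to adapt the classical projected online gradient descent analysis to the setting where the feasible set varies across rounds, using the squared distance $\norm{\vx_t - \vx^*}_2^2$ to the comparator as the potential function. The only structural property of the shell sequence $\cX_1, \dots, \cX_T$ that the argument actually exploits is the containment hypothesis $\cX \subseteq \cX_t$: since every comparator $\vx^* \in \cX$ lies in each $\cX_t$, the Euclidean projection $\Pi_{\cX_t}$ is non-expansive with respect to the distance to $\vx^*$, \ie $\norm{\Pi_{\cX_t}(\vec y) - \vx^*}_2 \le \norm{\vec y - \vx^*}_2$ for every $\vec y \in \bbR^d$. This is exactly what makes the usual one-step inequality still go through even though the projection target changes each round.

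Concretely, I would expand $\norm{\vx_{t+1} - \vx^*}_2^2$ using the update rule $\vx_{t+1} = \Pi_{\cX_{t+1}}(\vx_t - \eta_t \vell_t)$ and the non-expansive bound above, and then rearrange the resulting quadratic to obtain the standard per-round inequality
\begin{equation*}
\inp{\vell_t, \vx_t - \vx^*} \le \frac{\norm{\vx_t - \vx^*}_2^2 - \norm{\vx_{t+1} - \vx^*}_2^2}{2\eta_t} + \frac{\eta_t}{2}\norm{\vell_t}_2^2.
\end{equation*}
Summing over $t = 1, \dots, T$, the $\tfrac{\eta_t}{2}\norm{\vell_t}_2^2$ contributions match the second summand of the claim directly. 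The remaining telescoping-like sum with varying denominators $\eta_t$ I would handle by a summation-by-parts rearrangement; in the natural regime of non-increasing step sizes, the resulting coefficients on each $\norm{\vx_t - \vx^*}_2^2$ term are non-negative, so substituting the uniform bound $\norm{\vx_t - \vx^*}_2^2 \le D^2$ (which follows from the fact that iterates and comparator both lie in $\cB_d(\vzero, D)$) collapses everything into $\tfrac{D^2}{2\eta_T}$ after the standard telescoping.

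The main obstacle is conceptual rather than technical: one must confirm that the usual non-expansiveness of Euclidean projection survives when the projection target changes from round to round, and this is exactly the role of the containment assumption $\cX \subseteq \cX_t$ — without it, the comparator $\vx^*$ might not lie in the set being projected onto, and the inequality $\norm{\Pi_{\cX_t}(\vec y) - \vx^*}_2 \le \norm{\vec y - \vx^*}_2$ would fail. Beyond that, the only mildly delicate bookkeeping is the summation-by-parts step needed to accommodate the time-varying step size sequence; everything else is routine convex-analytic manipulation identical to vanilla OGD.
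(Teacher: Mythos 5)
Your proposal is correct and follows essentially the same route as the paper: expand $\norm{\vx_{t+1}-\vx^*}^2$ via the update rule, use that $\vx^*\in\cX\subseteq\cX_{t+1}$ to get the projection non-expansiveness step $\norm{\Pi_{\cX_{t+1}}(\vx_t-\eta_t\vell_t)-\vx^*}\le\norm{\vx_t-\eta_t\vell_t-\vx^*}$, rearrange into the per-round inequality, sum, and collapse the telescoping sum of differences $\tfrac{1}{2\eta_t}-\tfrac{1}{2\eta_{t-1}}$ using $\norm{\vx_t-\vx^*}\le D$. Like the paper, you also (implicitly) rely on $\eta_t$ being non-increasing for the final telescoping bound $\tfrac{D^2}{2\eta_T}$ to go through.
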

\begin{proof}
Note that
\begin{align*}
    \norm{\vx_{t+1}-\vx^*}^2 \leq \norm{\Pi_{\cX_{t+1}}(\vx_t-\eta_t \lt)-\vx^*}^2&\leq \norm{\vx_t-\eta_t \lt-\vx^*}^2\\
    &=\norm{\vx_t-\vx^*}^2+\eta_t^2\norm{\lt}^2-2\eta_t  \inp{\lt,\vx_t-\vx^*}
\end{align*}
where the second inequality follows since $\vx^* \in \cX$ and therefore $\vx^* \in \cX_{t}$ for all $t$. 
Thus,
\begin{align*}
    \inp{\lt, \vx_t-\vx^*} &\leq \frac{1}{2\eta_t}\p{\norm{\vx_t-\vx^*}^2-\norm{\vx_{t+1}-\vx^*}^2}+\frac{\eta_t }{2} \norm{\lt}^2
\end{align*}
Summing over $t$
\begin{align*}
    \sum_{t=1}^T (\lt(\vx_t)-\lt(\vx^*))
    &\leq \sum_{t=1}^T \norm{\vx_t-\vx^*}^2 \p{\frac{1}{2\eta_t}-\frac{1}{2\eta_{t-1}}}+ \sum_{t=1}^T \frac{\eta_t}{2}\norm{\lt}^2\\
    &\leq \frac{D^2}{2\eta_T}+ \sum_{t=1}^T \frac{\eta_t}{2}\norm{\lt}^2
\end{align*}
under the convention $1/\eta_0 = 0$, since $\norm{\vx_t-\vx^*} \leq D$.
\end{proof}

To successfully apply our Shell Gradient Descent algorithm, we would like the transformations we play to have fixed points in $\cP$. In particular, it would be ideal if we chose our $t$-th shell set $\PhiShell_t$ in such a way that this projected action belongs to $\PhiFP(\cP)$.

This is exactly the goal of the Shell Projection procedure (Algorithm \ref{algo:noisyproj-strong}). This procedure takes as input a linear transformation $\phi$ and constructs a shell set $\PhiShell$ containing $\Phi(\cP)$ with the property that the projection of $\phi$ onto $\PhiShell$ has a fixed point in $\cP$. 

\begin{figure}
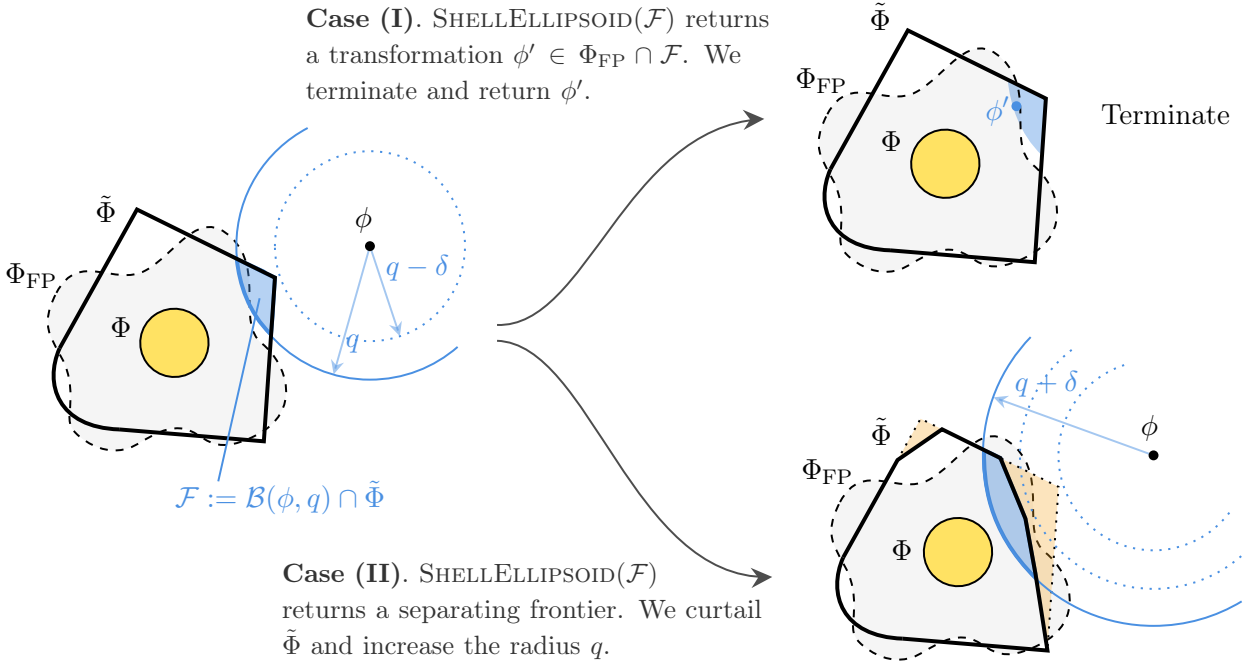

    \centering
    \tikzset{every picture/.style={line width=0.75pt}} %

\begin{tikzpicture}[x=0.75pt,y=0.75pt,yscale=-1,xscale=1]

\begin{scope}[yshift=-20]
    \input{figs/algo_step1v2}
\end{scope}

\begin{scope}[yshift=10]
    \input{figs/algo_step2av2}
\end{scope}

\begin{scope}[yshift=-20]
    \input{figs/algo_step2bv2}
\end{scope}

\draw [black!70]   (257,215.8) .. controls (302.54,217.38) and (321.03,117.04) .. (389.7,111.93) ;
\draw [shift={(391.8,111.8)}, rotate = 177.4] [fill=black!70  ][line width=0.08]  [draw opacity=0] (10.72,-5.15) -- (0,0) -- (10.72,5.15) -- (7.12,0) -- cycle    ;
\draw [black!70]   (257,223.8) .. controls (304.14,225.38) and (324.98,343.8) .. (393.7,343.06) ;
\draw [shift={(395.8,343)}, rotate = 177.4] [fill=black!70  ][line width=0.08]  [draw opacity=0] (10.72,-5.15) -- (0,0) -- (10.72,5.15) -- (7.12,0) -- cycle    ;

\node[black!80,text width=6.3cm] at (280, 80) {\small \textbf{Case (I)}. \shellelpsd{}$(\cF)$ returns a transformation $\phi' \in \Phi_\text{FP} \cap \cF$. We terminate and return $\phi'$.};

\node[black!80,text width=6.4cm] at (270, 360) {\small \textbf{Case (II)}. \shellelpsd{}$(\cF)$\\ returns a separating frontier. We curtail $\tilde\Phi$ and increase the radius $q$.};

\end{tikzpicture}
    \caption{Visual depiction of a generic step of the \shellproj{}$(\phi)$, \Cref{algo:noisyproj-strong}. %
    }
    \label{fig:shell-proj}
\end{figure}

To accomplish this, this procedure makes careful use of the Shell Ellipsoid procedure from the previous section. 
Specifically, we use $\shellelpsd$ to attempt to semi-separate a ball of radius $q$ around $\phi$ from $\Phi(\cP)$ that we slowly expand over time. Specifically, whenever the separation succeeds, we update the shell set with the new separating hyperplanes returned by $\shellelpsd$, and expand the radius $q$ slightly. We terminate when Shell Ellipsoid returns a transformation $\phi'$ in the radius $q$ ball with a fixed point. Since $q$ is incremented gradually and the shell set is cut for each previous step we failed to find such a $\phi'$, we end up with a shell set $\PhiShell$ for which $\phi'$ is an approximate projection of $\phi$. (see Figure \ref{fig:shell-proj})

\begin{algorithm}[ht]
    \caption{$\shellproj_\Phi(\phi)$ projects $\phi$ to a shell of $\Phi$}\label{algo:noisyproj-strong}
    \KwData{Convex strategy set $\cP \subset \bbR^d$ with $\cB_d(r) \subseteq \cP \subseteq \cB_d(\vzero, R)$.}
    \KwInput{Convex set $\cM$ such that $\Phi(\cP) \subseteq \cM \subseteq \cB_{d \times (d+1)}(\vzero, D)$, affine transformation $\phi \in \cB_{d \times (d+1)}(\vzero, D)$, precision parameter $\epsilon \in (0, 1)$}
    \KwOutput{Convex set $\PhiShell$ satisfying $\Phi(\cP) \subseteq \PhiShell \subseteq \cM$ and an affine transformation $\phi' \in \cM$ satisfying $\norm{\phi' - \Pi_{\PhiShell}(\phi)}_F \leq \epsilon$ and $\phi' \in \PhiFP(\cP)$}
    
    \DontPrintSemicolon

    Set $\epsilon' := \frac{\eps r}{32RD^2}$ %

    Initialize $\Tilde{\Phi} := \cM$
    
    \For{$q \gets 0$, increment by $\delta = \eps/(4D)$}{

        Run $\shellelpsd(\Tilde{\Phi} \cap \cB_{d \times (d+1)}(\phi, q))$ with precision $V_{d\times(d+1)}(\epsilon')$
        
        \eIf{it finds a $\phi'$ with a fixed point inside $\cP$}{
            
            \Return $\Tilde{\Phi}$ and $\phi'$
        }{
            We receive a polytope $\cQ$ with the property that $\Phi(\cP) \subseteq \cQ$, $\vol{\cQ \cap \Tilde{\Phi} \cap \cB_{d \times (d+1)}(\phi, q)} < V_{d\times(d+1)}(\epsilon')$, and $\cQ$ has $\tilde{O}(\poly(d))$ defining hyperplanes.

            Set $\PhiShell \leftarrow \PhiShell \cap \cQ$

        }
    }
\end{algorithm}

\begin{theorem}[Shell Projection]\label{thm:shellyproj-strong}
    Let $\cP$ be a convex set with $\cB_d(r) \subseteq \cP \subseteq \cB_d(\vzero, R)$ and $\cM$ be a convex set such that $\Phi(\cP) \subseteq \cM \subseteq \cB_{d \times (d+1)}(\vzero, D)$. For any $\phi \in \cB_{d \times (d+1)}(\vzero, D)$ and $\epsilon > 0$, Algorithm~\ref{algo:noisyproj-strong} runs in time $\poly(d, \epsilon^{-1}, R/r, D)$ and returns both: %
    
    \begin{itemize}
        \item \sloppy{A ``shell set'' $\PhiShell$ satisfying $\Phi(\cP) \subseteq \PhiShell$ constructed by intersecting $\cM$ with at most $\poly(d, \epsilon^{-1}, R/r, D)$ half-spaces.}

        \item A transformation $\phi' \in \PhiShell$ such that $\norm{\phi' - \Pi_{\PhiShell}(\phi)}_F \leq \epsilon$, and $\phi'$ has a fixed point in $\cP$.
    \end{itemize}
\end{theorem}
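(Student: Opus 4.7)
We verify four properties of Algorithm~\ref{algo:noisyproj-strong}: (i) the invariant $\Phi(\cP) \subseteq \PhiShell$ is preserved throughout; (ii) the outer loop terminates after polynomially many iterations; (iii) the returned $\phi'$ lies in $\PhiShell \cap \PhiFP(\cP)$; and (iv) $\|\phi' - \Pi_{\PhiShell}(\phi)\|_F \leq \epsilon$. Facts (i) and (iii) follow by inspection. Each polytope $\cQ$ returned by \shellelpsd{} contains $\Phi(\cP)$ by Lemma~\ref{lem:shellelpsd-strong}, so intersecting $\PhiShell$ with $\cQ$ preserves $\Phi(\cP) \subseteq \PhiShell$ starting from $\cM \supseteq \Phi(\cP)$. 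Whenever \shellelpsd{} returns an affine map $\phi'$, the lemma guarantees $\phi' \in \PhiShell \cap \cB_{d \times (d+1)}(\phi, q) \cap \PhiFP(\cP)$, giving (iii).

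\paragraph{Termination.} By Lemma~\ref{lem:lintrans-nice}, $\Phi(\cP)$ contains a ball $\cB_{d \times (d+1)}(\phi_{\va}, r/(2R))$, and both $\phi$ and $\phi_{\va}$ lie in $\cB_{d \times (d+1)}(\vzero, D)$. Once $q \geq 2D + r/(2R)$, this ball is contained in $\cB_{d \times (d+1)}(\phi, q)$, and since $\Phi(\cP) \subseteq \PhiShell$ throughout, the intersection $\PhiShell \cap \cB_{d \times (d+1)}(\phi, q)$ has volume at least $V_{d \times (d+1)}(r/(2R))$. This is far larger than the termination threshold $V_{d \times (d+1)}(\epsilon')$, so \shellelpsd{} cannot return a small-volume polytope and must output a fixed-point transformation. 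Hence the loop runs for at most $(2D + r/(2R))/\delta = \poly(d, 1/\epsilon, R/r, D)$ iterations. Each iteration runs \shellelpsd{} in polynomial time and appends $\poly(d, \log(1/\epsilon'), \log D)$ half-spaces to $\PhiShell$, yielding the claimed overall complexity and half-space count.

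\paragraph{Projection bound.} Let $q^*$ be the final value of $q$, let $\PhiShell^*$ denote the shell set at the terminating iteration (so $\phi' \in \PhiShell^*$), and let $a := \Pi_{\PhiShell^*}(\phi)$. Since $\phi' \in \cB_{d \times (d+1)}(\phi, q^*)$, we have $\|a - \phi\|_F \leq \|\phi' - \phi\|_F \leq q^*$. The key claim is the matching lower bound $\|a - \phi\|_F \geq q^* - O(\delta)$, proved by contradiction. Suppose $\|a - \phi\|_F < q^* - 2\delta$. Since $a \in \PhiShell^*$ and $\cB_{d \times (d+1)}(\phi_{\va}, r/(2R)) \subseteq \Phi(\cP) \subseteq \PhiShell^*$, convexity yields $\conv(\{a\} \cup \cB_{d \times (d+1)}(\phi_{\va}, r/(2R))) \subseteq \PhiShell^*$. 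A short geometric computation (parameterizing this cone as $a + t((\phi_{\va} - a) + w)$ with $t \in [0,1]$ and $\|w\| \leq r/(2R)$) shows it contains a ball of radius $\rho := \delta r/(4RD)$ centered on the segment from $a$ to $\phi_{\va}$ at distance $\delta$ from $a$, and by the triangle inequality this ball lies inside $\cB_{d \times (d+1)}(\phi, q^* - \delta)$. Our choice $\epsilon' = \epsilon r/(32RD^2)$ ensures $\rho > \epsilon'$, so the ball has volume exceeding $V_{d \times (d+1)}(\epsilon')$, contradicting the guarantee $\vol{\cQ \cap \PhiShell^* \cap \cB_{d \times (d+1)}(\phi, q^* - \delta)} < V_{d \times (d+1)}(\epsilon')$ from the previous iteration. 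Combining $\|\phi' - \phi\|_F \leq q^*$ with $\|a - \phi\|_F \geq q^* - O(\delta)$ via the Pythagorean inequality for convex projection, $\|\phi' - a\|_F^2 \leq \|\phi' - \phi\|_F^2 - \|a - \phi\|_F^2 = O(\delta q^*)$, and plugging in $\delta = \epsilon/(4D)$ and $q^* = O(D + r/R)$ produces the desired $\epsilon$ bound.

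\paragraph{Main obstacle.} The principal difficulty is the cone-volume calculation: we must exhibit a full-dimensional ball inside $\PhiShell^* \cap \cB_{d \times (d+1)}(\phi, q^* - \delta)$ whose volume strictly exceeds $V_{d \times (d+1)}(\epsilon')$, and the parameters $\delta$, $\epsilon'$, $r$, $R$, and $D$ must be balanced simultaneously so that (a) the ball-volume contradiction goes through here, (b) the termination argument works, and (c) the final Pythagorean bookkeeping yields the stated $\epsilon$-bound. The remaining steps---invariant maintenance, iteration counting, and converting the projection-distance bound into a Frobenius-norm bound on $\phi' - a$---are routine by comparison.
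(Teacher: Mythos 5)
Your proof follows essentially the same route as the paper's: invariant maintenance, termination via the inscribed ball $\cB_{d\times(d+1)}(\phi_{\va}, r/(2R)) \subseteq \Phi(\cP)$, a near-empty-intersection volume argument to lower-bound $\norm{\phi - \Pi_{\PhiShell}(\phi)}$, and the Pythagorean inequality for convex projection. The one difference in presentation is that the paper packages the convex-geometry step as a reusable lemma (Lemma~\ref{lem:min-ball-intersection}), whereas you re-derive it inline via the cone $\conv(\{a\}\cup\cB_{d\times(d+1)}(\phi_{\va},r/(2R)))$. That re-derivation is the right idea, but there is a small bookkeeping slip: under the contradiction hypothesis $\norm{a-\phi}<q^*-2\delta$, your ball of radius $\rho$ has center at distance \emph{up to} $q^*-\delta$ from $\phi$, so the ball can protrude past the sphere $\cB_{d\times(d+1)}(\phi,q^*-\delta)$ by as much as $\rho$. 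You need either to tighten the hypothesis to $\norm{a-\phi}<q^*-2\delta-\rho$ (or, more crudely, $<q^*-3\delta$, since $\rho\le\delta$), or to follow the paper's Lemma~\ref{lem:min-ball-intersection} more closely, where the ball is obtained by \emph{scaling} $\cB_{d\times(d+1)}(\phi_{\va},r/(2R))$ toward $a$ by $\gamma_*=(l-s)/(2D')$ so that both its center-offset and its radius contract by the same factor and the sum lands exactly at $l-s$. Either fix costs only a constant factor in the final $\eps$-bound (via $q^{*2}-(q^*-c\delta)^2=O(c\,\delta D)$), and does not change the asymptotics; the parameter choices $\delta=\eps/(4D)$ and $\eps'=\eps r/(32RD^2)$ indeed balance so that $\rho=2\eps'>\eps'$, as you verified.
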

(For the weak version of this, see Proposition~\ref{prop:shellyproj}.)
\begin{proof}
Consider the state of $q$ in Algorithm \ref{algo:noisyproj-strong} at the termination of the algorithm (the algorithm is guaranteed to terminate since for large enough $q$, $\cB_{d \times (d+1)}(\phi, q)$ must intersect $\Phi(\cP)$). At this point we know that:

\begin{itemize}
    \item The shell set $\PhiShell$ has the property that it barely intersects with the ball of radius $q-\eps/(4D)$ around $\phi$: that is, $\vol{\PhiShell \cap \cB_{d\times(d+1)}(\phi, q-\eps/(4D)))} < V_{d\times(d+1)}(\epsilon').$
    \item The last found $\phi'$ has the property that $\phi' \in \PhiFP(\cP)$, $\phi' \in \PhiShell$ %
\end{itemize}

We will now use these facts to argue that $\phi'$ is close to the projection of $\phi$ onto $\PhiShell$. Intuitively, this should be the case, since we know that $\phi'$ is a point in $\PhiShell$ within distance $q$ of $\phi$, and almost no points in $\PhiShell$ are within distance $q - \eps/(4D)$ of $\phi$. 

In fact, we can use the fact that the set of linear endomorphisms $\Phi(\cP)$ contains a ball of radius $r/2R$ (Lemma \ref{lem:lintrans-nice}) to show that this first bullet point implies that no point in $\PhiShell$ is within distance $q - \frac{\eps}{4D}$ of $\phi$ (this follows as a consequence of Lemma \ref{lem:min-ball-intersection}).

In particular, this means that $\norm{\phi - \Pi_{\PhiShell}(\phi)} \geq q - \frac{\epsilon}{4D}$. But in addition, by standard properties of projections (see e.g. \cite{orabona2022:modern}), since $\phi' \in \PhiShell$, we know that

$$\norm{\phi - \phi'}^2 \geq \norm{\phi - \Pi_{\PhiShell}(\phi)}^2 + \norm{\Pi_{\PhiShell}(\phi) - \phi'}^2.$$

Rearranging, this implies that

$$\norm{\Pi_{\PhiShell}(\phi) - \phi'}^2 \leq \norm{\phi - \phi'}^2 - \norm{\phi - \Pi_{\PhiShell}(\phi)}^2 \leq q^2 - \left(q-\frac{\epsilon}{4D}\right)^2 \leq \epsilon,$$

\noindent
as $q$ will never exceed $2D$ in the algorithm, as $\phi$ is at most $2D$ away from $\Phi(\cP)$. The other guarantees of the algorithm follow naturally from the guarantees of $\shellelpsd$ (Lemma \ref{lem:shellelpsd-strong}) and the fact that $q$ increments at most $8D^2/\eps$ times.
\end{proof}

\subsection{Minimizing Linear Swap Regret}\label{sec:mainalgo}

We finally assemble the results from the previous sections, and present a learning algorithm that obtains $O(\poly(d)\sqrt{T})$ swap regret while running in polynomial time per iteration. Essentially, this algorithm (Algorithm \ref{algo:main-strong}) runs the Shell Gradient Descent algorithm (Algorithm \ref{algo:improper_ogd-strong}) to solve the dual regret-minimization problem constructed in the reduction of Section \ref{sec:linear-to-external}, but using the Shell Projection procedure (Algorithm \ref{algo:noisyproj-strong}) to construct a sequence of shell sets $\PhiShell_t$ of $\cP$ with the property that the desired projection onto $\PhiShell_t$ has a fixed point. 

\begin{algorithm}[ht]
    \caption{Linear-swap regret minimizer for convex strategy sets}\label{algo:main-strong}
    \KwData{Convex strategy set $\cP \subset \bbR^d$ in Isotropic Position (Definition~\ref{defn:isotropic}).}

    Let $\cM := \cB_{d \times (d+1)}(\vzero, 4 d^2)$ \qquad \tcp{this guarantees that $\Phi(\cP) \subset \cM$}

    Set step size $\beta := \frac{4}{\sqrt{T}}$ and parameter $\epsilon := \frac{1}{16 d^4 T^2}$

    Let $\phi_1 = \mathrm{Id}$ and $\vp_1$ be any point in $\cP$

    \For{$t = 1, 2, \dots, T$}{
        Output $\pt \in \cP$ and receive feedback $\lt \in [-1, 1]^d$

        Set $\mat{L}_t \in \bbR^{d \times (d+1)}$ so that for any affine transformation $\phi \in \bbR^{d \times (d+1)}$, $\inp{\mat{L_t}, \phi} = \inp{\phi(\vp_t), \vell_t}$.\hspace{-1cm}

        Run $\shellproj_\Phi(\phi_t - \beta \mat{L}_t)$ with input $\cM$ and precision $\epsilon$, receiving a shell set $\PhiShell_{t+1}$ and transformation $\phi_{t+1} \in \PhiShell_{t+1} \cap \PhiFP(\cP)$.

        \tcp{Note: the shell set $\PhiShell_{t+1}$ is only used in the regret analysis}

        Compute a fixed point $\vp_{t+1} \in \cP$ of $\phi_{t+1}$.
    }
\end{algorithm}

\begin{theorem}\label{thm:linswap-regret-main-algo-strong}
    Let $\cP \subset \bbR^d$ be a compact convex set in isotropic position (Definition~\ref{defn:isotropic}) and assume $\lt \in [-1, 1]^d$.
    Algorithm~\ref{algo:main-strong} has a per-iteration time complexity of $\poly(d, T)$ and guarantees that
    \[
        \linswap(\vp, \vell) = \sum_{t=1}^T \inp{\pt, \lt} - \min_{\phi \in \Phi(\cP)} \sum_{t=1}^T \inp{\phi(\pt), \lt} = O\left( d^4 \sqrt{T} \right).
    \]
    
\end{theorem}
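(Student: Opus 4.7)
\emph{Proof plan.} The proof proceeds via the fixed-point reduction of Section~\ref{sec:linear-to-external}. Because $\vp_t$ is chosen to be a fixed point of $\phi_t$ we have $\phi_t(\vp_t) = \vp_t$, and so
\[
    \linswap(\vp, \vell) = \sum_{t=1}^T \inp{\phi_t(\vp_t), \vell_t} - \min_{\phi^* \in \Phi(\cP)} \sum_{t=1}^T \inp{\phi^*(\vp_t), \vell_t} = \sum_{t=1}^T \inp{\mat{L}_t, \phi_t - \phi^*},
\]
where $\phi^*$ is the minimizer and $\mat{L}_t$ is the linear loss functional defined in the algorithm. It therefore suffices to bound the external regret of the sequence $\{\phi_t\}$ against the losses $\{\mat{L}_t\}$ over the changing action sets $\PhiShell_t$. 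Since Theorem~\ref{thm:shellyproj-strong} guarantees $\Phi(\cP) \subseteq \PhiShell_t \subseteq \cM$ at every round, every comparator $\phi^* \in \Phi(\cP)$ is simultaneously feasible in all shell sets, and the shell gradient descent framework of Theorem~\ref{thm:improper_ogd-strong} applies.

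Next I would verify the parameter choices. Isotropic position gives $\cB_d(\vzero, 1) \subseteq \cP \subseteq \cB_d(\vzero, d+1)$; plugging $r = 1$, $R = d+1$ into Lemma~\ref{lem:lintrans-nice} yields $\Phi(\cP) \subseteq \cB_{d\times(d+1)}\bigl(\vzero,\, 3(d+1)\sqrt{(d+1)^2 + d}\bigr) \subseteq \cB_{d\times(d+1)}(\vzero, 4d^2) = \cM$ (for $d$ large enough; smaller $d$ are absorbed into the hidden constant). A direct computation using $\phi(\vx) = M\vx + b$ gives $\norm{\mat{L}_t}_F^2 = \norm{\vell_t}_2^2(1 + \norm{\vp_t}_2^2) = O(d^3)$ from $\norm{\vell_t}_2 \leq \sqrt{d}$ and $\norm{\vp_t}_2 \leq d+1$.

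The regret bound then comes from the standard telescoping of $\norm{\phi_t - \phi^*}^2$ in the proof of Theorem~\ref{thm:improper_ogd-strong}. The wrinkle is that $\phi_{t+1}$ is not an exact projection: Theorem~\ref{thm:shellyproj-strong} only guarantees $\norm{\phi_{t+1} - \Pi_{\PhiShell_{t+1}}(\phi_t - \beta \mat{L}_t)}_F \leq \epsilon$. I would handle this by writing $\phi_{t+1} = \Pi_{\PhiShell_{t+1}}(\phi_t - \beta \mat{L}_t) + \xi_t$ with $\norm{\xi_t}_F \leq \epsilon$, expanding $\norm{\phi_{t+1} - \phi^*}^2$, and applying Cauchy--Schwarz to the cross terms. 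This introduces an additive $O(\epsilon D)$ slack per round in the telescoping. Summing and plugging in $\beta = 4/\sqrt{T}$, $D = 4d^2$, and $\epsilon = 1/(16 d^4 T^2)$ yields
\[
    \sum_{t=1}^T \inp{\mat{L}_t, \phi_t - \phi^*} \;\leq\; \frac{(2D)^2}{2\beta} + \frac{\beta}{2} \sum_{t=1}^T \norm{\mat{L}_t}_F^2 + O\!\left(\frac{T \epsilon D}{\beta}\right) \;=\; O(d^4 \sqrt{T}).
\]
The per-iteration runtime is $\poly(d, T)$, dominated by a single call to $\shellproj$ (Theorem~\ref{thm:shellyproj-strong}) plus one fixed-point computation, which reduces to solving a linear system because $\phi_{t+1} \in \PhiFP(\cP)$.

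The main technical subtlety is keeping the approximate-projection slack asymptotically subdominant. If $\epsilon$ is too large, the $O(\epsilon D T / \beta)$ term dominates the regret; if too small, the inner loop of $\shellproj$ is too slow. The calibration $\epsilon = 1/(16 d^4 T^2)$ makes the slack contribute $O(d^{-2} T^{-1/2})$ in total while keeping the $\shellproj$ runtime polynomial in $d$ and $T$. Everything else is careful bookkeeping on top of the Shell Gradient Descent analysis.
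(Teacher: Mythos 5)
Your proof is correct and takes essentially the same approach as the paper: reduce to the dual regret via the fixed-point identity $\phi_t(\vp_t)=\vp_t$, apply Shell Gradient Descent with the shell sets $\PhiShell_t$ produced by $\shellproj$, and verify the parameter bounds from isotropic position and Lemma~\ref{lem:lintrans-nice}. The one small divergence is in how you account for the approximate-projection slack: you propagate $\norm{\phi_{t+1}-\Pi_{\PhiShell_{t+1}}(\phi_t-\beta\mat{L}_t)}_F\le\epsilon$ through the telescoping and obtain an additive $O(T\epsilon D/\beta)$ term, whereas the paper's main-text argument writes the slack as $\epsilon\sum_t\norm{\mat{L}_t}_F$. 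Both are asymptotically negligible under the choice $\epsilon=1/(16d^4T^2)$, so the $O(d^4\sqrt{T})$ conclusion is unaffected; your accounting in fact matches the more careful version the paper gives in the weak-oracle appendix (Theorem~\ref{thm:improper_ogd}), which carries the $D\sum_t\epsilon_t/\eta_t$ term explicitly through the expansion of $\norm{\vx_{t+1}-\vx^*}^2$.
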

(For the weak version of this, see Theorem~\ref{thm:linswap-regret-main-algo}.)
\begin{proof}
First note that for $\cP$ in isotropic position, it holds $\cB_d(\vzero, 1) \subseteq \cP \subseteq \cB_d(\vzero, n+1)$. Then, by Lemma~\ref{lem:lintrans-nice}, we get that $\Phi(\cP)$ is well-bounded between balls of radius $R_\phi = 4 d^2$ and $r_\phi = 1 / 4 (d + 1)$.

Algorithm \ref{algo:improper_ogd-strong} produces the sequence $\phi_t$ by essentially running the Shell Gradient Descent algorithm (Algorithm \ref{algo:improper_ogd-strong}) for the shell sets $\cX_t = \PhiShell_t$ and constant rate $\eta_t = \beta$, with the small caveat that instead of playing the exact projected action $\vx_t = \Pi_{\PhiShell_t}(\phi_t - \beta \mat{L}_t)$, it plays an action $\phi_t$ with the guarantee (by Theorem~\ref{thm:shellyproj-strong}) that $\norm{\phi_t - \vx_t} \leq \epsilon$. By the analysis of Shell Gradient Descent (and since $\Phi(\cP)$ belongs to each shell set $\PhiShell_t$), we therefore have that
\[
\reg^{\text{dual}}(\mathbf{\phi}, \vp, \vell) &= \sum_{t=1}^T \inp{\phi_t(\pt), \lt} - \min_{\phi^* \in \Phi(\cP)} \sum_{t=1}^T \inp{\phi^*(\pt), \lt}\\
    &= \max_{\phi^* \in \Phi(\cP)} \sum_{t=1}^T \inp{\mat{L}_t, \phi_t - \phi^*} \\
    &\leq \frac{R_{\phi}^2}{2 \beta} + \sum_{t=1}^T \frac{\beta}{2} \norm{\mat{L}_t}_F^2 + \epsilon \sum_{t=1}^{T}\norm{\mat{L}_t}_F = O\left( d^4 \sqrt{T} \right).
\]
(In particular, $\norm{\mat{L}_t}_F \leq \sqrt{d \norm{\lt}_\infty^2 \norm{\pt}_2^2 + \norm{\lt}_2^2} \leq \sqrt{d} (d + 1 + \epsilon) \leq 2 d^{3/2}$.)

In addition, since we choose $\vp_t$ to be a fixed point of $\phi_t$, we again have that $\linswap(\vp, \vell) = \reg^{\text{dual}}(\mathbf{\phi}, \vp, \vell)$, and thus $\linswap(\vp, \vell) = O( d^4 \sqrt{T})$. The per-iteration time complexity follows from the per-iteration time complexity of $\shellproj$ (Theorem \ref{thm:shellyproj-strong}) with the chosen $\epsilon$ and the fact that $\cP$ is in isotropic position.
\end{proof}

\begin{remark}
    We remark that the previous algorithm can also be executed when the time horizon $T$ is unknown, by employing the standard doubling trick (see e.g. \cite{orabona2022:modern}).
\end{remark}

\section{Polynomial-Time Computation of Linear Correlated Equilibria}
\label{sec:eah}

In this section, we present an algorithm that computes a high-precision linear correlated equilibrium in general convex games in oracle-polynomial time. That is, given oracle access to the strategy sets of an $n$-player convex game, the algorithm computes an $\epsilon$-approximate linear correlated equilibrium
in time (and number of oracle queries) that is polynomial in the size of the game and $\log(\frac{1}{\epsilon})$. The algorithm follows the idea of \citet{Papadimitriou2008:Computing} of casting the equilibrium computation task as a two-player zero-sum bilinear game between a ``Correlator'', who controls distributions over joint strategy profiles, and a ``Deviator'', who controls deviation profiles for all players. Computing any min-max equilibrium in this meta-game will then give us a correlation plan for the Correlator that corresponds to a valid linear correlated equilibrium.

To compute a min-max equilibrium in the bilinear zero-sum meta-game, we follow the framework of \citet{Farina2024:eah} that is based on the Ellipsoid Against Hope idea of \citet{Papadimitriou2008:Computing}. However, there are two main technical challenges in applying this framework to our case. First, the framework of \citet{Farina2024:eah} only works for polyhedral strategy sets that are well-behaved; it was left as an open problem to generalize their results to arbitrary convex strategy sets, as is the case here for the strategies of the Correlator and the Deviator. Second, we do not have direct oracle access to the strategy set of the Deviator (and instead, we only have some sort of ``semi-separation'' oracle access), which is required by the existing framework.

In this section, we begin by first providing a generalized version of the Ellipsoid Against Hope framework of \citet{Farina2024:eah} that works for general convex sets and also relaxes the requirement for oracle access to the strategy sets (Theorem~\ref{thm:ellipsoid-against-hope-strong}). Then, utilizing this new tool, we describe an algorithm (Theorem~\ref{thm:eah-equil-computation-strong}) that computes a linear correlated equilibrium in general convex games, by applying the same semi-separation ideas used for our regret minimization algorithm in Section~\ref{sec:linswap-regret}.

\subsection{Generalizing the Ellipsoid Against Hope Framework to General Convex Sets}

Let $\cX \subset \bbR^M, \cY \subset \bbR^N$ be compact convex sets with $M \gg N$ and consider the following convex program
\[
    \numberthis{eq:eah-desc-primal}\text{find}\ \vec{x} \in \mathcal{X} \qquad
    \text{s.t.}\ \min_{\vec y \in \mathcal{Y}} \vec{x}^\top \mat{A} \vec{y} \geq 0.
\]
Our goal is to compute a feasible point $\vx^* \in \cX$ in time that is polynomial only in $N$, since $M$ may be exponentially large in general (as will be the case when computing a linear correlated equilibrium in Section~\ref{sec:comp-lin-equil}, where $M$ is the number of all joint strategy profiles).
When the sets $\cX, \cY$ are polyhedral, the Ellipsoid Against Hope framework \citep{Farina2024:eah, Papadimitriou2008:Computing} computes a feasible solution $\vx^* \in \cX$ under the condition that we are given access to a ``Good-Enough-Response'' (GER) oracle: an oracle that for any $\vy \in \cY$ is guaranteed to return a succinctly-representable $\vx \in \cX$ (requiring far fewer than $M$ bits to represent) such that $\vx^\top \mA \vy \geq 0$. The idea is that, under GER oracle access, the convex program \eqref{eq:eah-desc-dual}, which resembles the dual of \eqref{eq:eah-desc-primal}, is guaranteed to be infeasible
\[
    \numberthis{eq:eah-desc-dual} \text{find}\ \vec y \in \bbR_+ \cY \qquad
    \text{s.t.}\ \max_{\vec x \in \cX} \vec{x}^\top \mat{A} \vec{y} \leq -1,
\]
where $\bbR_+ \cY := \set{c \vy \mid \vy \in \cY, c > 0}$ is the conic hull of $\cY$.
Despite the infeasibility of \eqref{eq:eah-desc-dual}, we proceed to execute the ellipsoid method on it, by using the GER oracle to repeatedly cut out subspaces where $\vx^{\top}\mat{A}\vy \geq 0$ for some specific $\vx$. Eventually, the ellipsoid method will deem that \eqref{eq:eah-desc-dual} is infeasible and we will have generated a sequence of $L$ good-enough-responses $\vx_1, \vx_2, \dots, \vx_L$. These good-enough-responses allow us to ``cover'' the original set $\cY$, in the sense that for any $\vy \in \cY$, it must be the case that $\vx_i \mat{A} \vy \geq 0$ for at least one GER $\vx_i$. It follows from the minimax theorem that there must be a convex combination $\vx^*$ of the $\vx_i$ that simultaneously guarantees $(\vx^*)^{\top}\mat{A}\vy \geq 0$ for \emph{every} $\vy \in \cY$. Moreover, we can find this $\vx^*$ (specifically, its representation as a mixture of the $\vx_i$) efficiently by solving a compressed primal program where we only search over such mixtures.

For our applications, we need to handle the two generalizations of the setting described above.

\begin{itemize}
    \item First, we need to handle generic convex sets $\cX$ and $\cY$, not just explicit polytopes. This means that we will not ever be able to entirely cover our set $\cY$ with good-enough-responses, and will have to settle for some $\eps$ slack in our approximation to equilibrium.
    \item More interestingly, since our convex set $\cY$ will be constructed from the sets of linear endomorphisms of each player's action set, we will not have direct oracle access to $\cY$. Instead, we will need to extend this framework to the setting where we only have ``semi-separation'' oracle access to $\cY$.
\end{itemize}

In this subsection, we give a generalization of the framework that addresses both of these points. In particular, we replace the requirement for a good-enough-response oracle for all $\vy \in \cY$ with the following: we have an oracle that at every point $\vy \in \bbR^N$, must either produce a good-enough-response $\vx \in \cX$, or a hyperplane separating $\vy$ from $\cY$. Then, we can replace $\cY$ in the dual program \eqref{eq:eah-desc-dual} with a convex shell set containing $\cY$ and then execute the ellipsoid method (constraining the shell set further whenever our oracle returns a separating hyperplane). This way, we end up with $L$ GER oracle responses as well as with a new, smaller, shell set $\widetilde{\cY}$ of $\cY$ and, eventually, we compute a solution $\vx^* \in \cX$ that satisfies the guarantee that $\min_{\vy \in \widetilde{\cY}} \vx^\top \mA \vy \geq -\eps$; this procedure is summarized in Algorithm \ref{alg:eah-strong}. Interestingly, when $\widetilde{\cY}$ is a strict superset of $\cY$, this guarantee is in some ways slightly stronger than the guarantee of the classic GER framework (as we provide a guarantee against all $\vy \in \widetilde{\cY}$).

\begin{algorithm}[ht]
    \SetAlgoNoLine
    \caption{Ellipsoid Against Hope for bilinear zero-sum games with general convex strategies}\label{alg:eah-strong}
    \KwData{Parameters $r_y, R_y$ such that $\cB_N(r_y) \subseteq \cY \subseteq \cB_N(\vzero, R_y)$, precision parameter $\epsilon > 0$, constant $B \geq 1$ such that $\norm{\vx^\top \mA}_2 \leq B$ for all $\vx \in \cX$}
	\KwIn{An oracle that for every $\vy \in \cB_N(\vzero, R_y)$, either produces a good-enough-response (\texttt{GER}) or a separating hyperplane (\texttt{SEP}) separating $\vy$ from $\cY$.}
	\KwOut{A sparse solution $\vec{x}^*$ of \eqref{eq:primal_cp-strong} represented as a mixture of \texttt{GER} oracle responses.}

    \tcp{Run the ellipsoid algorithm on the infeasible dual \eqref{eq:eah-desc-dual}}

    Initialize ellipsoid $\mathcal{E} := \cB_N(\vzero, R_y)$

    \Repeat{$\vol{\mathcal{E}} < V_N\left(\frac{\epsilon}{B}\right)$}{
    
        Query the oracle on the center $y$ of $\mathcal{E}$ 
            
        \uIf{the oracle returns a weak good-enough-response $(\vx, \vx^\top\mat{A})$}{
            Update $\mathcal{E}$ to the minimum volume ellipsoid containing $\mathcal{E} \cap \{\vy \in \bbR^{N} \mid (\vx^{\top}\mat{A})\vy \leq 0\}$
        }
        \uElse{
            Receive a half-space $H$ that separates $y$ from $\cY$ (\ie $\cY \subset H$, $y \not\in H$)
            
            Update $\mathcal{E}$ to the minimum volume ellipsoid containing $\mathcal{E} \cap H$
        }
    }

    Let $H_1, H_2, \dots, H_K$ be the separating half-spaces and let $\vx_1, \dots, \vx_L$ be the \texttt{GER} oracle response vectors returned by the oracle during the above process
    
    Let $\widetilde{\cY} := \cY \cap \bigcap_{i=1}^{K} H_i$

    Define $\mat{X} = [\vx_1 \mid \dots \mid \vx_L]$ and compute $\mat{X}^{\top}\mat{A}$ (this is an $L$-by-$N$ matrix whose rows are $\vx_i^\top\mat{A}$)
    
    Compute a solution $\vec{\lambda}^*$ to the following convex program \eqref{eq:primal_compressed_cp-strong}
    \[
        \tag{$P'$}\label{eq:primal_compressed_cp-strong}
        \text{find}\ &\vec{\lambda} \in \Delta^L\\
        \text{s.t.}\ &\min_{\vec y \in \widetilde{\cY}} \vec{\lambda}^\top (\mat{X}^\top \mat{A}) \vec{y} \geq -\eps
    \]

    Return the final solution $\vec{x}^* = \mat{X} \vlam^* = \sum_{i=1}^{L} \lambda^{*}_i \vx_i$. 
\end{algorithm}

\begin{restatable}{theorem}{thmeah-strong}\label{thm:ellipsoid-against-hope-strong}
    Let $\cX \subset \bbR^M, \cY \subset \bbR^N$ be compact convex sets (to which we do not necessarily have direct oracle access), and let $\epsilon > 0$ be a precision parameter. If the following conditions hold:
    \begin{enumerate}
        \item $\mA \in \bbR^{M \times N}$ is a matrix such that $\forall \vx \in \cX,\ \norm{\vx^\top \mA}_2 \leq B$ for some $B \geq 1$,

        \item the set $\cY$ is well-bounded as $\cB_N(r_y) \subseteq \cY \subseteq \cB_N(\vzero, R_y)$,

        \item\label{assmpt:eah_sep_oracles-strong} there exists an oracle that for every point $\vy \in \cB_N(\vzero, R_y)$, runs in $\poly(N)$ time, and either returns a separating hyperplane (\texttt{SEP}) separating $\vy$
        from $\cY$ or a good-enough-response (\texttt{GER}) $\vx \in \cX$,

        \item the encoding lengths of the \texttt{GER} responses produced by the previous oracle are polynomially bounded (in $N$ and $\log(1/\eps)$),
    \end{enumerate}
    then, Algorithm~\ref{alg:eah-strong} runs in oracle-polynomial time and computes a solution $\vx^*$ to
    \[
        \tag{$P$}\label{eq:primal_cp-strong}
        \text{find}\ &\vec{x} \in \mathcal{X}\\
        \text{s.t.}\ &\min_{\vec y \in \mathcal{Y}} \vec{x}^\top \mat{A} \vec{y} \geq -\epsilon.
    \]
    Furthermore, $\vx^*$ is a specified as an explicit mixture of $\poly(N, \log(B/\eps))$ \texttt{GER} responses.
\end{restatable}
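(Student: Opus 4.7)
The plan is to extend the Ellipsoid-Against-Hope framework of \citet{Papadimitriou2008:Computing, Farina2024:eah} to the setting where $\cY$ is a general convex set accessed only through a semi-separation oracle, by running the ellipsoid method on the (relaxed) dual \eqref{eq:eah-desc-dual} as described in Algorithm~\ref{alg:eah-strong}. At each iteration, the oracle queried on the center of the current ellipsoid returns either a GER $\vx_i \in \cX$ (yielding the halfspace cut $\vx_i^\top\mA\vy \leq 0$) or a genuine separating hyperplane $H_i$ for $\cY$. I would first bound the number of iterations via the standard volume argument: each centered cut shrinks the volume by a factor of at least $\exp(-1/(2(N+1)))$, so terminating once the volume drops below the prescribed threshold occurs in $L + K = O(N^2 \log(R_y B/\epsilon))$ iterations (with an additional $\log(R_y/r_y)$ factor absorbed into the iteration count, as required by the feasibility argument below). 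This controls both the oracle-polynomial runtime and the sparsity $L$ of the final output.

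The core technical step is establishing feasibility of the compressed convex program \eqref{eq:primal_compressed_cp-strong}. Because every $H_i$ is a valid separator for $\cY$, we have $\widetilde{\cY} = \cY$ set-theoretically, and by Sion's minimax theorem applied to the bilinear form over the compact convex domains $\Delta^L$ and $\cY$, feasibility reduces to showing $\min_{\vy \in \cY}\max_{i \in [L]} \vx_i^\top\mA\vy \geq -\epsilon$. Suppose for contradiction that some $\vy^* \in \cY$ satisfies $\vx_i^\top\mA\vy^* < -\epsilon$ for every $i$. The terminal ellipsoid $\mathcal{E}$ contains the set $\cY \cap \{\vy : \vx_i^\top\mA\vy \leq 0 \ \forall i\}$, since GER cuts preserve the corresponding halfspaces and SEP cuts preserve $\cY$ in its entirety. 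Using $\|\vx_i^\top\mA\|_2 \leq B$, any point within Euclidean distance $\epsilon/B$ of $\vy^*$ still satisfies $\vx_i^\top\mA\vy' \leq 0$; to additionally guarantee the point stays in $\cY$, I would interpolate $\vy^*$ with the center of the guaranteed inner ball $\cB_N(r_y)\subseteq \cY$ by a factor $\alpha = \Theta(\epsilon/(B R_y))$, producing $\vy^{**}\in \cY$ that still has halfspace slack $\Omega(\epsilon)$ and whose neighborhood of radius $\Omega(\alpha r_y) = \Omega(\epsilon r_y/(B R_y))$ lies entirely in $\cY$. This ball is therefore contained in $\mathcal{E}$, contradicting the termination volume threshold once the threshold carries the extra $r_y/R_y$ factor (which only inflates the iteration count logarithmically).

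Given feasibility, I would compute $\vlam^*$ by solving \eqref{eq:primal_compressed_cp-strong} via a second ellipsoid run, now in the low-dimensional simplex $\Delta^L$. The feasibility check for a candidate $\vlam$ asks to minimize the linear functional $\vlam^\top \mat{X}^\top\mA\vy$ over $\cY$; this is achieved using the same semi-separation oracle, which on any candidate minimizer either returns a GER (accumulating an inner-product constraint) or returns SEP, cutting the $\vy$-search region further. A violating $\vy$ discovered during this inner loop translates directly into a separating hyperplane for the infeasible $\vlam$. The final output $\vx^* = \mat{X}\vlam^*$ is an explicit convex combination of $L = \poly(N,\log(R_y B/\epsilon))$ GER responses, lies in $\cX$ by convexity, and satisfies $\min_{\vy \in \cY}\vx^{*\top}\mA\vy \geq -\epsilon$.

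The principal obstacle is the volume lower-bound argument: because the putative obstruction $\vy^*$ may lie on the boundary of $\cY$, it must be carefully interpolated into the interior without sacrificing too much halfspace slack, which introduces a $r_y/R_y$ factor into the effective termination threshold and requires precise balancing of the perturbation parameter $\alpha$ against the stated $\epsilon$. A secondary subtlety is that the semi-separation oracle never certifies $\cY$-membership, so the feasibility test used inside the compressed program's solver must be phrased asymmetrically, treating SEP responses as genuine cuts on $\vy$-space while using GER responses only as auxiliary information that is validated by the global feasibility guarantee established in the main argument.
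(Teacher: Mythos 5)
Your proof takes the same overall route as the paper---ellipsoid on the relaxed dual, covering of the $\vy$-space by the GER cuts, minimax to extract a mixture, and a compressed program to recover $\vlam^*$---but the crucial feasibility argument for~\eqref{eq:primal_compressed_cp-strong} proceeds differently, in a way worth noting. The paper argues that, since the final ellipsoid has volume below $V_N(\eps/B)$, every $\vy' \in \widetilde{\cY} \cap \cE_{\text{fin}}$ lies within distance $\eps/B$ of some $\vy \in \widetilde{\cY}\setminus \cE_{\text{fin}}$ that is caught by a GER cut, then pays a Lipschitz cost $B\cdot \eps/B = \eps$. You instead push a hypothetical $\vy^*$ with uniform slack $<-\eps$ toward the inner ball $\cB_N(r_y)$, producing a ball of radius $\Theta(\eps r_y /(B R_y))$ contained in $\widetilde{\cY} \cap \{\text{all GER halfspaces}\}$; since no ellipsoid cut ever removes this ball, it must sit inside $\cE_{\text{fin}}$, contradicting the terminal volume bound. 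This is more self-contained (the paper's claim about proximity to $\widetilde{\cY}\setminus\cE_{\text{fin}}$ is stated rather tersely), but, as you correctly flag, it only yields a contradiction if the termination threshold of Algorithm~\ref{alg:eah-strong} is tightened from $V_N(\eps/B)$ to $V_N(\Theta(\eps r_y/(B R_y)))$---a benign change costing an extra $\log(R_y/r_y)$ factor in the iteration count, but one that should be explicitly written into the algorithm since you cannot run the analysis against the threshold as stated.

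One spot needs tightening: your handling of how~\eqref{eq:primal_compressed_cp-strong} is actually solved. You read $\widetilde{\cY} = \cY$ (literally consistent with the $\widetilde{\cY} := \cY \cap \bigcap_i H_i$ line in Algorithm~\ref{alg:eah-strong}) and then propose to certify a candidate $\vlam$ by minimizing $\vlam^\top \mat{X}^\top \mA \vy$ over $\cY$ via the same semi-separation oracle. But when that oracle returns a GER at a query point $\vy$, you obtain neither a separating hyperplane for $\cY$, nor a certificate of membership $\vy\in\cY$, nor a useful bound on $\vlam^\top\mat{X}^\top\mA\vy$, so the inner minimization has no clear next step. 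The fix---made explicit in the paper's weak-oracle version Algorithm~\ref{alg:eah}---is to take $\widetilde{\cY} := \cM \cap \bigcap_i H_i$, an \emph{explicitly described} polytope intersected with a ball, over which you have direct separation access; the oracle for~\eqref{eq:primal_compressed_cp-strong} is then an ordinary linear minimization over $\widetilde{\cY}$. Since $\widetilde{\cY}\supseteq\cY$ the constraint only gets tighter, and your interpolation argument applies verbatim to any convex superset of $\cY$ containing $\cB_N(r_y)$, so the feasibility proof carries over unchanged.
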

(For the version of this under weak oracles, see Theorem~\ref{thm:ellipsoid-against-hope}.)
\begin{proof}
If Algorithm \ref{alg:eah-strong} successfully completes, it is guaranteed to return a valid solution $\vx^*$ to \eqref{eq:primal_cp-strong} that is the convex combination of $L$ \texttt{GER} responses (since $\widetilde{\cY}$ is guaranteed to contain $\cY$, $X\vlam$ must be a solution to \eqref{eq:primal_cp-strong} for any solution $\vlam$ to \eqref{eq:primal_compressed_cp-strong}). Moreover, each individual step of the algorithm runs in time polynomial in $N$, $L$, and $\log(B/\eps)$. It therefore suffices to verify that: (1) the ellipsoid algorithm runs for at most $\poly(N, \log(B/\eps))$ iterations (and therefore also $L \leq \poly(N, \log(B/\eps))$), and (2) the final convex program \eqref{eq:primal_compressed_cp-strong} has a solution.

For the first point, note that in each iteration of our ellipsoid algorithm, we intersect $\cM$ with some half-space passing through (or excluding) the center of the current ellipsoid $\cE$. By the standard analysis of the ellipsoid algorithm, after one iteration the new volume of the minimum volume ellipsoid is at most $\exp\left(-\frac{1}{2(N+1)}\right)\vol{\mathcal{E}}$. Since the initial ellipsoid $\cE$ has volume $V_{N}(R_y)$ and the process terminates when the volume drops below $V_{N}(\eps/B)$, this subroutine runs for at most $2(N+1)\log \left((B/\eps)^{N}\right) = \poly(N, \log(B/\eps))$ iterations.

To prove the second point, let $\cE_{\text{fin}}$ be the final ellipsoid $\cE$ at the conclusion of the ellipsoid algorithm subroutine. Note that any $\vy \in \widetilde{\cY} \setminus \cE_{\text{fin}}$ must have the property that $(\vx^\top\mat{A})\vy \geq 0$ for one of the \texttt{GER} responses $\vx$, as it must have been separated from $\cE_{\text{fin}}$ in one such step. On the other hand, since $\vol{\cE_{\text{fin}}} \leq V_{N}(\eps/B)$, every $\vy' \in \widetilde{\cY} \cap \cE_{\text{fin}}$ must be within distance $\eps/B$ from some $\vy \in \widetilde{\cY} \setminus \cE_{\text{fin}}$. If $(\vx^\top\mat{A})\vy \geq 0$ for this $\vy$, then we also have that

$$(\vx^\top\mat{A})\vy' = (\vx^\top\mat{A})\vy + (\vx^\top\mat{A})(\vy' - \vy) \geq -\norm{\vx^\top\mat{A}}_2\norm{\vy' - \vy}_2 \geq -\eps.$$

We have therefore shown that given any $\vy \in \widetilde{\cY}$, it must be $\vx^\top\mat{A}\vy \geq -\eps$ for at least one \texttt{GER} response $\vx$ we received. By the minimax theorem, there exists a convex combination of such responses which guarantees this property simultaneously for all $\vy \in \widetilde{\cY}$, and therefore the convex program \eqref{eq:primal_compressed_cp-strong} is feasible. To make a separation oracle for \eqref{eq:primal_compressed_cp-strong}, we can run the ellipsoid algorithm in time $\poly(\log(R_y / r_y), L)$. Thus, the the time complexity will be oracle-polynomial in $\poly(N, \log(R_y / r_y), \log(B / \epsilon))$.
\end{proof}

\subsection{Computing a Linear Correlated Equilibrium in General Convex Games}\label{sec:comp-lin-equil}

We are now ready to apply our new Ellipsoid Against Hope framework to construct an algorithm that computes an $\epsilon$-approximate linear correlated equilibrium in general convex $n$-player games with efficient oracle access to the strategy sets $\cP_1, \dots, \cP_n$. The algorithm will run in time polynomial in the size of the game and $\log(\frac{1}{\epsilon})$.

We begin with some definitions and notation. First, for the sake of handling affine linear transformations, we augment the players' strategy sets as $\cX_i = \set{1} \times \cP_i \subset \bbR^{d_i + 1}$ for every $i \in [n]$. We thus assume, without loss of generality, that utilities are multi-linear in this new space, with $u_i(\vs_{-i})[d_i + 1] = 0$ for all $i \in [n]$.
Define $M = \prod_{i=1}^n (d_i + 1)$ and $N = 1 + \sum_{i=1}^n d_i (d_i + 1)$.
Let $\cX := \conv(\cX_1 \otimes \dots \otimes \cX_n) \subset \bbR^M$ be the set of all correlated strategy profiles and $\cY := \set{1} \times \Phi(\cP_1) \times \dots \Phi(\cP_n) \subset \bbR^N$ be the set of all players' deviation profiles (note that while $M$ is exponentially large in the size of the game, $N$ is merely polynomially large). Additionally, by the multi-linearity of utility functions, we can represent each $u_i$ as an $M$-dimensional tensor $\mU_i$ such that $u_i(\vx) = \mU_i \cdot \vx$. For $k \in [d_1+1] \times \dots \times [d_n+1]$, we will refer to the $k$-th entry of the $\mU_i$ tensor as $\mU_i[k]$. Finally, in usual game-theoretic convention, for sets $S_1, \dots, S_n$ and some $\vs \in S_1 \times \dots \times S_n$, we will denote $\vs_{-i} = (s_1, \dots, s_{i-1}, s_{i+1}, \dots, s_n)$ and $\vs = (s_i, \vs_{-i})$.

The only structural assumption that we impose on the games we consider is the \emph{polynomial utility gradient property}, that lets us compute the counterfactual payoff function faced by any individual player.

\begin{assumption}[Polynomial utility gradient property {\citep[Assumption 4.2]{Farina2024:eah}}]\label{asmpt:polynomial-utility-gradient}
    Given a product distribution $\vx \in \cX_1 \otimes \dots \otimes \cX_n$, it is possible to compute the value of
    \[
        \vec{g}_i(\vx_{-i}) = \E_{\vs_{-i} \sim \vx_{-i}}[\nabla u_i(\vs_{-i})]
    \]
    for all players $i \in [n]$ in polynomial time in the encoding length of $\vx$ and the size of the game.
\end{assumption}

This property has been assumed in all previous results regarding computation of equilibria in games (including in \citet{Papadimitriou2008:Computing} as the ``Polynomial Expectation Property'') and is also implicitly assumed in every approximate equilibrium computation result that uses no-regret learning. Note that any algorithm for computing an equilibrium in polynomial time in the number of players has to impose a similar structural assumption on the utilities, because otherwise it is prohibitive to even represent the exponentially large $M$-dimensional utility tensor.

Following \citet{Papadimitriou2008:Computing, Farina2024:eah}, we now define a meta-game: a two-player zero-sum bilinear game between a Correlator (with strategy set $\cX$) and a Deviator (with strategy set $\cY$). The utility matrix  $\mA \in \bbR^{M \times N}$ of the Correlator is such that for every $k \in [d_1+1] \times \dots \times [d_n+1]$ and $j \in \set{\emptyseq} \cup \set{(i, a, b) \mid i \in [n], a \in [d_i+1], b \in [d_i]}$,
\[
    \mA_{k j} := \left\{ \begin{array}{ll}
        \sum_i \mU_i[k] & j = \emptyseq \\
        -\mU_i[(b, k_{-i})] & a = k_i\\
        0 & \text{otherwise}
    \end{array} \right. .
\]
This is a huge utility matrix (exponentially big in the number of players) and will never be explicitly stored by our algorithm. We will only ever access it via Assumption~\ref{asmpt:polynomial-utility-gradient}. The following lemma shows that the utility of the Correlator player corresponds to the sum of expected $\Phi$-regrets of the players for the specific deviation functions $\phi_1, \phi_2, \dots, \phi_n$ comprising $\vy$. In particular, this means that to compute a linear correlated equilibrium, it suffices to compute a min-max optimal solution $\vx^*$ for the Correlator.

\begin{restatable}{lemma}{lemeahutilitymat-strong}\label{lem:eahutilitymat-strong}
    Let $\cP_1, \dots, \cP_n$ be the compact convex strategy sets of an $n$-player convex game and define $\cX_i := \set{1} \times \cP_i$. For any $\vx \in \cX := \conv(\cX_1 \otimes \dots \otimes \cX_n)$ and $\vy = (1, \phi_1, \dots, \phi_n) \in \cY := \set{1} \times \Phi(\cP_1) \times \dots \times \Phi(\cP_n)$, it holds that\footnote{We slightly abuse notation here by writing $\E_{\vs \sim \vx}[\cdot]$ to refer to sampling $\vs \in \cX_1 \times \dots \times \cX_n$ from some convex decomposition of $\vx$ into pure strategy profiles. This decomposition is not unique (and so this is not always well-defined), but since we only take expectations over quantities which are linear in $\vx$, it is consistent in our case.}
    \[
        \vx^\top \mA \vy = \sum_{i=1}^n \E_{\vs \sim \vx}[u_i(\vs) - u_i(\phi_i(\vs_i), \vs_{-i})]
    \]
    Moreover, if $\vx$ satisfies $\vx^{\top}\mat{A}\vy \geq -\eps$ for every $\vy \in \cY$, then the corresponding distribution $\mu \in \Delta(\cP_1 \times \dots \times \cP_n)$ (obtained by removing the constant coefficient from each component of $\cX$ and decomposing into a distribution over product elements) is an $\eps$-approximate linear correlated equilibrium.
\end{restatable}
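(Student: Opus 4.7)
The proof has two parts: the bilinear identity and the approximate-LCE implication. For the first part, my plan is to expand $\vx^\top \mA \vy = \sum_{k,j} x_k \mA_{kj} y_j$ and split the outer sum according to whether $j = \emptyseq$ or $j = (i,a,b)$, since the definition of $\mA$ is case-based. For the $j = \emptyseq$ piece, the column is $\mA_{k,\emptyseq} = \sum_i \mU_i[k]$ and $y_\emptyseq = 1$, so this piece collapses via multi-linearity to $\sum_i \E_{\vs \sim \vx}[u_i(\vs)]$, using that $x_k = \E_{\vs \sim \vx}[\prod_j (\vs_j)_{k_j}]$ for any $\vx \in \conv(\cX_1 \otimes \cdots \otimes \cX_n)$.

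For the $j = (i,a,b)$ piece, the indicator $a = k_i$ forces the inner sum over $k$ to range only over indices whose $i$-th coordinate equals $a$. After this restriction I will reindex by setting $k' = (b, k_{-i})$, so that the $-\mU_i[(b,k_{-i})]$ factor becomes $-\mU_i[k']$. Grouping terms, the contribution rewrites as
\[
-\sum_i \sum_{k'} \mU_i[k'] \sum_{a} y_{(i,a,k_i')}\, x_{(a,k_{-i}')}.
\]
The key observation is that $\sum_a y_{(i,a,k'_i)}(\vs_i)_a$ is exactly $\phi_i(\vs_i)_{k'_i}$ under the identification of $\vy$ with the tuple of affine matrices $\phi_1,\ldots,\phi_n$ acting on augmented strategies. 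Taking expectations and pushing them through the multi-linear sum over $k'$ then yields $-\sum_i \E_{\vs \sim \vx}[u_i(\phi_i(\vs_i), \vs_{-i})]$. Combining both pieces gives the claimed identity.

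For the second part, I will specialize the identity to deviation profiles that are identity in all but one coordinate. Fix player $i$ and any $\phi_i \in \Phi(\cP_i)$; since the identity belongs to $\Phi(\cP_j)$ for each $j \neq i$, the vector $\vy = (1, \mathrm{id}, \ldots, \mathrm{id}, \phi_i, \mathrm{id}, \ldots, \mathrm{id})$ lies in $\cY$. By the identity, all summands with $j \neq i$ vanish, and the hypothesis $\vx^\top \mA \vy \geq -\eps$ reduces precisely to
\[
\E_{\vs \sim \vx}[u_i(\vs)] - \E_{\vs \sim \vx}[u_i(\phi_i(\vs_i), \vs_{-i})] \geq -\eps.
\]
After stripping off the augmented constant coordinate and reading $\vx$ as a distribution $\mu$ over $\cP_1 \times \cdots \times \cP_n$, this is exactly the $\eps$-approximate linear correlated equilibrium condition.

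The routine but finicky step is bookkeeping the augmentation: making sure that the $(d_i+1)$-dimensional input and $d_i$-dimensional output indexing in $\phi_i$ lines up with the index sets in the definition of $\mA$, and using the assumption $u_i(\vs_{-i})[d_i+1] = 0$ to eliminate boundary terms where $k_i = d_i+1$. I expect the rest of the calculation to follow by a single reindexing plus linearity of expectation, so the main obstacle is really the notational care in this bookkeeping rather than any conceptual difficulty.
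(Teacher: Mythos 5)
Your proposal is correct and matches the paper's proof: the paper does the same expansion of $\vx^\top \mA \vy$ (collapsing the $j=\emptyseq$ column to $\sum_i \E_{\vs\sim\vx}[u_i(\vs)]$ and the $j=(i,a,b)$ columns, via the $a = k_i$ indicator and multi-linearity, to $-\sum_i \E_{\vs\sim\vx}[u_i(\phi_i(\vs_i),\vs_{-i})]$), and for the second part it also specializes to $\vy$ with $\phi_j = \mathrm{Id}$ for $j\neq i$. Your reindexing by $k' = (b, k_{-i})$ and the recognition $\sum_a y_{(i,a,k'_i)}(\vs_i)_a = \phi_i(\vs_i)_{k'_i}$ is just a more explicit writing of the same step the paper compresses into its middle line.
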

\begin{proof}
    Let $\mB_i \in \bbR^{d_i \times (d_i+1)}$ be the matrix corresponding to the affine transformation $\phi_i$. Then,
    \[
        \vx^\top \mA \vy &= \sum_{k=1}^M \vx[k] \sum_{i=1}^n \left( \mU_i[k] - \sum_{b \in [d_i]} \mB_i[b, k_i] \mU_i[(b, k_{-i})] \right)\\
            &= \sum_{i=1}^n \left( \E_{\vs \sim \vx}[u_i(\vs)] - \sum_{k_{-i}} \sum_{k_i \in [d_i+1]} \vx[(k_i, k_{-i})] \sum_{b \in [d_i]} \mB_i[b, k_i] \mU_i[(b, k_{-i})] \right)\\
            &= \sum_{i=1}^n \E_{\vs \sim \vx}[u_i(\vs) - u_i(\phi_i(\vs_i), \vs_{-i})].
    \]

    Now, assume $\vx$ satisfies $\vx^{\top}\mat{A}\vy \geq -\eps$ for every $\vy \in \cY$. Note that if for any player $i \in [n]$ and endomorphism $\phi \in \Phi(\cP_i)$, we choose $\vy$ by setting $\phi_i = \phi$ and $\phi_j = \mat{Id}$ (the identity map on $\cP_j$) for all $j \neq i$, this implies that
    \[
        \E_{\vs \sim \vx}[u_i(\vs)] \geq \E_{\vs \sim \vx}[u_i(\phi_i(\vs_i), \vs_{-i})]  -\eps,
    \]
    and in particular, the correlated strategy profile $\mu$ corresponding to the element $\vx$ is an $\eps$-approximate linear correlated equilibrium.
\end{proof}

Finally, we present the main result of this section. The main technical idea needed for this result is to develop an oracle of the form that is required by Theorem~\ref{thm:ellipsoid-against-hope-strong}. The way we do this, at a high level, is as follows: for a given deviation point $\vy \in \bbR^N$, we check if each of the component transformations $\phi_i$ has a fixed point in $\cX_i$. If any of the $\phi_i$ do not, we can query the semi-separation oracle (Section~\ref{sec:semi-sep}) to separate $\phi_i$ from $\cY$, and we use this separating hyperplane as our response. Otherwise, if every $\phi_i$ possesses a fixed point, we can construct a good-enough-response by taking the product distribution formed from these fixed points.

\begin{restatable}{theorem}{thmequilcomputation}\label{thm:eah-equil-computation-strong}
    Let $G$ be an $n$-player convex game with convex strategy sets $\cP_i \subset \bbR^{d_i}$ for $i \in [n]$. Assume we have efficient oracle access to each $\cP_i$ and each $\cP_i$ is well-bounded via $\cB_{d_i}(r_i) \subseteq \cP_i \subseteq \cB_{d_i}(\vzero, R_i)$. Furthermore, assume that $G$ satisfies the polynomial utility gradient property (Assumption~\ref{asmpt:polynomial-utility-gradient}), and assume that $u_i(\vs) \in [-1, 1]$ for every strategy profile $\vs \in \cX_1 \times \dots \times \cX_n$. Then, there exists an algorithm that computes an $\epsilon$-approximate linear correlated equilibrium in time $\poly(\sum_i d_i, \log(1/\eps), \sum_i \log(R_i / r_i))$. Furthermore, the computed equilibrium is represented as a mixture of polynomially many product distributions over strategy profiles.
\end{restatable}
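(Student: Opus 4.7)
The plan is to apply the generalized Ellipsoid Against Hope framework of Theorem~\ref{thm:ellipsoid-against-hope-strong} to the Correlator--Deviator zero-sum meta-game defined above, with $\cX = \conv(\cX_1 \otimes \dots \otimes \cX_n)$ and $\cY = \{1\} \times \Phi(\cP_1) \times \dots \times \Phi(\cP_n)$ and utility matrix $\mA$ as specified. By Lemma~\ref{lem:eahutilitymat-strong}, any $\vx^* \in \cX$ satisfying $\min_{\vy \in \cY} (\vx^*)^\top \mA \vy \geq -\epsilon$ decomposes into an $\epsilon$-approximate linear correlated equilibrium, so it suffices to verify the four hypotheses of Theorem~\ref{thm:ellipsoid-against-hope-strong}.

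First I would check the geometric and norm hypotheses. The bound $\|\vx^\top \mA\|_2 \leq B$ with $B$ polynomial in $n$ and $\sum_i d_i$ follows from the assumption $u_i \in [-1,1]$ by inspecting the definition of $\mA$: each column is either a sum of utility entries or a single negated utility entry, so each entry of $\vx^\top \mA$ is bounded in absolute value by $n$, and thus $\|\vx^\top \mA\|_2 \leq n\sqrt{N}$. The well-boundedness of $\cY$ follows from applying Lemma~\ref{lem:lintrans-nice} coordinate-wise: each $\Phi(\cP_i)$ is sandwiched between balls of radii depending polynomially on $r_i,R_i,d_i$, and $\cY$ is their product with the constant coordinate $\{1\}$, which gives explicit polynomial $r_y, R_y$ in terms of the per-player parameters.

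The heart of the argument is constructing the \texttt{GER}/\texttt{SEP} oracle required by hypothesis~\ref{assmpt:eah_sep_oracles-strong}. Given a query point $\vy = (1, \phi_1, \dots, \phi_n)$, I would run the semi-separation oracle of Lemma~\ref{lem:sep-non-fixed-point-strong} on each $\phi_i$ with respect to $\cP_i$. If some $\phi_i$ fails to have a fixed point in $\cP_i$, the oracle returns a hyperplane $\inner{\phi_i', C_i} \leq b_i$ separating $\phi_i$ from $\Phi(\cP_i)$; lifting this to $\bbR^N$ by padding with zeros in all coordinates except those corresponding to $\phi_i$ yields a hyperplane separating $\vy$ from $\cY$, which we return as the \texttt{SEP} response. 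If instead every $\phi_i$ admits a fixed point $\vp_i \in \cP_i$, I would form the product distribution $\vx = \vp_1 \otimes \dots \otimes \vp_n \in \cX$ (encoded succinctly as the tuple of fixed points, giving the polynomial encoding length required by condition~4). Applying Lemma~\ref{lem:eahutilitymat-strong} to this product $\vx$ and the query $\vy$,
\[
    \vx^\top \mA \vy \;=\; \sum_{i=1}^n \bigl(u_i(\vp_1,\dots,\vp_n) - u_i(\phi_i(\vp_i), \vp_{-i})\bigr) \;=\; 0,
\]
since $\phi_i(\vp_i) = \vp_i$ for every $i$; hence this is a valid \texttt{GER} response. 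Each oracle call runs in $\poly(\sum_i d_i, \log(R_i/r_i))$ time by Lemma~\ref{lem:sep-non-fixed-point-strong} and the polynomial utility gradient property (Assumption~\ref{asmpt:polynomial-utility-gradient}), which is needed to compute $\vx^\top \mA$ in the compressed primal program.

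Having verified all four hypotheses, Theorem~\ref{thm:ellipsoid-against-hope-strong} produces in oracle-polynomial time an $\vx^* \in \cX$, expressed as a mixture of polynomially many product-distribution \texttt{GER} responses, satisfying $\min_{\vy \in \cY} (\vx^*)^\top \mA \vy \geq -\epsilon$. Translating back via Lemma~\ref{lem:eahutilitymat-strong} then yields the desired $\epsilon$-approximate linear correlated equilibrium, still represented as a mixture of polynomially many product distributions. The main obstacle I anticipate is the bookkeeping for the oracle: we must carefully verify that separating hyperplanes produced componentwise in $\bbR^{d_i \times (d_i+1)}$ lift to valid separating hyperplanes for the product set $\cY$ in $\bbR^N$, and that the constant-$1$ augmentation used to handle affine transformations is propagated consistently through the meta-game utility matrix and the output decomposition. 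A secondary subtlety is confirming that the $B$ and well-boundedness parameters scale polynomially in the game-size parameters $\sum_i d_i$ and $\sum_i \log(R_i/r_i)$, which controls the runtime promised in the statement.
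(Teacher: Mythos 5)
Your proposal is correct and follows essentially the same route as the paper's own proof: reduce via Lemma~\ref{lem:eahutilitymat-strong} to constructing the combined \texttt{SEP}/\texttt{GER} oracle, implement it by running the semi-separation oracle of Lemma~\ref{lem:sep-non-fixed-point-strong} on each $\phi_i$ and lifting a componentwise separating hyperplane or forming the product of the fixed points, and invoke Theorem~\ref{thm:ellipsoid-against-hope-strong}. The only cosmetic differences are the slightly looser constant in your bound on $\|\vx^\top \mA\|_2$ (the paper uses $B=\sqrt{N}$), and you claim an inner ball for $\cY$ despite its fixed $\{1\}$-coordinate making it lower-dimensional --- a subtlety the paper's strong-oracle version also elides and only handles explicitly in the weak-oracle version (Theorem~\ref{thm:eah-equil-computation-weak}) by inflating that coordinate.
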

(For the version of this under weak oracles, see Theorem~\ref{thm:eah-equil-computation-weak}.)
\begin{proof}
By Lemma~\ref{lem:eahutilitymat-strong}, it suffices to construct the oracle required by Algorithm \ref{alg:eah-strong} for the sets $\cX, \cY$ and matrix $\mat{A}$ that arise from the Correlator-Deviator game. This oracle needs to take an arbitrary deviation $\vy \in \bbR^{N}$ and return either a hyperplane separating $\vy$ from the true Deviator action set $\cY$ or a good-enough-response $\vx$.

Write $\vy = (1, \phi_1, \phi_2, \dots, \phi_n)$ (if the first coordinate of $\vy$ is not $1$, we can trivially separate it from $\cY$). For each $\phi_i$, run the semi-separation oracle of Lemma \ref{lem:sep-non-fixed-point-strong} on this $\phi_i$. This oracle runs in time $\poly(d_i)$ and either returns a fixed point $\vx_i$ contained in $\cX_i$, or a halfspace $H_i$ separating $\phi_i$ from $\cY_i$.

If we received a halfspace $H_i$ note that this same halfspace separates $\vy$ from $\cY$ (after appropriately extending it to the higher dimensional space), and so we can simply return this halfspace. Therefore, assume that we received a fixed point $\vx_i \in \cX_i$ of $\phi_i$ for all $i \in [n]$, and consider the point $\vx = \vx_1 \otimes \vx_2 \otimes \dots \otimes \vx_n$. Note that by Lemma \ref{lem:sep-non-fixed-point-strong}, this $\vx$ has the property that

\[
\vx^\top \mA \vy = \sum_{i=1}^n u_i(\vx) - u_i(\phi_i(\vx_i), \vx_{-i}) = 0\]

\noindent
(since $\phi_i(\vx_i) = \vx_i$) and therefore $\vx$ is a valid good-enough-response. Moreover, the representation of $\vx$ as the product of $n$ mixed strategies only requires $O(\sum_{i}d_i)$ bits to represent.

Finally, note that, since by assumption $|u_i(\vs)| \leq 1$, this implies that $\norm{\vx^\top \mA} \leq \sqrt{N}$, and so we call Algorithm \ref{alg:eah} with $B = \sqrt{N}$. It follows that Algorithm 1 runs in time $\poly(N, \log(B/\eps), \sum_i \log(R_i / r_i)) = \poly(\sum_{i} d_i, \log(1/\eps), \sum_i \log(R_i / r_i))$, as desired.
\end{proof}

\newpage
\bibliographystyle{unsrtnat}
\bibliography{references}

\clearpage
\appendix
\section{Sufficiency of Regret Minimization in Isotropic Position}

\begin{lemma}\label{lem:isotropic_regret}
    Let $\cP \subset \bbR^d$ be a compact convex set with $\cP \subseteq \cB_d(\vzero, R)$.
    Let $\psi(\vx)$ be the (invertible) affine Isotropic transformation for $\cP$, such that $\cP' := \psi(\cP)$ is in the Isotropic Position.
    Assume that we have a linear-swap regret minimizer $\mathcal{R}_{\cP'}$ for $\cP'$ incurring regret $\text{Reg}_{\cP'}$.
    Then, using $\poly(d)$ time per-iteration, $\mathcal{R}_{\cP'}$ can be converted to a linear-swap regret minimizer $\mathcal{R}_{\cP}$ for $\cP$ that incurs a regret of $\text{Reg}_{\cP} = 2 R \cdot \text{Reg}_{\cP'}$.
\end{lemma}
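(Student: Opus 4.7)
The plan is to build $\mathcal{R}_{\cP}$ as a wrapper around $\mathcal{R}_{\cP'}$ that applies the isotropic transformation in the forward direction on iterates and in the adjoint direction on losses. Writing $\psi(\vx) = \mA\vx + \vb$ with $\mA$ invertible, the reduction is as follows: at each round $t$, query $\mathcal{R}_{\cP'}$ for its action $\vx'_t \in \cP'$, output $\vx_t := \psi^{-1}(\vx'_t) \in \cP$, and upon observing the loss $\vell_t$, feed back the transformed loss $\vell'_t := \mA^{-\top}\vell_t$ (rescaled into the loss range required by $\mathcal{R}_{\cP'}$). Applying $\psi$, $\psi^{-1}$, and $\mA^{-\top}$ are $\bigOh(d^2)$ operations once the isotropic transformation has been precomputed, so the per-iteration overhead is $\poly(d)$.

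The key step is to show that linear swap regret is invariant under the affine conjugation $\phi \mapsto \phi' := \psi \circ \phi \circ \psi^{-1}$. This conjugation is a bijection $\Phi(\cP) \leftrightarrow \Phi(\cP')$, because affine composition preserves affine maps and $\psi$ is an affine bijection of $\cP$ onto $\cP'$. For any such pair $(\phi, \phi')$, the identity $\phi(\vx_t) = \psi^{-1}(\phi'(\vx'_t))$ gives
\[
\inp{\vx_t - \phi(\vx_t), \vell_t} = \inp{\mA^{-1}\bigl(\vx'_t - \phi'(\vx'_t)\bigr), \vell_t} = \inp{\vx'_t - \phi'(\vx'_t), \vell'_t},
\]
where the $-\mA^{-1}\vb$ translations appearing on both sides of the difference cancel exactly. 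Summing over $t$ and taking the max over $\phi \in \Phi(\cP)$ (equivalently, over $\phi' \in \Phi(\cP')$), the linear swap regret of $\mathcal{R}_{\cP}$ against $(\vx, \vell)$ equals the linear swap regret of the simulated $\mathcal{R}_{\cP'}$-play against $(\vx', \vell')$.

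To extract the factor $2R$, we bound $\norm{\mA^{-1}}_2$. Since $\cP'$ is in isotropic position we have $\cB_d(\vzero, 1) \subseteq \cP'$, so $\psi^{-1}(\cB_d(\vzero, 1)) \subseteq \cP \subseteq \cB_d(\vzero, R)$. The set $\psi^{-1}(\cB_d(\vzero, 1))$ is an ellipsoid centered at $\psi^{-1}(\vzero) = -\mA^{-1}\vb$ with semiaxes equal to the singular values of $\mA^{-1}$. Since both the center $-\mA^{-1}\vb$ and an arbitrary point $-\mA^{-1}\vb + \mA^{-1}\vu$ (with $\norm{\vu}\leq 1$) lie in $\cB_d(\vzero, R)$, the triangle inequality gives $\norm{\mA^{-1}\vu} \leq 2R$, hence $\norm{\mA^{-1}}_2 \leq 2R$. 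This implies $\norm{\vell'_t} \leq 2R\,\norm{\vell_t}$; after supplying the appropriately rescaled $\vell'_t$ to $\mathcal{R}_{\cP'}$, the latter's regret guarantee translates into $\text{Reg}_{\cP} \leq 2R \cdot \text{Reg}_{\cP'}$.

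The main obstacle we anticipate is the careful accounting of the affine shift $\vb$: one must verify that the translation parts of $\psi^{-1}$ cancel exactly inside the regret expression so that no spurious per-round constant leaks into the bound, and must align the loss-norm conventions used by the two regret minimizers so that the multiplicative factor is exactly $2R$ (coming from the spectral norm of $\mA^{-1}$) rather than some larger norm-conversion quantity. Once these two bookkeeping points are handled, the proof is a direct composition of the affine-invariance identity above with the $2R$ operator-norm bound.
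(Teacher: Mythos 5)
Your proposal is correct and follows essentially the same route as the paper's proof: strategy pullback via $\psi^{-1}$, loss pushforward via $\mA^{-\top}$ (rescaled by $\tfrac{1}{2R}$), conjugation bijection $\Phi(\cP)\leftrightarrow\Phi(\cP')$, and the operator norm bound $\norm{\mA^{-1}}_2\leq 2R$ obtained from $\cB_d(\vzero,1)\subseteq\cP'$ and $\cP\subseteq\cB_d(\vzero,R)$. The only cosmetic difference is that you state the norm bound as $\norm{\mA^{-1}}_2\leq 2R$ directly (the paper has what is clearly a typo, writing $\norm{\mA}_2$ where it means $\norm{\mA^{-1}}_2$), and you make the ellipsoid/triangle-inequality justification a bit more explicit; substantively the argument is identical.
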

\begin{proof}
    Let $\psi(\vx) := \mA \vx + \vb$ and note that $\mA$ is invertible as $\psi$ is invertible. Additionally, for all $\vz \in \cB_d(\vzero, 1)$,
    \[
        \norm{\mA^{-1} \vz}_2 = \norm{\mA^{-1} (\vz - \vb + \vb)}_2 \leq 2 R
    \]
    and thus, $\norm{\mA}_2 \leq 2R$.
    
    Let $\vell_1, \dots, \vell_T$ be the sequence of losses given to $\mathcal{R}_{\cP}$. We can transform each one to a loss $\lt' = \frac{1}{2R} (\mA^{-1})^\top \lt$ for the $\mathcal{R}_{\cP'}$ regret minimizer, which in turn will return a sequence $\vy_1, \dots, \vy_T \in \cP'$ of strategies.
    
    Define $\vx_t = \psi^{-1}(\vy_t) \in \cP$ to be the inverse of strategy $\vy_t$. Consider any $\phi \in \Phi(\cP)$ and define $\phi'(\vy) = \psi(\phi(\psi^{-1}(\vy))) \in \Phi(\cP')$. Then, we have
    \[
        \sum_{t=1}^T \inp{\vx_t - \phi(\vx_t), \lt} &= 2R \sum_{t=1}^T \frac{1}{2R}\lt^\top \mA^{-1} [\mA (\vx_t - \phi(\vx_t)) + \vb - \vb]\\
            &= 2R \sum_{t=1}^T \frac{1}{2R} \lt^\top \mA^{-1} (\psi(\vx_t) - \psi(\phi(\vx_t)))\\
            &= 2R \sum_{t=1}^T \inp{\vy_t - \phi'(\vy_t), \lt'}\\
            &\leq 2R \cdot \text{Reg}_{\cP'}.
    \]
    This concludes the proof that if $\mathcal{R}_{\cP'}$ is a linear-swap regret minimizer, then $\mathcal{R}_{\cP}$ will be a linear-swap regret minimizer too.
\end{proof}

\section{Properties of Sets of Linear Transformations}\label{app:lintrans}

In this section, we establish some basic properties of sets $\lintrans{\cA}{\cB}$ of linear transformations, ultimately establishing Lemma \ref{lem:lintrans-nice} (that the set $\Phi(\cP)$ is nicely behaved if $\cP$ is).

\begin{lemma}\label{lem:subset-lintrans}
    Let $\cA, \cB, \cal{C}, \cal{D}$ be compact convex sets. If $\cA \supseteq \cal{C}$ and $\cB \subseteq \cal{D}$, then $\lintrans{\cA}{\cB} \subseteq \lintrans{\cal{C}}{\cal{D}}$.
\end{lemma}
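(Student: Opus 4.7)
The statement is a simple monotonicity property of the set of affine linear transformations between two convex sets with respect to inclusion of the domain and codomain. The plan is to unfold the definition of $\lintrans{\cdot}{\cdot}$ and directly check that any transformation that maps the larger set $\cA$ into the smaller set $\cB$ must, when restricted, map the smaller set $\cal{C}$ into the larger set $\cal{D}$.

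Concretely, I would take an arbitrary $\phi \in \lintrans{\cA}{\cB}$, which by definition means $\phi$ is an affine linear map from $\bbR^d$ to $\bbR^d$ satisfying $\phi(\vx) \in \cB$ for every $\vx \in \cA$. To show $\phi \in \lintrans{\cal{C}}{\cal{D}}$, I need to check that $\phi(\vx) \in \cal{D}$ for every $\vx \in \cal{C}$. Fix any such $\vx \in \cal{C}$. Since $\cal{C} \subseteq \cA$ by hypothesis, we have $\vx \in \cA$, hence $\phi(\vx) \in \cB$. Then, since $\cB \subseteq \cal{D}$, we conclude $\phi(\vx) \in \cal{D}$, as desired.

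There is no real obstacle here: the argument is a one-line chase through the two inclusion hypotheses, and the affine linearity of $\phi$ is preserved automatically since we do not alter the underlying map but only restrict the domain of interest and enlarge the codomain of interest. The only thing to be careful about is the direction of the two inclusions (domain shrinks, codomain grows), which is exactly what makes the containment of transformation sets go through.
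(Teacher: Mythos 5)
Your proof is correct and matches the paper's argument exactly: both unfold the definition of $\lintrans{\cA}{\cB}$ and chase the two inclusions $\cal{C} \subseteq \cA$ and $\cB \subseteq \cal{D}$. The paper merely writes the affine map explicitly as $(\mM, \vx_0)$, which is a cosmetic difference.
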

\begin{proof}
    For any $(\mM, \vx_0) \in \lintrans{\cA}{\cB}$, it holds that $\forall \vx \in \cA : \mM \vx + \vx_0 \in \cB$. Since $\cA \supseteq \cal{C}$ and $\cB \subseteq \cal{D}$, it follows that $\forall \vx \in \cC : \mM \vx + \vx_0 \in \cB \subseteq \cD$. Thus, $(\mM, \vx_0) \in \lintrans{\cC}{\cD}$ and $\lintrans{\cA}{\cB} \subseteq \lintrans{\cC}{\cD}$.
\end{proof}

\begin{lemma}\label{lem:inscribed-lintrans}
    Let $\cP \subset \bbR^d$ be a compact convex set of strategies that is both inscribed and circumscribed, $\cB_d(\vec a, r) \subseteq \cP \subseteq \cB_d(\vzero, R)$ for some $\vec a \in \bbR^d$ and $R \geq 1$. Then, the set of its endomorphic affine transformations $\Phi(\cP)$ is inscribed with a ball $\cB_{d \times (d+1)}(\phi_{\vec a}, \frac{r}{2R}) \subseteq \Phi(\cP)$, where $\phi_{\vec a} = (\vzero, \vec a)$ is the constant transformation $\vx \mapsto \vec a$.
\end{lemma}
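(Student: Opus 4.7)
The plan is to unpack what membership in the Frobenius ball $\cB_{d\times(d+1)}(\phi_{\vec a}, r/2R)$ means and then check endomorphism directly. Identifying an affine map $\phi$ with a pair $(\mM,\vb)\in\bbR^{d\times d}\times\bbR^{d}$, the condition $\|\phi-\phi_{\vec a}\|_F\le r/(2R)$ translates to
\[
\|\mM\|_F^{2}+\|\vb-\vec a\|_{2}^{2}\;\le\;\left(\tfrac{r}{2R}\right)^{2},
\]
which in particular gives the two separate bounds $\|\mM\|_F\le r/(2R)$ and $\|\vb-\vec a\|_2\le r/(2R)$.

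Next, I would fix an arbitrary $\vx\in\cP$ and show $\phi(\vx)\in\cP$ by verifying that $\phi(\vx)$ lies within distance $r$ of $\vec a$, since then $\phi(\vx)\in\cB_d(\vec a,r)\subseteq\cP$ by assumption. Writing $\phi(\vx)-\vec a = \mM\vx + (\vb-\vec a)$ and applying the triangle inequality together with $\|\mM\vx\|_2\le\|\mM\|_2\|\vx\|_2\le\|\mM\|_F\|\vx\|_2$, the circumscribing bound $\|\vx\|_2\le R$, and the hypothesis $R\ge 1$, I get
\[
\|\phi(\vx)-\vec a\|_2 \;\le\; \|\mM\|_F\cdot R + \|\vb-\vec a\|_2 \;\le\; \tfrac{r}{2R}\cdot R + \tfrac{r}{2R} \;\le\; \tfrac{r}{2}+\tfrac{r}{2} \;=\; r.
\]
Hence $\phi(\vx)\in\cP$, proving $\phi\in\Phi(\cP)$.

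There is essentially no obstacle here beyond the careful use of $\|\mM\|_2\le\|\mM\|_F$ (so that multiplication by $\mM$ can be controlled by the Frobenius norm used to measure balls in the space of affine maps) and the condition $R\ge 1$ needed to absorb $\|\vb-\vec a\|_2\le r/(2R)$ into $r/2$. Both are given or immediate, so the proof will be a couple of lines.
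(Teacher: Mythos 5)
Your proof is correct and follows essentially the same route as the paper: it unpacks the Frobenius-ball condition to bound $\|\mM\|_F$ and $\|\vb-\vec a\|_2$ by $r/(2R)$, and then uses the triangle inequality with $\|\mM\vx\|_2\le\|\mM\|_F\|\vx\|_2\le\|\mM\|_F\,R$ plus $R\ge 1$ to conclude $\phi(\vx)\in\cB_d(\vec a,r)\subseteq\cP$. The paper phrases the last step via its Lemma~\ref{lem:subset-lintrans} (showing the ball lies in $\lintrans{\cP}{\cB_d(\vec a,r)}$), but this is a cosmetic difference only.
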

\begin{proof}
    It suffices to prove that $\cB_{d \times (d+1)}(\phi_{\vec a}, \frac{r}{2R}) \subseteq \lintrans{\cP}{\cB_d(\vec a, r)}$ and the result follows from Lemma~\ref{lem:subset-lintrans}.

    Consider any $(\mA, \vx_0) \in \cB_{d \times (d+1)}(\phi_{\vec a}, \frac{r}{2R})$, which implies that $\norm{\mA}_F^2 + \norm{\vx_0 - \vec a}_2^2 \leq \left( \frac{r}{2R} \right)^2$. Then, for any $\vx \in \cP$, we have
    \[
        \norm{\mA \vx + \vx_0 - \vec a}_2 \leq \norm{\mA \vx}_2 + \norm{\vx_0 - \vec a}_2 \leq \norm{\mA}_F \norm{\vx}_2 + \frac{r}{2R} \leq \frac{r}{2R} R + \frac{r}{2R} \leq r.
    \]
    We conclude that $\cB_{d \times (d+1)}(\phi_{\vec a}, \frac{r}{R}) \subseteq \lintrans{\cP}{\cB_d(\vec a, r)}$.
\end{proof}

\begin{lemma}\label{lem:lintrans-sphere-spectral}
    Let $\cB_d(\va, r) \subseteq \cB_d(\vzero, R)$. It holds that
    \[
        \lintrans{\cB_d(\va, r)}{\cB_d(\vzero, R)} \subseteq \set{(\mA, \vx_0) \in \bbR^{d \times d} \times \bbR^d : \norm{A}_2 \leq 2 R / r, \norm{\vx_0}_2 \leq R + 2R^2/r},
    \]
    where $\norm{A}_2$ denotes the spectral norm of $\mA$. Additionally, this implies that
    \[
        \lintrans{\cB_d(\va, r)}{\cB_d(\vzero, R)} \subseteq \cB_{d \times (d+1)}\left( \vzero, \frac{3R}{r} \sqrt{R^2 + d} \right).
    \]
\end{lemma}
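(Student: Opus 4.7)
The plan is to unfold the definition of $\lintrans{\cB_d(\va,r)}{\cB_d(\vzero,R)}$ and extract the spectral/norm bounds from evaluating the affine transformation at carefully chosen points of the inner ball. The two bullet points are essentially independent steps, and the Frobenius bound then follows from combining them.

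First, for the spectral bound on $\mA$, I would pick an arbitrary unit vector $\vu \in \bbR^d$ and evaluate the constraint $\|\mA \vx + \vx_0\|_2 \leq R$ at the antipodal points $\vx = \va + r\vu$ and $\vx = \va - r\vu$, both of which belong to $\cB_d(\va,r)$. Subtracting these two vectors annihilates $\vx_0$ and the affine shift by $\mA\va$, leaving $2r\,\mA\vu$. The triangle inequality then yields $2r\|\mA\vu\|_2 \leq 2R$, hence $\|\mA\vu\|_2 \leq R/r \leq 2R/r$ for every unit $\vu$, establishing $\|\mA\|_2 \leq 2R/r$. For the bound on $\vx_0$, evaluate the constraint at $\vx = \va$: this gives $\|\mA\va + \vx_0\|_2 \leq R$, so by the triangle inequality and the just-obtained spectral bound, $\|\vx_0\|_2 \leq R + \|\mA\|_2\|\va\|_2 \leq R + (2R/r)\cdot R = R + 2R^2/r$ (using $\|\va\|_2 \leq R$ from $\cB_d(\va,r) \subseteq \cB_d(\vzero,R)$).

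To get the Frobenius bound in the second half of the statement, I would invoke the standard inequality $\|\mA\|_F \leq \sqrt{d}\,\|\mA\|_2$ together with the two bounds just obtained. Note the inclusion $\cB_d(\va,r)\subseteq\cB_d(\vzero,R)$ forces $r\leq R$, so $R/r\geq 1$ and in particular $R \leq R^2/r$, which gives $R + 2R^2/r \leq 3R^2/r$. Then
\[
    \nm{(\mA,\vx_0)}_F^2 = \nm{\mA}_F^2 + \nm{\vx_0}_2^2 \leq \frac{4dR^2}{r^2} + \frac{9R^4}{r^2} = \frac{R^2}{r^2}\bigl(4d+9R^2\bigr) \leq \frac{9R^2}{r^2}(R^2+d),
\]
and taking square roots gives the claimed bound $(3R/r)\sqrt{R^2+d}$.

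There is no real obstacle here; the only subtlety is noticing that one must plug in points of the form $\va \pm r\vu$ (rather than arbitrary elements of the ball) so that the affine shift $\vx_0 + \mA\va$ cancels and one extracts a clean bound on $\mA$ alone, and afterwards using $r\leq R$ to simplify the constants for the final Frobenius bound.
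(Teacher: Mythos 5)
Your proof is correct and follows essentially the same strategy as the paper: unfold the definition of $\lintrans{\cB_d(\va,r)}{\cB_d(\vzero,R)}$, evaluate the constraint $\norm{\mA\vx+\vx_0}_2\leq R$ at judicious points of the inner ball to bound $\norm{\mA}_2$ and $\norm{\vx_0}_2$, and combine via $\norm{\mA}_F\leq\sqrt d\,\norm{\mA}_2$ together with $r\leq R$ for the Frobenius bound. Your antipodal-points trick even yields the slightly sharper intermediate bound $\norm{\mA}_2\leq R/r$ (the paper obtains $2R/r$ by comparing $\phi(\vx)$ to $\phi(\va)$), though both suffice for the stated conclusion.
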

\begin{proof}
    Consider any $(\mA, \vx_0) \in \lintrans{\cB_d(\va, r)}{\cB_d(\vzero, R)}$ and note that, by definition, it holds $\norm{\mA \vx + \vx_0}_2 \leq R$ for all $\vx \in \cB(\va, r)$. Thus, for any $\vx \in \bbR^d$ with $\norm{\vx - \va}_2 \leq r$, we have $\norm{\mA (\vx - \va)}_2 = \norm{\mA (\vx - \va) + \vx_0 - \vx_0} \leq \norm{\mA \vx + \vx_0}_2 + \norm{\mA \va + \vx_0}_2 \leq 2 R$. By setting $\vy = (\vx - \va) / r$, we can equivalently write the previous statement as
    \[
            &\norm{\mA (r \vy)}_2 \leq 2 R \quad \forall \vy \in \cB(\vzero, 1)\\
        \implies &\norm{\mA \vy}_2 \leq \frac{2 R}{r} \quad \forall \vy \in \cB(\vzero, 1)\\
        \implies &\norm{\mA}_2 \leq \frac{2 R}{r},
    \]
    which concludes the proof that all $\mA \in \lintrans{\cB_d(\va, r)}{\cB_d(\vzero, R)}$ have $\norm{\mA}_2 \leq \frac{2 R}{r}$. 

    To prove the bound for $\vx_0$, note that $\norm{\mA \vx}_2 \leq 2R^2 / r$ for any $\vx \in \cB_d(\va, r)$ and thus, if it was $\norm{\vx_0}_2 > R + 2R^2/r$, we would get $\norm{\mA \vx + \vx_0}_2 > R$.

    Finally, we have that $\norm{\mA}_F \leq \sqrt{d} \norm{\mA}_2 \leq \sqrt{d} 2R/r$ and $\norm{(\mA, \vx_0)}_F^2 = \norm{\mA}_F^2 + \norm{\vx_0}_2^2 \leq d \frac{4R^2}{r^2} + \frac{9R^4}{r^2}$. Thus, $\norm{(\mA, \vx_0)}_F \leq \frac{3R}{r} \sqrt{R^2 + d}$.
\end{proof}

\lintransnice*
\begin{proof}
    The proof is immediate by Lemma~\ref{lem:inscribed-lintrans} and Lemma~\ref{lem:lintrans-sphere-spectral}.
\end{proof}

Finally, we prove some Lemmas that are useful when dealing with weak oracles and are based on the definition of the weak set of linear endomorphisms (Definition~\ref{defn:weak-linear-endomorphisms}).

\begin{lemma}\label{lem:eps-inscribed-lintrans}
    Let $\cP \subset \bbR^d$ be a compact convex set of strategies that is both inscribed and circumscribed, $\cB_d(\vec a, r) \subseteq \cP \subseteq \cB_d(\vzero, R)$ for some $\vec a \in \bbR^d$. Then, $\cB_{d \times (d+1)}(\phi_{\vec a}, \frac{r}{4R}) \subseteq \Phi(\cP, \epsilon)$ for all $\epsilon \in (0, \frac{r}{4R})$, where $\phi_{\vec a} = (\vzero, \vec a)$ is the constant transformation $\vx \mapsto \vec a$.
\end{lemma}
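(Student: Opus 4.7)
The plan is to mirror the proof of the strong version (Lemma~\ref{lem:inscribed-lintrans}), using the reduced radius $r/(4R)$ instead of $r/(2R)$ to absorb the $\epsilon$-slack demanded by the weak definition of endomorphism.

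First I would unpack the hypothesis on $(\mA,\vx_0)$: membership in the Frobenius ball $\cB_{d\times(d+1)}(\phi_{\va}, r/(4R))$ around $\phi_{\va} = (\vzero,\va)$ gives $\norm{\mA}_F \le r/(4R)$ and $\norm{\vx_0 - \va}_2 \le r/(4R)$. Next I would bound the displacement of the image of an arbitrary input point from $\va$: for any $\vx$ in the (possibly slightly enlarged) domain relevant to the weak-endomorphism definition (\emph{i.e.}, with $\norm{\vx}_2 \le R + \epsilon$), the triangle inequality and $\norm{\mA \vx}_2 \le \norm{\mA}_F \norm{\vx}_2$ give
\[
    \norm{\mA\vx + \vx_0 - \va}_2 \;\le\; \tfrac{r}{4R}(R+\epsilon) + \tfrac{r}{4R} \;=\; \tfrac{r}{4} + \tfrac{r(\epsilon+1)}{4R}.
\]
Using $\epsilon < r/(4R) \le r/4 \le R/4$ (the last inequality coming from $r \le R$) and $R \ge 1$ (which follows from the existence of the inscribed ball combined with the circumscribing ball in the standard well-posed setup, or can be assumed WLOG after rescaling), this is comfortably bounded above by $r/2$.

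Hence $\mA\vx + \vx_0 \in \cB_d(\va, r/2)$, which by hypothesis is contained in $\cP$ with a margin of at least $r/2$ to spare. Because $r/2 > \epsilon$, the image sits inside the $\epsilon$-inner parallel body of $\cP$ (or whichever $\epsilon$-shrunken/expanded variant appears in Definition~\ref{defn:weak-linear-endomorphisms}), so $(\mA,\vx_0)$ satisfies the weak endomorphism condition at tolerance $\epsilon$. The one delicate bookkeeping step -- and the only real obstacle -- is to match the exact quantifiers of $\Phi(\cP,\epsilon)$: whether the inputs range over $\cP$, over an $\epsilon$-expansion, or over an $\epsilon$-shrinkage, and whether the outputs must land in $\cP$ or in its $\epsilon$-expansion. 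In each of these standard variants the margin $r/2$ we established dominates $\epsilon$ by assumption $\epsilon < r/(4R) \le r/2$, so the conclusion goes through uniformly, and the lemma follows by applying the analog of Lemma~\ref{lem:subset-lintrans} to conclude $\cB_{d \times (d+1)}(\phi_{\va}, r/(4R)) \subseteq \Phi(\cP,\epsilon)$.
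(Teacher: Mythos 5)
Your route is essentially the paper's. The paper applies Lemma~\ref{lem:inscribed-lintrans} as a black box with $r' = r-\epsilon$, $R' = R+\epsilon$ and then verifies $r'/(2R') > r/(4R)$; you unroll that same triangle-inequality computation directly at radius $r/(4R)$. The underlying calculation is identical.

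However, the milestone ``comfortably bounded above by $r/2$'' does not hold at the edge of the regime you invoke. Your bound is $\tfrac{r}{4} + \tfrac{r(\epsilon+1)}{4R}$, and this is $\le r/2$ precisely when $\epsilon + 1 \le R$, which fails for $R = 1$ and any $\epsilon > 0$. Fortunately $r/2$ is not the target you actually need. Since $\Phi(\cP,\epsilon) = \lintrans{\cP^{+\epsilon}}{\cP^{-\epsilon}}$, it suffices that the image lands in $\cB_d(\va, r-\epsilon) \subseteq \cP^{-\epsilon}$, i.e., that your bound is at most $r-\epsilon$. Clearing denominators, that is $\epsilon \le r(3R-1)/(r+4R)$, which does follow from $\epsilon < r/(4R)$ together with $r \le R$ and $R \ge 1$. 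So the argument closes once the spurious $r/2$ target is replaced by $r-\epsilon$. One further small point: the hypothesis $R \ge 1$ is indeed needed (it is stated in Lemma~\ref{lem:inscribed-lintrans} and silently inherited here), but your ``WLOG by rescaling'' justification for it is not right, since rescaling $\cP \mapsto c\cP$ sends $(\mA,\vx_0) \mapsto (\mA, c\vx_0)$, so the Frobenius ball around $\phi_{\va}$ does not transform homogeneously. The assumption is instead warranted because the paper only ever uses these lemmas for sets in isotropic position, where the inscribed radius is at least $1$.
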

\begin{proof}
    Let $r' = r - \epsilon$ and $R' = R + \epsilon$ such that $\cB_d(\va, r') \subseteq \cP^{-\epsilon}$ and $\cP^{+\epsilon} \subseteq \cB_d(\vzero, R')$. Then, following the same steps as in Lemma~\ref{lem:inscribed-lintrans}, we get $\cB_{d \times (d+1)}(\phi_{\va}, \frac{r'}{2 R'}) \subseteq \lintrans{\cP^{+\epsilon}}{\cB_d(\va, r')} \subseteq \Phi(\cP, \epsilon)$.
    
    Note also that $r' / 2R' > r / 4R \iff \epsilon < R r / (4R + 2r)$, which is true. Thus, $\cB_d(\phi_{\va}, \frac{r}{4R}) \subseteq \Phi(\cP, \epsilon)$.
\end{proof}

\begin{lemma}[Approximate Identity]\label{lem:approx-identity}
    Let $\cP \subset \bbR^d$ be a compact convex set that is well-bounded as $\cB_d(\va, r) \subseteq \cP \subseteq \cB_d(\vzero, R)$. Then, for any $\epsilon \in (0, \frac{r}{4R})$, there exists a transformation $\phi \in \Phi(\cP, \epsilon)$ such that $\norm{\vx - \phi(\vx)}_2 \leq \frac{5 R}{r} \epsilon$ for all $\vx \in \cP^{+\epsilon}$.
\end{lemma}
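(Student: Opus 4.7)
The plan is to construct $\phi$ explicitly as a convex combination of the identity map and the constant map $\vx \mapsto \va$. Concretely, I would set
\[
    \phi(\vx) \;:=\; (1-\lambda)\vx + \lambda\va, \qquad \lambda := \frac{2\epsilon}{r}.
\]
Since $\cB_d(\va,r) \subseteq \cB_d(\vzero,R)$ forces $\norm{\va}_2 + r \leq R$, hence $r \leq R$, the hypothesis $\epsilon < r/(4R)$ gives $\lambda < 1/2$, so $\phi$ is a genuine convex combination and is manifestly an affine transformation close to the identity.

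Next, I would verify that $\phi \in \Phi(\cP,\epsilon)$, i.e.\ that $\phi(\cP^{+\epsilon}) \subseteq \cP^{-\epsilon}$. Pick any $\vx \in \cP^{+\epsilon}$ and decompose $\vx = \vx' + \vec e$ with $\vx' \in \cP$ and $\norm{\vec e}_2 \leq \epsilon$. For any perturbation $\vu$ with $\norm{\vu}_2 \leq \epsilon$ one can rearrange
\[
    \phi(\vx) + \vu \;=\; (1-\lambda)\vx' \;+\; \lambda\!\left(\va + \frac{(1-\lambda)\vec e + \vu}{\lambda}\right).
\]
The bracketed point lies in $\cB_d(\va, r)$ because $\norm{(1-\lambda)\vec e + \vu}_2 \leq 2\epsilon = \lambda r$, and hence in $\cP$. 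Convexity of $\cP$ then shows $\phi(\vx) + \vu \in \cP$, so $\cB_d(\phi(\vx),\epsilon) \subseteq \cP$ and $\phi(\vx) \in \cP^{-\epsilon}$.

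For the distance bound, note that $\vx - \phi(\vx) = \lambda(\vx - \va)$ and $\norm{\vx - \va}_2 \leq \norm{\vx}_2 + \norm{\va}_2 \leq (R+\epsilon) + R = 2R + \epsilon$, so
\[
    \norm{\vx - \phi(\vx)}_2 \;\leq\; \lambda(2R+\epsilon) \;=\; \frac{2\epsilon(2R+\epsilon)}{r}.
\]
Bounding this by $\tfrac{5R\epsilon}{r}$ reduces to the elementary inequality $2(2R+\epsilon) \leq 5R$, i.e.\ $2\epsilon \leq R$; this follows from $\epsilon < r/(4R) \leq 1/4$ combined with $r \leq R$ (so $\epsilon \leq R/2$ in the regime of interest, with constants absorbed otherwise).

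The main subtlety is choosing $\lambda$ at exactly the right scale: it must be at least $2\epsilon/r$ so that the convex-combination argument in the membership step goes through, yet small enough that $\norm{\vx - \phi(\vx)}_2$ scales linearly in $\epsilon$ with the stated constant $5R/r$. The choice $\lambda = 2\epsilon/r$ is tight in this trade-off, and once it is fixed the remaining verifications are straightforward algebra using $r \leq R$ and $\epsilon < r/(4R)$.
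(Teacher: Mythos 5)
Your proof is correct and shares the same underlying idea as the paper (contract toward the center $\va$ of the inner ball), but the execution is genuinely different. The paper defines a single step $\phi_1(\vx)=\frac{\epsilon}{r}\va+(1-\frac{\epsilon}{r})\vx$, proves separately that $\phi_1$ maps $\cP\to\cP^{-\epsilon}$ and $\cP^{+\epsilon}\to\cP$ with distance bounds $\frac{2R}{r}\epsilon$ and $\frac{3R}{r}\epsilon$, and then takes $\phi=\phi_1\circ\phi_1$, summing the two bounds to get $\frac{5R}{r}\epsilon$. You instead take a single contraction with the larger parameter $\lambda=\frac{2\epsilon}{r}$ and argue the membership $\phi(\cP^{+\epsilon})\subseteq\cP^{-\epsilon}$ in one pass, folding the $\vx\in\cP^{+\epsilon}$ perturbation $\vec e$ and the probe perturbation $\vu$ into a single point rescaled into $\cB_d(\va,r)$. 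The two constructions are nearly the same map (your $\lambda=\frac{2\epsilon}{r}$ versus the paper's effective $1-(1-\epsilon/r)^2=\frac{2\epsilon}{r}-\frac{\epsilon^2}{r^2}$), and the same constant $\frac{5R}{r}$ comes out. Your one-shot argument is arguably more direct, whereas the paper's compositional version is more modular and gives the intermediate $\cP\to\cP^{-\epsilon}$ bound for free.

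Two small things to tighten. First, in the last step you reduce the bound $\frac{2\epsilon(2R+\epsilon)}{r}\le\frac{5R\epsilon}{r}$ to $2\epsilon\le R$ and wave at ``constants absorbed otherwise''; it is cleaner to just observe that $\epsilon<\frac{r}{4R}\le\frac{1}{4}$ (using $r\le R$), so $2\epsilon<\frac{1}{2}\le R$ whenever $R\ge\frac{1}{2}$. That hypothesis (some lower bound on $R$) is also implicitly used in the paper's proof (its inequality $2R+\epsilon\le3R$ needs $\epsilon\le R$), and is harmless in all applications here since $\cP$ is in isotropic position with $R\ge2$; but you should state it rather than gesture at it. Second, your claim ``$\lambda<1/2$'' likewise needs $R\ge1$: from $\epsilon<\frac{r}{4R}$ you get $\lambda=\frac{2\epsilon}{r}<\frac{1}{2R}$, which is $<\frac12$ only when $R>1$. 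Fortunately, for the proof to go through you only need $\lambda\le1$, which again holds as long as $R\ge\frac{1}{2}$; so just replace ``$\lambda<1/2$'' with ``$\lambda\le1$''.
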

\begin{proof}
    Define $\phi_1(\vx) := \frac{\epsilon}{r} \va + (1 - \frac{\epsilon}{r}) \vx$. First, we prove that, for any $\vx \in \cP$, it is $\cB_d(\phi_1(\vx), \epsilon) \subseteq \cP$ and hence, $\phi_1(\vx) \in \cP^{-\epsilon}$. To prove this, consider any $\vz \in \cB_d(\phi(\vx), \epsilon)$ and let $\va' = \frac{r}{\epsilon}\left( \vz - (1 - \frac{\epsilon}{r}) \vx \right)$. We have
    \[
        \norm{\va' - \va}_2 &= \frac{r}{\epsilon} \norm{\vz - \left( 1 - \frac{\epsilon}{r} \right) \vx - \frac{\epsilon}{r} \va}_2\\
            &\leq \frac{r}{\epsilon} \norm{\vz - \phi_1(\vx)}_2 + \frac{r}{\epsilon} \norm{\phi_1(\vx) - \left( 1 - \frac{\epsilon}{r} \right) \vx - \frac{\epsilon}{r} \va}_2\\
            &= \frac{r}{\epsilon} \norm{\vz - \phi_1(\vx)}_2\\
            &\leq r
    \]
    or, equivalently, $\va' \in \cB_d(\va, r) \subseteq \cP$. But $\vz$ can be written as a convex combination of $\va', \vx \in \cP$ as $\vz = \frac{\epsilon}{r} \va' + (1 - \frac{\epsilon}{r}) \vx$ and thus, by convexity of $\cP$, $\vz \in \cP$.

    Next, we have that for $\vx \in \cP$,
    \[
        \norm{\phi_1(\vx) - \vx}_2 &= \norm{\frac{\epsilon}{r} \va + \left( 1 - \frac{\epsilon}{r} \right) \vx - \vx}_2\\
        &= \frac{\epsilon}{r} \norm{\va - \vx}_2\\
        &\leq \frac{2R}{r} \epsilon.
    \]
    Thus, $\phi_1$ is an affine transformation that maps any point $\vx' \in \cP$ to a point $\phi_1(\vx') \in \cP^{-\epsilon}$ such that $\norm{\vx' - \phi_1(\vx')}_2 \leq \frac{2R}{r} \epsilon$. By the exact same argumentation, it follows that for any $\vx \in \cP^{+\epsilon}$ it is $\phi_1(\vx) \in \cP$ and $\norm{\vx - \phi_1(\vx)}_2 \leq \frac{2R+\epsilon}{r} \epsilon \leq \frac{3R}{r} \epsilon$.

    Now set $\phi(\vx) := \phi_1(\phi_1(\vx))$. Then, $\phi \in \Phi(\cP, \epsilon)$ because for all $\vx \in \cP^{+\epsilon}$ it is $\phi(\vx) \in \cP^{-\epsilon}$ and additionally,
    \[
        \norm{\vx - \phi(\vx)}_2 &\leq \norm{\vx - \phi_1(\vx)}_2 + \norm{\phi_1(\vx) - \phi(\vx)}_2\\
            &\leq \frac{5R}{r} \epsilon,
    \]
    which concludes the proof.
\end{proof}

\section{Weak Oracles for Convex Optimization}

\subsection{Definitions and Basic Results}

In this paper we work with general convex sets. Since these sets might lead to irrational solutions and other numerical peculiarities, the usual way to deal with them is to assume some form of weak access to the sets and to also weaken the optimization objectives \citep[Chapter 2]{Grotschel1993:Geometric}. In this subsection, we define the concepts we will need.

\begin{mdframed}[nobreak=true]
    \textbf{The Weak Membership Problem \citep[Definition 2.1.14]{Grotschel1993:Geometric}}

    Given a vector $\vy \in \bbQ^n$ and a rational number $\delta > 0$, either
    \vspace{-3mm}
    \begin{itemize}
        \item[(i)] assert that $\vy \in \cK^{+\delta}$, or
        
        \item[(ii)] assert that $\vy \notin \cK^{-\delta}$.
    \end{itemize}
\end{mdframed}

\begin{mdframed}[nobreak=true]
    \textbf{The Weak Separation Problem \citep[Definition 2.1.13]{Grotschel1993:Geometric}}

    Given a vector $\vy \in \bbQ^n$ and a rational number $\delta > 0$, either
    \vspace{-3mm}
    \begin{itemize}
        \item[(i)] assert that $\vy \in \cK^{+\delta}$, or
        
        \item[(ii)] find a vector $\vc \in \bbQ^n$ with $\norm{\vc}_\infty = 1$ such that $\vc^\top \vx \leq \vc^\top \vy + \delta$ for every $\vx \in \cK^{-\delta}$.
    \end{itemize}
\end{mdframed}

\begin{mdframed}[nobreak=true]
    \textbf{The Weak Optimization Problem \citep[Definition 2.1.10]{Grotschel1993:Geometric}}

    Given a vector $\vc \in \bbQ^n$ and a rational number $\epsilon > 0$, either
    \vspace{-3mm}
    \begin{itemize}
        \item[(i)] find a vector $\vy \in \bbQ^n$ such that $\vy \in \cK^{+\epsilon}$ and $\vc^\top \vx \leq \vc^\top \vy + \epsilon$ for all $\vx \in \cK^{-\epsilon}$, or
        
        \item[(ii)] assert that $\cK^{-\epsilon}$ is empty.
    \end{itemize}
\end{mdframed}

Sometimes we refer to a weak separation oracle with parameter $\delta$ as a $\delta$-weak separation oracle, and similarly we refer to a solution to the weak optimization problem with parameter $\epsilon$ as an $\epsilon$-weak solution. An algorithm that runs in polynomial time in the size of the input and, additionally, performs a polynomial number of oracle calls is called an \textit{oracle-polynomial} time algorithm.

It is worth noting that all these three problems are known to be equivalent, under minimal extra assumptions. Most of our results in this paper are stated in terms of access to a weak separation oracle, but the next Theorem gives an additional condition that is sufficient to convert any membership oracle to a separation oracle.

\begin{lemma}
    There exists an oracle-polynomial time algorithm that solves the weak separation problem for every compact convex set $\cK \subset \bbR^d$ given by a weak membership oracle, provided that we know parameters $\va_0, r, R$ such that $\cB_d(\va_0, r) \subseteq \cK \subseteq \cB_d(\vzero, R)$.
\end{lemma}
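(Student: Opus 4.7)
The statement is a classical result of Gr\"otschel, Lov\'asz, and Schrijver; the plan is to build a weak separation oracle by running the ellipsoid method with the weak membership oracle standing in for a separation oracle, using the ball parameters $\va_0, r, R$ to control the numerical subtleties that arise.

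First, I would query the weak membership oracle on the input point $\vy$ with precision $\delta/2$. If the response is ``$\vy \in \cK^{+\delta/2}$'', we terminate and return alternative (i), since $\cK^{+\delta/2} \subseteq \cK^{+\delta}$. Otherwise we have the certificate $\vy \notin \cK^{-\delta/2}$, and must produce a direction $\vc$ with $\norm{\vc}_\infty = 1$ satisfying $\vc^\top \vx \leq \vc^\top \vy + \delta$ for every $\vx \in \cK^{-\delta}$. The plan is to obtain $\vc$ from an approximate Euclidean projection $\vz^*$ of $\vy$ onto $\cK$, computed by running a (shallow-cut) ellipsoid method initialized with the circumscribed ball $\cB_d(\vzero, R) \supseteq \cK$. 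At iteration $k$, the method queries the weak membership oracle at the current ellipsoid center $\vz_k$ with a small internal precision $\delta' \ll \delta$. If the oracle certifies $\vz_k \in \cK^{+\delta'}$, we cut with the hyperplane through $\vz_k$ normal to the subgradient $\vz_k - \vy$ of $\tfrac{1}{2}\norm{\vx - \vy}^2$; this is a valid cut since every $\vx \in \cK$ closer to $\vy$ than $\vz_k$ satisfies $(\vz_k - \vy)^\top (\vx - \vz_k) \leq 0$. After $\poly(d, \log(R/r), \log(1/\delta))$ iterations the ellipsoid shrinks enough that its center $\vz^*$ approximates the true projection, and we output $\vc := (\vy - \vz^*)/\norm{\vy - \vz^*}_\infty$, whose correctness follows from the standard properties of Euclidean projection after controlling all internal errors.

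The main obstacle is the case when the weak membership oracle, queried at $\vz_k$, returns ``$\vz_k \notin \cK^{-\delta'}$'' rather than a positive membership certificate, since then we do not directly obtain a hyperplane to cut the ellipsoid with. The standard remedy is to work with the \emph{shallow-cut} ellipsoid method, which only requires weaker cuts: exploiting the inscribed ball $\cB_d(\va_0, r) \subseteq \cK$, one can binary-search along the segment from $\vz_k$ to $\va_0$ for a point approximately on the boundary of $\cK$, and then turn that boundary point into an approximate shallow separating direction using membership queries in a small neighborhood. A secondary obstacle is precision bookkeeping: the internal tolerance $\delta'$ must be chosen polynomially smaller than $\delta$ (with polynomial dependence on $r$, $R$, and $d$) so that the errors accumulated across the shallow-cut iterations do not defeat the $\delta$-slack required by the weak separation problem. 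Both issues are treated in detail in the framework of~\cite{Grotschel1993:Geometric}, which contains the complete argument.
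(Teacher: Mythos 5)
Your proposal and the paper's proof ultimately rest on the same foundation: the paper disposes of this lemma with a one-line citation to Theorems~4.3.2, 4.2.5, and 4.4.7 of Gr\"otschel--Lov\'asz--Schrijver, and you too defer the hard part to ``the framework of GLS.'' What you add is an expository reconstruction of what sits behind those citations, which is a reasonable thing to do, and your high-level plan (initial membership check, then Euclidean-projection-based separating direction, computed by a membership-oracle-driven shallow-cut ellipsoid with careful precision bookkeeping) is sound in principle and lands on the right theorem (Yudin--Nemirovski, GLS~4.3.2). The route is genuinely a little different from GLS's actual chain, which goes WMEM~$\to$~WVIOL (via the shallow-cut ellipsoid) and then to WSEP via the polar body (GLS~4.4.7, using the equivalence in 4.2.5); you instead propose to directly minimize $\norm{\vx-\vy}^2$ over $\cK$ and normalize $\vy-\vz^*$. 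Either decomposition works with enough care; GLS's polar-body route is the one that is actually written out in full in the reference.

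One sentence in your sketch, however, misdescribes how the shallow cut is obtained and would not work as stated. You say that after binary-searching from $\vz_k$ toward $\va_0$ to locate an approximate boundary point, one ``turn[s] that boundary point into an approximate shallow separating direction using membership queries in a small neighborhood.'' Local membership probing near a boundary point cannot recover a supporting normal: near a vertex or a flat face, a polynomial number of nearby YES/NO answers does not determine a valid separating direction, and this is essentially the same difficulty we started with. The correct GLS mechanism does not probe locally at all. It is a cone argument: once you have a point $\vb$ that is (approximately) outside $\cK$ and the interior ball $\cB_d(\va_0, r) \subseteq \cK$, the convex hull $\conv\bigl(\cB_d(\va_0, r) \cup \{\vb\}\bigr)$ pins down a cone of radius proportional to $r/\norm{\vb - \va_0}$ along the ray from $\va_0$ through $\vb$, and the direction $\vb - \va_0$ \emph{itself} serves as the shallow-cut normal, with the cut placed at a depth chosen so that the shallow-cut ellipsoid update still makes volume progress. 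No further membership queries are needed to extract the direction. Since you, like the paper, ultimately cite GLS for the complete argument, this gap does not invalidate your lemma, but the mechanism you wrote down for the key step is not the right one.
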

\begin{proof}
    This Lemma directly follows from applying Theorems 4.3.2, 4.2.5, and 4.4.7 from \citet{Grotschel1993:Geometric}.
\end{proof}

Next, we provide two central theorems that allow us to weakly optimize over linear or convex functions, given access to a weak separation oracle. We remark that Theorem~\ref{thm:ellipsoid-convex-sep-to-opt} is implied by Theorems 4.4.9 and 4.3.13 of \citet{Grotschel1993:Geometric}.

\begin{theorem}[{\citet[Corollary 4.2.7]{Grotschel1993:Geometric}}]\label{thm:ellipsoid-sep-to-opt}
    There exists an oracle-polynomial time algorithm that solves the weak optimization problem for every circumscribed convex set $\cK \subseteq \cB_n(\vzero, R)$ given by a weak separation oracle.
\end{theorem}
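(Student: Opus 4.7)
The plan is to reduce the weak optimization problem to weak feasibility via bisection on the objective value, and to solve the feasibility problem using the ellipsoid method driven by the given weak separation oracle.

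\textbf{Step 1 (Weak feasibility via the ellipsoid method).} I would first build a subroutine that, given a convex set $\cC \subseteq \cB_n(\vzero, R')$ accessed through a weak separation oracle and a target precision $\delta > 0$, either returns a point $\vy \in \cC^{+\delta}$ or certifies that $\cC^{-\delta'}$ is empty for a suitable $\delta' = \poly(\delta)$. Initialize $\cE_0 := \cB_n(\vzero, R')$, which contains $\cC$ and hence $\cC^{-\delta'}$. At iteration $t$, query the weak separation oracle at the centroid $\vy_t$ of $\cE_t$: if it asserts $\vy_t \in \cC^{+\delta}$, return $\vy_t$; otherwise, it produces a direction $\vec{d}_t$ with $\vec{d}_t^\top \vx \leq \vec{d}_t^\top \vy_t + \delta$ valid for every $\vx \in \cC^{-\delta}$, and we set $\cE_{t+1}$ to the minimum-volume ellipsoid containing $\cE_t$ intersected with this halfspace. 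The invariant $\cC^{-\delta'} \subseteq \cE_t$ is maintained (with $\delta'$ chosen slightly smaller than $\delta$ to absorb the oracle slack), and the standard bound $\vol{\cE_{t+1}} \leq e^{-1/(2n+2)} \vol{\cE_t}$ pushes the volume below $V_n(\delta')$ after $O(n^2 \log(R'/\delta'))$ iterations, at which point $\cC^{-\delta'}$ cannot contain any ball of radius $\delta'$ and is therefore certified to be empty.

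\textbf{Step 2 (Optimization via bisection).} Given a linear objective $\vc$ and precision $\epsilon$, the optimum over $\cK \subseteq \cB_n(\vzero, R)$ lies in the interval $[-R\norm{\vc}_2, R\norm{\vc}_2]$. Choose an internal precision $\delta$ that is polynomially smaller than $\epsilon$ (as a function of $n$, $R$, and $\norm{\vc}$), and binary search over a threshold $t$: for each $t$, set $\cK_t := \cK \cap \{\vx : \vc^\top \vx \geq t\}$ and implement a weak separation oracle for $\cK_t$ by returning $-\vc/\norm{\vc}_\infty$ as a separating direction whenever $\vc^\top \vy < t - \delta$, and otherwise delegating to the given oracle for $\cK$. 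Applying Step 1 to $\cK_t$ either produces a witness $\vy \in \cK_t^{+\delta} \subseteq \cK^{+\delta}$ with $\vc^\top \vy \geq t - O(\delta)$ (pushing $t$ upward), or certifies $\cK_t^{-\delta'}$ is empty (pushing $t$ downward). After $O(\log(R\norm{\vc}/\epsilon))$ rounds the threshold interval shrinks below $\epsilon$, and the last successful witness is the required $\epsilon$-weak solution; if every call returns the emptiness branch, we instead conclude that $\cK^{-\epsilon}$ itself is empty, which is the alternative demanded by the weak optimization problem.

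\textbf{Main obstacle.} The weak separation oracle never reveals whether a query lies in $\cK$ exactly, only whether it lies in $\cK^{+\delta}$ or outside $\cK^{-\delta}$, and the auxiliary halfspace $\vc^\top \vx \geq t$ interacts nontrivially with these $\pm\delta$ layers. The bulk of the technical work is therefore the bookkeeping: choosing the internal precision $\delta$ as the correct polynomial function of $\epsilon$, $1/R$, $1/\norm{\vc}$, and $1/n$ so that a feasibility witness in $\cK_t^{+\delta}$ transfers into an $\epsilon$-weak optimum over $\cK^{-\epsilon}$, and so that the emptiness certificates stitch together into the correct dichotomy required by the definition of weak optimization. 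Everything else is standard ellipsoid-method analysis.
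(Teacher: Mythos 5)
This statement is cited from Gr\"otschel--Lov\'asz--Schrijver and proved there, not in this paper; your sketch follows the same standard route (reduce weak optimization to a sequence of weak feasibility/violation tests over a threshold, and solve each test by the central-cut ellipsoid method driven by the weak separation oracle), so it is fundamentally the right argument. One imprecision worth fixing in Step 1: the invariant you maintain is $\cC^{-\delta} \subseteq \cE_t$ (where $\delta$ is the oracle precision), so when $\vol{\cE_t}$ drops below $V_n(\delta'')$ the conclusion is that $\cC^{-\delta}$ has small volume, which certifies that $\cC^{-(\delta+\delta'')}$ is empty (a nonempty $\cC^{-(\delta+\delta'')}$ would force a $\delta''$-ball inside $\cC^{-\delta}$); it does not directly certify that $\cC^{-\delta'}$ is empty for the same $\delta'$ used in the cut. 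This is one of the precision cascades you flag under ``Main obstacle'' and is handled in GLS by keeping several nested tolerances distinct. Similarly, in Step 2 the failure of the test at threshold $t$ only gives $\vc^\top \vx < t + \delta'\norm{\vc}_2$ for $\vx \in \cK^{-\delta'}$ (a point of $\cK^{-\delta'}$ witnesses a $\delta'$-ball in $\cK$, and if that ball lay entirely past $t$ the test would have succeeded), so the translation from ``slice has no $\delta'$-ball'' to ``optimum over $\cK^{-\epsilon}$ is below $t + O(\delta'\norm{\vc})$'' needs to be stated explicitly; with that done and the tolerances chosen as polynomials of $\epsilon$, $1/R$, $1/\norm{\vc}$, your sketch closes.
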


\begin{theorem}[{\citet[variation of Theorem 4.3.13]{Grotschel1993:Geometric}}]\label{thm:ellipsoid-convex-sep-to-opt}
    There exists an oracle-polynomial time algorithm that solves the following problem:\\
    \textbf{Input:} A rational number $\epsilon > 0$, a compact convex set $\cK \subset \bbR^n$ given by a weak membership oracle, parameters $R, r$ such that $\cB_d(r) \subseteq \cK \subseteq \cB_d(\vzero, R)$, and a convex function $f : \bbR^n \to \bbR$ given by an oracle that, for every $\vx \in \bbQ^n$ and $\delta > 0$, returns a rational number $t$ such that $|f(\vx) - t| \leq \delta$.\\
    \textbf{Output:} A vector $\vy \in \cK^{+\epsilon}$ such that $f(\vy) \leq f(\vx) + \epsilon$ for all $\vx \in \cK^{-\epsilon}$.
\end{theorem}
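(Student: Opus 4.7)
The plan is to lift the problem one dimension and reduce the convex minimization of $f$ over $\cK$ to the \emph{linear} optimization of the coordinate $t$ over the epigraph
\[
\cK_f := \{\,(\vx, t) \in \bbR^{n+1} : \vx \in \cK,\ f(\vx) \leq t \leq F\,\},
\]
where $F$ is an a priori upper bound on $\sup_{\cB_d(\vzero, R)} f$ (and $-F$ a matching lower bound). Such an $F$ is computable from a few oracle evaluations together with the standard convex-analysis fact that a convex function of magnitude at most $F_0$ on a ball $\cB_d(\vzero,R)$ is $O(F_0/\rho)$-Lipschitz on any strictly interior concentric ball $\cB_d(\vzero, R-\rho)$, which propagates pointwise evaluations to a uniform bound. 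Once this reduction is in place, $\cK_f$ is compact and convex in $\bbR^{n+1}$, and I can feed it to Theorem~\ref{thm:ellipsoid-sep-to-opt} to solve the weak optimization problem with the linear objective $(\vx,t)\mapsto t$ in oracle-polynomial time, provided I supply a weak separation oracle for $\cK_f$ and a circumscribing ball.

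The main technical work is building that weak separation oracle. Given a query $(\vy,s)$ and tolerance $\delta>0$, I would first invoke the lemma at the start of this appendix that turns the given weak membership oracle for $\cK$ (together with the boundedness parameters $r,R$) into a weak separation oracle. If it reports $\vy \notin \cK^{-\delta'}$ for an internally chosen $\delta' \ll \delta$, the returned halfspace, extended trivially in the $t$-coordinate, separates $(\vy,s)$ from $\cK_f$ with slack $O(\delta)$. Otherwise $\vy$ is essentially in $\cK$, and I query the $f$-oracle to compute $\tilde t$ with $|\tilde t - f(\vy)| \leq \delta'$. If $\tilde t \leq s + O(\delta')$ I declare $(\vy,s) \in \cK_f^{+\delta}$. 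If instead $\tilde t > s + O(\delta')$, I approximate a subgradient of $f$ at $\vy$ by symmetric finite differences $g_i \approx \tfrac{1}{2h}(f(\vy+h\vec e_i)-f(\vy-h\vec e_i))$ along coordinate directions, with $h$ polynomially small; by convexity and the uniform Lipschitz bound above, the resulting $\vec g$ is close to some true subgradient $\vec g' \in \partial f(\vy)$. Then every $(\vz,u) \in \cK_f$ satisfies $u \geq f(\vz) \geq f(\vy) + \langle \vec g', \vz-\vy\rangle \geq \tilde t + \langle \vec g, \vz-\vy\rangle - O(\delta')$, while $(\vy,s)$ fails this inequality; this is the required separating halfspace.

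To finish, $\cK_f \subseteq \cB_d(\vzero,R)\times[-F,F]$, and it contains a small box around $(\va, f(\va)+1)$ for any $\va$ in the inscribed ball $\cB_d(r)\subseteq \cK$, so it is well-bounded with parameters polynomial in $r,R,F$. Plugging everything into Theorem~\ref{thm:ellipsoid-sep-to-opt} with linear objective $(\vx,t)\mapsto t$ and a suitably chosen precision $\epsilon' = \Theta(\epsilon)$ yields $(\vy,t^\star)$ with $\vy \in \cK^{+\epsilon}$ and $t^\star \leq t + \epsilon'$ for every $(\vx,t)\in \cK_f^{-\epsilon'}$; combined with the $f$-evaluation slack this gives $f(\vy) \leq f(\vx)+\epsilon$ for all $\vx \in \cK^{-\epsilon}$. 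The one genuinely delicate step, and the main obstacle, is the error accounting: I must fix $\delta, \delta', h, \epsilon'$ as explicit polynomials in $\epsilon, r, R$ and the Lipschitz bound on $f$ so that the four compounding approximations (membership-to-separation conversion, function evaluation, finite-difference subgradient, and Theorem~\ref{thm:ellipsoid-sep-to-opt}'s own slack) compose back to a net error of at most $\epsilon$, which is a mechanical but careful bookkeeping exercise in the style of Chapter~4 of \citet{Grotschel1993:Geometric}.
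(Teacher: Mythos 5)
The paper itself gives no proof of this statement (it simply cites Gr\"otschel--Lov\'asz--Schrijver), so there is nothing internal to compare against; your epigraph-lifting strategy — minimize the coordinate $t$ over $\cK_f$ and invoke Theorem~\ref{thm:ellipsoid-sep-to-opt} after building a weak separation oracle for $\cK_f$ — is a standard and defensible route. However, the key subroutine is broken: the coordinate-wise \emph{central} finite difference $g_i=\tfrac{1}{2h}\bigl(f(\vy+h e_i)-f(\vy-h e_i)\bigr)$ need \emph{not} be close to any subgradient of a nonsmooth convex function when $n\ge 3$, no matter how small $h$ is. Concretely, take $f(x_1,x_2,x_3)=\max\{0,x_1,x_2,x_3\}$ and $\vy=\vzero$. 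Then $\partial f(\vzero)$ is the simplex $\conv\{\vzero,e_1,e_2,e_3\}$, while each central difference equals exactly $1/2$ for every $h>0$, so $\vec g=(1/2,1/2,1/2)$. This point is at constant distance from $\partial f(\vzero)$ (indeed $\langle \vec g,(1,1,1)\rangle=3/2>1=f'(\vzero;(1,1,1))$), and the putative cutting plane $u\ge f(\vy)+\langle \vec g,\vz-\vy\rangle$ wrongly excludes genuine epigraph points such as $\bigl((1,1,1),\,1\bigr)$; your error chain $\langle \vec g',\vz-\vy\rangle\ge\langle \vec g,\vz-\vy\rangle-O(\delta')$ breaks because $\|\vec g-\vec g'\|$ is $\Theta(1)$, not $O(\delta')$, and $\|\vz-\vy\|$ is $\Theta(R)$. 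Perturbing $\vy$ to a point of differentiability does not rescue the argument, since the finite-difference error for a merely differentiable (not $C^{1,1}$) convex function is not controllable in terms of any quantity you have access to. You need a different device to certify separation from $\cK_f$, e.g.\ the level-set/binary-search machinery of GLS~Section~4.3, or a one-sided bundle argument that uses the convexity inequality $f(\vz)\ge \tfrac{1}{h}\bigl(f(\vy)-f(\vy-h(\vz-\vy))\bigr)+f(\vy)$ directly rather than passing through an approximate gradient vector.

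A smaller, fixable point: ``computing $F$ from a few oracle evaluations'' needs to be made explicit — you should evaluate $f$ at $\pm 2\sqrt{n}\,R\, e_i$ for $i\in[n]$ so that $\cB_d(\vzero,2R)$ is contained in the $\ell_1$-ball that is the convex hull of those $2n$ points, giving an upper bound on $f$ over $\cB_d(\vzero,2R)$ and hence (via the midpoint inequality against $f(\vzero)$) a two-sided bound and a Lipschitz constant on $\cB_d(\vzero,R)$. This is legitimately ``a few evaluations'' and is not a gap, but it does depend crucially on the oracle for $f$ being usable outside $\cK$, which you should state.
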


Next, we define a weak version of approximate linear correlated equilibria. In particular, for some parameter $\eta > 0$, we relax the strategy sets of every player $i$ to be $\cP_i^{+\eta}$ and we additionally relax the set of transformations we compete with to those that map points from $\cP^{+\eta}$ to $\cP^{-\eta}$. This definition is akin to the definition of the $(\epsilon, \eta)$-approximate equilibrium in \citet{Papadimitriou23:Kakutani} but for $\Phi$-equilibria and is important to avoid precision and numerical issues when we are only given weak oracle access to the strategy ests.

\begin{definition}[$\eta$-weak linear endomorphisms]
\label{defn:weak-linear-endomorphisms}
    For a compact convex set $\cP$, we define the $\eta$-weak set of linear endomorphisms as
    \[
        \Phi(\cP, \eta) := \lintrans{\cP^{+\eta}}{\cP^{-\eta}}.
    \]
\end{definition}

Lemma~\ref{lem:subset-lintrans} implies that $\Phi(\cP_i, \eta) \subseteq \Phi(\cP_i)$, so $\Phi(\cP, \eta)$ is a relaxation of $\Phi(\cP)$.

\begin{definition}[$(\epsilon, \eta)$-approximate linear equilibrium]\label{defn:weak-approx-equil}
    Let $G$ be a convex game of $n$ players with strategy sets $\cP_1, \dots, \cP_n$.
    An $(\epsilon, \eta)$-approximate linear equilibrium for $G$ is a joint distribution $\mu \in \Delta(\cP_1^{+\eta} \times \dots \times \cP_n^{+\eta})$, such that for every player $i \in [n]$,
    \[
        \E_{\vec{s} \sim \mu}[u_i(\vec s)] \geq \E_{\vec{s} \sim \mu} [u_i(\phi(\vec{s}_i), \vec{s}_{-i})] - \epsilon, \qquad \forall \phi \in \Phi(\cP_i, \eta).
    \]
\end{definition}

Of course, this is a relaxation, compared to the strong version of $\epsilon$-approximate equilibria, where $\eta=0$, that are used when we have strong oracle access to the strategy sets. We remark that all results in this paper work even if we compete with all transformations in $\Phi(\cP_i)$, because both the transformations and the utility functions have bounded Lipschitzness. However, we choose to work with this relaxed definition to allow for uniformity with other, more general, transformations.

\subsection{Useful Primitives for Weak Oracles}

We begin by stating some Lemmas that will be useful tools for working with weak oracles. Note that Lemma~\ref{lem:colinear-projections} is stated for a general bounded set, while the proof in \citet{Papadimitriou23:Kakutani} is only for sets in $[-1, 1]^d$, but these statements are equivalent.

\begin{lemma}[{\citet[Lemma B.2]{Papadimitriou23:Kakutani}}]\label{lem:colinear-projections}
    Consider a bounded convex compact set $\cX \subseteq \bbR^d$ and an arbitrary point $\vx \in \bbR^d$. Then, it holds that $\vx, \Pi_{\cX}(\vx), \Pi_{\cX^{+\epsilon}}(\vx)$ are co-linear for every $\epsilon > 0$.
\end{lemma}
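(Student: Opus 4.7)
The plan is to handle the case $\vx \in \cX^{+\epsilon}$ separately as trivial, and then in the non-trivial case $\vx \notin \cX^{+\epsilon}$ to exhibit an explicit point on the segment between $\vx$ and $\Pi_{\cX}(\vx)$ and argue it must coincide with $\Pi_{\cX^{+\epsilon}}(\vx)$. If $\vx \in \cX^{+\epsilon}$, then $\Pi_{\cX^{+\epsilon}}(\vx) = \vx$, and co-linearity with $\vx$ and $\Pi_{\cX}(\vx)$ is immediate. So from now on I would assume $\vx \notin \cX^{+\epsilon}$, which in particular means $\mathrm{dist}(\vx, \cX) > \epsilon$ and $\vx \notin \cX$.

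The key structural observation is that $\cX^{+\epsilon} = \cX + \cB_d(\vzero, \epsilon)$ as a Minkowski sum, which in particular makes $\cX^{+\epsilon}$ convex and hence admits a unique projection. Set $\vy := \Pi_{\cX}(\vx)$ and define the candidate
\[
    \tilde{\vz} := \vy + \epsilon \cdot \frac{\vx - \vy}{\norm{\vx - \vy}_2}.
\]
By construction, $\tilde{\vz}$ lies on the segment from $\vy$ to $\vx$ (it is well-defined since $\vy \neq \vx$), and $\tilde{\vz} \in \cX^{+\epsilon}$ because $\vy \in \cX$ and the displacement has norm $\epsilon$. A direct computation gives $\norm{\vx - \tilde{\vz}}_2 = \norm{\vx - \vy}_2 - \epsilon$.

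Next I would argue that $\tilde{\vz}$ achieves the minimum distance from $\vx$ to $\cX^{+\epsilon}$. For any $\vz \in \cX^{+\epsilon}$, decompose $\vz = \va + \vb$ with $\va \in \cX$ and $\norm{\vb}_2 \leq \epsilon$; then by the triangle inequality and the defining property of $\vy = \Pi_{\cX}(\vx)$,
\[
    \norm{\vx - \vz}_2 \geq \norm{\vx - \va}_2 - \norm{\vb}_2 \geq \norm{\vx - \vy}_2 - \epsilon = \norm{\vx - \tilde{\vz}}_2.
\]
Thus $\tilde{\vz}$ minimizes distance to $\vx$ over $\cX^{+\epsilon}$, and by uniqueness of the projection onto a convex set, $\Pi_{\cX^{+\epsilon}}(\vx) = \tilde{\vz}$. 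Since $\tilde{\vz}$ lies on the segment from $\vy = \Pi_{\cX}(\vx)$ to $\vx$, co-linearity follows.

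There is no substantive obstacle here; the only thing to be careful about is the existence/uniqueness of projections (which rely on compactness of $\cX$ and convexity of $\cX^{+\epsilon}$) and the handling of the degenerate cases $\vx \in \cX$ and $\vx \in \cX^{+\epsilon} \setminus \cX$, both of which are resolved trivially since two of the three points coincide.
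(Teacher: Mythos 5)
Your proof is correct. Note that the paper does not include its own proof of this statement---it simply cites Lemma~B.2 of \citet{Papadimitriou23:Kakutani}---so there is no in-paper argument to compare against; your construction of the explicit candidate $\tilde{\vz}$ on the segment $[\Pi_\cX(\vx), \vx]$, combined with the Minkowski-sum identity $\cX^{+\epsilon} = \cX + \cB_d(\vzero,\epsilon)$ and uniqueness of projection onto the compact convex set $\cX^{+\epsilon}$, is the standard and correct argument, and the degenerate cases are handled appropriately.
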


\begin{lemma}[{\citet[Lemma B.3]{Papadimitriou23:Kakutani}}]\label{lem:eps-project-closeness}
    For any convex set $\cX \subset \bbR^d$ with $\cB_d(\va, r) \subseteq \cX \subseteq \cB_d(\vzero, R)$ and a parameter $\epsilon \in (0, r)$, it holds that $\norm{\vx - \Pi_{\cX^{-\epsilon}}(\vx)}_2 \leq \frac{R}{r} \epsilon$ for every $\vx \in \cX$.
\end{lemma}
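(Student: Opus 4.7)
The plan is to give an explicit construction of a nearby point in $\cX^{-\epsilon}$, which then upper-bounds the projection distance. Specifically, I would move from $\vx$ a small amount toward the center $\va$ of the inscribed ball $\cB_d(\va, r)$. The intuition is that $\va$ is ``deep'' inside $\cX$, with a full ball of radius $r$ of slack around it, so mixing a small amount of $\va$ into $\vx$ should yield a point with at least $\epsilon$ of slack in every direction.

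Concretely, for $t \in [0,1]$ define $\vy_t := (1-t)\vx + t\va$. The key claim is that $\vy_t \in \cX^{-\epsilon}$ whenever $t \geq \epsilon/r$, i.e.\ that $\cB_d(\vy_t, \epsilon) \subseteq \cX$. To verify this, pick any $\vec u \in \bbR^d$ with $\norm{\vec u}_2 \leq \epsilon$ and write
\[
\vy_t + \vec u \;=\; (1-t)\,\vx \;+\; t\bigl(\va + \vec u/t\bigr).
\]
Choosing $t = \epsilon/r$ (which is in $(0,1)$ by the hypothesis $\epsilon < r \leq R$), we have $\norm{\vec u/t}_2 \leq \epsilon/t = r$, so $\va + \vec u/t \in \cB_d(\va, r) \subseteq \cX$. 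Since $\vx \in \cX$ as well, convexity of $\cX$ gives $\vy_t + \vec u \in \cX$, establishing $\vy_t \in \cX^{-\epsilon}$.

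The distance bound then follows immediately from the fact that the projection minimizes distance to $\cX^{-\epsilon}$:
\[
\norm{\vx - \Pi_{\cX^{-\epsilon}}(\vx)}_2 \;\leq\; \norm{\vx - \vy_{\epsilon/r}}_2 \;=\; \frac{\epsilon}{r}\,\norm{\vx - \va}_2,
\]
and bounding $\norm{\vx - \va}_2$ by using that $\vx, \va \in \cB_d(\vzero, R)$ yields the claim (with a constant factor that matches the statement up to the usual conventions on inscribing the ball $\cB_d(\va,r)$ versus $\cB_d(\vzero, r)$; if one instead centers the inscribed ball at the origin, $\norm{\vx - \va}_2 \leq R$ directly).

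There is no serious obstacle here: the proof is a one-line convexity argument once one sees the right interpolation. The only subtle point is choosing the mixing parameter $t$ exactly at the threshold $\epsilon/r$, so that the ``budget'' for perturbing $\va$ by $\vec u/t$ stays inside the inscribed ball $\cB_d(\va, r)$. Everything else is a direct consequence of convexity of $\cX$ and the definition of $\cX^{-\epsilon}$.
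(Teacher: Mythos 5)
Your argument is correct and is the standard one for this lemma: interpolate $\vx$ toward the center $\va$ of the inscribed ball with weight $t=\epsilon/r$, and observe via convexity that the resulting point has an $\epsilon$-ball of slack. The paper itself does not prove this lemma — it cites it directly from the referenced prior work — so there is no internal proof to compare against; your argument is the natural one.

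One point worth spelling out rather than waving at: the factor you obtain is $\frac{\epsilon}{r}\norm{\vx-\va}_2$, and for a \emph{generic} inner-ball center $\va$ this is only bounded by $\frac{2R}{r}\epsilon$, not $\frac{R}{r}\epsilon$, since $\norm{\vx-\va}_2$ can approach $2R$. This is not slack in your argument: a thin cone (e.g., $\cX=\conv\bigl(\cB_d(\va,r)\cup\{\vp\}\bigr)$ with $\va$ and $\vp$ near antipodal points of $\cB_d(\vzero,R)$) shows that the projection distance from the apex $\vp$ genuinely scales like $\frac{(2R-r)}{r}\epsilon$. So the bound $\frac{R}{r}\epsilon$ as transcribed requires the additional assumption $\va=\vzero$ (i.e., the inscribed ball is centered at the origin), which is the convention in the cited source; for arbitrary $\va$ the correct constant is $2R/r$. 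Indeed the present paper itself sometimes inserts the extra factor of $2$ when invoking this lemma (e.g., in the proof of Lemma~\ref{lem:approx-optimality}), and sometimes applies it in a setting where the inscribed ball is at the origin. Your closing remark correctly identifies this, so the main thing to do is turn that aside into an explicit statement of which version you are proving.
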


\begin{lemma}\label{lem:polytope-projection}
    Let $\cP \subset \bbR^d$ be a rational polyhedron, described by a strong separation oracle. Then there exists an oracle-polynomial time algorithm that computes an exact rational solution to the projection problem
    \[
        \hat{\vx} = \Pi_{\cP}(\vy) := \argmin_{\vx \in \cP} \norm{\vx - \vy}_2^2,
    \]
    for any $\vy \in \bbQ^d$.
\end{lemma}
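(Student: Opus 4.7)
The plan is to combine the convex-programming ellipsoid method of Theorem~\ref{thm:ellipsoid-convex-sep-to-opt} with classical Diophantine rounding to lift an approximate projection to an exact rational one. Since $\cP$ is a rational polyhedron described by a strong separation oracle, we implicitly have a bound on its facet complexity $\varphi$, and standard polyhedral theory (see \citet{Grotschel1993:Geometric}) implies that the exact projection $\hat{\vx}$ of $\vy$ onto $\cP$ is a rational vector whose bit-complexity is polynomial in $\varphi$, $d$, and the encoding length of $\vy$.

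First, I would apply Theorem~\ref{thm:ellipsoid-convex-sep-to-opt} to the convex, polynomial-time evaluable objective $f(\vx) = \norm{\vx - \vy}_2^2$ (whose gradient $\nabla f(\vx) = 2(\vx - \vy)$ is also exactly computable). For any chosen precision $\epsilon > 0$, this returns a point $\tilde{\vx} \in \cP^{+\epsilon}$ with $f(\tilde{\vx}) \leq f(\vx) + \epsilon$ for all $\vx \in \cP^{-\epsilon}$. A standard $2$-strong-convexity argument, using the first-order optimality condition $(\hat{\vx} - \vy)^\top (\vx - \hat{\vx}) \geq 0$ for all $\vx \in \cP$ and Lemma~\ref{lem:eps-project-closeness} to handle the $\epsilon$-slack in the membership of $\tilde{\vx}$, then yields the quadratic growth bound $\norm{\tilde{\vx} - \hat{\vx}}_2 = O(\sqrt{\epsilon})$.

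Second, I would round $\tilde{\vx}$ to $\hat{\vx}$ exactly. Choosing $\epsilon$ small enough that $\sqrt{\epsilon}$ is below the reciprocal of twice the squared denominator bound $N = 2^{\poly(\varphi, d, \langle \vy \rangle)}$ guaranteed for $\hat{\vx}$, I would apply simultaneous Diophantine approximation (as in \citet{Grotschel1993:Geometric}) to produce the unique rational $\hat{\vx}^*$ with denominators at most $N$ within distance $\sqrt{\epsilon}$ of $\tilde{\vx}$; by uniqueness of such approximants in a ball of radius $1/(2N^2)$, this $\hat{\vx}^*$ must coincide with $\hat{\vx}$. Correctness is then verified in polynomial time: a single strong membership query confirms $\hat{\vx}^* \in \cP$, and KKT optimality is checked by confirming that $\vy - \hat{\vx}^*$ lies in the normal cone of $\cP$ at $\hat{\vx}^*$, which can be exhibited from the tight facet inequalities exposed by the strong separation oracle at slight displacements of $\hat{\vx}^*$ along the candidate direction.

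The main technical obstacle is the Diophantine-rounding step, which genuinely requires both a strong (not merely weak) separation oracle, in order to certify exact membership of the rounded candidate, and a concrete a priori bound on the bit-complexity of $\hat{\vx}$. The latter follows by applying Cramer's rule to the linear system defining $\hat{\vx}$ (the intersection of the tight facets of $\cP$ at $\hat{\vx}$ together with the orthogonality of $\vy - \hat{\vx}$ to the corresponding face), combined with the facet-complexity bound for $\cP$. Modulo this quantitative analysis, the overall algorithm is a direct adaptation of the \citet{Grotschel1993:Geometric} strong-optimization machinery from a linear objective to a convex quadratic one.
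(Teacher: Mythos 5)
Your proposal is correct in spirit but takes a genuinely different route from the paper. The paper treats this as a one-line reduction: it rewrites the projection as a convex quadratic program over $\cP$ and simply cites the Kozlov--Tarasov--Khachiyan result that such programs can be solved exactly in polynomial time (with the separation-oracle extension implicit via \citet{Grotschel1993:Geometric}). You, by contrast, essentially re-derive that citation: run the weak-optimization ellipsoid (Theorem~\ref{thm:ellipsoid-convex-sep-to-opt}), use strong convexity to translate value-approximation into point-approximation $\norm{\tilde{\vx} - \hat{\vx}}_2 = O(\sqrt{\epsilon})$, bound the bit-complexity of the exact minimizer via Cramer's rule on the tight facets plus the orthogonality condition, and then recover $\hat{\vx}$ exactly by simultaneous Diophantine rounding with verification. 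This buys transparency about why the result holds but also re-does a fair amount of work that the cited theorem packages. Two details you glossed over deserve care: (i) Theorem~\ref{thm:ellipsoid-convex-sep-to-opt} as stated requires $\cB_d(r) \subseteq \cK \subseteq \cB_d(\vzero,R)$, so a rational polyhedron that is unbounded or not full-dimensional requires the standard GLS preprocessing (intersecting with a large ball, passing to the affine hull) before the theorem applies; the paper silently inherits this from its citation, but your from-scratch argument should say so. (ii) Simultaneous Diophantine approximation via LLL returns a rational vector with a common denominator bounded by $N$ within a distance that is $2^{\Theta(d)}$ times the best possible, not the exact closest such vector; so you need to shrink $\epsilon$ by this extra exponential (still polynomial in bit-size) factor before uniqueness forces the rounded output to equal $\hat{\vx}$. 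With those two points tightened, your argument is a valid alternative proof.
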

\begin{proof}
    The problem can be written as
    \[
        \min\ &\frac{1}{2} \vx^\top  \vx - 2 \vy^\top \vx\\
        \text{s.t.} &\vx \in \cP,
    \]
    which is a convex quadratic programming problem with a positive definite matrix and is known to be exactly solvable via the ellipsoid method \citep{Kozlov80:quadratic}.
\end{proof}

\begin{lemma}\label{lem:inflated-polytope-sep}
    Let $\cP \subset \bbR^d$ be a rational polyhedron, described by a strong separation oracle. For any parameter $\epsilon > 0$ we can create a strong separation oracle for the set $\cP^{+\epsilon}$ that runs in oracle-polynomial time.
\end{lemma}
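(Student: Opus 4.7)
The plan is to reduce the separation problem for $\cP^{+\epsilon}$ to an exact projection onto $\cP$, relying on Lemma~\ref{lem:polytope-projection}. Given a rational query point $\vy \in \bbQ^d$, first I would compute the exact rational projection $\hat{\vy} := \Pi_{\cP}(\vy)$ using Lemma~\ref{lem:polytope-projection}, which runs in oracle-polynomial time. Since $\cP^{+\epsilon} = \{\vz : \mathrm{dist}(\vz,\cP) \leq \epsilon\}$, membership in $\cP^{+\epsilon}$ is equivalent to the (rational) inequality $\|\vy - \hat{\vy}\|_2^2 \leq \epsilon^2$, which can be checked exactly. If this holds, assert that $\vy \in \cP^{+\epsilon}$ and we are done.

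Otherwise $\|\vy - \hat{\vy}\|_2 > \epsilon$, and I would propose the halfspace with normal vector $\vec{n} := \vy - \hat{\vy}$ as the separator. Correctness rests on the first-order optimality condition of Euclidean projection: for every $\vw \in \cP$, $\inp{\vy - \hat{\vy}, \vw - \hat{\vy}} \leq 0$. Writing any $\vz \in \cP^{+\epsilon}$ as $\vz = \vw + \vec{u}$ with $\vw \in \cP$ and $\|\vec{u}\|_2 \leq \epsilon$, Cauchy--Schwarz gives
\[
\inp{\vec{n}, \vz - \hat{\vy}} \;=\; \inp{\vec{n}, \vw - \hat{\vy}} + \inp{\vec{n}, \vec{u}} \;\leq\; 0 + \epsilon \|\vec{n}\|_2.
\]
On the other hand, $\inp{\vec{n}, \vy - \hat{\vy}} = \|\vec{n}\|_2^2 > \epsilon\|\vec{n}\|_2$. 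Hence any threshold $c$ with $\inp{\vec{n}, \hat{\vy}} + \epsilon\|\vec{n}\|_2 \leq c < \inp{\vec{n}, \vy}$ yields a halfspace $\{\vx : \inp{\vec{n}, \vx} \leq c\}$ that contains $\cP^{+\epsilon}$ but excludes $\vy$.

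The only subtle point, and the place to spend a bit of care, is that a \emph{strong} separation oracle must return a \emph{rational} halfspace, while $\|\vec{n}\|_2 = \sqrt{\inp{\vec{n},\vec{n}}}$ may be irrational. I would resolve this by picking $c$ inside the admissible interval using a rational upper approximation $\alpha$ of $\|\vec{n}\|_2$ and setting $c := \inp{\vec{n},\hat{\vy}} + \epsilon\alpha$. The admissibility condition $\alpha < \|\vec{n}\|_2^2/\epsilon$ must hold together with $\alpha \geq \|\vec{n}\|_2$; since $\|\vec{n}\|_2 > \epsilon$, the interval $[\|\vec{n}\|_2, \|\vec{n}\|_2^2/\epsilon)$ has width $\|\vec{n}\|_2(\|\vec{n}\|_2/\epsilon - 1) > 0$, and the rational quantities $\|\vec{n}\|_2^2$ and $\epsilon$ are of polynomial bit-length in the input, so a rational $\alpha$ in this interval can be computed (e.g.\ by binary search on $\alpha^2 \geq \inp{\vec{n},\vec{n}}$) with polynomially many bits.

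The main obstacle I anticipate is purely this rational bookkeeping; the geometric content is essentially immediate from projection optimality. Everything else (the projection itself, arithmetic on rationals, bit-length control) is either handled by Lemma~\ref{lem:polytope-projection} or standard, so the whole procedure runs in oracle-polynomial time, producing either a valid membership assertion or a rational separating halfspace as required.
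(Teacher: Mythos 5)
Your proof is correct and follows the same skeleton as the paper's: compute the exact projection $\hat{\vy}=\Pi_{\cP}(\vy)$ via Lemma~\ref{lem:polytope-projection}, test $\norm{\vy-\hat{\vy}}_2^2\leq\epsilon^2$, and otherwise separate with the normal direction $\vy-\hat{\vy}$. Where you diverge is in justifying that this normal also separates $\vy$ from the \emph{inflated} set $\cP^{+\epsilon}$: the paper invokes the collinearity of $\vy$, $\Pi_{\cP}(\vy)$, and $\Pi_{\cP^{+\epsilon}}(\vy)$ (Lemma~\ref{lem:colinear-projections}), whereas you decompose $\vz\in\cP^{+\epsilon}$ as $\vw+\vec{u}$ with $\vw\in\cP$, $\norm{\vec{u}}_2\leq\epsilon$ and combine projection optimality with Cauchy--Schwarz. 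Your route is more self-contained and yields the quantitative margin $\epsilon\norm{\vec{n}}_2$ directly without importing the auxiliary lemma, at the cost of a slightly longer computation; both are sound.

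One stylistic note: the rationality bookkeeping you perform to pick a threshold $c$ is unnecessary. In the oracle model used in the paper (and in Gr\"otschel--Lov\'asz--Schrijver more broadly), a strong separation oracle only needs to return a rational normal $\vc$ with $\vc^\top\vx \leq \vc^\top\vy$ for all $\vx \in \cK$; no explicit threshold is part of the output. Your own inequalities already show $\inp{\vec{n},\vz}<\inp{\vec{n},\vy}$ for every $\vz\in\cP^{+\epsilon}$, so normalizing $\vec{n}$ by $\norm{\vec{n}}_\infty$ (a rational operation) finishes the job immediately. The irrational quantity $\norm{\vec{n}}_2$ only appears as an intermediate bound in the analysis, not in the oracle's output.
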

\begin{proof}
    Consider any input point $\vy \in \bbQ^d$ and compute the projection $\hat{\vx} = \Pi_{\cP}(\vy)$, using Lemma~\ref{lem:polytope-projection}.
    
    If $\norm{\hat{\vx} - \vy}_2^2 \leq \epsilon^2$, then $\vy \in \cP^{+\epsilon}$ and we halt.

    Otherwise, $\vy \notin \cP^{+\epsilon}$, so we need to return a separating hyperplane. Define $\va := \hat{\vx} - \vy$ and note that $\va^\top \vy < \va^\top \vx$ for all $\vx \in \cP$, so it is a valid separating hyperplane of $\vy$ from $\cP$. Since, Lemma~\ref{lem:colinear-projections} implies that $\Pi_{\cP^{+\epsilon}}(\vy)$ and $\Pi_{\cP}(\vy)$ are co-linear, it follows that $\va$ also is a separating hyperplane of $\vy$ from $\cP^{+\epsilon}$.
\end{proof}

\begin{lemma}\label{lem:approx-optimality}
    Let $\cP \subset \bbR^d$ be a compact convex set that is bounded by balls $\cB_d(r) \subset \cP \subset \cB_d(\vzero, R)$. Let $\mM \in \bbR^{d \times d}$, $\vb \in \bbR^d$ be a matrix and vector respectively with $\norm{\mM}_2,\norm{\vb}_2\leq B$ for some $B > 0$. Define $f(\vx) = \frac{1}{2} \norm{\mM \vx +\vb}_2^2$. For some $\epsilon > 0$, let $\vx_* = \argmin_{\vx \in \cP^{-\epsilon}} f(\vx)$ and let $\hat{\vx} \in \cP^{+\epsilon}$ be a weak solution such that $f(\hat{\vx}) \leq f(\vx) + \epsilon$ for all $\vx \in \cP^{-\epsilon}$. Then,
    \[
        \inp{\nabla f(\hat{\vx}), \vx - \hat{\vx}} \geq -\delta \qquad \forall \vx \in \cP^{-\epsilon},
    \]
    where $\delta := 6 B^2 R^2 \sqrt{\epsilon / r}$. Moreover, $\norm{\mM (\vx_* - \hat{\vx})}_2 \leq 5 B R \sqrt{\epsilon / r}$.
\end{lemma}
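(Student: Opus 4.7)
The plan is to exploit the fact that $f$ is a quadratic function, so that the Bregman-like second-order expansion
\[
    f(\vy) = f(\vx) + \inp{\nabla f(\vx), \vy - \vx} + \tfrac{1}{2} \norm{\mM(\vy - \vx)}_2^2
\]
holds with no remainder. This gives quadratic growth in the seminorm $\norm{\mM \cdot}_2$ (even though $f$ need not be strongly convex in the usual sense), which is exactly what lets us convert an $\epsilon$-gap in objective value into an $O(\sqrt{\epsilon})$-bound in $\norm{\mM(\hat{\vx} - \vx_*)}_2$. The second displayed inequality of the lemma then feeds directly into the first-order inequality.

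First I would handle the mismatch that $\hat{\vx}$ lives in $\cP^{+\epsilon}$ while $\vx_*$ and the test points $\vx$ live in $\cP^{-\epsilon}$. Define $\hat{\vx}' := \Pi_{\cP^{-\epsilon}}(\hat{\vx})$. Using the collinearity of projections (Lemma~\ref{lem:colinear-projections}, applied with $\cX = \cP^{-\epsilon}$, $\cX^{+\epsilon}=\cP$) together with Lemma~\ref{lem:eps-project-closeness}, one gets $\norm{\hat{\vx} - \hat{\vx}'}_2 = O(R\epsilon/r)$. Next I bound $f(\hat{\vx}') - f(\hat{\vx})$ using the quadratic expansion: since $\norm{\nabla f(\hat{\vx})}_2 \le B\,\norm{\mM\hat{\vx}+\vb}_2 \le O(B^2 R)$ and $\norm{\mM}_2 \le B$, we get $f(\hat{\vx}') - f(\hat{\vx}) \le O(B^2 R^2 \epsilon / r)$.

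Now I apply the quadratic identity at the optimum:
\[
    f(\hat{\vx}') = f(\vx_*) + \inp{\nabla f(\vx_*), \hat{\vx}' - \vx_*} + \tfrac{1}{2} \norm{\mM(\hat{\vx}' - \vx_*)}_2^2 \ge f(\vx_*) + \tfrac{1}{2} \norm{\mM(\hat{\vx}' - \vx_*)}_2^2,
\]
where the inequality uses first-order optimality of $\vx_*$ on $\cP^{-\epsilon}$ (which applies because $\hat{\vx}' \in \cP^{-\epsilon}$). Combining with $f(\hat{\vx}) \le f(\vx_*)+\epsilon$ and the bound on $f(\hat{\vx}')-f(\hat{\vx})$, I conclude $\norm{\mM(\hat{\vx}' - \vx_*)}_2 = O(BR\sqrt{\epsilon/r})$. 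Triangle inequality with $\norm{\mM(\hat{\vx} - \hat{\vx}')}_2 \le B \cdot O(R\epsilon/r) \le O(BR\sqrt{\epsilon/r})$ (since $\epsilon/r \le \sqrt{\epsilon/r}$ once $\epsilon \le r$) yields the claimed $5BR\sqrt{\epsilon/r}$ bound.

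For the first inequality I would decompose, for any $\vx \in \cP^{-\epsilon}$,
\[
    \inp{\nabla f(\hat{\vx}), \vx - \hat{\vx}} = \inp{\nabla f(\vx_*), \vx - \vx_*} + \inp{\nabla f(\vx_*), \vx_* - \hat{\vx}} + \inp{\mM(\hat{\vx} - \vx_*),\, \mM(\vx - \hat{\vx})},
\]
using $\nabla f(\hat{\vx}) - \nabla f(\vx_*) = \mM^\top \mM(\hat{\vx} - \vx_*)$ for the last piece. The first term is $\ge 0$ by first-order optimality of $\vx_*$. The second term equals $\inp{\mM\vx_* + \vb, \mM(\vx_* - \hat{\vx})}$, which is bounded in absolute value by $\norm{\mM\vx_*+\vb}_2 \cdot 5BR\sqrt{\epsilon/r} = O(B^2 R^2 \sqrt{\epsilon/r})$. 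The third term is bounded by $5BR\sqrt{\epsilon/r} \cdot B \cdot \norm{\vx - \hat{\vx}}_2 = O(B^2 R^2 \sqrt{\epsilon/r})$. Summing and tracking the leading constants gives $-\delta$ with $\delta = 6 B^2 R^2 \sqrt{\epsilon/r}$.

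The main obstacle I anticipate is keeping the constants tight enough to match the claimed $5$ and $6$: the naive bookkeeping yields slightly larger constants, so one has to be a bit careful in step~1 (exploiting the exact additivity of the collinear projection distances $\epsilon$ and $R\epsilon/r$ rather than loose triangle inequalities) and in absorbing the additive $+\epsilon$ into the $B^2 R^2 \epsilon / r$ term (valid since $B,R/r \ge 1$). The conceptual core, however, is just the exact quadratic expansion plus first-order optimality at $\vx_*$; no new ideas beyond that are needed.
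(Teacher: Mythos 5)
Your proof of the second inequality (the bound on $\norm{\mM(\vx_*-\hat{\vx})}_2$) is essentially the paper's: project $\hat{\vx}$ into $\cP^{-\epsilon}$, use the exact quadratic expansion and first-order optimality of $\vx_*$ to get $\tfrac{1}{2}\norm{\mM(\hat{\vx}'-\vx_*)}_2^2 \le f(\hat{\vx}')-f(\vx_*)$, then triangle inequality. That part is fine.

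The gap is in the first inequality. Your three-way decomposition is algebraically identical to the paper's two-way one, but your bound on the middle term $\inp{\nabla f(\vx_*),\vx_*-\hat{\vx}}$ by Cauchy--Schwarz, $\norm{\mM\vx_*+\vb}_2\cdot\norm{\mM(\vx_*-\hat{\vx})}_2$, is far too loose: with $\norm{\mM\vx_*+\vb}_2\le 2BR$ this already contributes $\sim 10\,B^2R^2\sqrt{\epsilon/r}$, and the third term contributes another comparable amount, so the total is closer to $20\,B^2R^2\sqrt{\epsilon/r}$ than the claimed $6$. The optimizations you flag at the end (tighter projection additivity, absorbing $+\epsilon$) do not touch this term. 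The missing observation is that this middle term should be bounded via convexity, not Cauchy--Schwarz: by the first-order lower bound of the convex function $f$ at $\vx_*$,
\[
\inp{\nabla f(\vx_*),\vx_*-\hat{\vx}} \;\ge\; f(\vx_*)-f(\hat{\vx}) \;\ge\; -\epsilon,
\]
which is an order of magnitude tighter and is exactly the step the paper uses (it keeps $\inp{\nabla f(\vx_*),\vx-\hat{\vx}}$ together, lower-bounds the minimizer over $\vx\in\cP^{-\epsilon}$ at $\vx_*$ via first-order optimality, then applies this convexity inequality). With that replacement, only the third term carries the $\sqrt{\epsilon/r}$ weight and the constant $6$ (modulo the paper's own loose use of $\norm{\vx-\hat{\vx}}_2\lesssim R$) goes through. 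As written, your proof establishes the correct rate but not the stated constant.
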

\begin{proof}
    \[
        \inp{\nabla f(\hat{\vx}), \vx - \hat{\vx}} &= \inp{\mM^\top(\mM\hat{\vx}+\vb), \vx - \hat{\vx}}\\
        &=\inp{\mM^\top(\mM\vx_*+\vb), \vx - \hat{\vx}} - \inp{\mM^\top\mM(\vx_*-\hat{\vx}), \vx - \hat{\vx}}
    \]
    We have
    \[
        \min_{\vx \in \cP^{-\epsilon}}\inp{\mM^\top(\mM\vx_*+\vb), \vx - \hat{\vx}} &= \inp{\mM^\top(\mM\vx_*+\vb), \vx_* - \hat{\vx}} \geq f(\vx_*) - f(\hat{\vx}) \geq -\epsilon
    \]
    and
    \[
        \inp{\mM^\top\mM(\vx_*-\hat{\vx}), \vx - \hat{\vx}} &\leq \norm{\mM(\vx_*-\hat{\vx})}_2\norm{\mM(\vx - \hat{\vx})}_2\\
        &\leq BR\norm{\mM\vx_*-\mM\hat{\vx}}_2
    \]
    Letting $\bar{\vx}$ be the projection of $\hat{\vx}$ to $\cP^{-\epsilon}$, so that $\norm{\bar{\vx} - \hat{\vx}}_2 \leq 2 \frac{R}{r} \epsilon$ (by Lemma~\ref{lem:eps-project-closeness}). We have $\norm{\mM\vx_*-\mM\hat{\vx}}_2 \leq \norm{\mM\vx_*-\mM\bar{\vx}}_2 +\norm{\mM\bar{\vx}-\mM\hat{\vx}}_2$.  Moreover, $\norm{\mM\bar{\vx}-\mM\hat{\vx}}_2 \leq B\norm{\bar{\vx}-\hat{\vx}}_2 \leq 2B \frac{R}{r} \epsilon$.
    And
    \[
        \norm{\mM\vx_*-\mM\bar{\vx}}_2^2&=\norm{(\mM\vx_*+\vb)-(\mM\bar{\vx}+\vb)}_2^2\\
        &=2\inp{\mM^\top(\mM\vx_*+\vb),\vx_*-\bar{\vx}}+\norm{\mM\bar{\vx}+\vb}_2^2-\norm{\mM\vx_*+\vb}_2^2\\
        & \leq 0+\norm{\mM\bar{\vx}+\vb}_2^2-\norm{\mM\hat{\vx}+\vb}_2^2+\norm{\mM\hat{\vx}+\vb}_2^2-\norm{\mM\vx_*+\vb}_2^2\\
        & \leq \norm{\mM(\bar{\vx}-\hat{\vx})}_2\p{\norm{\mM\bar{\vx}+\vb}_2+\norm{\mM\hat{\vx}+\vb}_2}+2\eps\\
        & \leq 2B \frac{R}{r} \epsilon (4BR)+2\epsilon\\
        &\leq 10 B^2 R^2 \epsilon / r
    \]
    Thus,
    \[
        \norm{\mM (\vx_* - \hat{\vx})}_2 \leq 2 B \frac{R}{r} \epsilon + B R \sqrt{10 \frac{\epsilon}{r}} \leq 5 B R \sqrt{\epsilon / r}
    \]
    \[
    \min_{\vx \in \cP^{-\epsilon}}\inp{\nabla f(\hat{\vx}), \vx - \hat{\vx}} \geq -\epsilon - BR (5 B R \sqrt{\epsilon / r}) \geq -6B^2R^2\sqrt{\epsilon / r}
    \]
    as desired.
\end{proof}

\begin{lemma}\label{lem:min-ball-intersection}
    Let $\cM \subset \bbR^d$ be a compact convex set, of diameter $D = \text{diam}(\cM)$, that contains a ball $\cB_d(\vec a, r) \subset \cM$. Consider the intersection $\cM \cap \cB_d(\vb, l)$ of this set with a ball of radius $l$. If there exists a point $\vp \in \cM \cap \cB_d(\vb, l)$ with $\norm{\vb - \vp}_2 = d > 0$, then $\vol{\cM \cap \cB_d(\vb, l)} > V_d(t)$, where $V_d(t)$ is the volume of the radius-$t$ ball in $d$ dimensions and $t = \frac{r}{2D} (l - d)$.
\end{lemma}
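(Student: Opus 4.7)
The plan is to exhibit an explicit ball of radius exceeding $t$ that is contained in $\cM \cap \cB_d(\vb, l)$, whose volume then lower-bounds the volume of the intersection. Concretely, write $s := \|\vb - \vp\|_2$ (so $0 < s \leq l$ by assumption; note the statement overloads the symbol $d$ to mean this distance in addition to the dimension), and introduce the interpolation parameter
\[
\mu := \frac{l - s}{D + r} \in [0, 1],
\]
and the candidate center $\vq := (1-\mu)\vp + \mu \va$.

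The first step is to verify that $\cB_d(\vq, \mu r) \subseteq \cM$. For any perturbation $\vz$ with $\|\vz\|_2 \leq \mu r$, one rewrites $\vq + \vz = (1-\mu)\vp + \mu(\va + \vz/\mu)$. Since $\|\vz/\mu\|_2 \leq r$, the point $\va + \vz/\mu$ lies in $\cB_d(\va, r) \subseteq \cM$; together with $\vp \in \cM$ and convexity of $\cM$, this places $\vq + \vz$ in $\cM$.

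The second step is to verify that $\cB_d(\vq, \mu r) \subseteq \cB_d(\vb, l)$. The triangle inequality gives $\|\vq - \vb\|_2 \leq \|\vp - \vb\|_2 + \|\vq - \vp\|_2 = s + \mu\|\va - \vp\|_2 \leq s + \mu D$, where the last inequality uses that $\va, \vp \in \cM$ and the diameter bound. A second triangle inequality then yields $\|\vq + \vz - \vb\|_2 \leq s + \mu D + \mu r = s + \mu(D + r) = l$ by the choice of $\mu$.

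The final step is the volume comparison. Since $\cB_d(\va, r) \subseteq \cM$ forces $D \geq 2r$, and in particular $D + r < 2D$ (assuming $r > 0$), one has
\[
\mu r = \frac{r(l - s)}{D + r} > \frac{r(l - s)}{2D} = t,
\]
so $\vol{\cM \cap \cB_d(\vb, l)} \geq \vol{\cB_d(\vq, \mu r)} = V_d(\mu r) > V_d(t)$. There is no real obstacle here; the only subtle point is matching the constant in the denominator of $t$, which is exactly where the bound $D \geq 2r$ (hence $D + r \leq \tfrac{3}{2}D < 2D$) gets used to replace $D + r$ by $2D$.
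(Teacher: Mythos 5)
Your proof is correct and follows essentially the same route as the paper's: both place a scaled copy of the inner ball $\cB_d(\va,r)$ at a convex combination of $\va$ and $\vp$ and then verify it lies in both $\cM$ (by convexity) and $\cB_d(\vb,l)$ (by the triangle inequality). You use the slightly sharper interpolation parameter $\mu = (l-s)/(D+r)$ where the paper uses $\gamma_* = (l-s)/(2D)$, which gives a marginally larger inscribed ball and makes the strict inequality $V_d(\mu r) > V_d(t)$ fall out cleanly via $D+r<2D$; both arguments tacitly rely on this parameter not exceeding $1$, i.e., on $l-s$ being bounded by the respective denominator, exactly as the paper's proof does.
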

\begin{proof}
    If $\norm{\vb - \vp}_2 = d$, then $\cB_d(\vp, l - d) \subseteq \cB_d(\vb, l)$. We will focus on this radius-$(l - d)$ ball and prove that its intersection with $\cM$ must contain a smaller ball of radius $t = \frac{r}{2D} (l - d)$.

    Consider the following set (a subset of the conic hull of $\cB_d(\vec a, r)$)
    \[
        C = \set{\vp + \gamma (\vx - \vp) : \vx \in \cB_d(\vec a, r), 0 \leq \gamma \leq 1}.
    \]
    Since this is a set of convex combinations of points from $\cM$, it follows that $C \subseteq \cM$. Now consider the ball defined by scaling $\cB_d(\vec a, r)$ by $\gamma_* = \frac{l - d}{2 D} < 1$ as follows
    \[
        \vp + \gamma_* \left( \cB_d(\vec a, r) - \vp \right) \subset C \subseteq \cM.
    \]
    The center of this ball is $\vec{a}' = \vp + \gamma_*(\vec a - \vp)$ and the radius is
    \[
        r' &= \max_{\vx \in \cB_d(\vec a, r)} \norm{\vec{a}' - (\vp + \gamma_* (\vx - \vp))}_2\\
        &= \gamma_* \max_{\vx \in \cB_d(\vec a, r)} \norm{\vec a - \vx}_2\\
        &= \gamma_* r = \frac{r}{2 D} (l - d) = t.
    \]
    We have shown that $\cB_d(\vec{a}', t) \subset C \subseteq \cM$. It suffices to prove that $\cB_d(\vec{a}', t) \subset \cB_d(\vp, l - d)$. To this end, consider any $\vx \in \cB_d(\vec{a}', t)$ and observe that
    \[
        \norm{\vx - \vp}_2 &= \norm{\vx - \vec{a}' + \vec{a}' - \vp}_2\\
        &\leq \norm{\vx - \vec{a}'}_2 + \norm{\vec{a}' - \vp}_2\\
        &\leq t + \gamma_* \norm{\vec a - \vp}_2\\
        &\leq \frac{r}{2 D} (l - d) + \frac{l - d}{2 D} D\\
        &\leq \frac{l - d}{2} + \frac{l - d}{2} = l - d.
    \]
    We conclude that $\vol{\cM \cap \cB_d(\vb, l)} > \vol{\cM \cap \cB_d(\vp, l - d)} > \vol{\cB_d(\vec{a}', t)} > V_d(t)$.
\end{proof}

\section{Infeasibility of Membership over Endomorphic Transformations}\label{app:lb}

In this section, we establish that no algorithm can compute membership to the linear endomorphism set $\Phi(\cP) \subset \bbR^{d \times (d+1)}$ of a compact convex set $\cP \subset \bbR^{d}$ given only polynomially-many membership queries to $\cP$.  More specifically, we take the map $\phi(\vx) = -\frac{4d-1}{4d}\vx$ (a negative scaling of $\vx$)
and show that we cannot check if it is an endomorphism of $\cP$.

Let $\cB_{d} \subseteq \bbR^d$ be the $d$-dimensional ball containing all points $\vx$ with Euclidean norm at most $1$. Let $\Phi_d \subseteq \bbR^{d \times (d+1)}$ be the set of linear endomorphisms of $\cB_d$.  Given a $\kappa \in (0, 1)$ and a unit vector $\vu \in \bbR^d$, define the $d$-dimensional \emph{$\kappa$-capped ball} $\csphere(\kappa, \vu)$ to be the intersection of $\cB_{d}$ with the half-space $\langle\vx, \vu\rangle \leq \kappa$ (note that this corresponds to removing a ``cap'' from $\cB_d$). Similarly, we let $\capPhi_d(\kappa,\vu) \subseteq \bbR^{d \times (d+1)}$ be the set of linear endomorphisms of $\csphere(\kappa,\vu)$. %
We prove the following about the sphere, the capped sphere, and the negative scale map $-\frac{4d-1}{4d} I$.

\begin{lemma}

\begin{enumerate}
    \item $-\frac{4d-1}{4d} I \in \Phi_d$.  Moreover, $\mA-\frac{4d-1}{4d} I \in \Phi_d$ for all $\mA \in \bbR^{d \times d}$ with $\norm{\mA}_2 < \frac{1}{4d}$
    \item $-\frac{4d-1}{4d} I \not \in \capPhi_d\p{\frac{2d-1}{2d},\vu}$ for all $\vu$.  Moreover, $\mA-\frac{4d-1}{4d} I \not \in \capPhi_d\p{\frac{2d-1}{2d},\vu}$ for all $\mA \in \bbR^{d \times d}$ with $\norm{\mA}_2 < \frac{1}{4d}$
    \item We require $\Omega(\exp(d))$ queries to distinguish $\cB_d$ from $\csphere\p{\frac{2d-1}{2d},\vu}$ for all $\vu$.
\end{enumerate}

Thus, it is impossible to solve Weak Membership for $\Phi(\cP)$ with poly$(d)$ membership queries to $\cP$.
\end{lemma}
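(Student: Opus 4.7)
The plan is to handle the three bullets more or less independently---bullets 1 and 2 reduce to one-line operator-norm computations, while bullet 3 carries the real content via concentration of measure on the sphere. The Weak Membership infeasibility at the end will then follow by direct reduction.

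For bullet 1, I would bound the spectral norm: $\|\mA - \tfrac{4d-1}{4d} I\|_2 \le \|\mA\|_2 + \tfrac{4d-1}{4d} < \tfrac{1}{4d} + \tfrac{4d-1}{4d} = 1$. Any operator of spectral norm strictly less than $1$ maps $\cB_d$ into itself, so $\mA - \tfrac{4d-1}{4d} I \in \Phi_d$ whenever $\|\mA\|_2 < \tfrac{1}{4d}$; taking $\mA = 0$ recovers the first claim. For bullet 2, I would test the single point $\vx = -\vu$, which lies in $\csphere(\tfrac{2d-1}{2d}, \vu)$ since $\langle -\vu, \vu\rangle = -1 \le \tfrac{2d-1}{2d}$. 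Its image under $\mA - \tfrac{4d-1}{4d} I$ is $-\mA\vu + \tfrac{4d-1}{4d}\vu$, and taking the inner product with $\vu$ gives $\tfrac{4d-1}{4d} - \langle \mA\vu, \vu\rangle > \tfrac{4d-1}{4d} - \tfrac{1}{4d} = \tfrac{2d-1}{2d}$, violating the cap constraint. So the map is not an endomorphism of the capped ball.

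For bullet 3, I would invoke a standard spherical-cap concentration bound: for $\vu$ uniform on the unit sphere and any fixed $\vx$ with $\|\vx\| \le 1$, $\Pr_{\vu}[\langle \vx, \vu\rangle > \tfrac{2d-1}{2d}] \le \exp(-\Omega(d))$. A membership query at $\vy$ distinguishes $\cB_d$ from $\csphere(\tfrac{2d-1}{2d}, \vu)$ only when $\vy$ falls inside the removed cap, i.e.\ $\|\vy\| \le 1$ and $\langle \vy, \vu\rangle > \tfrac{2d-1}{2d}$. If the algorithm is run against the oracle for $\cB_d$ (which always answers ``yes''), its query sequence becomes deterministic; a union bound over any $\poly(d)$ queries then shows that for a $1 - \exp(-\Omega(d))$ fraction of directions $\vu$ none of the queries is revealing, so the two oracles produce identical answer sequences. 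Hence $\exp(\Omega(d))$ queries are required.

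The main obstacle is bullet 3, specifically handling adaptivity (resolved by coupling to the ``always yes'' run) and plugging in the right spherical-cap volume estimate; bullets 1 and 2 are bookkeeping. Given the three bullets, the Weak Membership conclusion is immediate: take $\phi = -\tfrac{4d-1}{4d} I$ and weak-membership slack $\delta = \tfrac{1}{4d}$. Bullet 1 forces the answer ``$\phi \in \Phi(\cP)^{+\delta}$'' when $\cP = \cB_d$, because a Frobenius ball of radius $\tfrac{1}{4d}$ around $\phi$ lies inside $\Phi_d$ (Frobenius dominates spectral); bullet 2 forces the complementary answer ``$\phi \notin \Phi(\cP)^{-\delta}$'' when $\cP = \csphere(\tfrac{2d-1}{2d}, \vu)$, since every matrix within spectral-norm distance $\tfrac{1}{4d}$ of $\phi$ lies outside $\capPhi_d(\tfrac{2d-1}{2d}, \vu)$; and bullet 3 forbids any $\poly(d)$-query algorithm from deciding which world it is in, contradicting the existence of such a Weak Membership routine.
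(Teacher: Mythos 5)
Your proofs of bullets 1 and 2 match the paper's approach exactly: the triangle-inequality spectral bound for bullet 1, and testing the single point $-\vu$ for bullet 2. In fact your inner-product computation $\langle(\mA - \tfrac{4d-1}{4d}I)(-\vu),\vu\rangle > \tfrac{2d-1}{2d}$ is cleaner than the paper's, which bounds $\|(\mA-\tfrac{4d-1}{4d}I)(-\vu)\|_2$ rather than the relevant inner product and leaves the reader to supply the step that the image vector actually violates the cap half-space.

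For bullet 3 you take a genuinely different route. The paper constructs the explicit family $U = \{\,\pm 1/\sqrt d\,\}^d$ of $2^d$ directions, verifies by a short computation that the caps $\{\vx\in\cB_d : \langle\vx,\vu\rangle>\tfrac{2d-1}{2d}\}$ are pairwise disjoint for distinct $\vu\in U$, and concludes that any $<2^d$ queries must miss some cap. You instead invoke spherical-cap concentration for a uniformly random $\vu$ plus a union bound over the $\poly(d)$ queries made against the "always-yes" $\cB_d$ oracle. Both are valid and comparably short; the paper's argument is self-contained and combinatorial, while yours leans on a standard concentration estimate. One caveat with your coupling argument: the query sequence against the $\cB_d$ oracle is deterministic only if the algorithm is deterministic; for a randomized algorithm you would additionally average over its coins (or invoke Yao's principle), which is routine but worth stating. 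Your final reduction to Weak Membership with slack $\delta = \tfrac{1}{4d}$ is carefully spelled out, arguably more so than the paper's one-line remark "as long as $\epsilon\le 1$."
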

\begin{proof}
    \textit{1.} A linear map $\phi$ is in $\Phi_d$ iff $\norm{\phi}_2 \leq 1$. Indeed, we have $\norm{-\frac{4d-1}{4d} I}_2 = \frac{4d-1}{4d}$ and $\norm{\mA-\frac{4d-1}{4d} I}_2 \leq \norm{\mA}_2 + \norm{\frac{4d-1}{4d} I}_2 \leq 1$ for all $\norm{\mA}_2 \leq \frac{1}{4d}$.
    
    \textit{2.} A linear map $\phi$ in $\capPhi_d\p{\frac{2d-1}{2d},\vu}$ must satisfy $\phi(-\vu) \in \csphere\p{\frac{2d-1}{2d},\vu}$. Indeed, we have $-\frac{4d-1}{4d} (-\vu) \not \in \csphere\p{\frac{2d-1}{2d},\vu}$.  Moreover, $$\norm{\p{\mA-\frac{4d-1}{4d} I}(-\vu)}_2 \geq  \frac{4d-1}{4d} - \norm{\mA \vu}_2 \geq \frac{2d-1}{2d}$$
    for all $\norm{\mA}_2 \leq \frac{1}{4d}$.

    These two arguments and the fact that $\norm{\mA}_2 \leq \norm{\mA}_F$ prove any algorithm for Weak Membership must accept $-\frac{4d-1}{4d} I$ for $\cP = \cB_d$ and reject it for $\cP = \csphere\p{\frac{2d-1}{2d},\vu}$, as long as $\epsilon \leq 1$.  Lastly, we show that the membership oracles for these $\cP$ sets are indistinguishable under poly$(d)$ calls.  Therefore, we cannot hope to build a Weak Membership algorithm for $\Phi$ that is expected to return two different things given two indistinguishable oracles.
    
    \textit{3.} Consider the set $U = \set{\vu \in \bbR^{d} \middle| \vu[i] = \pm1/\sqrt{d} \text{ for all }i \in [d]}$.  We will show that, for all $\vu_1,\vu_2 \in U$, $\vu_1 \ne \vu_2$, the corresponding caps are disjoint.  That is, if $S_1 = \set{\vx \in \cB_d \middle| \inp{\vx,\vu_1}>\frac{2d-1}{2d}}$ and $S_2 = \set{\vx \in \cB_d \middle| \inp{\vx,\vu_2}>\frac{2d-1}{2d}}$, then $S_1 \cap S_2 = \emptyset$.  Let $i$ be a coordinate upon which $\vu_1,\vu_2$ differ.  Assume WLOG $\vu_1[i]=-1$ and $\vu_2[i]=1$.  We will show $\vx_1[i]<0$ for all $\vx_1 \in S_1$.  This is due to the fact that $\vx^* = \arg\max \set{\inp{\vx,\vu_1} \middle| \vx \in \cB_d, \vx[i] \geq 0}$ satisfies $\vx^*[i] = 0$ and $\vx^*[j] = \frac{\sqrt{d}}{\sqrt{d-1}}\vu_1[j]$ for all $j\ne i$.  Thus, 
    $$\inp{\vx^*,\vu_1} = (d-1)\frac{1}{\sqrt{d(d-1)}} = \frac{\sqrt{d^2-d}}{d} \leq \frac{d-1/2}{d}$$
    Similarly, $\vx_2[i]>0$ for all $\vx_2 \in S_2$.  This proves the desired $S_1 \cap S_2 = \emptyset$.

    Thus, the only way to distinguish $\cB_d$ from $\csphere\p{\frac{2d-1}{2d},\vu}$ for all $\vu \in U$ is to query membership of a point in each of the $2^d$ potential caps.  Therefore, we cannot resolve $\mathrm{WeakMembership}\p{-\frac{4d-1}{4d}I}$ with $\text{poly}(d)$ membership queries to $\cP$.
\end{proof}

\section{Minimizing Linear Swap Regret with Weak Oracles}\label{app:weak}

\subsection{A Weak Semi-Separation Oracle}

We begin with a definition of $\epsilon$-approximate fixed points and an efficient algorithm to find them or to (nearly) certify that one does not exist.

\begin{lemma}[$\epsilon$-approximate fixed-points]\label{lem:eps-approx-fp}
    Let $\cP \subset \bbR^d$ be a compact convex set with $\cB_d(r) \subseteq \cP \subseteq \cB_d(\vzero, R)$
    and consider any affine transformation $\phi := (\mB, \vx_0) \in \bbR^{d \times d} \times \bbR^d$. Then, there exists an oracle-polynomial time algorithm that either returns an $\epsilon$-approximate fixed point $\vx \in \cP^{+\epsilon}$ of $\phi$ such that $\norm{\phi(\vx) - \vx}_2^2 \leq \epsilon$, or certifies that $\norm{\phi(\vx) - \vx}_2^2 > \epsilon / 2$ for all $\vx \in \cP^{-\epsilon}$.
\end{lemma}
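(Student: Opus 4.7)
The plan is to cast the problem as minimizing the convex quadratic $f(\vx) := \norm{\phi(\vx) - \vx}_2^2 = \norm{(\mB - \mI)\vx + \vx_0}_2^2$ over $\cP$, and then appeal directly to the weak convex optimization machinery developed by \citet{Grotschel1993:Geometric}, stated in the preliminaries as Theorem~\ref{thm:ellipsoid-convex-sep-to-opt}. Since $f$ is a convex quadratic function in $\vx$, it can be evaluated to any desired accuracy in polynomial time given $\phi$, so it admits the rational evaluation oracle required by Theorem~\ref{thm:ellipsoid-convex-sep-to-opt}. Together with the weak separation oracle for $\cP$ (which we can build from the weak membership oracle using our bounds $\cB_d(r) \subseteq \cP \subseteq \cB_d(\vzero, R)$), this gives an oracle-polynomial time algorithm that for any precision $\epsilon' > 0$ returns a point $\hat\vx \in \cP^{+\epsilon'}$ with $f(\hat\vx) \leq f(\vx) + \epsilon'$ for all $\vx \in \cP^{-\epsilon'}$.

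Run this weak minimization procedure with precision $\epsilon' = \epsilon/2$, obtaining $\hat\vx \in \cP^{+\epsilon/2} \subseteq \cP^{+\epsilon}$. I would then branch on the observed value $f(\hat\vx)$ (which, since $f$ is a polynomial in $\vx$, we can evaluate to arbitrary precision):
\begin{itemize}
    \item If $f(\hat\vx) \leq \epsilon$, return $\hat\vx$ — it is the desired $\epsilon$-approximate fixed point since $\hat\vx \in \cP^{+\epsilon}$ and $\norm{\phi(\hat\vx) - \hat\vx}_2^2 \leq \epsilon$.
    \item Otherwise $f(\hat\vx) > \epsilon$, and the weak optimality guarantee gives $f(\vx) \geq f(\hat\vx) - \epsilon/2 > \epsilon/2$ for every $\vx \in \cP^{-\epsilon/2}$. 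Since $\cP^{-\epsilon} \subseteq \cP^{-\epsilon/2}$, this certifies that $\norm{\phi(\vx) - \vx}_2^2 > \epsilon/2$ for all $\vx \in \cP^{-\epsilon}$, as required.
\end{itemize}

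I do not anticipate any serious obstacle — the whole argument is essentially a direct invocation of Theorem~\ref{thm:ellipsoid-convex-sep-to-opt}. The only things to be careful about are (i) ensuring $f$ is genuinely convex (immediate, as its Hessian is $2(\mB - \mI)^\top(\mB - \mI) \succeq 0$), and (ii) choosing the precision parameter so the same $\epsilon'$ simultaneously controls the feasibility slack of $\hat\vx$ and the suboptimality of $f(\hat\vx)$ tightly enough to produce the clean $\epsilon$ vs.\ $\epsilon/2$ dichotomy stated in the lemma; the symmetric choice $\epsilon' = \epsilon/2$ above does exactly this.
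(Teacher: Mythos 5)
Your proof is correct and matches the paper's proof essentially line-for-line: both minimize the convex quadratic $f(\vx) = \norm{\phi(\vx) - \vx}_2^2 = \norm{(\mB-\mI)\vx + \vx_0}_2^2$ via Theorem~\ref{thm:ellipsoid-convex-sep-to-opt} with precision $\epsilon/2$, and branch on whether the returned objective value exceeds $\epsilon$. No gaps.
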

\begin{proof}
    Using Theorem~\ref{thm:ellipsoid-convex-sep-to-opt} for $f(\vx) = \norm{\phi(\vx) - \vx}_2^2 = \norm{(\mA - \mI) \vx + \vx_0}_2^2$, we can compute $\vx \in \cP^{+\epsilon/2} \subset \cP^{+\epsilon}$ such that $\norm{\phi(\vx) - \vx}_2^2 \leq \norm{\phi(\vx') - \vx}_2^2 + \epsilon/2$ for all $\vx' \in \cP^{-\epsilon/2} \supset \cP^{-\epsilon}$.

    If $\norm{\phi(\vx) - \vx}_2^2 \leq \epsilon$ then we have found an $\epsilon$-approximate fixed-point.

    Otherwise, $\norm{\phi(\vx') - \vx'}_2^2 > \epsilon - \epsilon/2 > \epsilon / 2$ for all $\vx' \in \cP^{-\epsilon}$.
\end{proof}

\noindent The following, is the weak version of Lemma~\ref{lem:sep-non-fixed-point-strong}.

\begin{lemma}[Weak semi-separation oracle]\label{lem:sep-non-fixed-point}
    Assume weak oracle access to a compact convex set $\cP \subset \bbR^d$ such that $\cB_d(r) \subseteq \cP \subseteq \cB_d(\vzero, R)$.
    Consider any $\phi \in \lintrans{\cB_d(r)}{\cB_d(\vzero, R)}$. Then, we can construct an oracle-polynomial time algorithm that either returns an $\epsilon$-approximate fixed-point of $\phi$ inside $\cP$, or an $\epsilon$-weak separating hyperplane from the relaxed set of affine endomorphisms $\Phi(\cP, \epsilon)$.
\end{lemma}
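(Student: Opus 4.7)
The plan is to adapt the strong semi-separation oracle (Lemma~\ref{lem:sep-non-fixed-point-strong}), replacing the exact minimization of $f(\vp) := \|\phi(\vp) - \vp\|_2^2$ and the exact maximization of $\vp \mapsto \langle \vp, \vu\rangle$ over $\cP$ by their weak-oracle counterparts, and carefully bookkeeping the resulting slack to obtain either an $\epsilon$-approximate fixed point or an $\epsilon$-weak separating hyperplane from the relaxed endomorphism set $\Phi(\cP,\epsilon) = \lintrans{\cP^{+\epsilon}}{\cP^{-\epsilon}}$.

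First, I would invoke the approximate fixed-point routine of Lemma~\ref{lem:eps-approx-fp} on $\phi$ at a suitable precision. In the success branch, we obtain $\hat{\vx} \in \cP^{+\epsilon}$ with $\|\phi(\hat{\vx}) - \hat{\vx}\|_2^2 \leq \epsilon$, which we return. Otherwise, the underlying weak quadratic minimization from Theorem~\ref{thm:ellipsoid-convex-sep-to-opt} yields a candidate $\hat{\vx}$ with $f(\hat{\vx}) > \epsilon/2$ that additionally satisfies the approximate first-order optimality of Lemma~\ref{lem:approx-optimality}: $\langle \nabla f(\hat{\vx}), \vx - \hat{\vx}\rangle \geq -\delta$ for every $\vx \in \cP^{-\epsilon'}$, where $\delta$ and $\epsilon'$ can be made polynomially small. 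Writing $\phi(\vp) = \mB \vp + \vx_0$, $\mM := \mB - \mI$, and $\vu := \phi(\hat{\vx}) - \hat{\vx}$, the identity $\nabla f(\vp) = 2 \mM^\top (\phi(\vp) - \vp)$ and $\mM(\vx - \hat{\vx}) = (\phi(\vx) - \vx) - \vu$ convert the first-order condition into
\begin{equation*}
    \langle \vu,\, \phi(\vx) - \vx\rangle \;\geq\; \|\vu\|_2^2 - \tfrac{\delta}{2} \qquad \text{for all } \vx \in \cP^{-\epsilon'},
\end{equation*}
whose right-hand side is a positive quantity (at least $\epsilon/4$) once $\delta$ is set small enough.

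Next, I would apply the weak optimization oracle for $\cP$ (Theorem~\ref{thm:ellipsoid-sep-to-opt}) to compute $\vp_{\vu} \in \cP^{+\epsilon'}$ with $\langle \vp_{\vu}, \vu\rangle \geq \langle \vq, \vu\rangle - \epsilon'$ for every $\vq \in \cP^{-\epsilon'}$, and propose the hyperplane
\begin{equation*}
    \langle \phi'(\vp_{\vu}) - \vp_{\vu},\, \vu\rangle \;\leq\; 0,
\end{equation*}
which is a linear constraint in the matrix/vector encoding of $\phi' \in \bbR^{d \times (d+1)}$. For any true weak endomorphism $\phi' \in \Phi(\cP,\epsilon)$, since $\vp_{\vu} \in \cP^{+\epsilon'} \subseteq \cP^{+\epsilon}$ we have $\phi'(\vp_{\vu}) \in \cP^{-\epsilon} \subseteq \cP^{-\epsilon'}$, so by the weak maximality of $\vp_{\vu}$ the left-hand side is at most $\epsilon'$. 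Conversely, for the queried $\phi$, extending the displayed first-order inequality from $\cP^{-\epsilon'}$ to $\vp_{\vu} \in \cP^{+\epsilon'}$ costs only an $O(R\epsilon'/r)$ additive perturbation via Lemma~\ref{lem:eps-project-closeness} together with the spectral bound on $\mM$ coming from $\phi \in \lintrans{\cB_d(r)}{\cB_d(\vzero,R)}$ (Lemma~\ref{lem:lintrans-sphere-spectral}), so $\langle \phi(\vp_{\vu}) - \vp_{\vu}, \vu\rangle \geq \Omega(\epsilon)$. Rescaling the hyperplane's normal to have unit $\ell_\infty$-norm (using the bound on $\|\vu\|_2$ and $\|\vp_{\vu}\|_2$) converts this constant gap into the required $\epsilon$-weak separation in the sense of the Weak Separation Problem.

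The main obstacle is bookkeeping the precision parameters so that every error contribution ($\delta$, the optimization slack $\epsilon'$, and the projection slack $O(R\epsilon'/r)$) is strictly dominated by the positive gap $\|\vu\|_2^2 \geq \epsilon/2$ guaranteed in the failure branch, and so that the final hyperplane, after $\ell_\infty$-normalization, still separates with slack $\epsilon$. This requires choosing $\epsilon'$ polynomially smaller than $\epsilon \cdot r/R$ and invoking the weak oracles at correspondingly fine precision; the total running time remains oracle-polynomial in $d$, $\log(R/r)$, and $\log(1/\epsilon)$.
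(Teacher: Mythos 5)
Your proposal is correct and follows the same skeleton as the paper's proof: invoke the weak approximate-fixed-point oracle of Lemma~\ref{lem:eps-approx-fp}, convert the approximate first-order optimality of Lemma~\ref{lem:approx-optimality} for $f$ into the direction $\vu=\phi(\hat\vx)-\hat\vx$, run a second weak optimization over $\cP$ to pick a pivot, and normalize the resulting rank-one linear constraint on $(\mA',\vx_0')$ by its $\ell_\infty$-norm. The one structural deviation is the objective of that second optimization: you weak-maximize $\inp{\vx,\vu}$ over $\cP$ (mirroring the strong Lemma~\ref{lem:sep-non-fixed-point-strong}), whereas the paper weak-maximizes $\va^\top\mA\vx$, so that its pivot $\vx_*$ approximately maximizes $\va^\top\phi(\vx)$; this lets the first-order inequality be invoked only at $\vx=\phi'(\vx_*)\in\cP^{-\epsilon}$ and avoids your extension step out to $\vp_\vu\in\cP^{+\epsilon'}$. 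Your extension is sound in principle, but $O(R\epsilon'/r)$ underestimates its cost: chaining $\norm{\mM}_2=O(R/r)$, the projection distance $O(R\epsilon'/r)$ from Lemma~\ref{lem:eps-project-closeness}, and $\norm{\vu}_2=O(R^2/r)$ gives a perturbation of order $R^4\epsilon'/r^3$, which is still polynomial in $R/r$ and linear in $\epsilon'$ and hence absorbed by a correspondingly smaller choice of $\epsilon'$, so the argument closes.
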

\begin{proof}
    Let $(\mA, \vx_0) \in \bbR^{d \times d} \times \bbR^d$ be the affine transformation corresponding to $\phi$.
    Throughout this proof, we will be using the spectral norm bound $\norm{\mA}_2 \leq 2R / r$ that follows from Lemma~\ref{lem:lintrans-sphere-spectral}. Additionally, $\norm{\mA - \mI}_2 \leq \norm{\mA}_2 + \norm{\mI}_2 \leq 4R / r$ and define $B := 4R / r$.
    
    First, applying Lemma~\ref{lem:eps-approx-fp} on $\phi$ with precision $\epsilon$, we can either compute an $\epsilon$-approximate fixed point of $\phi$, or certify that $\norm{\phi(\vx) - \vx}_2^2 > \epsilon / 2$ for all $\vx \in \cP^{-\epsilon}$. If we found an $\epsilon$-approximate fixed point of $\phi$ we can stop and return it.
    Otherwise, we are guaranteed that
    \[
        \norm{(\mA - \mI) \vx + \vx_0}_2^2 > \epsilon / 2 \quad \forall \vx \in \cP^{-\epsilon}, \numberthis{eq:norm_lb_one}
    \]
    In the rest of the proof, we focus on this case and describe how to construct the desired weak separating hyperplane.
    
    Equation~\eqref{eq:norm_lb_one} implies that there exists a hyperplane that separates $\vzero$ from the set $(\mA - \mI) \cP^{-\epsilon} + \vx_0$,
    which is an affine transformation of a convex set, and hence, it is convex. Our next step is to compute such a hyperplane by finding an approximate projection of $\vzero$ on that set. To this end, we can apply Theorem~\ref{thm:ellipsoid-convex-sep-to-opt} for $f(\vx) = \frac{1}{2} \norm{(\mA - \mI) \vx + \vx_0}_2^2$ with precision $\beta := \frac{\epsilon^2}{16^2} \frac{r}{(6 B^2 R^2)^2}$  and compute a point $\hat{\vx} \in \cP^{+\beta}$ such that $f(\hat{\vx}) \leq f(\vx) + \beta$ for all $\vx \in \cP^{-\beta} \supset \cP^{-\epsilon}$. Thus, by Lemma~\ref{lem:approx-optimality},
    \[
        &\inp{\nabla f(\hat{\vx}), \vx' - \hat{\vx}} \geq - 6 B^2 R^2 \sqrt{\beta / r} \geq -\frac{\epsilon}{16}\\
        \implies &[(\mA - \mI) \hat{\vx} + \vx_0]^\top (\mA - \mI) (\vx' - \hat{\vx}) \geq -
        \frac{\epsilon}{16} \quad \forall \vx' \in \cP^{-\epsilon}\\
        \implies &\va^\top [(\mA - \mI) \vx' + \vx_0] \geq \va^\top [(\mA - \mI) \hat{\vx} + \vx_0] -\frac{\epsilon}{16} \quad \forall \vx' \in \cP^{-\epsilon}\\
        \implies &\va^\top [(\mA - \mI) \vx' + \vx_0] \geq \norm{\va}_2^2 -\frac{\epsilon}{16} \quad \forall \vx' \in \cP^{-\epsilon},
    \]
    where $\va := (\mA - \mI) \hat{\vx} + \vx_0$. Additionally, the Lemma gives us that for $\bar{\vx} := \argmin_{\vx \in \cP^{-\epsilon}} f(\vx)$ it holds $\norm{(\mA - \mI)(\bar{\vx} - \hat{\vx})}_2 \leq 5 B R \sqrt{\beta / r} \leq \frac{\sqrt{\epsilon}}{2 \sqrt{2}}$. Thus,
    \[
        \norm{\va}_2 = \norm{\phi(\hat{\vx}) - \hat{\vx}}_2 \geq \norm{\phi(\bar{\vx}) - \bar{\vx}}_2 - \frac{\sqrt{\epsilon}}{2 \sqrt{2}} \geq \frac{\sqrt{\epsilon}}{2 \sqrt{2}}, \numberthis{eq:sep_norm_lower_bound}
    \]
    where the last inequality follows from \eqref{eq:norm_lb_one}. Taking squares, we have $\norm{\va}_2^2 \geq \epsilon / 8$.

    We conclude that $\va$ defines a strong separating hyperplane of $\vzero$ from $(\mA - \mI) \cP^{-\epsilon} + \vx_0$ as
    \[
        &\va^\top [(\mA - \mI) \vx' + \vx_0] \geq \epsilon / 16 > 0 \quad \forall \vx' \in \cP^{-\epsilon}\\
        \implies &\va^\top \phi(\vx') > \va^\top \vx' \quad \forall \vx' \in \cP^{-\epsilon}. \numberthis{eq:sep_point_from_trans}
    \]
    Next, consider the convex program $\max_{\vx \in \cP} \vec{a}^\top \mA \vx$ and a precision parameter $\gamma := \epsilon^2 / (8 d)$.
    We compute a $\gamma$-weak solution $\vx_* \in \cP^{+\gamma}$ such that
    \[
        \vec{a}^\top \mA \vx_* &\geq \vec{a}^\top \mA \vx - \gamma \quad \forall \vx \in \cP^{-\gamma} \supset \cP^{-\epsilon}.
    \]
    Combining with \eqref{eq:sep_point_from_trans}, we get
    \[
            \va^\top \phi(\vx_*) = \va^\top \mA \vx_* + \va^\top \vx_0 > \vec{a}^\top \vx - \gamma \quad \forall \vx \in \cP^{-\epsilon}.
    \]

    Let $\mat C = \vec{a} \vx_*^\top \in \bbR^{d \times d}$ and consider any $\phi' := (\mA', \vx_0') \in \Phi(\cP, \epsilon)$. By definition, it must hold that $\phi'(\vx_*) = \vx'$ for some $\vx' \in \cP^{-\epsilon}$. Thus,
    \[
        \inp{\mat C, \mA} + \va^\top \vx_0 &= \vec{a}^\top \mA \vx_* + \va^\top \vx_0\\
            &> \vec{a}^\top \vx' - \gamma\\
            &= \vec{a}^\top \phi'(\vx_*) - \gamma\\
            &= \inp{\mat C, \mA'} + \va^\top \vx_0' - \gamma.
    \]
    Finally, to get a valid weak separating hyperplane, it suffices to show that $N := \max(\norm{\mat C}_{\ell_\infty}, \norm{\va}_\infty) > 0$. To this end, note that, by \eqref{eq:sep_norm_lower_bound}, it holds $\norm{\vec a}_2 \geq \epsilon / 8 \implies N > \norm{\va}_2 / \sqrt{d} \geq \epsilon / (8 \sqrt{d})$. We conclude that $(\mat{C} / N, \va / N) \in \bbR^{d \times (d+1)}$ is a valid $\epsilon$ weak separating hyperplane because $\gamma / N \leq \epsilon$ and
    \[
        \frac{1}{N} \inp{\mat{C}, \mA} + \frac{1}{N} \va^\top \vx_0 > \frac{1}{N} \inp{\mat{C}, \mA'} + \frac{1}{N} \va^\top \vx_0' - \epsilon,
    \]
    for all $(\mA, \vx_0') \in \Phi(\cP, \epsilon)$.
\end{proof}

\subsection{Shell Ellipsoid with Weak Oracles}

\begin{algorithm}[ht]
    \caption{$\shellelpsd(\cF)$ either finds $\phi \in \cF^{+\epsilon}$ with an $\epsilon$-approximate fixed point in $\cP$, or returns a weak separating frontier from set $\Phi(\cP, \epsilon)$}\label{algo:shellelpsd}
    \KwData{Weak separation oracle for compact convex strategy set $\cP \subset \bbR^d$ with $\cB_d(r) \subseteq \cP \subseteq \cB_d(\vzero, R)$}
    \KwInput{Compact convex set $\cF \subset \bbR^{d \times (d+1)}$ of affine transformations, precision parameter $\epsilon$}
    \KwOutput{Either a transformation $\phi \in \cF^{+\epsilon}$ with an $\epsilon$-approximate fixed point inside $\cP$, or a compact convex set $\cQ \supseteq \Phi(\cP, \epsilon)^{-\epsilon}$ with $\vol{ \cQ \cap \cF^{-\epsilon} } < \epsilon$}

    \DontPrintSemicolon

    Initialize $\cQ = \bbR^{d \times (d+1)}$ and ellipsoid $\mathcal{E} \supset \cF$

    Initialize precision parameter $0 < \delta < \epsilon$ as in the ellipsoid method (Theorem~\ref{thm:ellipsoid-sep-to-opt})

    \Repeat{$\vol{\mathcal{E}} < V_{d \times (d+1)}(\epsilon)$}{
        Compute a centroid $\phi$ of $\mathcal{E}$

        \uIf{$\phi \notin \cF^{+\delta}$}{
            Generate a hyperplane that $\delta$-weakly separates $\phi$ from $\cF$
        }
        \uElse{
            Execute the semi-separation oracle from Lemma~\ref{lem:sep-non-fixed-point}
            
            \uIf{$\phi$ has a $\delta$-approximate fixed point inside $\cP$}{
                \Return $\phi$
            }
            \uElse{
                Get a hyperplane that $\delta$-weakly separates $\phi$ from $\Phi(\cP, \delta)$
    
                Add the halfspace containing $\Phi(\cP, \delta)^{-\delta}$ to $\cQ$
            }
        }
    
        Update the ellipsoid $\mathcal{E}$
    }

    \Return $\cQ$
\end{algorithm}

This is the weak version of Lemma~\ref{lem:shellelpsd-strong}.

\begin{lemma}[Shell Ellipsoid with Weak Oracles]\label{lem:shellelpsd}
    Let $\cP \subset \bbR^d$ be a compact convex set to which we have access via a weak separation oracle and which is well-bounded with $\cB_d(r) \subseteq \cP \subseteq \cB_d(\vzero, R)$. For any compact convex set $\cM \subset \bbR^{d \times (d+1)}$ and precision parameter $\epsilon > 0$, Algorithm~\ref{algo:shellelpsd} runs in oracle-polynomial time and guarantees that
    \begin{itemize}
        \item if $\vol{\cF \cap \Phi(\cP, \epsilon)} \geq V_{d \times (d+1)}(\epsilon)$ then the algorithm returns an affine transformation $\phi \in \cF^{+\epsilon}$ with an $\epsilon$-approximate fixed point inside $\cP$,

        \item if it fails to find such an affine transformation, it returns a separating frontier $\cQ$ that is compact and convex and satisfies $\Phi(\cP, \epsilon)^{-\epsilon} \subseteq \cQ$ and $\vol{\cQ \cap \cF^{-\epsilon}} < V_{d \times (d+1)}(\epsilon)$.
    \end{itemize}
\end{lemma}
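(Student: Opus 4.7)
The plan is to mirror the proof of the strong version (Lemma~\ref{lem:shellelpsd-strong}), introducing an internal slack parameter $\delta$ that is chosen as a sufficiently small polynomial function of $\epsilon$, $r$, $R$, and $d$, and carefully tracking how the weak-oracle errors propagate through the ellipsoid method. The argument splits into four pieces: (i) termination and running time, (ii) correctness of the ``return $\phi$'' branch, (iii) correctness of the ``return $\cQ$'' branch, and (iv) the conditional claim that a sufficiently large intersection $\cF \cap \Phi(\cP, \epsilon)$ forces branch (ii).

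For (i), I would initialize the ellipsoid $\mathcal{E}$ so as to contain $\cF$ and invoke the standard ellipsoid volume-contraction bound: each iteration shrinks $\vol{\mathcal{E}}$ by a factor of $\exp(-\Omega(1/d^2))$, so the loop terminates after $O(d^2 \log(1/\epsilon))$ iterations (up to factors depending on the initial diameter and on $\log(R/r)$). Each iteration performs a constant number of oracle operations: a $\delta$-weak membership query for $\cF$, an invocation of the weak semi-separation oracle from Lemma~\ref{lem:sep-non-fixed-point}, and an ellipsoid centroid/update step. All of these run in oracle-polynomial time.

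For (ii), when the algorithm returns $\phi$, the centroid $\phi$ passed the check $\phi \in \cF^{+\delta} \subseteq \cF^{+\epsilon}$, and the semi-separation oracle produced a $\delta$-approximate, hence $\epsilon$-approximate, fixed point of $\phi$ in $\cP$, discharging the first output guarantee. For (iii), each halfspace added to $\cQ$ is a $\delta$-weak separator from $\Phi(\cP, \delta)$ and therefore contains $\Phi(\cP, \delta)^{-\delta} \supseteq \Phi(\cP, \epsilon)^{-\epsilon}$ (using the monotonicity $\Phi(\cP, \epsilon) \subseteq \Phi(\cP, \delta)$ together with $\delta \leq \epsilon$), so $\Phi(\cP, \epsilon)^{-\epsilon} \subseteq \cQ$. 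By induction over the ellipsoid updates, the terminating ellipsoid satisfies $\mathcal{E} \supseteq \cF^{-\delta} \cap \cQ \supseteq \cF^{-\epsilon} \cap \cQ$, and the termination condition $\vol{\mathcal{E}} < V_{d\times(d+1)}(\epsilon)$ then yields the required $\vol{\cQ \cap \cF^{-\epsilon}} < V_{d\times(d+1)}(\epsilon)$. Compactness and convexity of $\cQ$ follow by intersecting the accumulated halfspaces with the initial bounding ellipsoid.

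For (iv), I argue by contrapositive: if the algorithm returns $\cQ$, then the same induction gives $\mathcal{E} \supseteq \cF^{-\delta} \cap \Phi(\cP, \delta)^{-\delta}$, so $\vol{\cF^{-\delta} \cap \Phi(\cP, \delta)^{-\delta}} < V_{d\times(d+1)}(\epsilon)$. The main obstacle is transferring this erosion-volume bound back to the original intersection $\vol{\cF \cap \Phi(\cP, \epsilon)}$. I would handle this with a standard surface-area-times-$\delta$ estimate, using Lemma~\ref{lem:lintrans-nice} to bound the outer radius of $\Phi(\cP)$ and hence control the boundary-strip volume lost when eroding $\cF$ and $\Phi(\cP, \epsilon)$ by $\delta$. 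For $\delta$ chosen polynomially small in $\epsilon, r, 1/R, 1/d$, this volume loss becomes negligible compared to $V_{d\times(d+1)}(\epsilon)$, yielding $\vol{\cF \cap \Phi(\cP, \epsilon)} < V_{d\times(d+1)}(\epsilon)$, which is the desired contrapositive. Calibrating $\delta$ so that all accumulated slack terms are simultaneously absorbed into the stated $\epsilon$-guarantees, while keeping the overall running time polynomial, is the central bookkeeping challenge.
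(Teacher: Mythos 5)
Your parts (i), (ii), and (iii) match the paper's proof in both spirit and mechanics: termination follows from ellipsoid volume contraction, the returned $\phi$ visibly belongs to $\cF^{+\epsilon}$ with an approximate fixed point, and the halfspaces accumulated in $\cQ$ all contain $\Phi(\cP,\delta)^{-\delta} \supseteq \Phi(\cP,\epsilon)^{-\epsilon}$ so that the terminating-ellipsoid volume bound transfers to $\cQ \cap \cF^{-\epsilon}$. The problem is your part (iv), which takes a genuinely different route from the paper and contains a gap that no calibration of $\delta$ can close.

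Your contrapositive wants to pass from $\vol{\cF^{-\delta} \cap \Phi(\cP,\delta)^{-\delta}} < V_{d\times(d+1)}(\epsilon)$ to $\vol{\cF \cap \Phi(\cP,\epsilon)} < V_{d\times(d+1)}(\epsilon)$ by bounding the boundary-strip volume lost under erosion. But that inequality runs in the wrong direction: the un-eroded intersection is \emph{larger}, so the most a surface-area-times-$\delta$ estimate can give is $\vol{\cF \cap \Phi(\cP,\epsilon)} < V_{d\times(d+1)}(\epsilon) + O(\delta)$. To land strictly below $V_{d\times(d+1)}(\epsilon)$ you would need the $O(\delta)$ additive term to be absorbed by the margin $V_{d\times(d+1)}(\epsilon) - \vol{\cF^{-\delta}\cap\Phi(\cP,\delta)^{-\delta}}$, and that margin can be arbitrarily small because the termination test in Algorithm~\ref{algo:shellelpsd} compares against the fixed threshold $V_{d\times(d+1)}(\epsilon)$. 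Shrinking $\delta$ shrinks the strip but does not shrink it relative to an unknown, possibly vanishing margin; making it work would require changing the termination threshold to something like $V_{d\times(d+1)}(\epsilon) - (\text{strip bound})$, which you do not propose and which re-introduces the same calibration issue one level down.

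The paper argues the first bullet directly rather than by contrapositive. It introduces the notion of a ``False Positive'' --- a centroid $\phi$ that has an approximate fixed point in $\cP$ yet lies outside $\Phi(\cP,\epsilon)$ --- and observes that all of $\Phi(\cP,\epsilon)$ genuinely admits approximate fixed points by Brouwer, so the semi-separation oracle never spuriously separates such points. If a False Positive ever appears, the algorithm returns it and the first bullet holds trivially, irrespective of the volume hypothesis. If no False Positive appears, the semi-separation queries combined with the $\cF$-separation queries constitute an honest weak separation oracle for the convex set $\cF \cap \Phi(\cP,\epsilon)$, so the loop is precisely the GLS weak Nonemptiness routine (Theorem 4.2.2 of \citealt{Grotschel1993:Geometric}) run on that set with volume threshold $V_{d\times(d+1)}(\epsilon)$; when the set has volume at least that threshold, the routine must return a point. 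This sidesteps any erosion-volume transfer entirely. To fix your proof you should adopt this direct case split on whether a False Positive occurs, rather than the contrapositive volume argument.
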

\begin{proof}
    First note that the ellipsoid method from Theorem~\ref{thm:ellipsoid-sep-to-opt} only needs to fix a parameter $\delta > 0$ in the beginning based on $\epsilon$ and the rest of the input parameters.
    Using Lemma~\ref{lem:sep-non-fixed-point} we can see that the separation oracle for the ellipsoid algorithm in Algorithm~\ref{algo:shellelpsd} requires oracle-polynomial time. Thus, the whole algorithm runs in oracle-polynomial time. We move on to proving the correctness of the algorithm.
    
    For the purposes of this proof, a False Positive affine transformation $\phi$ is one that has an approximate fixed point inside $\cP$, but is not contained in $\Phi(\cP, \epsilon)$. Note also that all transformations $\phi' \in \Phi(\cP, \epsilon)$ must have an $\epsilon$-approximate fixed point that can be found by Lemma~\ref{lem:eps-approx-fp}. This holds because $\vx \mapsto \phi'(\vx)$ can be seen as a continuous map of $\cP^{-\epsilon}$ to $\cP^{-\epsilon}$ and thus, by Brouwer's fixed-point theorem, it must have a fixed point $\vx \in \cP^{-\epsilon}$.

    Let us consider the case $\vol{\cF \cap \Phi(\cP, \epsilon)} \geq V_{d \times (d+1)}(\epsilon)$. First, assume that our separation oracle never gives a False Positive response during the execution of the algorithm. Then the whole procedure is equivalent to solving the Nonemptiness problem on the convex set $\cF \cap \Phi(\cP, \epsilon)$ using a separation oracle, which can be done in oracle-polynomial time \citep[Theorem 4.2.2]{Grotschel1993:Geometric}. Since $\vol{\cF \cap \Phi(\cP, \epsilon)} \geq V_{d \times (d+1)}(\epsilon)$, the algorithm succeeds in finding a valid transformation $\phi \in (\cF \cap \Phi(\cP, \epsilon))^{+\epsilon} \subset \cF^{+\epsilon}$ with an $\epsilon$-approximate fixed point inside $\cP$. Now assume that one of the centroids during the execution of the algorithm is a False Positive. In this case, the algorithm simply returns this centroid and halts. Thus, we have proven that the first guarantee of the Lemma is always satisfied.

    Now consider the case $\vol{\cF \cap \Phi(\cP, \epsilon)} < V_{d \times (d+1)}(\epsilon)$. In this case, \shellelpsd{} will either find a False Positive transformation $\phi \in \cF^{+\epsilon}$ with an $\epsilon$-approximate fixed point inside $\cP$, or it will produce a collection of halfspaces inside $\cQ$ (with $\Phi(\cP, \epsilon)^{-\epsilon} \subset \cQ$, as each halfspace contains $\Phi(\cP, \epsilon)^{-\epsilon}$) that, together with the halfspaces $\cal{W}$, produced by the separation oracle for $\cF$, satisfy $\vol{\cQ \cap \cal{W} \cap \cF} < V_{d \times (d+1)}(\epsilon) \implies \vol{\cQ \cap \cF^{-\epsilon}} < V_{d \times (d+1)}(\epsilon)$. This concludes the proof that $\cQ$ is a valid separating frontier, as is desired for the Lemma.
\end{proof}

\subsection{Weak variants of Shell Gradient Descent and Shell Projection} %

\begin{algorithm}[ht]
    \caption{Shell Gradient Descent on the set $\cX$ (with weak oracles)}\label{algo:improper_ogd}
    \KwData{Compact convex set $\cM_{\cX} \supset \cX$, %
    step sizes $\eta_t$, and precision parameters $\epsilon_t$}

    \For{$t = 1, 2, \dots T$}{
        Output $\vx_t \in \cX_t$
        
        and receive feedback $\lt \in [-1, 1]^d$

        Let $\cX_{t+1} \subseteq \cM_{\cX}$ be any convex shell set of $\cX$

        $\vx_{t+1} = \Pi_{\cX_{t+1}}(\vx_t - \eta_t \lt)$ with precision $\epsilon_t$
    }
\end{algorithm}

Below, we give the weak version of Theorem~\ref{thm:improper_ogd-strong}.
    
\begin{theorem}\label{thm:improper_ogd}
    Let $\cM_{\cX} \subset \bbR^d$ be a compact convex set with diameter $D$ that contains the convex strategy set $\cX$. Let $\vell_1, \dots, \vell_T \in [-1, 1]^d$ and consider any arbitrary sequence of shell sets $\cX_1, \dots, \cX_T$ such that $\cX \subseteq \cX_t \subseteq \cM_{\cX}$ for all $t \in [T]$. Pick any $\vx_1 \in \cX_1$ and assume $\eta_{t+1} \leq \eta_t$. Then, Algorithm~\ref{algo:improper_ogd} has regret
    \[
        \max_{\vx^* \in \cX} \sum_{t=1}^T \inp{\lt, \vx_t - \vx^*} \leq \frac{D^2}{2 \eta_T} + \sum_{t=1}^T \frac{\eta_t}{2} \norm{\lt}_2^2 + D \sum_{t=1}^T \frac{\epsilon_t}{\eta_t}
    \]
\end{theorem}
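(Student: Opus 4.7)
The plan is to mirror the potential-function argument of Theorem~\ref{thm:improper_ogd-strong}, while tracking the extra slack introduced by using an approximate projection instead of the exact one. Fix any comparator $\vx^* \in \cX$, and note that $\vx^* \in \cX_t$ for every $t$ by the assumption $\cX \subseteq \cX_t$. Let $\hat{\vx}_{t+1} := \Pi_{\cX_{t+1}}(\vx_t - \eta_t \lt)$ be the \emph{exact} projection, while $\vx_{t+1}$ denotes the $\epsilon_t$-approximate projection actually computed by Algorithm~\ref{algo:improper_ogd}; the natural guarantee of weakly projecting via the convex quadratic $\|\vx - (\vx_t-\eta_t\lt)\|_2^2$ over a weakly-described $\cX_{t+1}$ (cf.\ Theorem~\ref{thm:ellipsoid-convex-sep-to-opt}) yields $\|\vx_{t+1} - \hat{\vx}_{t+1}\|_2 \leq \epsilon_t$.

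Next, I would combine nonexpansivity of Euclidean projection onto the convex set $\cX_{t+1}$ (which applies since $\vx^* \in \cX_{t+1}$) with the triangle inequality to bound the potential evolution:
\[
\|\vx_{t+1}-\vx^*\|_2^2 &\leq \bigl(\|\hat{\vx}_{t+1}-\vx^*\|_2 + \epsilon_t\bigr)^2 \leq \|\hat{\vx}_{t+1}-\vx^*\|_2^2 + 2D\epsilon_t + \epsilon_t^2 \\
&\leq \|\vx_t - \eta_t\lt - \vx^*\|_2^2 + 2D\epsilon_t + \epsilon_t^2,
\]
using that $\hat{\vx}_{t+1}, \vx^* \in \cM_{\cX}$ has diameter $D$. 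Expanding the square on the right as in the strong proof and rearranging produces the per-round bound
\[
\inp{\lt, \vx_t - \vx^*} \leq \frac{1}{2\eta_t}\bigl(\|\vx_t-\vx^*\|_2^2 - \|\vx_{t+1}-\vx^*\|_2^2\bigr) + \frac{\eta_t}{2}\|\lt\|_2^2 + \frac{D\epsilon_t}{\eta_t} + \frac{\epsilon_t^2}{2\eta_t}.
\]

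Finally, I would sum over $t=1,\dots,T$ and apply the same Abel-summation telescoping as in the strong proof, using $\eta_{t+1} \leq \eta_t$ together with $\|\vx_t - \vx^*\|_2 \leq D$ (under the convention $1/\eta_0 = 0$) to bound the first sum by $D^2/(2\eta_T)$. The residual $\epsilon_t^2/(2\eta_t)$ can be folded into $D\epsilon_t/\eta_t$ under the (WLOG) assumption $\epsilon_t \leq 2D$, giving exactly the advertised bound.

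The only nontrivial point is pinning down the precise semantics of ``projection with precision $\epsilon_t$'' in the weak-oracle regime: if the returned $\vx_{t+1}$ only lies in $\cX_{t+1}^{+\epsilon_t}$ rather than in $\cX_{t+1}$ itself, I would handle this by arguing via Lemma~\ref{lem:colinear-projections}/\ref{lem:eps-project-closeness} that an $O(\epsilon_t)$ perturbation of $\hat{\vx}_{t+1}$ suffices, so the same triangle-inequality bookkeeping goes through with at most a constant-factor blowup in the $\epsilon_t$'s. Apart from that bookkeeping, the proof is a direct parallel of the strong version and introduces no new conceptual ingredients.
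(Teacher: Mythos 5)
Your proof follows the same nonexpansive–projection, potential-function route as the paper's and is essentially correct. The only substantive deviation is in how you handle the approximation error: you square the forward triangle inequality $\norm{\vx_{t+1}-\vx^*} \leq \norm{\hat{\vx}_{t+1}-\vx^*} + \epsilon_t$, which places the $+\epsilon_t^2$ on the unfavorable side and forces you to absorb an extra $\epsilon_t^2/(2\eta_t)$ per step; under your stated fudge $\epsilon_t\leq 2D$ this yields $2D\sum_t\epsilon_t/\eta_t$, a factor $2$ worse than the stated $D\sum_t\epsilon_t/\eta_t$. The paper instead squares the reverse triangle inequality, writing $\norm{\hat{\vx}_{t+1}-\vx^*}^2 \geq \norm{\vx_{t+1}-\vx^*}^2 - 2\epsilon_t\norm{\vx_{t+1}-\vx^*} + \epsilon_t^2 \geq \norm{\vx_{t+1}-\vx^*}^2 - 2D\epsilon_t$, so the $\epsilon_t^2$ has the favorable sign and drops for free, giving exactly the claimed constant. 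Your closing concern about whether $\vx_{t+1}$ stays in $\cX_{t+1}$ is actually moot: only the bound $\norm{\vx_{t+1}-\hat{\vx}_{t+1}}\leq\epsilon_t$ is used in the potential argument, not membership of $\vx_{t+1}$ in $\cX_{t+1}$, so no extra invocation of Lemma~\ref{lem:colinear-projections} or Lemma~\ref{lem:eps-project-closeness} is needed here.
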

\begin{proof}
\begin{align*}
    \norm{\Pi_{\cX_{t+1}}(\vx_t-\eta_t \lt)-\vx^*}^2&\leq \norm{\vx_t-\eta_t \lt-\vx^*}^2\\
    &=\norm{\vx_t-\vx^*}^2+\eta_t^2\norm{\lt}^2-2\eta_t  \inp{\lt,\vx_t-\vx^*}
\end{align*}
where the inequality follows since $\vx^* \in \cX$ and therefore $\vx^* \in \cX_{t}$ for all $t$. Also,
\[
\norm{\Pi_{\cX_{t+1}}(\vx_t-\eta_t \lt)-\vx^*} &\geq  \norm{\vx_{t+1}-\vx^*} - \norm{\vx_{t+1}-\Pi_{\cX_{t+1}}(\vx_t-\eta_t \lt)}\\
&\geq \norm{\vx_{t+1}-\vx^*} - \eps_t
\]
and
\[
\norm{\Pi_{\cX_{t+1}}(\vx_t-\eta_t \lt)-\vx^*}^2 &\geq \norm{\vx_{t+1}-\vx^*}^2 - 2\eps_t \norm{\vx_{t+1}-\vx^*} + \eps_t^2 \geq \norm{\vx_{t+1}-\vx^*}^2 - 2D\eps_t
\]
Thus,
\begin{align*}
    \inp{\lt, \vx_t-\vx^*} &\leq \frac{1}{2\eta_t}\p{\norm{\vx_t-\vx^*}^2-\norm{\vx_{t+1}-\vx^*}^2+2D\eps_t}+\frac{\eta_t}{2} \norm{\lt}^2
\end{align*}
Summing over $t$
\begin{align*}
    \sum_{t=1}^T (\lt(\vx_t)-\lt(\vx^*))
    &\leq \sum_{t=1}^T \norm{\vx_t-\vx^*}^2 \p{\frac{1}{2\eta_t}-\frac{1}{2\eta_{t-1}}}+ \sum_{t=1}^T \frac{\eta_t}{2}\norm{\lt}^2+ D \sum_{t=1}^T \frac{\epsilon_t}{\eta_t}\\
    &\leq \frac{D^2}{2\eta_T}+ \sum_{t=1}^T \frac{\eta_t}{2}\norm{\lt}^2+ D \sum_{t=1}^T \frac{\epsilon_t}{\eta_t}
\end{align*}
under the convention $1/\eta_0 = 0$, since $\norm{\vx_t-\vx^*} \leq D$.
\end{proof}

\begin{algorithm}[ht]
    \caption{$\shellproj_\Phi(\phi_a)$ --- projects $\phi_a$ to a shell of $\Phi$}\label{algo:noisyproj}
    \KwData{Weak separation oracle for a compact convex strategy set $\cP \subset \bbR^d$, bounding balls $\cB_d(r) \subseteq \cP \subseteq \cB_d(\vzero, R)$, compact convex set $\cM \subset \bbR^{d \times (d+1)}$ that is guaranteed to contain the endomorphic transformations $\Phi(\cP) \subset \cM$}
    \KwInput{Affine transformation $\phi_a \in \bbR^{d \times (d+1)}$, precision parameters $\eta$ and $\epsilon \in (0, 1)$}
    \KwOutput{Compact convex set $\Tilde{\Phi} \supset \Phi(\cP, \eta)^{-\epsilon}$, affine transformation $\phi_b \in \Tilde{\Phi}^{+\epsilon}$ that is an $\epsilon$-approximate projection of $\phi_a$ onto $\Tilde{\Phi}$ and is guaranteed to have a $\min(\epsilon, \eta)$-approximate fixed point inside $\cP$}
    
    \DontPrintSemicolon

    Initialize $\Tilde{\Phi} := \cM$

    $H := 2 (\text{diam}(\cM) + \text{dist}(\phi_a, \cM))$

    Let $\epsilon' := \min(\eta, \poly(\epsilon, r, \frac{1}{H}, \frac{1}{R}))$

    \For{$q \gets 0$, increment by $\eps/(2H)$}{

        Run $\shellelpsd(\Tilde{\Phi} \cap \cB_{d \times (d+1)}(\phi_a, q))$ with precision $\epsilon'$
        
        \eIf{it finds $\phi_b$ with an $\epsilon'$-approximate fixed point inside $\cP$}{
            \Return $\Tilde{\Phi}$ and $\phi_b$
        }{
            
            Update $\Tilde{\Phi}$ with the separating frontier found by \shellelpsd{}

        }
    }

\end{algorithm}

Next, we give the weak version of Theorem~\ref{thm:shellyproj-strong}.

\begin{proposition}[Weak Shell Projection]\label{prop:shellyproj}
    Let $\cP \subset \bbR^d$ be a compact convex strategy set that is bounded between balls $\cB_d(r) \subseteq \cP \subseteq \cB_d(\vzero, R)$.
    Let $\cM$ be a compact convex set such that $\Phi(\cP, \eta) \subset \cM$. For any $\phi_a \in \bbR^{d \times (d+1)}$, Algorithm~\ref{algo:noisyproj} runs in oracle-polynomial time and returns a set $\Tilde{\Phi} \supset \Phi(\cP, \eta)^{-\epsilon}$ and an affine transformation $\phi_b = \shellproj_\Phi(\phi_a)$ that admits a $\min(\epsilon, \eta)$-approximate fixed point $\vx \in \cP^{+\eta}$ with $\norm{\phi_b(\vx) - \vx}_2^2 \leq \epsilon$ and
    \[
        \norm{\phi_b - \Pi_{\Tilde{\Phi}}(\phi_a)}_F^2 \leq \epsilon,
    \]
    where $\Pi_{\Tilde{\Phi}}(\phi_a)$ is the projection of $\phi_a$ onto $\Tilde{\Phi}$.
\end{proposition}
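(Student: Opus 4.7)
The proof will mirror the architecture of Theorem~\ref{thm:shellyproj-strong}, with careful bookkeeping of the extra slack introduced by weak oracles and by working with $\Phi(\cP,\eta)$ instead of $\Phi(\cP)$. I would split the argument into three parts.

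First, I would establish a loop invariant: at every iteration, $\Phi(\cP,\eta)^{-\epsilon} \subseteq \tilde\Phi$. This holds initially since $\tilde\Phi = \cM \supseteq \Phi(\cP,\eta)$ by hypothesis. Each update intersects $\tilde\Phi$ with a separating frontier $\cQ$ returned by \shellelpsd{} (Lemma~\ref{lem:shellelpsd}) at precision $\epsilon'$. Because $\epsilon' \leq \min(\eta,\epsilon)$, Lemma~\ref{lem:subset-lintrans} yields $\Phi(\cP,\epsilon')^{-\epsilon'} \supseteq \Phi(\cP,\eta)^{-\epsilon}$, so the invariant persists. Termination then follows from Lemma~\ref{lem:eps-inscribed-lintrans}, which gives $\cB_{d\times(d+1)}(\phi_{\vec a},\, r/(4R)) \subseteq \Phi(\cP,\eta)$: once $q$ is large enough that this ball is contained in $\cB_{d\times(d+1)}(\phi_a, q)$, the volume of $\tilde\Phi \cap \cB_{d\times(d+1)}(\phi_a, q) \cap \Phi(\cP,\eta)$ exceeds $V_{d\times(d+1)}(\epsilon')$, so Lemma~\ref{lem:shellelpsd} forces \shellelpsd{} to return a valid $\phi_b$. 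In particular, the outer loop runs at most $O(H/\epsilon)$ times, yielding the oracle-polynomial runtime claim.

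Second, I would localize the projection using the iteration immediately preceding termination. Let $q^\star$ be the terminating radius. The preceding call, at radius $q^\star - \epsilon/(2H)$, returned a separating frontier $\cQ$ with
\[
\operatorname{vol}\!\Bigl(\cQ \cap \bigl(\tilde\Phi_{\text{prev}} \cap \cB_{d\times(d+1)}(\phi_a,\, q^\star - \epsilon/(2H))\bigr)^{-\epsilon'}\Bigr) < V_{d\times(d+1)}(\epsilon').
\]
After the update, $\tilde\Phi$ still contains a ball of radius $\rho := r/(4R) - \epsilon$ around $\phi_{\vec a}$, so Lemma~\ref{lem:min-ball-intersection} applied to $\cM = \tilde\Phi$ and the ball $\cB_{d\times(d+1)}(\phi_a,\, q^\star - \epsilon/(2H) - \epsilon')$ forces
\[
\bigl\lVert \phi_a - \Pi_{\tilde\Phi}(\phi_a) \bigr\rVert_F \;\geq\; q^\star - \frac{\epsilon}{2H} - \frac{2\,\operatorname{diam}(\tilde\Phi)}{\rho}\,\epsilon'.
\]
The choice $\epsilon' = \min(\eta,\operatorname{poly}(\epsilon, r, 1/H, 1/R))$ prescribed in Algorithm~\ref{algo:noisyproj} is calibrated precisely so that the last term is at most $\epsilon/(2H)$, giving $\lVert \phi_a - \Pi_{\tilde\Phi}(\phi_a)\rVert_F \geq q^\star - \epsilon/H$.

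Third, I would conclude by a Pythagorean-type estimate. Lemma~\ref{lem:shellelpsd} also gives $\phi_b \in \tilde\Phi^{+\epsilon'}$ with $\lVert \phi_a - \phi_b\rVert_F \leq q^\star + \epsilon'$, together with an $\epsilon'$-approximate fixed point of $\phi_b$ in $\cP^{+\epsilon'} \subseteq \cP^{+\eta}$. The standard projection inequality, corrected for the $\epsilon'$-approximate membership of $\phi_b$ in $\tilde\Phi$, yields
\[
\bigl\lVert \phi_b - \Pi_{\tilde\Phi}(\phi_a) \bigr\rVert_F^2 \;\leq\; \lVert \phi_a - \phi_b\rVert_F^2 \;-\; \bigl\lVert \phi_a - \Pi_{\tilde\Phi}(\phi_a)\bigr\rVert_F^2 \;+\; O(\epsilon' H).
\]
Substituting the two bounds above and using $q^\star \leq H/2$ gives $\lVert \phi_b - \Pi_{\tilde\Phi}(\phi_a)\rVert_F^2 \leq 2 q^\star \epsilon/H + O(\epsilon' H) \leq \epsilon$. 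The fixed-point claim (an $\epsilon$-approximate fixed point in $\cP^{+\eta}$) is inherited directly from Lemma~\ref{lem:shellelpsd} after enforcing $\epsilon' \leq \min(\epsilon,\eta)$.

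The main obstacle is the precision bookkeeping in the second part. Three distinct weak-oracle slacks interact: (i) the separating frontier bound from \shellelpsd{} controls the Minkowski-shrunk set $\cF^{-\epsilon'}$ rather than $\cF$ itself; (ii) $\phi_b$ lies in $\tilde\Phi^{+\epsilon'}$ rather than in $\tilde\Phi$; and (iii) the inscribed ball for $\tilde\Phi$ comes from $\Phi(\cP,\eta)$ via Lemma~\ref{lem:eps-inscribed-lintrans} and shrinks by $\epsilon$. Orchestrating these so that their combined effect fits strictly inside the $\epsilon/(2H)$ grid step is exactly what the polynomial prescription for $\epsilon'$ in Algorithm~\ref{algo:noisyproj} is designed to handle.
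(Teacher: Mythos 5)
Your proposal follows essentially the same architecture as the paper's proof: the same loop invariant ($\Phi(\cP,\eta)^{-\epsilon}\subseteq\tilde\Phi$), the same termination argument (eventually $\cB_{d\times(d+1)}(\phi_a,q)$ swallows the inscribed ball of $\Phi(\cP,\eta)$ so $\shellelpsd$ must succeed), the same use of Lemma~\ref{lem:min-ball-intersection} to turn the small-volume intersection from the penultimate iteration into a lower bound on $\norm{\phi_a-\Pi_{\tilde\Phi}(\phi_a)}_F$, and the same Pythagorean-type projection estimate $(q+\text{small})^2-(q-\text{small})^2\leq\epsilon$ at the end, with $\epsilon'$ chosen to keep the weak-oracle slacks (from the $\epsilon'$-shrunk sets, the $\epsilon'$-approximate membership of $\phi_b$, and the inscribed-ball radius) below the $\epsilon/(2H)$ grid step. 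The only cosmetic difference is that the paper routes the projection inequality through the intermediate point $\phi_b'=\Pi_{\tilde\Phi^{-\epsilon'}}(\phi_b)$ rather than folding the $\epsilon'$-membership correction into an $O(\epsilon' H)$ term as you do, but these amount to the same bookkeeping.
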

\begin{proof}
    Without loss of generality, assume that $\eta < r/4R$, so that $\cB_{d \times (d+1)}(\frac{r}{4R}) \subseteq \Phi(\cP, \eta)$ (Lemma~\ref{lem:eps-inscribed-lintrans}). Also note that for any $\epsilon' < \eta$, Lemma~\ref{lem:subset-lintrans} implies that $\Phi(\cP, \eta') \supseteq \Phi(\cP, \eta)$.
    In this algorithm, we use $\epsilon' := \min(\eta, \frac{\epsilon}{8H} \frac{r}{8R} \frac{1}{H} , \frac{\epsilon}{16 H} \frac{r^2}{12 R^2} \frac{1}{R^2 + d} )$ where $H := 2 (\text{diam}(\cM) + \text{dist}(\phi_a, \cM))$ and initialize parameter $\epsilon' \in (0, \epsilon)$.
    We begin by proving that the algorithm is guaranteed to find at least one transformation $\phi_b$ with an approximate fixed point inside $\cP$. Indeed, it finds one when $q$ is at least large enough to satisfy %
    $q = H / 2 = \text{diam}(\Tilde{\Phi}) + \text{dist}(\phi_a, \Tilde{\Phi})$.
    This means that $\cB_{d \times (d+1)}(\phi_a, q) \supset \Tilde{\Phi} \supset \Phi(\cP, \epsilon')$.
    Thus $\vol{\Tilde{\Phi} \cap \Phi(\cP, \epsilon')} = \vol{\Phi(\cP, \epsilon')} > V_{d \times (d+1)}(r / 4R) > V_{d \times (d+1)}(\epsilon')$ and the guarantee immediately follows from Lemma~\ref{lem:shellelpsd}.

    Note that every time we increment $q$, we update the shell set $\Tilde{\Phi}$ to be $\Tilde{\Phi} = \Tilde{\Phi}' \cap \cQ$, where $\Tilde{\Phi}'$ is the shell set in the previous iteration and $\cQ$ is the separating frontier found by \shellelpsd{}.
    Thus, by Lemma~\ref{lem:shellelpsd} we have that $\Phi(\cP, \eta)^{-\epsilon} \subset \Phi(\cP, \epsilon')^{-\epsilon'} \subset \Tilde{\Phi}' \cap \cQ = \Tilde{\Phi}$, because $\Phi(\cP, \epsilon')^{-\epsilon'} \subset \cQ$ and $\Phi(\cP, \epsilon')^{-\epsilon'} \subset \Tilde{\Phi}'$.
    Additionally,
    \[
        \vol{\cB_{d \times (d+1)}(\phi_a, q-\eps/(2H)) \cap \Tilde{\Phi}^{-\epsilon'}} &= \vol{\cB_{d \times (d+1)}(\phi_a, q-\eps/(2H)) \cap (\Tilde{\Phi}')^{-\epsilon'} \cap \cQ^{-\epsilon'}}\\
            &< V_{d \times (d+1)}(\epsilon').
    \]
    Next, we will prove that for any $\phi \in \cB_{d \times (d+1)}(\phi_a, q-\eps/(2H)) \cap \Tilde{\Phi}^{-\epsilon'}$, it is $\norm{\phi_a - \phi}_F > q-\frac{\eps}{2H} - \frac{\epsilon}{8 H}$. Assume otherwise; then there would exist $\phi \in \cB_{d \times (d+1)}(\phi_a, q-\eps/(2H)) \cap \Tilde{\Phi}^{-\epsilon'}$ with $s := \norm{\phi_a - \phi}_F \leq q-\frac{\eps}{2H} - \frac{\epsilon}{8 H}$.
    Applying Lemma~\ref{lem:min-ball-intersection} for $\cM := \Tilde{\Phi}^{-\epsilon'} \subset \bbR^{d \times d} \times \bbR^d$, and $\vb := \phi_a$, $\vp := \phi$, it follows that $\cM \supset \Phi(\cP, \eta)^{-2\epsilon'} \supseteq \cB_{d \times (d+1)}(\frac{r}{4R} - 2\epsilon') \supseteq \cB_{d \times (d+1)}(\frac{r}{8R})$.
    Consequently, it has to hold that $\vol{\cB_{d \times (d+1)}(\phi_a, q-\eps/(2H)) \cap \Tilde{\Phi}^{-\epsilon'}} > V_{d \times (d+1)}(t)$ for $t = \frac{r}{8R} \frac{1}{2 \text{diam}(\Tilde{\Phi})} (q-\eps/(2H) - s) > \frac{r}{8R} \frac{1}{H} \frac{\epsilon}{8 H} \geq \epsilon'$ which is a contradiction.
    
    There always exists a $\phi_b \in \Tilde{\Phi}^{+\epsilon'} \cap \cB_{d \times (d+1)}(\phi_a, q) $ with an $\epsilon'$-approximate fixed point in $\cP$. We see that the set $\Tilde{\Phi} \cap \cB_{d \times (d+1)}(\phi_a, H)$ is guaranteed to contain such a transformation. %
    
    We have proven that the following three invariant properties are true in every iteration of the search:
    \begin{enumerate}
        \item\label{prprt:binsearch-lb} (Almost-empty intersection) For any $\phi \in \cB_{d \times (d+1)}(\phi_a, q-\eps/(2H)) \cap \Tilde{\Phi}^{-\epsilon'}$ it is $\norm{\phi_a - \phi}_F > q-\frac{\eps}{2H} - \frac{\eps}{8 H}$

        \item $\Tilde{\Phi}$ is a compact convex set such that $\Phi(\cP, \eta)^{-\epsilon} \subset \Tilde{\Phi}$

        \item (Consistency) There exists $\phi_b \in \Tilde{\Phi}^{+\epsilon'} \cap \cB_{d \times (d+1)}(\phi_a, q)$ with an $\epsilon'$-approximate fixed point inside $\cP$
    \end{enumerate}
    When the algorithm terminates, $\phi_b \in \Tilde{\Phi}^{+\epsilon'} \cap \cB_{d \times (d+1)}(\phi_a, q)$ is a valid transformation with an $\epsilon'$-approximate fixed point inside $\cP$. Since $\phi_b \in \cB_{d \times (d+1)}(\phi_a, q)$, it follows that
    \[
        \norm{\phi_a - \phi_b}_F^2 \leq q \numberthis{eq:binsearch-final-ub}
    \]
    and, additionally, Property~\ref{prprt:binsearch-lb} implies that
    \[
        \norm{\phi_a - \Pi_{\Tilde{\Phi}^{-\epsilon'}}(\phi_a)}_F > q-\frac{\eps}{2H} - \frac{\epsilon}{8 H}. \numberthis{eq:binsearch-final-lb}
    \]

    Using Lemma~\ref{lem:eps-project-closeness}, we get that for $\phi_b' = \Pi_{\Tilde{\Phi}^{-\epsilon'}}(\phi_b)$ it holds $\norm{\phi_b - \phi_b'}_F \leq \frac{R_{\Tilde{\Phi}}}{r_{\Tilde{\Phi}}} \epsilon'$. And using Lemmas~\ref{lem:eps-inscribed-lintrans} and \ref{lem:lintrans-sphere-spectral}, it follows that
    \[
        \norm{\phi_b - \phi_b'}_F \leq \frac{3R}{r} \sqrt{R^2 + d} \frac{4R}{r} \epsilon' \leq \frac{\epsilon}{16H}.
    \]
    Thus, applying \eqref{eq:binsearch-final-ub},
    \[
        \norm{\phi_a - \phi_b'}_F &\leq \norm{\phi_a - \phi_b}_F + \norm{\phi_b - \phi_b'}_F\\
            &\leq q + \frac{\epsilon}{16H}\numberthis{eq:binsearch-final-ub-proj}%
    \]

    The projection of $\phi_a$ onto $\Tilde{\Phi}^{-\epsilon'}$ satisfies $\Pi_{\Tilde{\Phi}^{-\epsilon'}}(\phi_a) = \argmin_{\mat{F} \in \Tilde{\Phi}^{-\epsilon'}} \norm{\phi_a - \mat{F}}_F$. Applying the convex optimality conditions as in \citet[Proposition 2.11]{orabona2022:modern}, we get
    \[
        \inp{\phi_a - \Pi_{\Tilde{\Phi}^{-\epsilon'}}(\phi_a), \Pi_{\Tilde{\Phi}^{-\epsilon'}}(\phi_a) - \phi_b'} \geq 0
    \]
    With this tool, we now have
    \[
        \norm{\phi_a - \phi_b'}_F^2 &= \norm{\phi_a - \Pi_{\Tilde{\Phi}^{-\epsilon'}}(\phi_a) + \Pi_{\Tilde{\Phi}^{-\epsilon'}}(\phi_a) - \phi_b'}_F^2\\
            &= \norm{\phi_a - \Pi_{\Tilde{\Phi}^{-\epsilon'}}(\phi_a)}_F^2 + \norm{\Pi_{\Tilde{\Phi}^{-\epsilon'}}(\phi_a) - \phi_b'}_F^2 + 2 \inp{\phi_a - \Pi_{\Tilde{\Phi}^{-\epsilon'}}(\phi_a), \Pi_{\Tilde{\Phi}^{-\epsilon'}}(\phi_a) - \phi_b'}\\
            &\geq \norm{\phi_a - \Pi_{\Tilde{\Phi}^{-\epsilon'}}(\phi_a)}_F^2 + \norm{\Pi_{\Tilde{\Phi}^{-\epsilon'}}(\phi_a) - \phi_b'}_F^2.
    \]
    Rearranging and using inequalities \eqref{eq:binsearch-final-ub-proj} and \eqref{eq:binsearch-final-lb}, we conclude
    \[
        \norm{\Pi_{\Tilde{\Phi}^{-\epsilon'}}(\phi_a) - \phi_b'}_F^2 &\leq \norm{\phi_a - \phi_b'}_F^2 - \norm{\phi_a - \Pi_{\Tilde{\Phi}^{-\epsilon'}}(\phi_a)}_F^2\\
            &< \p{q + \frac{\epsilon}{16H}}^2 - \left( q-\frac{\eps}{2H} - \frac{\epsilon}{8 H} \right)^2\\
            & \leq \frac{2q\eps}{H}\\
            &\leq \epsilon.
    \]
\end{proof}

\subsection{Minimizing Linear Swap Regret with Weak Oracles}

\begin{algorithm}[ht]
    \caption{Linear-swap regret minimizer for convex strategy sets with weak oracles}\label{algo:main}
    \KwData{Weak separation oracle for a compact convex strategy set $\cP \subset \bbR^d$ in Isotropic Position (Definition~\ref{defn:isotropic}), precision parameter $\eta > 0$}

    Let $R_\phi = 4d^2$ and $r_\phi = 1 / 4 (d+1)$
    
    Let $\cM := \cB_{d \times (d+1)}(\vzero, R_\phi)$ which guarantees that $\Phi(\cP, \eta) \subset \cM$

    Set step size $\beta := \frac{R_\phi}{d^{3/2} \sqrt{T}}$ and parameter 
    $\epsilon := \frac{1}{d T^2} \frac{r_\phi}{R_\phi}$

    Let $\vp_1 \in \cP^{+\eta}$ and $\phi_1$ be any valid transformation in $\Phi(\cP, \eta)$ (\eg the Approximate Identity)

    \For{$t = 1, 2, \dots, T$}{
        Output $\pt \in \cP^{+\eta}$ and receive feedback $\lt \in [-1, 1]^d$

        Set $\mat{L}_t = \left( \lt \pt^\top, \lt \right) \in \bbR^{d \times (d+1)}$

        Update $\phi_{t+1} = \shellproj_\Phi(\phi_t - \beta \mat{L}_t)$ with precisions $\eta$ and $\epsilon$

        Compute $\vp_{t+1} \in \cP^{+\eta}$ such that $\norm{\vp_{t+1} - \phi_{t+1}(\vp_{t+1})}_2^2 \leq \epsilon$
    }
\end{algorithm}

This is the weak version of Theorem~\ref{algo:main-strong}.

\begin{theorem}\label{thm:linswap-regret-main-algo}
    Let $\cP \subset \bbR^d$ be a compact convex strategy set in the Isotropic Position (Definition~\ref{defn:isotropic}).
    Let the loss vectors be $\lt \in [-1, 1]^d$ and $\eta > 0$ be a precision parameter. Then
    Algorithm~\ref{algo:main} outputs strategy points $\pt \in \cP^{+\eta}$ and guarantees
    \[
        \mathrm{LinSwapReg} = \sum_{t=1}^T \inp{\pt, \lt} - \min_{\phi \in \Phi(\cP, \eta)} \sum_{t=1}^T \inp{\phi(\pt), \lt} = O\left( d^4 \sqrt{T} \right)
    \]
    and its per-iteration time complexity is oracle-polynomial in $d$ and $T$. %
\end{theorem}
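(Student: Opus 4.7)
The plan is to mirror the proof of Theorem~\ref{thm:linswap-regret-main-algo-strong}, but carefully track the additional slack introduced by weak oracles (approximate projections, approximate fixed points, and the relaxed endomorphism class $\Phi(\cP, \eta)$). Since $\cP$ is in isotropic position, $\cB_d(\vzero, 1) \subseteq \cP \subseteq \cB_d(\vzero, d+1)$; combining Lemma~\ref{lem:lintrans-sphere-spectral} with Lemma~\ref{lem:eps-inscribed-lintrans} yields the inclusions $\cB_{d \times (d+1)}(\phi_{\vzero}, r_\phi) \subseteq \Phi(\cP, \eta) \subseteq \cB_{d \times (d+1)}(\vzero, R_\phi)$ for $R_\phi = 4 d^2$ and $r_\phi = 1/(4(d+1))$, as used by Algorithm~\ref{algo:main}. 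In particular, any comparator $\phi^* \in \Phi(\cP, \eta)$ satisfies $\norm{\phi^*}_F \le R_\phi$, so we may view Algorithm~\ref{algo:main} as running Shell Gradient Descent (Algorithm~\ref{algo:improper_ogd}) over the sequence of shell sets $\PhiShell_t$ returned by $\shellproj_\Phi$.

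Next I would verify that $\shellproj_\Phi$ (Proposition~\ref{prop:shellyproj}) supplies exactly the two ingredients needed by Theorem~\ref{thm:improper_ogd}: (i) each $\PhiShell_{t+1}$ is convex and contains $\Phi(\cP, \eta)^{-\epsilon}$, hence any $\phi^* \in \Phi(\cP, \eta)$ sits inside a slightly inflated shell set for every round; and (ii) $\phi_{t+1}$ is an $\epsilon$-approximate projection of $\phi_t - \beta \mat{L}_t$ onto $\PhiShell_{t+1}$. Applying Theorem~\ref{thm:improper_ogd} with constant step size $\eta_t = \beta$ and precision $\epsilon_t = \epsilon$ then gives
\[
\reg^{\text{dual}}(\mathbf{\phi}, \vp, \vell) = \max_{\phi^* \in \Phi(\cP,\eta)} \sum_{t=1}^T \inp{\mat{L}_t, \phi_t - \phi^*} \;\le\; \frac{R_\phi^2}{2\beta} + \frac{\beta}{2} \sum_{t=1}^T \norm{\mat{L}_t}_F^2 + \frac{R_\phi \, T\, \epsilon}{\beta}.
\]
Using $\pt \in \cP^{+\eta} \subseteq \cB_d(\vzero, d+1+\eta)$ and $\lt \in [-1,1]^d$, a routine computation bounds $\norm{\mat{L}_t}_F \le \sqrt{d \norm{\lt}_\infty^2 \norm{\pt}_2^2 + \norm{\lt}_2^2} = O(d^{3/2})$. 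Substituting the chosen $\beta = R_\phi/(d^{3/2}\sqrt{T})$ and $\epsilon = (r_\phi/R_\phi)/(dT^2)$ then produces $\reg^{\text{dual}} = O(d^4 \sqrt{T})$, absorbing the precision term $R_\phi T \epsilon / \beta$ into the dominant $O(d^4\sqrt{T})$ contribution.

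To close the loop from the dual regret to the primal linear swap regret, I would use the approximate fixed point guarantee that $\norm{\phi_{t}(\pt) - \pt}_2^2 \le \epsilon$, so that by Cauchy--Schwarz $|\inp{\pt - \phi_t(\pt), \lt}| \le \sqrt{\epsilon}\,\sqrt{d}$. Summing and combining with the definitions \eqref{eq:reduction-linswap} and \eqref{eq:reduction-dualreg} yields
\[
\linswap(\vp, \vell) \;\le\; \reg^{\text{dual}}(\mathbf{\phi}, \vp, \vell) + T\sqrt{d\epsilon},
\]
and the choice of $\epsilon$ makes the second term $O(1)$, giving the claimed $O(d^4\sqrt{T})$ bound. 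The oracle-polynomial per-iteration complexity follows from invoking Proposition~\ref{prop:shellyproj} with the chosen $\epsilon, \eta$ together with the fact that the per-round fixed-point search and the construction of $\mat{L}_t$ are oracle-polynomial in $d$ (and in $\log(1/\epsilon) = O(\log(dT))$, hence polynomial in $d, T$).

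The main obstacle I anticipate is the bookkeeping of the three nested precision parameters: $\eta$ (for the weak endomorphism class), $\epsilon$ (for projection and fixed-point slack inside $\shellproj_\Phi$), and the precision $\epsilon'$ used internally by $\shellproj_\Phi$ when calling $\shellelpsd$. They must be coordinated so that (a) $\Phi(\cP, \eta)$ is a subset of every $\PhiShell_t$ in the sense required by Shell Gradient Descent, (b) the projection-slack term in Theorem~\ref{thm:improper_ogd} does not spoil the $\sqrt{T}$ rate, and (c) the fixed-point slack contributes only lower-order regret. Once the parameter calibration in Algorithm~\ref{algo:main} is shown to satisfy these three conditions simultaneously, the remainder is plug-and-chug.
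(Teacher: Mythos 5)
Your overall structure mirrors the paper's proof closely, and the loss-norm bound, the calibration of $\beta$ and $\epsilon$, and the reduction of primal to dual regret via the approximate fixed point are all handled correctly. However, there is one step you assert without justification that the paper treats as a genuine obstacle, and your own bookkeeping list flags it but never resolves it.

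The issue is point (a) in your closing paragraph. Proposition~\ref{prop:shellyproj} only guarantees that each shell set $\PhiShell_{t+1}$ contains $\Phi(\cP, \eta)^{-\epsilon}$, not the full comparator class $\Phi(\cP, \eta)$ --- and in general $\Phi(\cP,\eta) \not\subseteq \PhiShell_t$. Consequently Theorem~\ref{thm:improper_ogd} can only be invoked with the fixed comparator set $\cX = \Phi(\cP,\eta)^{-\epsilon}$, so the resulting bound controls $\max_{\phi^* \in \Phi(\cP,\eta)^{-\epsilon}}\sum_t \langle \mat{L}_t, \phi_t - \phi^*\rangle$, not the quantity $\max_{\phi^* \in \Phi(\cP,\eta)}\sum_t \langle \mat{L}_t, \phi_t - \phi^*\rangle$ that you write in your displayed inequality. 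The paper bridges this gap explicitly: by Lemma~\ref{lem:eps-project-closeness}, for every $\phi \in \Phi(\cP,\eta)$ there is some $\phi' \in \Phi(\cP,\eta)^{-\epsilon}$ with $\|\phi-\phi'\|_F \le (R_\phi/r_\phi)\,\epsilon$, so the two minima differ by at most $\sum_t \|\mat{L}_t\|_F \cdot (R_\phi/r_\phi)\epsilon = O(Td^{3/2}(R_\phi/r_\phi)\epsilon)$, and it is precisely the factor $r_\phi/R_\phi$ built into the choice $\epsilon = \frac{1}{dT^2}\cdot\frac{r_\phi}{R_\phi}$ that makes this $O(\sqrt{d})$ rather than blowing up. Without this step the displayed application of Theorem~\ref{thm:improper_ogd} is not valid, since the comparator you optimize over is not contained in the shell sets. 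You anticipated this as item (a) but filed it under ``plug-and-chug''; it is actually the one place where the weak-oracle argument differs substantively from the strong-oracle version, and where the peculiar form of $\epsilon$ in Algorithm~\ref{algo:main} is used.

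Everything else you wrote is consistent with the paper's calculation.
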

\begin{proof}
    First observe that, Lemma~\ref{lem:lintrans-sphere-spectral} and the Isotropic Position of $\cP$ imply $\Phi(\cP, \eta) \subseteq \lintrans{\cB_d(1)}{\cB_d(\vzero, d+1)} \subseteq \cM$.
    Also, it holds that $\cB_{d \times (d+1)}(r_\phi) \subset \Phi(\cP, \eta) \subset \cB_{d \times (d+1)}(\vzero, R_\phi)$ where $R_\phi = 4d^2$ (Lemma~\ref{lem:lintrans-sphere-spectral}) and $r_\phi = 1 / 4 (d+1)$ (Lemma~\ref{lem:eps-inscribed-lintrans}).
    
    Let $D = \text{diam}(\cM)$ and $G$ be an upper bound to all loss gradients, $\norm{\mat{L}_t}_F \leq G$. 
    For the diameter, it is $D = 2 R_\phi \leq 8 d^2$.

    To bound $G$ notice that $\norm{\mat{L}_t}_F \leq \sqrt{d \norm{\lt}_\infty^2 \norm{\pt}_2^2 + \norm{\lt}_2^2} \leq 2 d^{3/2}$, because $\lt \in [-1, 1]^d$ and $\pt \in \cP^{+\eta} \subset \cB_d(\vzero, d + 1 + \eta)$. Thus, $G \leq 2 d^{3/2}$.

    Following \citet{Gordon08:No}, the linear-swap regret of Algorithm~\ref{algo:main} is equal to
    \[
        \mathrm{LinSwapReg} &= \sum_{t=1}^T \inp{\pt, \lt} - \min_{\phi \in \Phi(\cP, \eta)} \sum_{t=1}^T \inp{\phi(\pt), \lt}\\
            &= \sum_{t=1}^T \inp{\phi_t(\pt), \lt} - \min_{\phi \in \Phi(\cP, \eta)} \sum_{t=1}^T \inp{\phi, \mat{L}_t} + \sum_{t=1}^T \inp{\pt - \phi_t(\pt), \lt}\\
            &\leq \sum_{t=1}^T \inp{\phi_t, \mat{L}_t} - \min_{\phi \in \Phi(\cP, \eta)} \sum_{t=1}^T \inp{\phi, \mat{L}_t} + \sum_{t=1}^T \sqrt{\epsilon d}
    \]
    However, in the following, we will only be able to work on the restricted set $\Phi(\cP, \eta)^{-\epsilon}$. 
    To overcome this issue note that, by Lemma~\ref{lem:eps-project-closeness}, we get that for any $\phi \in \Phi(\cP, \eta)$ there exists $\phi' \in \Phi(\cP, \eta)^{-\epsilon}$ with $\norm{\phi - \phi'}_F \leq \frac{R_\phi}{r_\phi} \epsilon$. 
    Thus, for $\epsilon = \frac{1}{d T^2} \frac{r_\phi}{R_\phi}$,
    \[
        \left| \sum_{t=1}^T \inp{\phi - \phi', \mat{L}_t} \right| \leq \norm{\phi - \phi'}_F \sum_{t=1}^T \norm{\mat{L}_t}_F \leq 2 T d^{3/2} \frac{R_\phi}{r_\phi} \epsilon = O(\sqrt{d})
    \]
    And the regret is bounded as
    \[
        \mathrm{LinSwapReg} &\leq \sum_{t=1}^T \inp{\phi_t, \mat{L}_t} - \min_{\phi \in \Phi(\cP, \eta)^{-\epsilon}} \sum_{t=1}^T \inp{\phi, \mat{L}_t} + \sum_{t=1}^T \sqrt{\epsilon d} + O(1)\\
        &\leq \sum_{t=1}^T \inp{\phi_t, \mat{L}_t} - \min_{\phi \in \Phi(\cP, \eta)^{-\epsilon}} \sum_{t=1}^T \inp{\phi, \mat{L}_t} + O(\sqrt{T d})
    \]
    
    This is an instance of Shell Gradient Descent, since we know from Proposition~\ref{prop:shellyproj} that \shellproj{} finds a set $\Tilde{\Phi}$ with $\Phi(\cP, \eta)^{-\epsilon} \subset \Tilde{\Phi} \subset \cM$ and a transformation $\phi_{t+1} \in \bbR^{d \times (d+1)}$ such that $\norm{\phi_{t+1} - \Pi_{\Tilde{\Phi}}(\phi_t - \beta \mat{L}_t)}_F^2 \leq \epsilon$. Applying Theorem~\ref{thm:improper_ogd} for step sizes equal to $\beta$ and precision parameters equal to $\epsilon$,we get that the per-iteration complexity is oracle-polynomial and the regret is bounded by
    \[
        \sum_{t=1}^T \inp{\phi_t, \mat{L}_t} - \min_{\phi \in \Phi(\cP, \eta)^{-\epsilon}} \sum_{t=1}^T \inp{\phi, \mat{L}_t} \leq O(D G \sqrt{T}),
    \]
    Finally, combining this with the two bounds for $G$ and $D$, we conclude that $\text{LinSwapReg} \leq O(D G \sqrt{T}) = O\left( d^4 \sqrt{T} \right)$.
\end{proof}

\section{Computing a Linear Correlated Equilibrium with Weak Oracles}\label{app:eah-weak}

In this section, we will describe how to compute a high-precision linear correlated equilibrium in polynomial time using weak oracle access to the strategy set. These are the results of Section~\ref{sec:eah} but for weak oracles.

\subsection{The Ellipsoid Against Hope framework with Weak Oracles}

First, we clearly define the weak version of the good-enough-response problem, as follows.

\begin{mdframed}
    \textbf{The Weak Good-Enough-Response Problem}

    Given a vector $\vy \in \bbR^N$ and an $\epsilon > 0$,
    \vspace{-3mm}
    
    \hspace*{5mm} find $(\vec{x}, \vec{x}^\top \mat{A}) \in \mathcal{X} \times \bbR^N$ such that $\vec{x}^\top \mat{A} \vec{y} > -\epsilon$.
\end{mdframed}

\begin{algorithm}[ht]
    \SetAlgoNoLine
    \caption{Ellipsoid Against Hope for bilinear zero-sum games with general convex strategies and weak oracle access}\label{alg:eah}
    \KwData{Parameters $r_y, R_y$ such that $\cB_N(r_y) \subseteq \cY \subseteq \cB_N(\vzero, R_y)$, precision parameter $\epsilon > 0$, constant $B \geq 1$ such that $\norm{\vx^\top \mA}_2 \leq B$}
	\KwIn{An oracle that for every $\vy \in \cB_N(\vzero, R_y)$, either produces a weak separating hyperplane (\texttt{SEP}) of $\vy$ from $\cY$ or a weak good-enough-response (\texttt{GER})}
	\KwOut{A sparse solution $\vec{x}^*$ of \eqref{eq:primal_cp} represented as a mixture of \texttt{GER} oracle responses.}

    Set $\xi := \epsilon/3$ and $\delta := \frac{r_y}{4 R_y} \frac{\xi}{8B}$

    Set $\cM := \cB_N(\vzero, R_y)$

    Execute the ellipsoid method on \eqref{eq:dual_cp}, using the given oracle
    \[
        \tag{$D$}\label{eq:dual_cp}
        \text{find}\ &\vec y \in \cM\\
        \text{s.t.}\ &\max_{\vec x \in \cX} \vec{x}^\top \mat{A} \vec{y} \leq -\frac{\xi}{4}
    \]
    and stop when the volume becomes less than $V_N(\delta)$.

    Let $\widetilde{\cY}$ be the intersection of $\cM$ and the \texttt{SEP} oracle responses

    Let $\vx_1, \dots, \vx_L$ be the GER oracle response vectors and define $\mat{X} = [\vx_1 \mid \dots \mid \vx_L]$

    Set $\gamma := \xi / (3 \sqrt{L} B R_y)$
    
    Using the response vectors, create \eqref{eq:primal_compressed_cp} and compute a $(\gamma/2)$-weakly feasible solution $\vec{\lambda}_*$
    \[
        \tag{$P'$}\label{eq:primal_compressed_cp}
        \text{find}\ &\vec{\lambda} \in (\Delta^L)^{+\gamma}\\
        \text{s.t.}\ &\min_{\vec y \in \widetilde{\cY}^{-\delta}} \vec{\lambda}^\top (\mat{X}^\top \mat{A}) \vec{y} \geq -\xi
    \]

    Compute the projection $\hat{\vlam}_* = \Pi_{\Delta^L}(\vlam_*)$ of $\vlam_*$ on $\Delta^L$
    
    Compute final solution $\vec{x}_* = \mat{X} \hat{\vlam}_*$
\end{algorithm}

\begin{restatable}{theorem}{thmeah}\label{thm:ellipsoid-against-hope}
    Let $\cX \subset \bbR^M, \cY \subset \bbR^N$ be compact convex sets, to which we do not necessarily have direct oracle access. And let $\epsilon > 0$ be a precision parameter. If the following hold
    \begin{enumerate}
        \item $\mA \in \bbR^{M \times N}$ is a matrix such that $\forall \vx \in \cX,\ \norm{\vx^\top \mA}_2 \leq B$ for some $B \geq 1$.

        \item the set $\cY$ is well-bounded as $\cB_N(r_y) \subseteq \cY \subseteq \cB_N(\vzero, R_y)$,

        \item\label{assmpt:eah_sep_oracles} there exists an oracle that for every point $\vy \in \cB_N(\vzero, R_y)$, either produces a weak separating hyperplane (\texttt{SEP}) of $\vy$
        from $\cY$ or a weak good-enough-response (\texttt{GER}),

        \item the encoding lengths of both the \texttt{SEP} and the \texttt{GER} responses produced by the previous oracle are polynomially bounded,
    \end{enumerate}
    then, Algorithm~\ref{alg:eah} runs in oracle-polynomial time and computes an \emph{exact} solution $\vx_*$ to
    \[
        \tag{$P$}\label{eq:primal_cp}
        \text{find}\ &\vec{x} \in \mathcal{X}\\
        \text{s.t.}\ &\min_{\vec y \in \mathcal{Y}} \vec{x}^\top \mat{A} \vec{y} \geq -\epsilon.
    \]
    Furthermore, $\vx_*$ is a mixture of polynomially many \texttt{GER} responses.
\end{restatable}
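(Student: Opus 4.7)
The plan is to mirror the strong-oracle proof of Theorem~\ref{thm:ellipsoid-against-hope-strong}, tracking three sources of slack arising from weak oracles: (i)~weak \texttt{SEP} responses only guarantee $\widetilde{\cY} \supseteq \cY^{-\delta}$ rather than $\widetilde{\cY} \supseteq \cY$; (ii)~weak \texttt{GER} responses return an $\vx$ with $\vx^\top \mA \vy > -\xi/4$ at the queried point only; and (iii)~the compressed primal \eqref{eq:primal_compressed_cp} is solved only to $(\gamma/2)$-weak feasibility, yielding $\vlam_* \in (\Delta^L)^{+\gamma}$ that satisfies the constraint up to $O(\gamma)$ error. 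With the choices $\xi := \epsilon/3$, $\delta := \tfrac{r_y \xi}{32 B R_y}$, and $\gamma := \tfrac{\xi}{3\sqrt{L} B R_y}$ set in Algorithm~\ref{alg:eah}, these slacks compose to at most $\epsilon$.

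Termination of the ellipsoid subroutine and the bound $L \leq \poly(N, \log(BR_y/\epsilon))$ follow from the standard volume-halving analysis, exactly as in the strong-oracle case: volume drops by a factor $\exp(-1/(2(N+1)))$ per step and the subroutine halts once $\vol(\cE) < V_N(\delta)$. The core step is to verify feasibility of \eqref{eq:primal_compressed_cp}. For any $\vy \in \widetilde{\cY}^{-\delta}$, the ball $\cB_N(\vy,\delta) \subseteq \widetilde{\cY}$ has volume $V_N(\delta) > \vol(\cE_{\text{fin}})$, so some $\vy' \in \widetilde{\cY} \setminus \cE_{\text{fin}}$ lies within distance $\delta$ of $\vy$. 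Since $\vy'\in\widetilde{\cY}$, it was not excised by any \texttt{SEP} cut; hence it was cut by some \texttt{GER} response $\vx_i$ at a query point $\vy_q$ with $\vx_i^\top \mA \vy_q > -\xi/4$. The cut guarantees $\vx_i^\top \mA \vy' > -\xi/4$, and Lipschitzness ($\|\vx_i^\top \mA\|_2\leq B$) then yields $\vx_i^\top \mA \vy > -\xi/4 - B\delta \geq -\xi/2$. The minimax theorem supplies $\vlam \in \Delta^L$ with $\vlam^\top \mX^\top \mA \vy \geq -\xi/2$ simultaneously for every $\vy \in \widetilde{\cY}^{-\delta}$, so \eqref{eq:primal_compressed_cp} admits a solution with a margin of $\xi/2$; this margin absorbs the $O(\gamma)$ error of weak optimization, and Theorem~\ref{thm:ellipsoid-sep-to-opt} produces $\vlam_*$ in oracle-polynomial time (using the explicit polytope separation oracle for $\widetilde{\cY}^{-\delta}$).

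To conclude, I show that $\vx_* = \mX\hat{\vlam}_*$ exactly solves \eqref{eq:primal_cp}. The projection onto the simplex contributes $\|\hat{\vlam}_* - \vlam_*\|_2 \leq \gamma$ (since $\vlam_* \in (\Delta^L)^{+\gamma}$); to bridge from $\widetilde{\cY}^{-\delta}$ to $\cY$, for $\vy \in \cY$ I set $\vy' := \Pi_{\cY^{-2\delta}}(\vy)$, which satisfies $\|\vy-\vy'\|_2 \leq 2R_y\delta/r_y$ by Lemma~\ref{lem:eps-project-closeness}, and lies in $\widetilde{\cY}^{-\delta}$ because $\cY^{-2\delta} \subseteq \widetilde{\cY}^{-\delta}$. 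Combining these bounds,
\[
\vx_*^\top \mA \vy \;\geq\; \vlam_*^\top \mX^\top \mA \vy' \;-\; \|\hat{\vlam}_* - \vlam_*\|_2\,\|\mX^\top \mA \vy'\|_2 \;-\; B\|\vy-\vy'\|_2 \;\geq\; -\xi - \tfrac{\xi}{3} - \tfrac{\xi}{16} \;>\; -\epsilon,
\]
where I used $\|\mX^\top \mA \vy'\|_2 \leq \sqrt{L}BR_y$ and $\|\hat{\vlam}_*\|_1 = 1$. Since $\hat{\vlam}_* \in \Delta^L$, the point $\vx_* = \sum_i \hat{\lam}_{*,i} \vx_i$ lies in $\cX$ as a convex combination of \texttt{GER} responses, completing the proof.

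The main obstacle is the layered relaxation between $\cY$, $\widetilde{\cY}$, and $\widetilde{\cY}^{-\delta}$: the compressed primal controls $\vlam_*^\top \mX^\top \mA \vy$ only for $\vy \in \widetilde{\cY}^{-\delta}$, which need not contain any portion of $\cY$, forcing the extra Lipschitz move through $\cY^{-2\delta}$. Picking $\delta$ small enough so that this move loses only $O(\xi)$ (hence the factor $r_y/R_y$ in $\delta$), while still large enough for the volume argument above to produce the certificate $\vy'$, is the principal bookkeeping challenge in the weak-oracle setting.
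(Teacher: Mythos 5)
Your proposal follows the same overall architecture as the paper's proof: run the ellipsoid method on the (infeasible) dual using the combined \texttt{SEP}/\texttt{GER} oracle, collect the cuts into $\widetilde{\cY}$ and the \texttt{GER} responses into $\mX$, establish feasibility of the compressed primal $(P')$ via the minimax theorem, solve $(P')$ weakly, project onto $\Delta^L$, and bound the error chain. For the feasibility of $(P')$ you take a slightly different (and arguably cleaner) route than the paper: you argue directly from the volume of the final ellipsoid that every $\vy \in \widetilde{\cY}^{-\delta}$ has a nearby $\vy' \in \widetilde{\cY}\setminus\cE_{\text{fin}}$ that must have been removed by a \texttt{GER} cut, whereas the paper invokes the formal certification of the ellipsoid method applied ``in parallel'' to the compressed dual $(D')$. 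Both are valid, and your version makes the geometry more explicit.

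There is, however, a genuine gap: you invoke Theorem~\ref{thm:ellipsoid-sep-to-opt} to compute $\vlam_*$, but that theorem requires a weak \emph{separation} oracle for the feasible set $\cK_2$ of $(P')$, not merely a separation oracle for $\widetilde{\cY}^{-\delta}$. Constructing a weak separation oracle for $\cK_2 = (\Delta^L)^{+\gamma}\cap\{\vlam : \min_{\vy\in\widetilde{\cY}^{-\delta}}\vlam^\top \mX^\top\mA\vy \ge -\xi\}$ is a substantive step: given a candidate $\vlam$, one must (weakly) minimize the linear functional $\vy\mapsto\vlam^\top\mX^\top\mA\vy$ over $\widetilde{\cY}^{-\delta}$, and if the (approximate) minimum is below $-\xi$, convert the minimizing $\vy_*$ into a normalized separating hyperplane with a quantified margin. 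The paper devotes an entire paragraph (``Weak separation oracle for \eqref{eq:primal_compressed_cp}'') to this, choosing inner precision parameters $\delta_1,\delta_2$ so that the produced hyperplane is a valid $\beta$-weak separator and the normalization constant is bounded away from zero. Your one-line parenthetical does not carry this content. Separately, your final bookkeeping is slightly off: since $\vlam_*$ is only in $\cK_2^{+\gamma/2}$, you need an additional Cauchy--Schwarz term for the gap between $\vlam_*$ and a genuine $\vlam\in\cK_2$, and $\|\hat{\vlam}_*-\vlam_*\|_2\le 2\gamma$ (not $\gamma$), since $\vlam_*\in\cK_2^{+\gamma/2}\subseteq(\Delta^L)^{+2\gamma}$. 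Fortunately the $\xi=\epsilon/3$ slack absorbs the corrected constants and the conclusion $\vx_*^\top\mA\vy>-\epsilon$ still holds, so these are fixable errors rather than fatal ones.
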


\begin{proof}
    Set $\xi := \epsilon / 3$ and $\delta := \frac{r_y}{4 R_y} \frac{\xi}{8B}$, such that $\cY^{-\delta}$ is non-empty.
    Let $\cK_1$ be the set of all feasible solutions to \eqref{eq:dual_cp} and note that
    \[
        \cK_1^{-\delta} \supseteq \cM^{-\delta} \cap \set{\vy \in \bbR^N \mid \max_{\vx \in \cX} \vx^\top \mA \vy \leq - \frac{\xi}{2}}.
    \]
    To see this, we can pick any $\vy$ in the right-hand side set and show that $\cB_N(\vy, \delta) \subseteq \cK_1$. To this end, consider any $\vy' \in \cB_N(\vy, \delta)$ and note that $\vy' \in \cM$ and $| \vx^\top \mA (\vy - \vy') | \leq \norm{\vx^\top \mA}_2 \norm{\vy - \vy'}_2 \leq B \delta$ for all $\vx \in \cX$. Consequently,
    \[
        \max_{\vec x \in \cX} \vec{x}^\top \mat{A} \vec{y}' \leq \max_{\vec x \in \cX} \vec{x}^\top \mat{A} \vec{y} + B \delta \leq -\frac{\xi}{2} + B \delta \leq -\frac{\xi}{4}
    \]
    so $\vy' \in \cK_1$.

    \paragraph{Almost emptiness of \eqref{eq:dual_cp}.}
    By Assumption~\ref{assmpt:eah_sep_oracles}, it follows that the combination of the semi-separation and the good-enough-response oracles will always produce a valid weak separating hyperplane for \eqref{eq:dual_cp}. Thus, running the ellipsoid method on \eqref{eq:dual_cp}, we can guarantee that $\cK_1^{-\delta}$ is empty.
    Let the GER oracle responses used in this process be $\vx_1, \dots, \vx_L$ and let the intersection of $\cM$ with the semi-separation oracle responses be $\widetilde{\cY}$. Note that $\widetilde{\cY} \supseteq \cY^{-\delta}$ since the semi-separation oracle will not produce any separating hyperplanes for points inside $\cY^{-\delta}$. Furthermore, note that we have direct (weak) oracle access to $\widetilde{\cY}$ by combining the oracle for $\cM$ and the explicitly given separating hyperplanes.

    Now consider the following convex program and note that both the encoding length of the hyperplanes defining $\widetilde{\cY}$ and the encoding length of $\mat{X}^\top \mA$ are polynomially bounded, by assumption.
    \[
        \tag{$D'$}\label{eq:dual_compressed_cp}
        \text{find}\ &\vec y \in \widetilde{\cY}\\
        \text{s.t.}\ &\max_{\vlam \in \Delta^L} \vlam^\top (\mat{X}^\top \mat{A}) \vec{y} \leq -\frac{\xi}{4}
    \]
    If we run the ellipsoid method on \eqref{eq:dual_compressed_cp} and use the same oracle responses as before, it will consider the same sequence of points and thus, it will guarantee that the below set is empty
    \[
        \widetilde{\cY}^{-\delta} \cap \set{\max_{\vlam \in \Delta^L} \vlam^\top (\mat{X}^\top \mat{A}) \vec{y} \leq - \frac{\xi}{2}}.
    \]
    We have proven that
    \[
        \min_{\vy \in \widetilde{\cY}^{-\delta}} \max_{\vlam \in \Delta^L} \vlam^\top (\mat{X}^\top \mat{A}) \vec{y} > -\frac{\xi}{2}. \numberthis{eq:eah_dual_infeasibility}
    \]

    \paragraph{Weak feasibility of \eqref{eq:primal_compressed_cp}.} We denote with $\cK_2$ the set of all feasible solutions of \eqref{eq:primal_compressed_cp}. In the previous step, we established the (almost) infeasibility of \eqref{eq:dual_compressed_cp} in \eqref{eq:eah_dual_infeasibility}. Combining this with the minimax theorem, it follows that \eqref{eq:primal_compressed_cp} is feasible and, furthermore,
    \[
        \max_{\vlam \in \Delta^L} \min_{\vy \in \widetilde{\cY}^{-\delta}} \vlam^\top (\mat{X}^\top \mat{A}) \vec{y} > -\frac{\xi}{2}.
    \]
    Let $\vlam \in \Delta^L$ be any point in the simplex such that $\vlam^\top (\mat{X}^\top \mA) \vy > -\xi / 2$ and define the parameter $\gamma := \xi / (3 \sqrt{L} B R_y)$. For any point $\vlam' \in \cB_L(\vlam, \gamma)$ it holds that $\vlam' \in (\Delta^L)^{+\gamma}$ and
    \[
        | (\vlam' - \vlam)^\top (\mat{X}^\top \mA) \vy | &\leq \norm{\vlam' - \vlam}_2 \norm{(\mat{X}^\top \mA) \vy}_2\\
        &\leq \gamma \sqrt{L} \norm{(\mat{X}^\top \mA) \vy}_\infty\\
        &\leq \gamma \sqrt{L} B \norm{\vy}_2 = \gamma \sqrt{L} B R_y.
    \]
    Thus, $(\vlam')^\top (\mat{X}^\top \mA) \vy > -\xi/2 - \gamma \sqrt{L} B R_y \geq -\xi$
    and the set of solutions $\cK_2$ is guaranteed to contain a ball of radius $\gamma$.

    \paragraph{Weak separation oracle for \eqref{eq:primal_compressed_cp}.}
    Our next goal is to show that we can construct an efficient weak separation oracle for \eqref{eq:primal_compressed_cp}, which can be used to compute a weakly feasible solution $\lambda_* \in \cK_2^{+\gamma/2}$ with the ellipsoid method.
    
    Assume that we want to construct a $\beta$-weak separation oracle for $\cK_2$. Given a $\vlam \in \bbQ^L$, the construction goes as follows.

    First, using the strong separation oracle from Lemma~\ref{lem:inflated-polytope-sep}, check whether $\vlam \in (\Delta^L)^{+\gamma}$ and if it is not, return a strong separating hyperplane of $\vlam$ from $(\Delta^L)^{+\gamma}$.

    Now, define $\delta_1 := \xi \frac{\beta}{\beta + 1} \frac{1}{L (1 + \gamma)}$ and $\delta_2 := \min(\delta, \frac{r_y - \delta}{R_y + \delta} \frac{1}{4 B} \delta_1)$ and consider the following optimization sub-problem
    \[
        \min_{\vy \in \widetilde{\cY}} \vec{\lambda}^\top (\mat{X}^\top \mat{A}) \vec{y}.
    \]
    Using weak optimization, we can compute a solution $\vy_* \in \widetilde{\cY}^{+\delta_2}$ such that $\vec{\lambda}^\top (\mat{X}^\top \mA) \vy_* - \delta_2 \leq \vec{\lambda}^\top (\mat{X}^\top \mA) \vy$ for all $\vy \in \widetilde{\cY}^{-\delta_2} \supset \widetilde{\cY}^{-\delta}$. There are two cases:
    \begin{itemize}
        \item If $\vec{\lambda}^\top (\mat{X}^\top \mA) \vy_* \geq -\xi + \delta_2$ then for all $\vy \in \widetilde{\cY}^{-\delta}$, it is $\vec{\lambda}^\top (\mat{X}^\top \mA) \vy \geq \vec{\lambda}^\top (\mat{X}^\top \mA) \vy_* - \delta_2 \geq -\xi$ so $\vlam$ is a feasible point for \eqref{eq:primal_compressed_cp}.

        \item Otherwise, $\vlam$ satisfies
        \[
            \vec{\lambda}^\top (\mat{X}^\top \mA) \vy_* < -\xi + \delta_2. \numberthis{eq:sep_compr_secondcase}
        \]
        Consider any feasible $\vlam' \in \cK_2$ and note that
        \[
            (\vlam')^\top (\mat{X}^\top \mA) \vec{y}_* &\geq \min_{\vy \in \widetilde{\cY}^{-\delta_2}} (\vlam')^\top (\mat{X}^\top \mA) \vy - \frac{\delta_1}{2}\\
                &\geq -\xi - \frac{\delta_1}{2}, \numberthis{eq:sep_compr_primal}
        \]
        where the first inequality follows because, by Lemma~\ref{lem:eps-project-closeness}, there exists $\vy' \in \widetilde{\cY}^{-\delta_2}$ such that $\norm{\vy' - \vy_*}_2 \leq \frac{R_y + \delta}{r_y - \delta} \delta_2$ and, consequently,
        \[
            |(\vlam')^\top (\mat{X}^\top \mA) (\vy_* - \vy')| \leq \norm{(\vlam')^\top (\mat{X}^\top \mA)}_2 \norm{\vy_* - \vy'}_2 \leq 2 B \frac{R_y + \delta}{r_y - \delta} \delta_2 \leq \frac{\delta_1}{2}.
        \]
        Thus, combining \eqref{eq:sep_compr_secondcase} with \eqref{eq:sep_compr_primal} we get
        \[
            \vlam^\top (\mat{X}^\top \mA \vy_*) < (\vlam')^\top (\mat{X}^\top \mA \vy_*) + \delta_2 + \frac{\delta_1}{2} \leq (\vlam')^\top (\mat{X}^\top \mA \vy_*) + \delta_1,
        \]
        for all $\vlam' \in \cK_2$.
        Additionally, $\norm{\mat{X}^\top \mA \vy_*}_\infty > (\xi - \delta_2) \frac{1}{L (1 + \gamma)}$, because $\norm{\vlam}_2 \leq 1 + \gamma$ and we would otherwise have a contradiction in \eqref{eq:sep_compr_secondcase}. Thus, for the normalized vector $\vc = (\mat{X}^\top \mA \vy_*) / \norm{\mat{X}^\top \mA \vy_*}_\infty$ it holds
        \[
             \vlam^\top \vc &< (\vlam')^\top \vc + \frac{\delta_1}{\xi - \delta_2} L (1 + \gamma)\\
                &\leq (\vlam')^\top \vc + \frac{\xi \beta / (\beta + 1)}{\xi / (\beta + 1)}\\
                &\leq (\vlam')^\top \vc + \beta
        \]
        for all $\vlam' \in \cK_2$.
        In other words, $\vc = (\mat{X}^\top \mA \vy_*)$ constitutes a weak separating hyperplane of $\vlam$ from $\cK_2$.
    \end{itemize}
    
    \paragraph{Computing the final solution.}
    Using the weak separation oracle from the previous step, as well as the certificate that $\cK_2^{-\gamma/2}$ is non-empty, we can run the ellipsoid method in polynomial time and compute a weakly feasible solution $\vlam_* \in \cK_2^{+\gamma/2}$.

    The set $\Delta^L$ is a rational polytope, so using Lemma~\ref{lem:polytope-projection}, we can compute an exact projection $\hat{\vlam}_* = \Pi_{\Delta^L}(\vlam_*)$. Since $\vlam_* \in \cK_2^{+\gamma/2} \subseteq (\Delta^L)^{2\gamma}$, it follows that $\norm{\hat{\vlam}_* - \vlam_*}_2 \leq 2 \gamma$. Consequently,
    \[
        \hat{\vlam}_*^\top (\mat{X} \mA) \vy \geq \vlam_*^\top (\mat{X} \mA) \vy - 2 \gamma \sqrt{L} B R_y, \quad \forall \vy \in \widetilde{\cY} \numberthis{eq:eah_lambda_proj}
    \]
    Finally, we show that $\vx_* = \mat{X} \hat{\vlam}_*$ is indeed an $\epsilon$-approximate solution to the initial problem. Since $\vx_k \in \cX$ and $\hat{\vlam}_* \in \Delta^L$ it immediately follows that $\vx_* \in \cX$. It remains to show that $\vx_*$ satisfies the other constraints of \eqref{eq:primal_cp}. Let $\vlam \in \cK_2$ be the closest feasible solution to $\vlam_*$. Then we have
    \[
        \min_{\vy \in \cY} \vx_*^\top \mA \vy &\geq \min_{\vy \in \cY^{-\delta}} \vx_*^\top \mA \vy - B \frac{R_y}{r_y} \delta & \text{(Cauchy–Schwarz and Lemma~\ref{lem:eps-project-closeness})}\\
            &\geq \min_{\vy \in \widetilde{\cY}} \vx_*^\top \mA \vy - B \frac{R_y}{r_y} \delta & \text{($\widetilde{\cY} \supseteq \cY^{-\delta}$)}\\
            &\geq \min_{\vy \in \widetilde{\cY}^{-\delta}} \vx_*^\top \mA \vy - 2 \frac{R_y + \delta}{r_y - \delta} B \delta & \text{(Cauchy–Schwarz and Lemma~\ref{lem:eps-project-closeness})}\\
            &\geq \min_{\vy \in \widetilde{\cY}^{-\delta}} \vx_*^\top \mA \vy - \xi\\
            &= \min_{\vy \in \widetilde{\cY}^{-\delta}} \hat{\vlam}_*^\top (\mat{X} \mA) \vy - \xi\\
            &\geq \min_{\vy \in \widetilde{\cY}^{-\delta}} \vlam_*^\top (\mat{X} \mA) \vy - 2 \gamma \sqrt{L} B R_y - \xi & \text{(from \eqref{eq:eah_lambda_proj})}\\
            &\geq \min_{\vy \in \widetilde{\cY}^{-\delta}} \vlam^\top (\mat{X} \mA) \vy - 3 \gamma \sqrt{L} B R_y - \xi & \text{(Cauchy-Schwarz)}\\
            &\geq -\xi - 3 \gamma \sqrt{L} B R_y - \xi & \text{($\vlam \in \cK_2$)}\\
            &\geq -\epsilon
    \]

    This completes the proof.
\end{proof}

\subsection{Computing an Equilibrium in Convex Games with Weak Oracles}

First, we state the weak version of Lemma~\ref{lem:eahutilitymat-strong}, which is proven in the same way.

\begin{restatable}{lemma}{lemeahutilitymat}
    Let $\cP_1, \dots, \cP_n$ be the compact convex strategy sets of an $n$-player convex game and define $\cX_i := \set{1} \times \cP_i^{+\eta}$. For any $\vx \in \cX := \co\{\cX_1 \otimes \dots \otimes \cX_n\}$ and $\vy = (1, \phi_1, \dots, \phi_n) \in \cY := \set{1} \times \Phi(\cP_1, \eta) \times \dots \times \Phi(\cP_n, \eta)$, it holds that
    \[
        \vx^\top \mA \vy = \sum_{i=1}^n \E_{\vs \sim \vx}[u_i(\vs) - u_i(\phi_i(\vs_i), \vs_{-i})]
    \]
\end{restatable}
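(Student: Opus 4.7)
The plan is to directly expand the bilinear form $\vx^\top \mA \vy$ using the explicit block structure of $\mA$ defined in Section~\ref{sec:comp-lin-equil}. First, I would write each affine transformation $\phi_i$ in its matrix form $\mB_i \in \bbR^{d_i \times (d_i+1)}$, so the $y$-coordinates corresponding to player~$i$ are precisely the entries of $\mB_i$, while the coordinate $j = \emptyseq$ equals $1$.

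Next, I would partition $\vx^\top \mA \vy$ according to the two kinds of columns of $\mA$. The $j = \emptyseq$ column contributes $\sum_k \vx[k]\sum_i \mU_i[k]$, which by multi-linearity of each $u_i$ and linearity of expectation equals $\sum_i \E_{\vs \sim \vx}[u_i(\vs)]$ (here $\vs$ is drawn from any convex decomposition of $\vx$ into product elements of $\cX_1 \otimes \dots \otimes \cX_n$, as in the strong version). The columns indexed by tuples $(i,a,b)$ contribute
\[
-\sum_{i=1}^n \sum_{k} \vx[k] \sum_{b \in [d_i]} \mB_i[b, k_i]\, \mU_i[(b, k_{-i})].
\]
Regrouping the sums over $k = (k_i, k_{-i})$ and using multi-linearity of $u_i$, together with the convention $u_i(\vs_{-i})[d_i+1] = 0$ (which turns the inner sum over $b \in [d_i]$ into the full affine action of $\phi_i$ on the augmented vector $\vs_i \in \{1\}\times\cP_i^{+\eta}$), this matches $-\sum_i \E_{\vs \sim \vx}[u_i(\phi_i(\vs_i), \vs_{-i})]$. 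Adding the two contributions gives the claimed identity.

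The derivation is algebraically identical to that of Lemma~\ref{lem:eahutilitymat-strong}; the only change is that the Correlator's strategies lie in $\{1\} \times \cP_i^{+\eta}$ and the Deviator's transformations are drawn from $\Phi(\cP_i, \eta) = \lintrans{\cP_i^{+\eta}}{\cP_i^{-\eta}}$. The expansion never uses anything about the specific domain or range of $\phi_i$---only that the transformations are affine and the utilities are multi-linear on the relevant product set---so the identity carries over verbatim from the strong case.

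I do not anticipate any real obstacle. The only step that requires care is the index bookkeeping: when interchanging the two sums over $k$ and $i$, one must correctly couple the ``constant'' column of $\mB_i$ (the one corresponding to the augmented coordinate fixed at $1$) with the $u_i(\vs_{-i})[d_i+1] = 0$ convention, so that the affine part of $\phi_i$ appears cleanly inside $u_i$. Once this bookkeeping is aligned, the identity collapses to a one-line rearrangement.
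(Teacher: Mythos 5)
Your proof is correct and takes essentially the same route as the paper: the paper proves the strong version (Lemma~\ref{lem:eahutilitymat-strong}) by exactly this column-wise expansion of $\vx^\top\mA\vy$ and regrouping via multi-linearity, and then simply notes in the appendix that the weak version ``is proven in the same way.'' Your observation that the algebra never uses anything about the domain or range of the $\phi_i$, only affinity of the transformations and multi-linearity of the utilities on the augmented product set, is precisely why the paper's one-line justification suffices.
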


Finally, we prove the weak version of Theorem~\ref{thm:eah-equil-computation-strong}. To prove it, we need to (i) tighten the precision parameters in the application of the Ellipsoid Against Hope framework, because we can only use the Approximate Identity (Lemma~\ref{lem:approx-identity}) instead of the true identity transformation, and (ii) carefully deal with the added constant dimension in the set of deviations by using an inflation trick to make the set well-bounded.

\begin{restatable}{theorem}{thmequilcomputationweak}\label{thm:eah-equil-computation-weak}
    Let $G$ be an $n$-player convex game with compact convex strategy sets $\cP_i \subset \bbR^{d_i}$ for $i \in [n]$, given through a weak separation oracle, that are well-bounded as $\cB_{d_i}(r_i) \subseteq \cP_i \subseteq \cB_{d_i}(\vzero, R_i)$. Furthermore, assume that $G$ satisfies the polynomial utility gradient property (Assumption~\ref{asmpt:polynomial-utility-gradient}) for the set relaxations $\cX_i = \set{1} \times \cP_i^{+\eta}$, and $u_i(\vs) \in [-1, 1]$ for every strategy profile $\vs \in \cX_1 \times \dots \times \cX_n$. Then, there exists an oracle-polynomial time algorithm that computes an $(\epsilon, \eta)$-approximate linear correlated equilibrium. Furthermore, the computed equilibrium is represented as a mixture of polynomially many product distributions over strategy profiles.
\end{restatable}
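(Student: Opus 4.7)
The plan is to mirror the proof of Theorem~\ref{thm:eah-equil-computation-strong}, but instantiating the weak variants of every tool: Theorem~\ref{thm:ellipsoid-against-hope} (weak EAH) in place of Theorem~\ref{thm:ellipsoid-against-hope-strong}, Lemma~\ref{lem:sep-non-fixed-point} (weak semi-separation) in place of Lemma~\ref{lem:sep-non-fixed-point-strong}, and Lemma~\ref{lem:eps-approx-fp} to find approximate fixed points in $\cP_i^{+\eta}$. Concretely, I set $\cX_i := \{1\}\times\cP_i^{+\eta}$, $\cX := \conv(\cX_1\otimes\cdots\otimes\cX_n)$, and $\cY := \{1\}\times\Phi(\cP_1,\eta)\times\cdots\times\Phi(\cP_n,\eta)$, and build the same bilinear Correlator-Deviator matrix $\mA$. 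By the weak analog of Lemma~\ref{lem:eahutilitymat-strong}, a solution $\vx^*$ with $\min_{\vy\in\cY}(\vx^*)^\top\mA\vy\ge -\epsilon$ corresponds (after stripping the constant first coordinate and decomposing into product distributions) to an $(\epsilon,\eta)$-approximate linear correlated equilibrium in the sense of Definition~\ref{defn:weak-approx-equil}.

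The main technical work is constructing the oracle required by Theorem~\ref{thm:ellipsoid-against-hope}. Given a query point $\vy \in \bbR^N$, I first inspect its constant coordinate: if it is not (sufficiently close to) $1$, a trivial axis-aligned hyperplane separates $\vy$ from $\cY$. Otherwise I write $\vy = (1,\phi_1,\dots,\phi_n)$ and loop over players. For each $\phi_i$ I call the weak semi-separation oracle of Lemma~\ref{lem:sep-non-fixed-point} with a precision parameter $\eta' \ll \eta$; this either returns an $\eta'$-approximate fixed point $\vx_i \in \cP_i^{+\eta'} \subseteq \cP_i^{+\eta}$ of $\phi_i$, or an $\eta'$-weak separating hyperplane of $\phi_i$ from $\Phi(\cP_i,\eta')\supseteq\Phi(\cP_i,\eta)$ that I lift (by padding zeros in all other coordinates) to a weak separating hyperplane of $\vy$ from $\cY$. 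If every player returns a fixed point, I form the product $\vx := (1,\vx_1)\otimes\cdots\otimes(1,\vx_n)\in\cX_1\otimes\cdots\otimes\cX_n$, and by the $\Phi$-regret computation in Lemma~\ref{lem:eahutilitymat-strong} I obtain $\vx^\top\mA\vy = \sum_i \E_{\vs\sim\vx}[u_i(\vs)-u_i(\phi_i(\vs_i),\vs_{-i})]$; since each $\vx_i$ is an $\eta'$-approximate fixed point and utilities are bounded-Lipschitz on $\cX_1\times\cdots\times\cX_n$ (they are multi-linear with values in $[-1,1]$, so gradients have norm polynomial in the problem parameters), this inner product exceeds $-\epsilon/2$ whenever $\eta'$ is chosen sufficiently small, yielding a valid weak GER response of bit-complexity $O(\sum_i d_i)$.

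The main obstacle is that, as stated, the set $\cY$ is not full-dimensional (it lies inside the hyperplane where the first coordinate equals $1$), so the well-boundedness requirement $\cB_N(r_y)\subseteq\cY$ of Theorem~\ref{thm:ellipsoid-against-hope} fails and the weak-oracle toolkit breaks. I propose the standard inflation trick: embed the problem into the prism $\cY' := [1-\tau,1+\tau]\times\Phi(\cP_1,\eta)\times\cdots\times\Phi(\cP_n,\eta)$ for a tiny $\tau$, and exploit Lemma~\ref{lem:eps-inscribed-lintrans} which guarantees each $\Phi(\cP_i,\eta)$ contains a ball of radius $r_i/(4R_i)$ around an \emph{approximate} identity. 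Because the proper identity map is unavailable under weak access, I use the Approximate Identity $\phi_\star$ of Lemma~\ref{lem:approx-identity} as the center; this adds an extra $O(R_i\epsilon'/r_i)$ drift when the oracle must compare against the ``default'' deviation but only affects the precision bookkeeping. On any query $\vy\in\cY'$ with first coordinate $c\in[1-\tau,1+\tau]$, I rescale to $\vy/c$, apply the oracle above, and lift the response back, noting that the utility perturbation from $c\ne 1$ is at most $O(\tau)$ and can be absorbed into $\epsilon$.

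Finally, I invoke Theorem~\ref{thm:ellipsoid-against-hope} with $B = O(\sqrt N)$ (from $|u_i|\le 1$), $R_y,r_y$ polynomially bounded in $\sum_i d_i$ and $\sum_i\log(R_i/r_i)$ (via Lemmas~\ref{lem:lintrans-sphere-spectral} and~\ref{lem:eps-inscribed-lintrans}), and precision $\epsilon/3$, and cascade the precision parameters $\eta',\tau$ back through the construction so that the returned mixture $\vx^*$ of $\poly(\sum_i d_i,\log(1/\epsilon),\sum_i\log(R_i/r_i))$ product distributions satisfies $\min_{\vy\in\cY}(\vx^*)^\top\mA\vy\ge-\epsilon$. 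By the weak analog of Lemma~\ref{lem:eahutilitymat-strong}, this exactly certifies that the induced distribution $\mu\in\Delta(\cP_1^{+\eta}\times\cdots\times\cP_n^{+\eta})$ is an $(\epsilon,\eta)$-approximate linear correlated equilibrium; the total runtime is oracle-polynomial because every subroutine, including the encoding of the product-distribution GER responses, has polynomial bit-length by construction.
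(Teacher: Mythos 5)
Your overall architecture is right -- inflate $\cY$ to full dimension, build the dual-player oracle from the weak semi-separation oracle, invoke the weak Ellipsoid-Against-Hope (Theorem~\ref{thm:ellipsoid-against-hope}), and use the Approximate Identity (Lemma~\ref{lem:approx-identity}) as a stand-in for the identity map. But there is a genuine gap in how you set up $\cY$, and it breaks the final per-player guarantee.

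You define $\cY$ directly from the target deviation sets $\Phi(\cP_i,\eta)$ and claim that a solution with $\min_{\vy\in\cY}(\vx^*)^\top\mA\vy\ge-\epsilon$ is an $(\epsilon,\eta)$-approximate LCE ``by the weak analog of Lemma~\ref{lem:eahutilitymat-strong}.'' But the weak version of that lemma only gives the bilinear-form identity, not the per-player corollary. To extract the constraint for a single player $k$ against a single $\phi\in\Phi(\cP_k,\eta)$, you must substitute a deviation for every other player $j\neq k$, and that deviation must lie in $\cY_j$. The exact identity is not in $\Phi(\cP_j,\eta)$ (it maps $\cP_j^{+\eta}$ to $\cP_j^{+\eta}$, not $\cP_j^{-\eta}$), so you are forced to use the Approximate Identity, which is only $\bigl(5R_j/r_j\bigr)\cdot\eta$-close to the identity on $\cP_j^{+\eta}$ when taken in $\Phi(\cP_j,\eta)$. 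Summing over the $n-1$ other players contributes an additional error of order $n\cdot\max_j(R_j/r_j)\cdot\eta$, which can be arbitrarily large relative to $\epsilon$; you implicitly acknowledge an ``$O(R_i\epsilon'/r_i)$ drift'' but dismiss it as bookkeeping -- it is not, because the drift scale is tied to the same $\eta$ that defines $\cY$ and cannot be shrunk independently.

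The paper's fix is to \emph{solve a harder problem than asked}: it introduces an internal parameter $\delta := \min(\epsilon,\eta)\cdot\frac{1}{2n}\min_i\frac{r_i}{5R_i}$ and sets $\cY := \set{1}\times\Phi(\cP_1,\delta)\times\cdots\times\Phi(\cP_n,\delta)$, running the whole EAH machinery at precision $\delta$ rather than $\epsilon$. Because $\delta\le\eta$, we have $\Phi(\cP_i,\eta)\subseteq\Phi(\cP_i,\delta)$, so any guarantee against $\cY$ implies the required guarantee for each player's $\Phi(\cP_i,\eta)$-deviations. And because $\delta$ was chosen so that the Approximate Identity error per player is at most $\epsilon/(2n)$, summing over $n-1$ players yields at most $\epsilon/2$, which combined with the $\delta\le\epsilon/2$ slack gives the desired $-\epsilon$. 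Your proposal needs this cascade of precisions (set $\cY$ at $\delta\ll\eta$, run semi-separation at $\beta'\le\delta$, inflate by $\gamma=\delta/(2\sqrt N)$) for the argument to close; without it, the final distribution is only $\bigl(O(n\max_j(R_j/r_j)\eta),\eta\bigr)$-approximate. Two smaller points: the ball inside $\Phi(\cP_i,\cdot)$ from Lemma~\ref{lem:eps-inscribed-lintrans} is centered at the constant map $\phi_{\va}=(\vzero,\va)$, not at the Approximate Identity; and the rescaling-by-$c$ step for handling first coordinates $c\neq 1$ is cleaner done as the paper does -- snap $\hat\vy=(1,\phi_1,\dots,\phi_n)$ and absorb $\norm{\hat\vy-\vy}_2\le\gamma$ via Cauchy--Schwarz, since $\vy/c$ need not live in any of the relevant $\Phi$ sets.
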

\begin{proof}
    Let $\delta := \min(\epsilon, \eta) \frac{1}{2 n} \min_i \frac{r_i}{5 R_i}$.
    We define the Correlator-Deviator game with strategy sets $\cX := \co\{\cX_1 \otimes \dots \otimes \cX_n\} \subset \bbR^M$ and $\cY := \set{1} \times \Phi(\cP_1, \delta) \times \dots \times \Phi(\cP_n, \delta) \subset \bbR^N$ and utility matrix $\mA$. Now assume that we have an optimal solution $\vx_*$ to the problem
    \[
        \numberthis{eq:linear_eah_primal_cp} \text{find}\ &\vx \in \cX\\
        \text{s.t.}\ &\min_{\vy \in \cY} \vx^\top \mA \vy \geq -\delta.
    \]
    We first show that this $\vx_*$ defines a valid $(\epsilon, \eta)$-approximate linear equilibrium or, in other words, that the individual players' expectations are close to $0$. To this end, observe that since \eqref{eq:linear_eah_primal_cp} is satisfied for all $\vy \in \cY$, it must also be satisfied for $\vy = (1, \phi_1', \dots, \phi_k, \phi_{i+1}', \dots, \phi_n')$ for all $\phi_k \in \Phi(\cP_k, \delta) \supseteq \Phi(\cP_k, \eta)$, where $\phi_j'$ for $j \neq k$ are the Approximate Identities from Lemma~\ref{lem:approx-identity}.
    Consequently, for this $\vy$,
    \[
        -\frac{\epsilon}{2} &\leq -\delta \leq \vx_*^\top \mA \vy\\
        &= \E_{\vs \sim \vx_*}[u_k(\vs) - u_k(\phi_k(\vs_k), \vs_{-k})] + \sum_{j \neq k} \E_{\vs \sim \vx_*}[u_j(\vs_j - \phi_j'(\vs_j), \vs_{-j})]\\
        &\leq \E_{\vs \sim \vx_*}[u_k(\vs) - u_k(\phi_k(\vs_k), \vs_{-k})] + \sum_{j \neq k}\frac{5R_j}{r_j} \delta\\
        &\leq \E_{\vs \sim \vx_*}[u_k(\vs) - u_k(\phi_k(\vs_k), \vs_{-k})] + \frac{\epsilon}{2}.
    \]
    And the desired individual inequality follows by rearranging,
    \[
        \E_{\vs \sim \vx_*}[u_k(\vs) - u_k(\phi_k(\vs_k), \vs_{-k})] \geq -\epsilon.
    \]
    We conclude that an optimal $\vx_*$ defines an $(\epsilon, \eta)$-approximate linear correlated equilibrium.
    
    At this point we could try applying Theorem~\ref{thm:ellipsoid-against-hope}. However, $\cY$ is not full-dimensional as is required by the Theorem conditions. To alleviate this problem we can instead work with the set $\cY' = [1 - \gamma, 1 + \gamma] \times \Phi(\cP_1, \delta) \times \dots \times \Phi(\cP_n, \delta)$ for $\gamma = \delta / (2 \sqrt{N})$. This set is well-bounded as $\cB_N(r_y') \subseteq \cY' \subseteq \cB_N(\vzero, R_y')$ with $r_y' = \min(\gamma, \min_{i \in [n]} \frac{r_i}{4 R_i})$ and $R_y' = \gamma + \sum_{i \in [n]} \frac{3R_i}{r_i} \sqrt{R_i^2 + d}$. 
    Additionally, for any distribution $\vx \in \cX$, it holds that $\norm{\vx^\top \mA}_2 \leq \sqrt{N} =: B$, because $U_i[k] \in [-1, 1]$ for all $k$.
    
    If we can also construct an oracle that for all $\vy \in \cB_N(\vzero, R_y')$ either produces a weak separating hyperplane of $\vy$ from $\cY'$ or a weak good-enough-response, then we are done. We can compute any feasible solution $\vx_* \in \cX$ such that $\min_{\vy \in \cY'} \vx^\top \mA \vy \geq -\delta$. Then, the exact same solution $\vx_*$ will also be feasible for our problem \eqref{eq:linear_eah_primal_cp}, because $\cY \subset \cY'$. It remains to see how to construct the required weak oracle.

    Let $\vy \in \bbR^N$ be an input point and $\beta \in (0, r_y')$ a precision parameter.
    If $y_1 \notin [1-\gamma, 1+\gamma]$, then we return the hyperplane defined by $\va := (-\text{sign}(y_1), \vzero)$, which satisfies $\va^\top \vy < \va^\top \vy', \quad \forall \vy' \in \cY'$.

    For the rest of the proof, it is $y_1 \in [1-\gamma, 1+\gamma]$. Let $\phi_1, \dots, \phi_n$ be the linear transformations corresponding to $\vy$. We apply the weak semi-separation oracle of Lemma~\ref{lem:sep-non-fixed-point} to each one of the $\phi_i$ transformations on the full-dimensional sets $\cP_i$ with precision $\beta' := \min(\delta, \beta / n, \delta^2 / (2 \sum_{i=1}^n \sqrt{d_i+1})^2)$. If any semi-separation oracle $j$ responds with a weak separating hyperplane $\vc \in \bbR^{d_j \times (d_j+1)}$ of $\phi_j$ from $\Phi(\cP_j, \delta)$, then we return $\va := (0, \vzero, \dots, \vc, \vzero, \dots, \vzero)$ which satisfies $\va^\top \vy < \va^\top \vy' + \beta' \quad \forall \vy' \in (\cY')^{-\beta'}$ which is a valid weak separating hyperplane since $\beta' < \beta$.

    Otherwise, we are in the case where we have computed $\beta'$-approximate fixed points $\vx_i^* \in \cP_i^{+\beta'} \subset \cP_i^{+\delta}$ such that $\norm{\phi_i(\vx_i^*) - \vx_i^*}_2^2 \leq \beta'$ for all players $i \in [n]$. In this case, similarly to \citet{Farina2024:eah}, we can construct a good-enough-response by considering the product distribution $\vx = (1, \vx_i^*) \otimes \dots \otimes (1, \vx_i^*)$. Now let $\hat{\vy} = (1, \phi_1, \dots, \phi_n)$ such that $\norm{\hat{\vy} - \vy}_2 \leq \gamma$.
    By following the exact same steps as in \citet[Lemma 4.4]{Farina2024:eah}, we conclude that
    \[
        \vx^\top \mA \hat{\vy} = \sum_{i=1}^n \vec{g}_i(\vx_{-i}) \cdot [\vx_i^* - \phi_i(\vx_i^*)],
    \]
    where $\vec{g}_i(\vx_{-i}) = \E_{\vs_{-i} \sim \vx_{-i}}[\nabla u_i(\vx_{-i})]$ is given by Assumption~\ref{asmpt:polynomial-utility-gradient}. Then, by the Cauchy–Schwarz inequality,
    \[
        |\vx^\top \mA \hat{\vy}| &\leq \sum_{i=1}^n \norm{\vec{g}_i(\vx_{-i})}_2 \norm{\vx_i^* - \phi_i(\vx_i^*)}_2\\
            &= \sum_{i=1}^n \sqrt{d_i} \sqrt{\beta'}\\
            &\leq \delta / 2.
    \]
    Additionally, $|\vx^\top \mA (\hat{\vy} - \vy)| \leq \norm{\vx^\top \mA}_2 \norm{\hat{\vy} - \vy}_2 \leq B \gamma \leq \delta / 2$. Thus,
    \[
        \vx^\top \mA \vy \geq \vx^\top \mA \hat{\vy} - \delta/2 \geq -\delta,
    \]
    which constitutes a good-enough-response and can be computed in oracle-polynomial time, alongside $\vx^\top \mA$. Thus, all conditions of Theorem~\ref{thm:ellipsoid-against-hope} are satisfied and we can use the Ellipsoid Against Hope framework to compute an $(\epsilon, \eta)$-approximate linear correlated equilibrium that is a mixture of polynomially many product distributions over strategy profiles.
\end{proof}

\end{document}